\long\def\symbolfootnote[#1]#2{\begingroup%
\def\thefootnote{\fnsymbol{footnote}}\footnote[#1]{#2}\endgroup}
\newcommand{\Exp}{\mbox{}{\bf{E}}}
\newcommand{\Expect}[1]{\mbox{}{\bf{E}}\left[#1\right]}
\newcommand{\FNorm }[1]{\mbox{}\|#1\|_F  }
\newcommand{\FNormS}[1]{\mbox{}\|#1\|_F^2}
\newcommand{\TNorm }[1]{\mbox{}\|#1\|_2  }
\newcommand{\TNormS}[1]{\mbox{}\|#1\|_2^2}
\newcommand{\XNorm }[1]{\mbox{}\|#1\|_{\xi}  }
\newcommand{\XNormS}[1]{\mbox{}\|#1\|_{\xi}^2}
\newcommand{\PNorm }[1]{\mbox{}\|#1\|_{p}  }
\newtheorem{theorem}{\bf Theorem}[]
\newtheorem{lemma}[theorem]{Lemma}
\newtheorem{definition}[theorem]{Definition}
\newtheorem{proposition}[theorem]{Proposition}
\newtheorem{corollary}[theorem]{Corollary}
\newcommand{\qedsymb}{\hfill{\rule{2mm}{2mm}}}
\newcommand{\transp}{^{\textsc{T}}}
\newcommand{\trace}{\text{\rm Tr}}
\newcommand{\mat}[1]{{\ensuremath{\textsc{#1}}}}
\newcommand{\abs }[1]{\left|#1\right|}
\newcommand{\EE}[1]{\ensuremath{\mathbb{E}\left[#1\right] } }
\newcommand{\var}[1]{\text{Var}\ensuremath{\left[#1\right] } }
\def\rank{\hbox{\rm rank}}
\def\b{{\mathbf b}}
\def\e{{\mathbf e}}
\def\q{{\mathbf q}}
\def\s{{\mathbf s}}
\def\u{{\mathbf u}}
\def\v{{\mathbf v}}
\def\A{\matA}
\def\C{\matC}
\def\matA{\mat{A}}
\def\matB{\mat{B}}
\def\matC{\mat{C}}
\def\matD{\mat{D}}
\def\matE{\mat{E}}
\def\matH{\mat{H}}
\def\matI{\mat{I}}
\def\matP{\mat{P}}
\def\matQ{\mat{Q}}
\def\matR{\mat{R}}
\def\matS{\mat{S}}
\def\matU{\mat{U}}
\def\matV{\mat{V}}
\def\matW{\mat{W}}
\def\matX{\mat{X}}
\def\matY{\mat{Y}}
\def\matZ{\mat{Z}}
\def\scl{{\textsc{l}}}
\def\scu{{\textsc{u}}}
\def\phiu{{\overline{\phi}}}
\def\phil{{\underbar{\math{\phi}}}}
\DeclareMathSymbol{\Prob}{\mathbin}{AMSb}{"50}
\newcommand\remove[1]{}
\def\nnz{{ \rm nnz }}
\def\math#1{$#1$}
\def\mand#1{$$#1$$}
\def\frac#1#2{{#1\over #2}}
\def\mld#1{\begin{equation}
#1
\end{equation}}
\def\eqar#1{\begin{eqnarray}
#1
\end{eqnarray}}
\def\eqan#1{\begin{eqnarray*}
#1
\end{eqnarray*}}
\DeclareMathSymbol{\R}{\mathbin}{AMSb}{"52}
\def\qed{\hfill\rule{2mm}{2mm}}
\def\cl#1{{\cal #1}}
\def\argmin{\mathop{\hbox{argmin}}\limits}
\def\x{{\mathbf x}}
\def\y{{\mathbf y}}
\def\z{{\mathbf z}}
\def\a{{\mathbf a}}
\def\b{{\mathbf b}}
\def\norm#1{{\|#1\|}}
\def\ceil#1{{\left\lceil\,#1\,\right\rceil}}
\def\r#1{{(\ref{#1})}}
\def\dotfil{\leaders\hbox to 1.5mm{.}\hfill}
\newcounter{rmnum}
\def\RN#1{\setcounter{rmnum}{#1}\uppercase\expandafter{\romannumeral\value{rmnum}}}
\def\rn#1{\setcounter{rmnum}{#1}\expandafter{\romannumeral\value{rmnum}}}
\begin{document}

%
\thesistitle{\bf Topics in Matrix Sampling Algorithms}
\author{Christos Boutsidis}
\degree{Doctor of Philosophy}
\department{Computer Science}
\signaturelines{6}     
\thadviser{Petros Drineas}

\memberone{Kristin P. Bennett} 
\membertwo{Sanmay Das}
\memberthree{Malik Magdon-Ismail} 
\memberfour{Michael W. Mahoney}
\memberfive{Mark Tygert}

\submitdate{May 2011\\(For Graduation May 2011)}
\copyrightyear{2011}   

%
\titlepage
\abstitlepage          
\copyrightpage         
\tableofcontents

\listoftables 
 
\specialhead{ACKNOWLEDGMENT}
 
I did most of the work presented in this document  as a graduate student 
in the Computer Science Department at Rensselaer Polytechnic Institute 
during the period Sept. 2006 - May 2011 supervised by Petros Drineas. 
Part of this work was done as a visiting student in 
(i)  the IBM T.J Watson Research Lab (summer, 2008);
(ii) the Institute of Pure and Applied  Mathematics at the 
University of California, at Los Angeles (fall, 2008);
(iii) the IBM Zurich Research Lab (summer/fall, 2009); and
(v) WorldQuant, LLC (summer, 2010). I am grateful to all these
places for their hospitality. 

During these years I was benefited by discussing the topics
of this dissertation with 
Nikos Anerousis        (at IBM Watson), 
Costas Bekas           (at IBM Zurich), 
Kristin Bennet         (at Rensselaer),
Ali Civril             (st Rensselaer),
Petros Drineas         (at Rensselaer),
Efstratios Gallopoulos (at University of Patras),
Michael Mahoney        (at Stanford),
Malik Magdon Ismail    (at Rensselaer),
Efi Kokiopoulou        (at ETH Zurich), 
Michalis Raptis        (at UCLA), 
Jimeng Sun             (at IBM Watson), 
Spyros Stefanou        (at WorldQuant),
Charalabos Tsourakakis (at Carnegie Mellon University), 
Mark Tygert            (at UCLA), 
Michalis Vlachos       (at IBM Zurich), and
Anastasios Zouzias     (at University of Toronto).

Finally, many thanks go to the people that read parts
of this document and found numerous typos and mistakes:
Sanmay Das, Sotiris Gkekas, Charalabos Tsourakakis, and 
Anastasios Zouzias. 

\specialhead{ABSTRACT} 
We study three fundamental problems of Linear Algebra, 
lying in the heart of various Machine Learning applications, namely:
(i)   Low-rank  Column-based Matrix Approximation, 
(ii)  Coreset   Construction in Least-Squares Regression, and
(iii) Feature   Selection in $k$-means Clustering. 
A high level description of these problems is as follows: 
given a matrix \math{\matA} and an integer $r$, what
are the $r$ most ``important'' columns (or rows) in $\matA$? A more detailed
description is given momentarily. 

{\bf 1. Low-rank Column-based Matrix Approximation.} 
We are given a matrix $\matA$ and a target rank $k$.
The goal is to select a subset of columns of $\matA$ and, by using
only these columns, compute a rank $k$ approximation to $\matA$ that is as good as the
rank $k$ approximation that would have been obtained by using all the columns.

{\bf 2. Coreset Construction in Least-Squares Regression.} 
We are given a matrix $\matA$ and a vector $\b$. 
Consider the (over-constrained) 
least-squares problem of minimizing $\TNorm{\matA\x-\b}$, over all vectors $\x \in\cl D$.
The domain \math{\cl D} represents the constraints on the solution and can be arbitrary.
The goal is to select a subset of the rows of $\matA$ and $\b$
and, by using only these rows, find a solution vector that is as good 
as the solution vector that would have been obtained by using all the rows. 
  
{\bf 3. Feature Selection in K-means Clustering.} 
We are given a set of points described with respect to a large number of features. 
The goal is to select a subset of the features and, by using only this subset, 
obtain a $k$-partition of the points that is as good as the partition that would 
have been obtained by using all the features.

We present novel algorithms for all three problems mentioned above.
Our results can be viewed as follow-up research to a 
line of work known as ``Matrix Sampling Algorithms''. Frieze et al~\cite{FKV98}
presented the first such algorithm for the Low-rank Matrix Approximation
problem. Since then, such algorithms have been developed for several other
problems, e.g. Regression~\cite{DMM06a}, Graph Sparsification~\cite{SS08}, 
and Linear Equation Solving~\cite{Spi10}. 
Our contributions to this line of research are: 
(i) improved algorithms for Low-rank Matrix Approximation and Regression
(ii) algorithms for a new problem domain ($K$-means Clustering).            

\chapter{INTRODUCTION} \label{chap1}

We study several topics in the area of ``Matrix Sampling Algorithms''.
Chapter \ref{chap3} gives a comprehensive overview of this area; 
here, we summarize the main idea. Let the matrix $\matA \in \R^{m \times n}$ be the input to a
linear algebraic or machine learning problem.  
Assume that there is an \emph{Algorithm} that solves this problem exactly in 
$O( f(m,n) )$. Consider the following question. 
\begin{center}
{ \bf Does there exist a small subset of columns (or rows, or elements) in $\matA$ 
such that running this \emph{Algorithm} on this subset gives an approximate 
solution to the problem in $O( g(m,n) )$ time with $g(m,n) = o(f(m,n))$? } 
\end{center}
A positive answer to this question implies the following approach to solve the problem:
first, by using a matrix sampling algorithm, select a few columns from $\matA$; then,
compute an approximate solution to the problem by running the \emph{Algorithm} on the selected columns. 
The challenge is to select a subset of columns from $\matA$ such that the solution obtained by running the \emph{Algorithm}
on this subset is as good as the solution that would have been obtained by running the \emph{Algorithm} on $\matA$. 
The focus of this dissertation is exactly on developing such ``good'' matrix sampling algorithms. We do so for three
problems: 
\begin{enumerate}
\item { \bf  Low-rank Matrix Approximation. }
\item { \bf  Least-Squares Regression. }
\item { \bf  $K$-means Clustering. }
\end{enumerate}
\noindent Prior work has offered good matrix sampling algorithms for several other problems,
including Low-rank Matrix Approximation and Regression, e.g.: 
(i)   Matrix Multiplication~\cite{DK01}, 
(ii)   Graph Sparsification~\cite{ST08b}, and 
(iii)  Linear Equation Solving~\cite{ST08c}. 
The - high level - contributions of this thesis are: 
(i) improved algorithms for Low-rank Matrix Approximation and Regression
(ii) novel algorithms for a new problem domain ($k$-means Clustering). 
 
Our {\bf motivation} to study the above problems is two-fold. 
First, by selecting the most important columns from the input matrix, 
we quickly reveal the most important and meaningful information in it.
Consider, for example, a matrix describing stock prices with the rows
corresponding to dates and the columns corresponding to stocks. A small
subset of columns that reconstructs this matrix corresponds to a small subset of
stocks that ``reconstructs'' the whole portfolio. Identifying the ``dominant'' 
stocks in a portfolio is certainly a (possibly multimillion) worth doing task. 
Second, by selecting a small portion of the input data and solving a smaller problem, 
we are able to improve the computational efficiency of standard algorithms. 
For example, continuing on the theme of the above discussion, solving a regression 
problem with a small number of rows from $\matA$ and $\b$, as opposed to the
(possibly severely overconstrained) problem involving $\matA, \b$ would
make statistical arbitrage algorithms able to deliver solutions much faster,
which is important, for example, in high-frequency trading strategies.
On top of the improved computational efficiency, since the smaller problem is often
more robust to noise and outliers, these strategies would be attractive
from the risk minimization point of view. 

The highlights of our {\bf contributions} for the three problems we mentioned above are as follows:
{\bf 1)} We offer fast, accurate, deterministic 
algorithms for column-based low-rank matrix approximations. We achieve computational 
efficiency by introducing a novel framework in Section~\ref{chap22} where one is able to work with approximate SVD 
factorizations and select columns from matrices. Previous work for column-based low-rank matrix
approximations uses the exact SVD, which is expensive. We achieve near-optimal and
deterministic algorithms by generalizing a recent important result for decompositions
of the identity~\cite{BSS09} in Sections~\ref{chap317} and~~\ref{chap318}. 
Previous work offers algorithms that are not
optimal and are typically randomized. 
{\bf 2)} We present the first deterministic algorithm for coreset construction in
least-squares regression with arbitrary constraints. We achieve that by putting the 
result of~\cite{BSS09} in the linear regression setting. Previous
work on this topic offers randomized algorithms and less accurate bounds than ours. 
{\bf 3)} We present the first provably accurate feature selection algorithm for $k$-means clustering. 
We achieve that by looking at this problem from a linear algebraic point of view.

\section{Structure }\label{chap11}
Sections \ref{chap12}, \ref{chap13}, and \ref{chap14}  
of this chapter introduce the three problems that we study
in detail in the present dissertation. For each problem, we
give motivating examples, a brief review of prior work, 
and a resume of our results. 
We conclude this chapter in Section \ref{chap15} by describing the main idea
behind the algorithms of this dissertation and the proof techniques
of our results.
Chapter \ref{chap2} introduces the notation and provides the necessary background from Linear Algebra and Probability Theory. 
Chapter \ref{chap3} gives a comprehensive overview of existing work on the  topic of matrix sampling algorithms.
Our results on low-rank matrix approximation, least-squares regression,
and $k$-means clustering are given in detail in Chapters \ref{chap4}, \ref{chap5}, and \ref{chap6}, respectively. 
Finally, we discuss directions for future research in Chapter \ref{chap7}. 

\section{Low-rank Column-based Matrix Approximation }\label{chap12}

Given \math{\matA \in \R^{m \times n}} of rank $\rho$ and $k < \rho$, 
the best rank \math{k} approximation to $\matA$ is 
$$\matA_k=\sum_{i=1}^k\sigma_i\u_i\v_i\transp,$$ 
where
\math{\sigma_1\ge\sigma_2\ge\cdots\ge\sigma_k\ge 0} are the top \math{k} singular values of $\matA$, with associated left and right singular vectors
\math{\u_i \in \R^m} and \math{\v_i \in \R^n}, respectively. 
The singular values and singular vectors of $\matA$ can be computed via the Singular Value Decomposition (SVD) of
$\matA$ in deterministic \math{O( mn\min\{m,n\})} time. There is considerable interest (e.g. \cite{CH92,FKV98,DV06,DRVW06,LWMRT07,SXZF07,DR10,HMT}) in determining a minimum set of $r \ll n$ columns of $\matA$  which is approximately as good as \math{\matA_k} at reconstructing \math{\matA}.
For example, \cite{DRVW06} provides a connection between a ``good'' subset of columns of $\matA$ and the projective clustering problem; \cite{DV06} efficiently computes low-rank approximations by using a subset of ``important'' columns of $\matA$; 
\cite{SXZF07} uses carefully selected ``informative'' columns and rows of $\matA$ to interpret large scale datasets; 
\cite{CH92} shows that a small ``linearly independent'' set of columns of $\matA$ are more robust to noise in least-squares regression.

\paragraph{Problem Setup.}
Fix $\matA \in \mathbb{R}^{m \times n}$ of rank $\rho$, integer $k < \rho$, and oversampling parameter $k \leq r < n$.
The goal is to construct $\matC \in \mathbb{R}^{m \times r}$ consisting of $r$ columns of $\matA$. $\matA$ is the target 
matrix for the approximation, $k$ is the target rank, and $r$ is the number of columns to be selected.  
We are interested in the reconstruction errors: 
$$
\XNorm{\matA-\matC\matC^+\matA} \qquad \mbox{and} \qquad
\XNorm{\matA-\Pi^\xi_{\matC,k}(\matA)},
$$
for $\xi=2,F$.
The former is the reconstruction error
for \math{\matA} using the columns in \math{\matC};
the latter is the error from the
best (under the appropriate norm) rank \math{k}
reconstruction of
\math{\matA} within the column space of $\matC$. 
For fixed $\matA$, $k$, and $r$, we would like these errors to be as close
to \math{\XNorm{\matA-\matA_k}} as possible. 
Note that $\XNorm{\matA-\matC\matC^+\matA} \le \XNorm{\matA-\Pi^\xi_{\matC,k}(\matA)}$; 
so, the way we will present our results is the following:
$$ \XNorm{\matA-\Pi^\xi_{\matC,k}(\matA)} \le \alpha \XNorm{\matA-\matA_k}.$$ 
$\alpha$ is the approximation factor of the corresponding algorithm; the goal
is to design algorithms that
select $r$ columns from $\matA$ and offer ``small'' $\alpha$. 

\paragraph{Prior Work.} If $r = k$, prior work provides almost near-optimal algorithms.
``Near-optimal'' means that the factors $\alpha$ offered by the corresponding algorithms are
- asymptotically - the best possible. So, there is no room to improve on these results, modulo running time. 
We will indeed present fast algorithms for the $r = k$ case that are almost as accurate as the best 
existing ones
(see Theorem \ref{thmCSSPs} for $\xi = 2$, Theorem \ref{thmCSSPf} for $\xi = F$, and Theorem
\ref{fastcssp} for $\xi=2,F$).

For general $r > k$, not much is known in existing literature. For spectral norm, we are not familiar
with any technique addressing this problem; for Frobenius norm, existing algorithms are not optimal and work only
if r = $\Omega(k \log(k))$~\cite{DMM06d,DV06}. 
For example, ~\cite{DMM06d} describes a $O( m n \min\{m,n\} +  r \log(r) )$ time
algorithm that, with constant probability, guarantees:   
$$ \FNorm{\matA-\Pi^F_{\matC,k}(\matA)} \le 
\sqrt{  1 + O\left( \frac{ k \log(k)}{r} \right)} \FNorm{\matA-\matA_k}.$$ 

\paragraph{Results.}
We present the first polynomial-time near-optimal algorithms for arbitrary
$r>k$ for both $\xi=2,F$. To be precise, 
for $\xi=F$, we present a ``near-optimal'' algorithm 
only for $r > 10k$ (see Theorem \ref{thmFast3} in Section \ref{chap42}); 
for arbitrary $r > k$ our result is 
optimal up to an additive factor $1$ 
(see Theorem \ref{theorem:intro2} in Section \ref{chap42}). We note that, for $\xi = 2$ and $r > k$, 
the best possible value for $\alpha$ is $\hat\alpha = \sqrt{\frac{n}{r}}$; for $\xi = F$ and $r > k$, 
the best possible approximation is $\hat\alpha = \sqrt{1 + \frac{k}{2r}}$. 
We break our results into three categories.

{\bf 1.} For $r > k$ and $\xi = 2$: (i) Theorem \ref{theorem:intro1} in Section \ref{chap41}
presents a deterministic algorithm that runs in 
$O\left( mn\min\{m,n\} + rn\left(k^2+\left(\rho-k\right)^2\right)\right)$
and achieves error:
$$ \TNorm{\matA-\Pi^2_{\matC,k}(\matA)} \le 
O\left(  \sqrt{\frac{\rho}{r}}  \right)
\TNorm{\matA-\matA_k};
$$ 
(ii) Theorem \ref{thmFast1} in Section \ref{chap41} presents a considerably faster randomized algorithm that runs in
$O\left( 
mnk \log\left( k^{-1} \min\{m,n\}\right)  + nrk^2      
\right)$ and achieves, in expectation, error: 
$$ \Expect{\TNorm{\matA-\Pi^2_{\matC,k}(\matA)}} \le 
O\left( \sqrt{ \frac{n}{r} } \right)
\TNorm{\matA-\matA_k}.
$$

{\bf 2.} For $r > k$ and $\xi = F$: (i) Theorem \ref{theorem:intro2} in Section \ref{chap42} 
presents a deterministic algorithm that runs in $O\left(mn\min\{m,n\} + nrk^2\right)$
and achieves error:
$$ \FNorm{\matA-\Pi^F_{\matC,k}(\matA)} \le 
\sqrt{ 2  + O\left( k/r  \right) }
\FNorm{\matA-\matA_k};
$$ 
(ii) Theorem \ref{thmFast2} in Section \ref{chap42} presents a considerably faster randomized algorithm that runs in
$O\left(mnk+nrk^2\right)$ 
and achieves, in expectation, error:
$$ \Expect{\FNorm{\matA-\Pi^F_{\matC,k}(\matA)}} \le 
\sqrt{ 3  + O\left( k/r  \right) }
\FNorm{\matA-\matA_k}.
$$

{\bf 3.} Finally, for $r > 10 k$ and $\xi = F$, 
Theorem \ref{thmFast3} in Section \ref{chap42} presents a randomized algorithm that runs in 
$O\left(mnk + n k^3 + n\log(r) \right)$ and achieves, in expectation,
error:
$$ \Expect{\FNorm{\matA-\Pi^F_{\matC,k}(\matA)}} \le 
\sqrt{ 1  + O\left( k/r  \right) }
\FNorm{\matA-\matA_k}. 
$$

\section{Coreset Construction in Least-Squares Regression  }\label{chap13}

Linear regression is one of the beloved techniques in the statistical analysis
of data, since it provides a powerful tool for information extraction~\cite{SLS77}. 
Research in this area ranges from numerical solutions of 
regression problems~\cite{Bjo96} to robustness of the prediction 
error to noise (e.g. using feature selection methods~\cite{GE03}).
We don't address neither of these issues here; rather, we
focus on constructing coresets for constrained least-squares regression. 
A coreset is a subset of the data that contains essentially as much information 
(when viewed through the lens of the learning model) as the original data. 
For example, with support vector classification, the support 
vectors are a coreset~\cite{CS00}. A coreset contains the meaningful or important
information and provides a good summary of the data. If such a coreset can be found quickly, 
one could obtain good approximate solutions by solving a (much) smaller regression problem. 
When the constraints are complex  (e.g. non-convex constraints), 
solving a smaller  problem could be a significant saving~\cite{Gao07}.
(See section \ref{chap33} for further motivation on least-squares problems with constraints.)
\vspace{-.19in}
\paragraph{Problem Setup.} 
Assume \math{m} data points
\math{(\z_1,y_1),\ldots,(\z_m,y_m)}; \math{\z_i \in \R^n}
are features and \math{y_i \in \R} are targets (responses). Typically $m \gg n$. 
The linear regression problem asks to determine a vector \math{\x_{opt}
\in\cl D\subseteq\R^n}
that minimizes
\vspace{-.1in}
\mand{
\cl E(\x)=\sum_{i=1}^n(\z_i\transp\cdot \x-y_i)^2,
}
over \math{\x\in\cl D};
so, \math{\cl E(\x_{opt})\le \cl E(\x)}, \math{\forall\x\in\cl D}.
The domain \math{\cl D} represents the constraints on the solution
and can be arbitrary. A coreset of size \math{r < m} is a subset of the data,
\math{(\z_{i_1},y_{i_1}),\ldots,(\z_{i_r},y_{i_r})}.
The coreset regression problem considers the squared error, 
\vspace{-.1in}
\mand{
\tilde{\cl E}(\x)=\sum_{j=1}^r(\z_{i_j}\transp\cdot \x-y_{i_j})^2.
}
Suppose that \math{\tilde{\cl E}} is minimized at
\math{\tilde\x_{opt}}, so 
\math{\tilde{\cl E}(\tilde\x_{opt})\le \tilde{\cl E}(\x)}, 
\math{\forall\x\in\cl D}; the goal is to construct a coreset 
\math{(\z_{i_1},y_{i_1}),\ldots,(\z_{i_r},y_{i_r})} such that
\math{\tilde\x_{opt}} is nearly as good as \math{\x_{opt}}.
For fixed $\matA, \b$, \math{\epsilon>0}, and $r$ being as small as possible, the goal is to
find \math{\tilde\x_{opt}} with
$$
\cl E(\tilde\x_{opt})\le (1+\epsilon) \cl E(\x_{opt}).
$$ 

Below, we switch to a more convenient matrix formulation of the problem. 
Let \math{\matA\in\R^{m\times n}} ($m \gg n$) be the
\emph{data matrix} whose rows are the data points 
 \math{\z_i\transp}, so
\math{\matA_{ij}=\z_i[j]}; and,
\math{\b\in\R^m} is the  target vector, so
\math{b_i=y_i}. Also,  
\math{\cl E(\x)=\TNormS{\matA\x-\b}} and
$$\x_{opt}=\argmin_{\x\in\cl D}\TNormS{\matA\x-\b}.$$
A coreset of size \math{r < m}
is a subset \math{\matC \in \R^{r \times n}} of the rows of \math{\matA} 
and the corresponding
elements \math{\b_c \in \R^r}
of \math{\b} (possibly rescaled); so, \math{
\tilde{\cl E}(\x)=\TNormS{\matC\x-\b_c}} and 
$$
\tilde\x_{opt}=\argmin_{\x\in\cl D}\norm{\matC\x-\b_c}_2^2.
$$
A coreset \math{\matC, \b_c} is \math{(1+\epsilon)}-approximate if the corresponding vector $\tilde\x_{opt}$ satisfies
$$
\TNormS{\matA\tilde\x_{opt}-\b}\le
(1+\epsilon)\TNormS{\matA\x_{opt}-\b}
.$$

\paragraph{Prior Work.} The best (in terms of coreset size)
$(1+\epsilon)$-coreset construction algorithm is in~\cite{DMM06a}.
It finds a coreset of size \math{r= O(n \log(n) \epsilon^{-2})}, works only
for unconstrained regression, runs in \math{O( mn^2 +  n \log(n) \epsilon^{-2} \log(n \log(n) \epsilon^{-1}))}, and
fails with some constant probability.
There are also several random-projection type algorithms that construct ``coresets''
for unconstrained regression~\cite{Sar06, DMMS07, RT08, Zou10}.
Here, the rows in $\matC$ are linear combinations of the rows of $\matA$ (similarly
for $\b_c$ and $\b$). The best of these methods (in terms of ``coreset'' size) is in~\cite{Zou10} and constructs a 
$(1+\epsilon)$-``coreset'' with $r = O(n \epsilon^{-1})$ .
(Table~\ref{table:34} in Section~\ref{chap33} gives a summary of these results.)

\paragraph{Results.} Our main result is Theorem~\ref{lem:regression} in Section \ref{chap51}, which 
describes a deterministic $O\left( mn^2 + n^3 \epsilon^{-2}  \right)$ algorithm that  
constructs a \math{(1+\epsilon)}-approximate coreset of size \math{r = O(n \epsilon^{-2})}. 
Our result improves upon~\cite{DMM06a} on three aspects: (i) it is deterministic,
(ii) the coreset size is $O(\log(n))$ smaller, and (iii) handles arbitrary constraints.  
Also, Theorem \ref{lem:regression2} in Section \ref{chap52} presents a randomized algorithm that, with constant 
probability, constructs a \math{(1+\epsilon)}-approximate ``coreset'' of size \math{r = O(n \ln(n) \log(n m) \epsilon^{-2})}.
Our result improves upon~\cite{Sar06, DMMS07, RT08, Zou10} 
by means of providing a ``coreset'' for regression of arbitrary constraints. 

\section{Feature Selection in $k$-means Clustering  } \label{chap14}
Clustering is ubiquitous in science and engineering, with numerous
and diverse application domains, ranging from bioinformatics and
medicine to the social sciences and the web~\cite{Har75}. Perhaps
the most well-known clustering algorithm is the so-called
``$k$-means'' algorithm or Lloyd's method \cite{Llo82}, an
iterative expectation-maximization type approach, which attempts
to address the following objective: given a set of points in a
Euclidean space and the number of
clusters $k$, split the points into $k$ clusters so that the total
sum of the (squared Euclidean) distances of each point to its
nearest cluster center is minimized. The good behavior of the Lloyd's method
(\cite{Llo82,ORSS06}), have made $k$-means enormously popular in
applications~\cite{Wu07}.

In recent years, the high dimensionality of the modern massive
datasets has provided a considerable challenge to $k$-means
clustering approaches. First, the curse of dimensionality makes
algorithms for $k$-means clustering very slow, and, second, the
existence of many irrelevant features may not allow the
identification of the relevant underlying structure in the data
\cite{GGBD05}. Practitioners addressed such obstacles by
introducing feature selection and feature extraction techniques.
It is worth noting that feature selection selects a small subset
of actual features from the data and then runs the clustering
algorithm only on the selected features, whereas feature
extraction constructs a small set of artificial features
and then runs the clustering algorithm on the constructed
features. 

\paragraph{Problem Setup.}
Consider $m$ points $\mathcal{P} = \{ p_1, p_2, ..., p_m \} \in \R^n$, and integer
$k$ denoting the number of clusters. The objective of 
$k$-means is to find a $k$-partition of 
$\mathcal{P}$ such that points that are ``close'' to each other belong to the same cluster and points 
that are ``far'' from each other belong to different clusters. 
A $k$-partition of $\mathcal{P}$ is a collection 
$\cl S=\{\mathcal{S}_1, \mathcal{S}_2, ..., \mathcal{S}_k\}$
of \math{k} non-empty pairwise disjoint
sets which covers \math{\cl P}.
Let $s_j=|\mathcal{S}_j|$ be the size of $\mathcal{S}_j$. 
For each set \math{S_j}, let \math{\bm\mu_j\in\R^n} be its centroid 
(the mean point):
\math{\bm\mu_j=(\sum_{p_i\in S_j}p_i)/s_j}.
The $k$-means objective function is 
$$
\mathcal{F}(\mathcal{P}, \cl S) = 
\sum_{i=1}^m\norm{p_i-\bm\mu(p_i)}_2^2;
$$
\math{\bm\mu(p_i) \in \R^n} is the centroid of the cluster to which \math{p_i} belongs.
The goal of $k$-means is to find the partition 
$$\cl S_{opt}= \arg \min_{\cl S} \cl F (\mathcal{P}, \cl S).$$
The goal of feature selection is to construct $r$-dimensional points 
$\mathcal{\hat{P}} = \{ \hat{p}_1, \hat{p}_2, ..., \hat{p}_m \} \in \R^r$ 
($r \ll n$, and each $\hat{p}_i$ contains a subset of elements of the corresponding $p_i$), such that,
for $\hat{\cl S}_{opt} = \arg \min_{\cl S} \cl F (\mathcal{\hat{P}}, \cl S)$:
$$ \cl F (\mathcal{P}, \hat{\cl S}_{opt}) \leq (\beta + \epsilon) \cl F (\mathcal{P}, \cl S_{opt}) .$$
Here $\beta$ is a small constant, e.g., $\beta = 1,2,3$. The parameter $\epsilon > 0$
is given as input and one minimizes $r$ to achieve the desired accuracy $\beta + \epsilon$.

\paragraph{Prior Work.}
Despite the significance of the problem, as well as the
wealth of heuristic methods addressing it~\cite{GE03}, 
there exist no provably accurate feature selection methods for $k$-means clustering.
On the other hand, there are two provably accurate feature extraction methods.
First, a folklore result~\cite{JL84} indicates that one can 
construct $r = O( \log(m)\epsilon^{-2})$ artificial features with Random Projections and,
with constant probability, get a $(1+\epsilon)$-approximate clustering. 
Second, the work in \cite{DFKVV99} argues that one can construct $r = k$ artificial features with the SVD,
in $O( mn \min\{ m,n \} )$ time and get a $2$-approximation on the clustering quality. 

\paragraph{Results.} 
We present the first provably accurate feature selection algorithm 
for $k$-means: Theorem \ref{fastkmeans} in Section \ref{chap61} presents
a $O( mnk\epsilon^{-1} + k \log(k) \epsilon^{-2} \log(k \log(k) \epsilon^{-1}))$ 
randomized algorithm that, with constant probability, achieves a
$(3+\epsilon)$-error with $r = O( k \log(k) \epsilon^{-2} ) $ features. 
We also describe a random-projection-type feature
extraction algorithm: Theorem \ref{thm:second_result} in Section \ref{chap62} presents a 
$O(m n \lceil \epsilon^{-2} k / \log(n) \rceil )$
algorithm that, with constant probability, 
achieves a $(2+\epsilon)$-error with $r = O( k \epsilon^{-2})$ ``features''. 
We improve the above folklore result by showing that a
smaller number of dimensions are enough for obtaining 
an approximate clustering. 
Finally, Theorem \ref{thm:first_result} in Section~\ref{chap63}
describes a feature extraction algorithm
that uses an approximate SVD to construct $r = k$ ``features''
in $O(m n k \epsilon^{-1})$ time such that, with constant probability, the 
error is at most a $2+\epsilon$ factor from the optimal. 

\section{Algorithms and General Methodology}\label{chap15}

Algorithmically, to select $r$ columns from $\matA = [\a_1, \a_2,..., \a_n] \in \R^{m \times n}$, 
we  compute the matrix $\matV_k \in \R^{n \times k}$ of the right singular vectors 
of $\matA$ and consider the matrix $\matV_k\transp = [\v_1, \v_2,..., \v_n] \in \R^{k \times n}$. Here 
$k \ll m$ is part of the input of the corresponding problem and denotes, for example,
the target rank for the approximation (Low-rank Matrix Approximation)
or the number of clusters ($k$-means Clustering). 
Notice that $\matV_k\transp$ has
the same number of columns with $\matA$. So, in some sense,
selecting columns from $\matA$ and $\matV_k\transp$ is equivalent,
since there is a one-to-one correspondence between the columns of these matrices.
Actually, the columns of $\matV_k\transp$ is a ``compact'' representation 
of the columns of $\matA$ in some $k$-dimensional space, which is exactly
the subspace of interest for the corresponding problem. 
To construct
(rescaled) columns $\matC = \matA \Omega \matS$, it suffices to construct
sampling and rescaling matrices $\Omega, \matS$. To do so, we select
columns from $\matV_k\transp$ such that the submatrix 
$\matV_k\transp \Omega \matS$ has columns that are as ``linearly independent'' 
as possible. Intuitively, this means that we select the columns of $\matV_k\transp$ that
capture essentially all the information in $\matV_k\transp$, and, since
$\matV_k\transp$ is a compact representation of $\matA$, this corresponds to
selecting 
columns of $\matA$ that capture most of the information in $\matA$.  
Notice that $\matV_k\transp$ has rank at most $k$, so there are exactly 
$k$ columns that are linearly independent, but there might be many such 
$k$-subsets of which we do not know what is the best. The challenge is to 
find the ``best'' such subset of $r \ge k$ columns  and to do so in 
low-order polynomial time. Evaluating the
``linearly independence'' of a set of columns can be done through different 
quantities, for example, the determinant, the volume of the parallelepiped 
spanned by those columns, or the singular values of the matrix formed by
those columns. We will mostly do so by measuring the smallest singular
value of $\matV_k\transp \Omega \matS$. Notice that 
$\sigma_k( \matV_k\transp \Omega \matS ) = 0$ means that the selected
submatrix has rank less than $k$; on the other hand, 
$\sigma_k( \matV_k\transp \Omega \matS ) \gg 0$ implies that the selected
columns are almost as linearly independent as possible. To summarize, 
our main goal is to construct matrices
$\Omega, \matS$ and guarantee that $\sigma_k( \matV_k\transp \Omega \matS )$
is as large as possible. The later goal can be viewed as the main task behind almost
all the column sampling techniques of Section \ref{chap31}. We should note here
that Section \ref{chap31} presents several methods for selecting columns from
short-fat matrices of orthonormal rows; then, in Chapters \ref{chap4}, \ref{chap5},
and \ref{chap6} we show how to use these elementary methods for three different problems. 
An important issue is that sometimes we want to ensure that both $\sigma_k( \matV_k\transp \Omega \matS )$
is large and $\sigma_1( \matV_{\rho-k}\transp \Omega \matS )$ is small, i.e. we want
to construct $\Omega, \matS$ that simultaneously select columns from two different matrices.
Existing techniques do not provide such advantages. To do so, we developed novel sampling techniques
for selecting columns from two different matrices simultaneously in Sections \ref{chap317} and
\ref{chap318}. 
Another important observation is that we developed a novel theory (see Lemma \ref{lem:genericNoSVD})
in Section \ref{chap22} that shows that there is no need to compute exactly the matrix $\matV_k$, approximations
suffice. This theory allows us to design fast algorithms without the SVD. 

{ \bf Proof Techniques.}  
Our proofs rely on matrix perturbation theory and the
aforementioned spectral properties of the submatrix 
$\matV_k\transp \Omega \matS$. As an example of the
proof techniques that we employ here, consider 
Lemma~\ref{lem:generic} in Section \ref{chap22}
with $\matW = \Omega \matS$ and $\xi=2$ (after some standard properties of matrix norms):
$$
\TNormS{\matA - \Pi_{\matC,k}^{2}(\matA)}
\leq \TNormS{\matA-\matA_k} + \TNormS{(\matA-\matA_k)} \TNormS{ \matV_{\rho-k}\transp \Omega \matS} \TNormS{ (\matV_k\transp \Omega \matS)^+}.
$$
To prove this general bound, we combined some results from matrix perturbation theory
with the matrix analog of the Pythagorean theorem. This generic bound is quite useful:
it indicates that for any sampling and rescaling matrices $\Omega, \matS$ the approximation
error is bounded from above by certain terms, so, it tells us that we should focus on 
algorithms that control exactly these terms.   
Consider, for example, constructing a sampling and a rescaling matrix 
with the technique of Section~\ref{chap317}:
\mand{
\TNormS {(\matV_k\transp \Omega \matS)^+} \le  \left( 1 - \sqrt{\frac{k}{r}} \right)^{-2}; \quad
\qquad \text{and}\qquad \TNormS{\matV_{\rho-k}\transp \Omega \matS} 
\le \left( 1 + \sqrt{ \frac{\rho-k}{r} } \right)^{2}.
}
Combine these bounds with our generic equation and take square roots on both sides
of the resulting equation:
\vspace{-0.19in}
$$
\TNorm{\matA - \Pi_{\matC,k}^{2}(\matA)}
\leq \TNorm{\matA-\matA_k} + \TNorm{(\matA-\matA_k)} 
\left( 1 + \sqrt{ \frac{\rho-k}{r} } \right) \left( 1 - \sqrt{\frac{k}{r}} \right)^{-1}.
$$
We just proved one of the main results of this thesis (Theorem~\ref{theorem:intro1})! 
Very similar proof techniques are used in the rest of our results. 

\section*{Bibliographic Note}
We conclude this chapter by giving
a precise comparison of the results of this thesis with
conference or journal publications by the author and collaborators.
Chapter~\ref{chap2}, except Lemma~\ref{lem:eq}, is joint work 
with Petros Drineas and Malik Magdon Ismail in~\cite{BDM11a}. 
Sections \ref{chap317} and \ref{chap318} in Chapter~\ref{chap3} appeared
in~\cite{BDM11a} as well. Section \ref{chap319} is joint work with 
Anastasios Zouzias and Petros Drineas in~\cite{BZD10}. 
Section~\ref{chap3110} is joint work with Petros Drineas
in~\cite{BD09}. Portions of Section~\ref{chap32} appeared in~\cite{BDM11a}
(joint work with Petros Drineas and Malik Magdon Ismail)
and in~\cite{BMD08, BMD09a, BMD09b} (joint work with Michael W. Mahoney and Petros Drineas).
Portions of Section~\ref{chap33} appeared in~\cite{BD09}, while
Section~\ref{chap34} appeared in~\cite{BZD10}. 
Chapter~\ref{chap4} - except Theorem \ref{fastcssp} - is joint work 
with Petros Drineas and Malik Magdon Ismail in~\cite{BDM11a}. 
A preliminary version of Theorem \ref{fastcssp} appeared in~\cite{BMD08, BMD09a, BMD09b} and
is joint work with Michael W. Mahoney and Petros Drineas. This preliminary version
uses the exact SVD for the factorization in the first step of the algorithm; here,
we show that an asymptotically similar bound is possible by using an approximate SVD factorization. 
In Chapter~\ref{chap5}, the first part of Section \ref{chap51} is joint work with Petros Drineas and 
Malik Magdon Ismail in~\cite{BDM11b}, while Section \ref{chap52} is joint work with 
Petros Drineas in~\cite{BD09}. 
In Chapter~\ref{chap6}, a preliminary version of Section \ref{chap61} appeared in \cite{BMD09c} and
is joint work with Michael W. Mahoney and Petros Drineas. This preliminary version
uses the exact SVD for the factorization in the first step of the algorithm; here,
we show that an asymptotically similar bound is possible by using an approximate SVD factorization. 
Finally, Section~\ref{chap62} appeared in~\cite{BZD10} and is joint work with Anastasios Zouzias
and Petros Drineas.          

\chapter{PRELIMINARIES} \label{chap2}
\footnotetext[1]{Portions of this chapter previously appeared as:
C. Boutsidis, P. Drineas, and M. Magdon-Ismail, 
Near-Optimal Column-Based Matrix Reconstruction,
arXiv:1103.0995, 2011.}
This chapter introduces the notation that we use throughout this 
document and provides the necessary background from Linear Algebra
and Probability Theory. 

\section*{Basic Notation}  \label{chap21}
We use \math{\matA,\matB,\ldots} to denote matrices; 
\math{\a,\b,\ldots} to denote column vectors. 
\math{\matA=[\a_1,\ldots,\a_n] \in \R^{m \times n}} 
represents a matrix with columns $\a_1,\ldots,\a_n \in \R^{m}$. 
$\matI_{n}$ is the $n \times n$
identity matrix;  $\bm{0}_{m \times n}$ is the $m \times n$ matrix of zeros; $\bm{1}_n$ is the $n \times 1$ vector of ones; $\bm{e}_i$ is the standard basis (whose dimensionality will be clear from the context). 
$\matA_{ij}$ denotes the $(i,j)$-th element of $\matA$. Logarithms are base two.
We abbreviate ``independent identically
distributed'' to ``i.i.d'' and ``with probability'' to ``w.p''.

\section*{Sampling Matrices}  \label{chap22}
This whole dissertation is about sampling columns from matrices; 
here, we introduce the notation that we will use to describe such
a process. Let \math{\matA=[\a_1,\ldots,\a_n]} and
\math{\matC=[\a_{i_1},\ldots,\a_{i_r}]} be \math{r}
columns of~\math{\matA}. We can equivalently write 
\math{\matC=\matA\Omega}, where the \emph{sampling matrix} is
\math{\Omega=[\e_{i_1},\ldots,\e_{i_r}]} and \math{\e_i} are
standard basis vectors in 
\math{\R^n}. 
Let $\matS$ denote an $r \times r$ diagonal 
\emph{rescaling matrix} with non-zero entries; then,
$\matC = \matA \Omega \matS$ contains $r$ columns from $\matA$ rescaled
with the corresponding diagonal element of $\matS$. 
Notice that $\matA\Omega(\matA\Omega)^+ = 
\matA\Omega\matS\mat(\matA\Omega\matS)^+$, because rescaling
\math{\matC} does not change the subspace spanned
by its columns. So, in Chapter~\ref{chap4} all the results hold for $\matC = \matA \Omega$
as well (we stated the results for $\matC = \matA \Omega \matS$). In Chapters
\ref{chap5} and \ref{chap6} the rescaling can not be ignored. 

A \emph{permutation} matrix is a special
case of a sampling matrix where, for some permutation \math{\pi} of
\math{[1,\ldots,n]}, 
\math{\Pi=[\e_{\pi_1},\e_{\pi_2},\ldots,\e_{\pi_n}] \in \R^{n \times n}}; i.e.
$\matA \Pi \in \R^{m \times n}$ contains the columns of $\matA$ just permuted 
according to \math{\pi}.

\section*{Matrix norms}
 \label{chap23}
We use the Frobenius and the spectral norm of a matrix: 
$ \FNorm{\matA} = \sqrt{\sum_{i,j} \matA_{ij}^2}$ and 
$\TNorm{\matA} = \max_{\x:\TNorm{\x}=1}\TNorm{\matA \x}$, respectively. 
For any two matrices $\matA$ and $\matB$ of appropriate dimensions, \math{\TNorm{\matA}\le\FNorm{\matA}\le\sqrt{\rank(\matA)}\TNorm{\matA}}, $\FNorm{\matA\matB} \leq \FNorm{\matA}\TNorm{\matB}$, and $ \FNorm{\matA\matB} \leq \TNorm{\matA} \FNorm{\matB}$. The latter two properties are stronger versions of the standard submultiplicativity property:
$\XNorm{\matA\matB} \leq \XNorm{\matA}\XNorm{\matB}$. We will refer to these two stronger versions
as spectral submultiplicativity. 
The notation $\XNorm{\matA}$ indicates that an expression holds for both $\xi = 2$ and $\xi = F$.

\section*{Singular Value Decomposition} \label{chap24}
The Singular Value Decomposition (SVD) of the matrix $\matA$ with $\rank(\matA) = \rho$ is:
\begin{eqnarray*}
\label{svdA} \matA
         = \underbrace{\left(\begin{array}{cc}
             \matU_{k} & \matU_{\rho-k}
          \end{array}
    \right)}_{\matU_A \in \R^{m \times \rho}}
    \underbrace{\left(\begin{array}{cc}
             \Sigma_{k} & \bf{0}\\
             \bf{0} & \Sigma_{\rho - k}
          \end{array}
    \right)}_{\Sigma_\matA \in \R^{\rho \times \rho}}
    \underbrace{\left(\begin{array}{c}
             \matV_{k}\transp\\
             \matV_{\rho-k}\transp
          \end{array}
    \right)}_{\matV_\matA\transp \in \R^{\rho \times n}},
\end{eqnarray*}
with singular values \math{\sigma_1\ge\ldots\sigma_k\geq\sigma_{k+1}\ge\ldots\ge\sigma_\rho > 0}. We will often denote $\sigma_1$ as $\sigma_{\max}$ and $\sigma_{\rho}$ as $\sigma_{\min}$, and will use $\sigma_i\left(\matA\right)$ to denote the $i$-th singular value of $\matA$ when the matrix is not clear from the context. The matrices
$\matU_k \in \R^{m \times k}$ and $\matU_{\rho-k} \in \R^{m \times (\rho-k)}$ contain the left singular vectors of~$\matA$; and, similarly, the matrices $\matV_k \in \R^{n \times k}$ and $\matV_{\rho-k} \in \R^{n \times (\rho-k)}$ contain the right singular vectors of~$\matA$. It is well-known that $\matA_k=\matU_k \Sigma_k \matV_k\transp \in \R^{m \times n}$ minimizes \math{\XNorm{\matA - \matX}} over all
matrices \math{\matX \in \R^{m \times n}} of rank at most $k$. We use $\matA_{\rho-k} \in \R^{m \times n}$ to denote the matrix $\matA - \matA_k = \matU_{\rho-k}\Sigma_{\rho-k}\matV_{\rho-k}\transp \in \R^{m \times n}$. Also, $ \FNorm{\matA} =
\sqrt{ \sum_{i=1}^\rho\sigma_i^2(\matA) }$ and $\TNorm{\matA} = \sigma_1(\matA)$. The best rank $k$ approximation to $\matA$ satisfies $\TNorm{\matA-\matA_k} = \sigma_{k+1}(\matA)$ and 
$\FNorm{\matA-\matA_k} = \sqrt{\sum_{i=k+1}^{\rho}\sigma_{i}^2(\matA)}$.

Let $\matB \in \R^{m \times n}$ ($m \le n$) and $\matA=\matB\matB\transp \in \R^{m \times m}$; then, for all $i=1, ...,m$, $\lambda_{i}\left(\matA\right) = \sigma_{i}^2\left(\matB\right)$ denotes the $i$-th eigenvalue of $\matA$. We will also use $\lambda_{\min}\left(\matA\right)$ and $\lambda_{\max}\left(\matA\right)$ to denote the smallest and largest eigenvalue of $\matA$, respectively. Any matrix $\matA$ that can be written in this form  $\matA=\matB\matB\transp$ 
is called a Positive Semidefinite (PSD) matrix; clearly, all eigenvalues of a PSD matrix are non-negative. 

\section*{Perturbation Theory, Quadratic Forms, and PSD Matrices}

\begin{lemma} \label{lem:eq}
Let $\matV \in \R^{n \times k}$ with $n > k$ and $ \matV\transp \matV = \matI_k$. Let $0 < \epsilon < 1$.
Let $\matW \in \R^{n \times r}$ with $k \le r \le n $.
Then, the following six statements are \emph{equivalent}.
\begin{enumerate}

\item $ \TNorm{ \matV\transp \matW \matW\transp \matV - \matI_k } \le \epsilon$.

\item For all $i=1,...,k$: $ 1 - \epsilon \le  \lambda_i( \matV\transp \matW \matW\transp \matV ) \le 1 + \epsilon $.

\item For all $i=1,...,k$: $ 1 - \epsilon \le  \sigma^2_i( \matV\transp \matW  ) \le 1 + \epsilon $.

\item For any vector $\y \in \R^k$: $ (1-\epsilon) \y\transp\matV\transp \matV \y \le  
\y\transp  \matV\transp \matW \matW\transp \matV \y
\le (1 + \epsilon) \y\transp\matV\transp \matV \y $.

\item For any vector $\y \in \R^k$: $ (1-\epsilon) \TNormS{\matV \y}  \le  \TNormS{\matW\transp \matV \y} \le (1 + \epsilon) \TNormS{\matV \y} $.

\item $(1 - \epsilon) \matV\transp \matV \preceq  \matV\transp \matW \matW\transp \matV \preceq (1 + \epsilon) \matV\transp \matV$. (Partial ordering - see proof.)
 
\end{enumerate}

\end{lemma}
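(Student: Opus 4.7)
The plan is to establish a cycle of implications linking all six statements by exploiting the fact that $\matV\transp \matV = \matI_k$ and that $\matM := \matV\transp \matW \matW\transp \matV \in \R^{k \times k}$ is symmetric PSD (being of the form $(\matV\transp\matW)(\matV\transp\matW)\transp$). Concretely, I would prove the chain $(1)\Leftrightarrow(2)\Leftrightarrow(3)$, then $(2)\Leftrightarrow(6)\Leftrightarrow(4)\Leftrightarrow(5)$. This groups the equivalences into two natural clusters: the first cluster is about eigenvalues/singular values of $\matM$ vs.\ the spectral norm of $\matM - \matI_k$; the second cluster is about quadratic forms and the Loewner (semidefinite) ordering.

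For $(1)\Leftrightarrow(2)$, I would use that $\matM - \matI_k$ is symmetric, so $\TNorm{\matM - \matI_k} = \max_i |\lambda_i(\matM - \matI_k)| = \max_i |\lambda_i(\matM) - 1|$, where the last equality uses that subtracting $\matI_k$ just shifts each eigenvalue by $-1$. Bounding $|\lambda_i(\matM) - 1| \le \epsilon$ is exactly statement $(2)$. For $(2)\Leftrightarrow(3)$, I would invoke the identity $\lambda_i(\matB\matB\transp) = \sigma_i^2(\matB)$ applied to $\matB = \matV\transp\matW \in \R^{k \times r}$ (using $r \ge k$ so $\matB$ is short-fat and all $k$ eigenvalues of $\matB\matB\transp$ match squared singular values of $\matB$); this is already noted in the Singular Value Decomposition section of the preliminaries.

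For $(2)\Leftrightarrow(6)$, I would first recall that $\matA \preceq \matB$ means $\matB - \matA$ is PSD, and then observe that, because $\matV\transp\matV = \matI_k$, statement $(6)$ reduces to $(1-\epsilon)\matI_k \preceq \matM \preceq (1+\epsilon)\matI_k$, which is equivalent to a uniform eigenvalue bound on the symmetric matrix $\matM$ — exactly $(2)$. Here the parenthetical ``Partial ordering - see proof'' in the statement signals that I should define $\preceq$ in the proof, which I would do. For $(6)\Leftrightarrow(4)$, I would use the defining characterization of the Loewner ordering via quadratic forms: $\matX \preceq \matY$ iff $\y\transp \matX \y \le \y\transp \matY \y$ for all $\y$. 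Applying this to both inequalities in $(6)$ with vector $\y \in \R^k$ yields $(4)$ verbatim. Finally, $(4)\Leftrightarrow(5)$ is just rewriting the quadratic forms: $\y\transp \matV\transp \matV \y = \TNormS{\matV\y}$ and $\y\transp \matV\transp \matW \matW\transp \matV \y = \TNormS{\matW\transp \matV \y}$.

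Nothing here is technically hard; the only conceptual nuance is being explicit about what $\preceq$ means (since the paper has not defined it before) and being careful that statements $(4)$ and $(5)$ simplify further when one uses $\matV\transp\matV=\matI_k$ (so $\TNorm{\matV\y} = \TNorm{\y}$), which makes the chain back to $(2)$ via a uniform bound on $\y\transp \matM \y / \TNormS{\y}$ transparent through the Rayleigh-quotient characterization of eigenvalues of a symmetric matrix. I expect the mild ``obstacle'' to be purely bookkeeping: stating each equivalence crisply without introducing redundant calculations, given that each arrow is either a matrix-norm identity, a singular-value-vs-eigenvalue identity, or the Rayleigh/Loewner characterization.
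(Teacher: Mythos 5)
Your proposal is correct and follows essentially the same route as the paper's proof: the eigenvalue-shift identity for $(1)\Leftrightarrow(2)$, the identity $\lambda_i(\matB\matB\transp)=\sigma_i^2(\matB)$ for $(2)\Leftrightarrow(3)$, the Rayleigh-quotient/quadratic-form characterization for $(4)$ and $(5)$, and the definition of the Loewner ordering for $(6)$. If anything, your version is slightly tighter than the paper's, which argues only the forward direction starting from $(1)$ and invokes Weyl's perturbation inequality where your direct observation that $\TNorm{\matM-\matI_k}=\max_i|\lambda_i(\matM)-1|$ gives the equivalence in one stroke.
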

\begin{proof}
Assume that the first statement in the Lemma is true. 
We first recall the standard perturbation result for 
eigenvalues, which says that the eigenvalues of the 
real square symmetric matrices $\matX$ and $\matX + \matE$
satisfy: $ | \lambda_i(\matX) - \lambda_i(\matX + \matE) | \le \TNorm{\matE}$.  

The second statement of the Lemma follows from the first statement and the above
perturbation result with $ \matX =  \matV\transp \matW \matW\transp \matV$ and 
$ \matE = \matV\transp \matW \matW\transp \matV-\matI_k$. 

The third statement follows from the second statement by using the relation 
$\lambda_i( \matV\transp \matW \matW\transp \matV ) = \sigma^2_i( \matV\transp \matW )$.

To prove the fourth statement, we will use a property of the Rayleigh quotient of a square 
symmetric matrix $\matX$. For a vector $\y$ define $R(\matX, \y) = \frac{ \y\transp \matX \y}{ \y\transp \y}$.
It is well known that for all $\y$: $ \lambda_{min}(\matX) \le R(\matX, \y) \le \lambda_{max}(\matX) $. 
To conclude, use the later equation with $\matX = \matV\transp \matW \matW\transp \matV$ along 
with the second statement of the Lemma. 

The fifth statement follows from the fourth statement by using that for any vector $\x$, 
$\TNormS{\x} = \x\transp \x$. To conclude, use this twice for $\x = \matV \y$ 
and $\x = \matW\transp \matV \y$. 

In the sixth statement, for two matrices $\matX$ and $\matY$, $\matX \preceq \matY$ 
denotes the fact that the matrix $\matY - \matX$ is a Positive Semidefinite (PSD) matrix, i.e. the statement says that  both
$\matV\transp \matW \matW\transp \matV - (1 - \epsilon) \matV\transp \matV$ and
$(1 + \epsilon) \matV\transp \matV - \matV\transp \matW \matW\transp \matV$ are PSD. 
First, recall the definition of a PSD matrix: 
a square symmetric matrix $\matX$ is PSD if and only if for any vector $\y$: $\y\transp \matX \y \geq 0$. 
The left inequality follows by the left inequality of the fourth statement; similarly for the right.
\end{proof}

\section*{Moore-Penrose Pseudo-inverse} \label{chap25}

$\matA^+ = \matV_\matA \Sigma_\matA^{-1} \matU_\matA\transp \in \R^{n \times m}$ 
denotes the Moore-Penrose pseudo-inverse of $\matA \in \R^{m \times n}$ ($\Sigma_\matA^{-1}$ is the inverse of $\Sigma_\matA$),
i.e. the unique $n \times m$ matrix satisfying all four properties: 
$\matA = \matA \matA^+ \matA$, 
$\matA^+ \matA \matA^+ = \matA^+$,
$(\matA \matA^+ )\transp = \matA \matA^+$, and 
$(\matA^+ \matA )\transp = \matA^+\matA$.
By the SVD of $\matA$ and $\matA^+$, it is easy to verify 
that, for all $i=1,...,\rho = \rank(\matA) = \rank(\matA^+)$,
$\sigma_i(\matA^+) = 1 / \sigma_{\rho - i + 1}(\matA)$.
We will also use the following standard result.
\begin{lemma}
\label{lemma:pseudo} Let $\matA \in \R^{m \times n}, \matB \in \R^{n \times \ell}$; then,
\math{(\matA\matB)^+=\matB^+\matA^+} if at least one
of the three hold:
(\rn{1}) \math{\matA\transp\matA=\matI_n};
(\rn{2}) \math{\matB\transp\matB=\matI_{\ell}};
or, (\rn{3}) $\rank(\matA) = \rank(\matB) = n$.
\end{lemma}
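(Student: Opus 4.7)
The plan is to verify, in each of the three cases, the four defining Moore--Penrose conditions for the candidate inverse $\matX := \matB^{+}\matA^{+}$ of $\matA\matB$; namely, $(\matA\matB)\matX(\matA\matB)=\matA\matB$, $\matX(\matA\matB)\matX=\matX$, $((\matA\matB)\matX)\transp=(\matA\matB)\matX$, and $(\matX(\matA\matB))\transp=\matX(\matA\matB)$. Since the defining four properties of $\matA^{+}$ and $\matB^{+}$ have already been listed just above the lemma, this is the most direct route.

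In case~(i), the hypothesis $\matA\transp\matA=\matI_{n}$ implies that $\matA\transp$ is a left inverse of $\matA$ and, in fact, $\matA^{+}=\matA\transp$. Substituting this into $\matX=\matB^{+}\matA\transp$, each occurrence of $\matA\transp\matA$ inside the four products collapses to $\matI_{n}$; after that reduction, each identity follows immediately from the four defining properties of $\matB^{+}$. Case~(ii) will be handled by the same idea with the roles swapped: the hypothesis $\matB\transp\matB=\matI_{\ell}$ gives $\matB^{+}=\matB\transp$, the $\matB\transp\matB$ factors in the middle of the products collapse, and each identity then reduces to a defining property of $\matA^{+}$.

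For case~(iii), the useful observation is that $\rank(\matA)=\rank(\matB)=n$ means that $\matA\in\R^{m\times n}$ has full column rank and $\matB\in\R^{n\times\ell}$ has full row rank; hence $\matA\transp\matA$ and $\matB\matB\transp$ are both invertible $n\times n$ matrices, and one can use the closed-form expressions $\matA^{+}=(\matA\transp\matA)^{-1}\matA\transp$ and $\matB^{+}=\matB\transp(\matB\matB\transp)^{-1}$. With $\matX=\matB\transp(\matB\matB\transp)^{-1}(\matA\transp\matA)^{-1}\matA\transp$, the products appearing in the four Moore--Penrose identities contain the blocks $\matA\transp\matA\cdot(\matA\transp\matA)^{-1}$ and $(\matB\matB\transp)^{-1}\cdot\matB\matB\transp$, which telescope to identity matrices, leaving $\matA\matB\matX=\matA(\matA\transp\matA)^{-1}\matA\transp$ and $\matX\matA\matB=\matB\transp(\matB\matB\transp)^{-1}\matB$, both visibly symmetric; the remaining two identities then drop out.

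The only real bookkeeping obstacle is case~(iii), where one must keep track of several inverses side by side without assuming that $\matA\matB$ itself has full row or column rank. The $\matA^{+}=\matA\transp$ and $\matB^{+}=\matB\transp$ shortcuts make cases~(i) and~(ii) essentially mechanical, so the proof is short overall.
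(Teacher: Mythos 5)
The paper states this lemma without proof (it is cited as a standard fact), so direct verification of the four Moore--Penrose identities for the candidate $\matX=\matB^{+}\matA^{+}$ is the right route, and your cases (i) and (iii) check out. Case (ii), however, has a genuine gap. With $\matX=\matB\transp\matA^{+}$, the $\matB$-factors that end up adjacent in the middle of the four products always arise as $\matB\,\matB^{+}=\matB\matB\transp$, never as $\matB\transp\matB$: for instance $(\matA\matB)\matX(\matA\matB)=\matA\,(\matB\matB\transp)\,\matA^{+}\matA\matB$ and $\matX(\matA\matB)\matX=\matB\transp\matA^{+}\matA\,(\matB\matB\transp)\,\matA^{+}$. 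Under the hypothesis $\matB\transp\matB=\matI_{\ell}$ the matrix $\matB\matB\transp$ is only an orthogonal projection onto the column space of $\matB$ (equal to $\matI_{n}$ only when $\ell=n$), so nothing collapses and the reduction to the defining properties of $\matA^{+}$ fails.

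The failure is not just in your bookkeeping: the conclusion is false under hypothesis (ii) as literally stated. Take $\matA=\begin{pmatrix}1 & 1\end{pmatrix}$ and $\matB=\begin{pmatrix}0\\ 1\end{pmatrix}$, so $\matB\transp\matB=\matI_{1}$ but neither (i) nor (iii) holds; then $(\matA\matB)^{+}=1$, while $\matB^{+}\matA^{+}=\begin{pmatrix}0 & 1\end{pmatrix}\cdot\frac{1}{2}\begin{pmatrix}1\\ 1\end{pmatrix}=\frac{1}{2}$. The true mirror of case (i) that your symmetric argument is reaching for is the hypothesis $\matB\matB\transp=\matI_{n}$ ($\matB$ has orthonormal \emph{rows}); then $\matB^{+}=\matB\transp$, the inner blocks $\matB\matB^{+}=\matI_{n}$ do collapse, and everything reduces to the defining properties of $\matA^{+}$ exactly as you describe. (The place where the paper actually invokes the lemma, step (a) in the proof of Lemma~\ref{theorem:2setGeneral}, uses case (iii), so the thesis is unaffected; but as a proof of condition (ii) as written, your argument cannot be repaired.)
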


\section*{Matrix Pythagoras Theorem } \label{chap26}
\begin{lemma}\label{lem:pyth}
If \math{\matX,\matY\in\R^{m\times n}} and
\math{\matX\matY\transp=\bm{0}_{m \times m}} or \math{\matX\transp\matY=\bm{0}_{n \times n}}, then
\eqan{
&\FNorm{\matX+\matY}^2 = \FNorm{\matX}^2+\FNorm{\matY}^2,& \\
&\max\{\TNorm{\matX}^2,\TNorm{\matY}^2\}\le
\TNorm{\matX+\matY}^2 \le \TNorm{\matX}^2+\TNorm{\matY}^2.&
}
\end{lemma}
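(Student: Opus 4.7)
The plan is to prove both claims by reducing them to computations with the Gram-type matrices $(\matX+\matY)\transp(\matX+\matY)$ or $(\matX+\matY)(\matX+\matY)\transp$, and then exploiting the hypothesis to kill the cross terms. The key observation is that the hypothesis $\matX\transp\matY = \bm{0}_{n \times n}$ makes the trace cross terms vanish directly, while $\matX\matY\transp = \bm{0}_{m \times m}$ does the same after one application of the cyclic property of the trace (since $\Trace{\matX\transp\matY} = \Trace{\matY\matX\transp} = \Trace{(\matX\matY\transp)\transp} = 0$).

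For the Frobenius identity, I would expand
$$\FNormS{\matX+\matY} = \Trace{(\matX+\matY)\transp(\matX+\matY)} = \FNormS{\matX} + \FNormS{\matY} + \Trace{\matX\transp\matY} + \Trace{\matY\transp\matX},$$
and then invoke one of the two observations above to conclude that both cross-trace terms are zero. This immediately yields $\FNormS{\matX+\matY} = \FNormS{\matX} + \FNormS{\matY}$ under either hypothesis.

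For the spectral-norm inequalities I would work pointwise. Under $\matX\transp\matY = \bm{0}_{n \times n}$, for every unit vector $\bm{v} \in \R^n$, the quadratic form splits as $\TNormS{(\matX+\matY)\bm{v}} = \bm{v}\transp\matX\transp\matX\bm{v} + \bm{v}\transp\matY\transp\matY\bm{v} = \TNormS{\matX\bm{v}} + \TNormS{\matY\bm{v}}$. The upper bound then follows from $\max_{\bm{v}}[\TNormS{\matX\bm{v}} + \TNormS{\matY\bm{v}}] \le \max_{\bm{v}}\TNormS{\matX\bm{v}} + \max_{\bm{v}}\TNormS{\matY\bm{v}} = \TNormS{\matX} + \TNormS{\matY}$. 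The lower bound follows because the pointwise identity forces $\TNormS{(\matX+\matY)\bm{v}} \ge \TNormS{\matX\bm{v}}$ for every $\bm{v}$; picking $\bm{v}$ to be a top right singular vector of $\matX$ gives $\TNormS{\matX+\matY} \ge \TNormS{\matX}$, and symmetrically $\TNormS{\matX+\matY} \ge \TNormS{\matY}$, so the max of the two is dominated.

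The case $\matX\matY\transp = \bm{0}_{m \times m}$ reduces to the previous one by transposition: applying the already-proved inequalities to $\matX\transp, \matY\transp$, whose product $\matY(\matX\transp)\transp{}\transp$ reshapes appropriately so that the hypothesis $\matX\matY\transp = \bm{0}$ becomes the $\matX\transp\matY = \bm{0}$-type hypothesis for the transposed pair; then using $\TNorm{\matZ\transp} = \TNorm{\matZ}$ and the transpose-invariance of the Frobenius norm transfers the conclusions back. There is no real obstacle here — the entire proof is a bookkeeping exercise in the trace and Rayleigh-quotient characterizations of the two norms; the only thing to be careful about is correctly pairing the hypothesis ($\matX\transp\matY$ vs.\ $\matX\matY\transp$) with the right characterization ($\Trace{\matZ\transp\matZ}$ vs.\ $\Trace{\matZ\matZ\transp}$) so that the cancellation is immediate.
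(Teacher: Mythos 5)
Your proposal is correct and follows essentially the same route as the paper's proof: expand the Gram matrix of $\matX+\matY$ so the hypothesis kills the cross terms, use the trace for the Frobenius identity, and use the Rayleigh-quotient characterization (splitting the max for the upper bound, dropping a nonnegative term for the lower bound) for the spectral inequalities. The only cosmetic difference is that you work with $(\matX+\matY)\transp(\matX+\matY)$ and dispatch the $\matX\matY\transp=\bm{0}$ case by transposition, whereas the paper works with $(\matX+\matY)(\matX+\matY)\transp$ and declares the other case "similar."
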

\begin{proof}
Since \math{\matX\matY\transp=\bm{0}_{m \times m}}, \math{(\matX+\matY)(\matX+\matY)\transp=\matX\matX\transp + \matY\matY\transp}.
For $\xi = F$,
$$
\FNormS{\matX + \matY}
= \trace\left((\matX+\matY)(\matX+\matY)\transp\right)
= \trace\left(\matX\matX\transp + \matY\matY\transp\right)
= \FNormS{\matX} + \FNormS{\matY}.
$$
Let $\z$ be any vector in $\mathbb{R}^m$. For $\xi=2$,
$$
\TNormS{\matX+\matY}
= \max_{\TNorm{\z}=1}\z\transp (\matX+\matY)(\matX+\matY)\transp \z
= \max_{\TNorm{\z}=1}\left( \z\transp\matX\matX\transp \z +
\z\transp\matY\matY\transp \z\right).
$$
The bounds follow from the following two relations:
$$
 \max_{\TNorm{\z}=1}\left( \z\transp\matX\matX\transp \z + \z\transp\matY\matY\transp \z\right)
\leq
\max_{\TNorm{\z}=1}\z\transp\matX\matX\transp \z +\max_{\TNorm{\z}=1}\z\transp\matY\matY\transp \z
= \TNormS{\matX} + \TNormS{\matY};
$$
$$
\max_{\TNorm{\z}=1}( \z\transp\matX\matX\transp \z + \z\transp\matY\matY\transp \z)
\geq
\max_{\TNorm{\z}=1} \z\transp\matX\matX\transp \z
=
\TNormS{\matX},$$
since $\z\transp\matY\matY\transp \z$ is non-negative for any vector $\z$. 
We get the same lower bound with \math{\TNormS{\matY}} instead, which means
we can lower bound with \math{\max\{\TNormS{\matX},\TNormS{\matY}\}}.
The case with \math{\matX\transp\matY=\bm0_{n\times n}} can be proven similarly.
\end{proof}

\section*{Projection Matrices}  \label{chap28}
A square matrix $\matP \in \R^{n \times n}$ is a projection matrix if
$\matP^2=\matP$. For such a projection matrix and any matrix $\matX$: $$\XNorm{\matP \matX} \leq \XNorm{\matX}.$$
Also, if $\matP$ is a projection matrix, then, $\matI_{n} - \matP$ is a projection matrix. So, 
for any matrix $\matX$, both $\matX \matX^+$ and $\matI_{n} - \matX \matX^+$ are projection matrices.

\section*{Markov's inequality, the Union Bound, and Boosting} \label{chap29}

We use $\textbf{E}[x]$ to take the expectation of a random variable $x$
and $\textbf{Pr}[{\cal E}]$ to take the probability of a probabilistic event
${\cal E}$. 

Markov's inequality can be stated as follows: 
Let $x$ be a random variable taking non-negative values with 
expectation $\Expect{x}$. Then, for all $t > 0$, and with probability at least $1-t^{-1}$,
$$x \leq t \cdot \Expect{x}.$$
%

We will also use the so-called union bound. Given a set of probabilistic events ${\cal E}_1,{\cal E}_2,\ldots,{\cal E}_{n}$ holding with respective probabilities $p_1,p_2,\ldots,p_n$, the probability that all events hold (a.k.a., the probability of the union of those events) is upper bounded by $\sum_{i=1}^n p_i$, i.e.
$$ \textbf{Pr}[{\cal E}_1 \cup {\cal E}_2 \ldots \cup {\cal E}_n]  \le \sum_{i=1}^n p_i. $$

Sometimes we state our non-deterministic results in terms of their expected approximation behavior. An application of Markov's inequality gives the result with constant probability; then, standard boosting techniques suffice to make the failure probability arbitrarily small by repeating the algorithm many times and keeping the best result. Using this approach, to make the failure probability $\delta$ arbitrarily small, it suffices to repeat the algorithm
$ O(\log( 1 / \delta )) $ times and keep the best result. One should be careful though because
identifying the best solution might be an expensive task.

\section*{Random Sampling Techniques: Implementation Issues}
We often sample columns from matrices randomly based on a probability 
distribution over the columns. Once this probability distribution is
computed, then, the sampling process has nothing to do with the matrix and 
the problem is reduced to sampling a subset of indices based on this
distribution. Here, we discuss how to implement this process
in three different scenarios: (i) sampling with
replacement, (ii) sampling without replacement, and (iii) uniform sampling. 
In all three cases the setting is as follows: we are given indices $\{1, 2,...,n\}$
(that correspond to the columns of a matrix $\matA \in \R^{m \times n}$), a probability
distribution $p_1, p_2,...,p_n$ over these indices, and a sampling parameter $0 < r < n$.
We are asked to select $r$ indices.

\paragraph{Sampling with Replacement.}
This corresponds to performing $r$ i.i.d trials of the following random experiment:
throw a biased die with $n$ faces each one occurring w.p $p_i$.
One way to implement this process in a computer programming language
is as follows. First, generate $r$ numbers $\eta_1, \eta_2,..., \eta_r$ 
i.i.d from the normal distribution. Then, in $j=1,...,r$ rounds find
an index $i$ with $p_i > \eta_j$. This approach needs
$O(r + nr)$. This process can be implemented thought more efficiently
in $O(n + r \log(r))$. 

\paragraph{Sampling without Replacement.}
For each index $i=1,...,n$, one computes $q_i = \min\{1, r p_i \}$
and selects this particular index $i$ with probability $q_i$ 
(flip a biased coin for each index separately). This can be implemented,
for example, by the use of a Gaussian random number generator. For a fixed $i$,
we can generate a number $\eta_i \in \mathcal{N}(0,1)$ and if $q_i < \eta_i$, we select $i$,
otherwise we do not. The overall process needs $O(n)$ time, since we need 
$n$ multiplications for all $r p_i$, $n$ comparisons to compute the $q_i$'s, $O(n)$ to
compute $n$ i.i.d numbers from the normal distribution, and finally, $n$ comparisons
to compare these numbers with the $q_i$'s. 

\paragraph{Uniform Sampling with Replacement.}
Here, for $j=1,..,r$ i.i.d trials one has to implement the following experiment:
select the index $i$ from the set $\{1, 2,...,n\}$ with probability $p_i = \frac{1}{n}$. 
We can use a random number generator returning instances from the discrete uniform 
distribution. We only need $r$ i.i.d such numbers to sample $r$ indices, which takes
$O(r)$.

\section{Best rank \math{k} Approximation
$\Pi_{\matC,k}^\xi(\matA)$ within a Subspace}  \label{chap21}

Given $\matA \in \mathbb{R}^{m \times n}$, integer $k$, and $\matC \in \mathbb{R}^{m \times r}$ with $r > k$, we define the matrix $\Pi_{\matC,k}^{\xi}(\matA)\in \mathbb{R}^{m \times n}$ as the best approximation to $\matA$
(under the $\xi$-norm) within the column space of $\matC$ that has rank at most
$k$; so, $\Pi_{\matC,k}^{\xi}(\matA) \in \R^{m \times n}$
minimizes the residual
$\norm{\matA-\hat\matA}_\xi,$
over all
\math{\hat\matA \in \R^{m \times n}} in the column space of
\math{\matC} that have rank at most \math{k}.
In general, $\Pi_{\matC,k}^{2}(\matA) \neq \Pi_{\matC,k}^{F}(\matA)$. 
We can write
$\Pi_{\matC,k}^\xi(\matA) = \matC\matX^\xi$, where $\matX^\xi$:
$$
\matX^\xi = \argmin_{\Psi \in {\R}^{r \times n}:\rank(\Psi)\leq k}\XNormS{\matA-
\matC\Psi}.
$$
In order to compute (or approximate if exact computation is not obvious) 
$\Pi_{\matC,k}^{\xi}(\matA)$ given $\matA$, 
$\matC$, and $k$, we will use the following algorithm:
\begin{center}
\begin{algorithmic}[1]
\STATE Orthonormalize the columns of $\matC$ in $O(m r^2)$ time to construct $\matQ \in \R^{m \times r}$.
\STATE Compute
 $\left(\matQ\transp \matA\right)_k \in \R^{r \times n}$ via the SVD 
in \math{O(mnr+ nr^2)} time; $\left(\matQ\transp \matA\right)_k$  is 
the best rank-$k$ approximation of \math{\matQ\transp\matA}.
\STATE Return $\Pi_{\matC,k}^{\xi}(\matA) = \matQ\left(\matQ\transp \matA\right)_k \in \mathbb{R}^{m \times n}$ in $O(mnk)$ time.
\end{algorithmic}
\end{center}
Note that though  $\Pi_{\matC,k}^{\xi}(\matA)$ can depend on 
\math{\xi}, our algorithm computes the same matrix, independent
of \math{\xi}. 
The next lemma, which is essentially
 Lemma 4.3 in~\cite{CW09} together with
 a slight improvment of Theorem 9.3 in~\cite{HMT},
proves that this algorithm computes $\Pi_{\matC,k}^{F}(\matA)$ and a constant 
factor approximation to 
$\Pi_{\matC,k}^{2}(\matA)$. 
\begin{lemma}\label{lem:bestF}(See Appendix for the proof.)
Given $\matA \in {\R}^{m \times n}$, $\matC\in\R^{m\times r}$ 
and $k$,  the matrix
$\matQ\left(\matQ\transp \matA\right)_k \in \mathbb{R}^{m \times n}$
 described above (where \math{\matQ} is an orthonormal basis for the columns
of \math{\matC})
can be
computed in 
$O\left(mnr + (m+n)r^2\right)$ time and satisfies:
\begin{eqnarray*}
\norm{\matA-\matQ\left(\matQ\transp \matA\right)_k}_F^2 &=& \FNormS{\matA-\Pi_{\matC,k}^{F}(\matA)},\\
\norm{\matA-\matQ\left(\matQ\transp \matA\right)_k}_2^2 &\leq& 2\TNormS{\matA-\Pi_{\matC,k}^{2}(\matA)}.
\end{eqnarray*}
\end{lemma}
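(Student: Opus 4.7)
The plan is to exploit the orthogonal decomposition $\matA = \matQ\matQ\transp\matA + (\matI_m - \matQ\matQ\transp)\matA$, where the first term lies in $\col(\matC) = \col(\matQ)$ and the second is orthogonal to it, since $\matQ\transp(\matI_m - \matQ\matQ\transp) = \bm{0}$. Any candidate rank-$k$ matrix inside $\col(\matC)$ can be written as $\matQ\matY$ for some $\matY \in \R^{r\times n}$ with $\rank(\matY)\le k$ (using $\matQ\transp\matQ = \matI_r$), and splitting
$$\matA - \matQ\matY = \matQ(\matQ\transp\matA - \matY) + (\matI_m - \matQ\matQ\transp)\matA$$
makes the two summands mutually orthogonal in precisely the sense required by Lemma~\ref{lem:pyth}.

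For the Frobenius claim, I would apply matrix Pythagoras to get
$$\FNormS{\matA - \matQ\matY} = \FNormS{\matQ(\matQ\transp\matA - \matY)} + \FNormS{(\matI_m - \matQ\matQ\transp)\matA} = \FNormS{\matQ\transp\matA - \matY} + \FNormS{(\matI_m - \matQ\matQ\transp)\matA},$$
using $\FNorm{\matQ\matZ} = \FNorm{\matZ}$ since $\matQ$ has orthonormal columns. The second term is independent of $\matY$, so minimizing the left side over rank-$k$ matrices reduces to the classical Eckart--Young problem on $\matQ\transp\matA$, whose unique minimizer is $\matY = (\matQ\transp\matA)_k$. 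Hence $\matQ(\matQ\transp\matA)_k = \Pi_{\matC,k}^F(\matA)$ with equality.

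The hard part is the spectral claim, because Lemma~\ref{lem:pyth} only gives the weaker sandwich
$$\max\bigl\{\TNormS{\matQ\transp\matA - \matY},\ \TNormS{(\matI_m - \matQ\matQ\transp)\matA}\bigr\} \le \TNormS{\matA - \matQ\matY} \le \TNormS{\matQ\transp\matA - \matY} + \TNormS{(\matI_m - \matQ\matQ\transp)\matA}.$$
Write the true optimum as $\Pi_{\matC,k}^2(\matA) = \matQ\matY^\star$ with $\rank(\matY^\star)\le k$. The lower bound, applied at $\matY^\star$, forces both $\TNorm{(\matI_m - \matQ\matQ\transp)\matA}$ and $\TNorm{\matQ\transp\matA - \matY^\star}$ to be at most $\TNorm{\matA - \Pi_{\matC,k}^2(\matA)}$. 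Since $\matY^\star$ has rank at most $k$, Eckart--Young for $\matQ\transp\matA$ gives $\TNorm{\matQ\transp\matA - (\matQ\transp\matA)_k} \le \TNorm{\matQ\transp\matA - \matY^\star}$, hence the best-rank-$k$ residual of $\matQ\transp\matA$ is also $\le \TNorm{\matA - \Pi_{\matC,k}^2(\matA)}$. Plugging both bounds into the upper inequality at $\matY = (\matQ\transp\matA)_k$ yields the factor $2$.

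For running time I would account as follows: a thin QR of $\matC$ costs $O(mr^2)$; forming $\matQ\transp\matA$ costs $O(mnr)$; the SVD of the $r\times n$ matrix $\matQ\transp\matA$ costs $O(nr^2)$; and assembling the answer in factored form $(\matQ\matU_k)\Sigma_k\matV_k\transp$ costs $O(mr^2 + mnk)$. Summing gives $O(mnr + (m+n)r^2)$, matching the claim. The only genuine conceptual obstacle is the loss of equality in Pythagoras for the spectral norm; everything else is bookkeeping around the orthogonal decomposition and Eckart--Young.
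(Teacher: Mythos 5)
Your proof is correct and follows essentially the same route as the paper's: the same orthogonal split $\matA-\matQ\matY=\matQ(\matQ\transp\matA-\matY)+(\matI_m-\matQ\matQ\transp)\matA$, matrix Pythagoras (exact for Frobenius, the weaker sandwich for spectral), and Eckart--Young on $\matQ\transp\matA$, with the same running-time accounting. The only divergence is how the truncation term is bounded in the spectral case: the paper uses that a projection does not increase singular values, i.e. $\TNorm{\matQ\matQ\transp\matA-(\matQ\matQ\transp\matA)_k}=\sigma_{k+1}(\matQ\matQ\transp\matA)\le\sigma_{k+1}(\matA)=\TNorm{\matA-\matA_k}\le\TNorm{\matA-\Pi_{\matC,k}^{2}(\matA)}$, whereas you compare $(\matQ\transp\matA)_k$ directly against $\matY^\star=\matQ\transp\Pi_{\matC,k}^{2}(\matA)$ via Eckart--Young and the Pythagoras lower bound --- an equally valid and slightly more self-contained step.
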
 

\paragraph{Remark 1.} In the context of the definition of $\Pi_{\matC,k}^{\xi}(\matA)$,
$\matC$ can be any matrix, not necessarily a subset of the columns of $\matA$.

\paragraph{Remark 2.} An interesting open question is whether one can compute
$\Pi_{\matC,k}^{2}(\matA)$ exactly or obtain a better than a $2$-approximation as in Lemma~\ref{lem:bestF}. 

\section{Column-based Matrix Reconstruction through Matrix Factorization}\label{chap22}
This section presents the fundamental idea underlying all our results regarding low-rank
column-based matrix approximation (Chapter~\ref{chap4}). Lemma \ref{lem:genericNoSVD} below draws a connection
between matrix factorizations and column-based reconstruction. The factorizations that
we will see in this Lemma are of the following form $$\matA=\matB\matZ\transp+\matE,$$ where
\math{\matB\in\R^{m\times k}},
\math{\matZ\in\R^{n\times k}},
\math{\matE\in\R^{m\times n}}, 
and \math{\matZ} consists of orthonormal columns. 
Lemmas \ref{lem:generic}, \ref{tropp1}, and \ref{tropp2} discuss
algorithms to construct such factorizations. The 
proofs of the results of this section are given in the Appendix. Recall
that $k$ is the target rank for the approximation. 
\begin{lemma}
\label{lem:genericNoSVD}
Let $\matA = \matB \matZ\transp + \matE$, with $\matE\matZ = \bm{0}_{m \times k}$ and $\matZ\transp\matZ=\matI_{k}$.
Let $\matW\in\R^{n\times r}$ be any matrix such that $rank(\matZ\transp \matW) =
rank(\matZ)=k.$
Let $\matC = \matA \matW \in \R^{m \times r}$. Then,
\begin{equation*}
\norm{\matA - \matC \matC^+ \matA}_\xi^2 \leq
\norm{\matA - \Pi^{\xi}_{\matC,k}(\matA)}_\xi^2 \leq \XNormS{\matE} +
\norm{\matE\matW (\matZ\transp \matW)^+}_\xi^2.
\end{equation*}
\end{lemma}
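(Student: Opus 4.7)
\textbf{Proof plan for Lemma~\ref{lem:genericNoSVD}.}

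The first inequality is standard: $\matC\matC^+\matA$ is the best approximation to $\matA$ within the column space of $\matC$ in both norms (no rank constraint), while $\Pi_{\matC,k}^\xi(\matA)$ solves the same problem with the additional constraint that the rank be at most $k$. A constrained minimum can only be larger, so $\XNorm{\matA-\matC\matC^+\matA}\le\XNorm{\matA-\Pi_{\matC,k}^\xi(\matA)}$. (The Frobenius version is just orthogonal projection; the spectral version follows from Lemma~\ref{lem:pyth} applied to $\matA-\matC\matX = (\matI-\matC\matC^+)\matA + \matC\matC^+(\matA-\matC\matX)$, whose two summands satisfy the orthogonality hypothesis.)

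For the second inequality the plan is to exhibit one particular rank-$k$ matrix in the column space of $\matC$ that already achieves the claimed bound; since $\Pi_{\matC,k}^\xi(\matA)$ is the best such matrix, the bound then transfers. The natural candidate is
\[
\hat\matA \;=\; \matC\,(\matZ\transp\matW)^+\,\matZ\transp \;\in\; \R^{m\times n}.
\]
This lies in the column space of $\matC$ by construction and has rank at most $k$ because $(\matZ\transp\matW)^+\in\R^{r\times k}$ has rank at most $k$. The key algebraic fact to use next is that the rank assumption $\rank(\matZ\transp\matW)=k$ means $\matZ\transp\matW\in\R^{k\times r}$ has full row rank, hence $(\matZ\transp\matW)(\matZ\transp\matW)^+ = \matI_k$. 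Substituting $\matC=\matA\matW=(\matB\matZ\transp+\matE)\matW$ and expanding,
\[
\matA-\hat\matA \;=\; \matB\matZ\transp + \matE - \matB(\matZ\transp\matW)(\matZ\transp\matW)^+\matZ\transp - \matE\matW(\matZ\transp\matW)^+\matZ\transp \;=\; \matE - \matE\matW(\matZ\transp\matW)^+\matZ\transp,
\]
where the $\matB$ terms cancel thanks to the full-rank identity above.

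Now I apply Matrix Pythagoras (Lemma~\ref{lem:pyth}) to the two summands $\matX_1=\matE$ and $\matX_2=-\matE\matW(\matZ\transp\matW)^+\matZ\transp$. The required orthogonality is
\[
\matX_1\matX_2\transp \;=\; -\matE\matZ\,((\matZ\transp\matW)^+)\transp\,\matW\transp\matE\transp \;=\; \bm{0}_{m\times m},
\]
using the hypothesis $\matE\matZ=\bm{0}$. Therefore for $\xi=F$ we get equality and for $\xi=2$ we get an upper bound, yielding in both cases
\[
\XNormS{\matA-\hat\matA} \;\le\; \XNormS{\matE} + \XNormS{\matE\matW(\matZ\transp\matW)^+\matZ\transp}.
\]
Finally, since $\matZ\transp$ has orthonormal rows ($\matZ\transp\matZ=\matI_k$), right-multiplication by $\matZ\transp$ preserves both norms: $(\matX\matZ\transp)(\matX\matZ\transp)\transp = \matX\matX\transp$, so $\matX$ and $\matX\matZ\transp$ have the same singular values, hence the same spectral and Frobenius norms. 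Applying this with $\matX=\matE\matW(\matZ\transp\matW)^+$ removes the trailing $\matZ\transp$, and the proof concludes by $\XNorm{\matA-\Pi_{\matC,k}^\xi(\matA)}\le\XNorm{\matA-\hat\matA}$. The only subtle step is the cancellation enabled by $(\matZ\transp\matW)(\matZ\transp\matW)^+=\matI_k$; everything else is bookkeeping with the Pythagoras lemma and the orthonormality of $\matZ$.
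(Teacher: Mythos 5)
Your proposal is correct and follows essentially the same route as the paper's own proof: the same candidate matrix $\matC(\matZ\transp\matW)^+\matZ\transp$, the same cancellation of the $\matB$ terms via $(\matZ\transp\matW)(\matZ\transp\matW)^+=\matI_k$, and the same application of matrix Pythagoras using $\matE\matZ=\bm{0}$. The only cosmetic difference is at the end, where you drop the trailing $\matZ\transp$ by noting it preserves singular values exactly, while the paper invokes spectral submultiplicativity with $\TNorm{\matZ}=1$; both are valid.
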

\noindent View $\matC$ as a dimensionally-reduced or 
sampled sketch of $\matA$; $\matW$ is the dimension-reduction or sampling matrix,
for example, $\matW = \Omega \matS$, for some sampling and rescaling matrices $\Omega, \matS$,
such that $\matC$ contains (rescaled) columns of $\matA$.
In words, Lemma \ref{lem:genericNoSVD} argues that if the matrix $\matW$ preserves the rank of an approximate factorization of the original matrix $\matA$, then,
the reconstruction of $\matA$ from $\matC = \matA \matW$ has an error that is essentially proportional to the error of the approximate factorization.
The importance of this lemma is that it  indicates an algorithm for matrix reconstruction using a subset of the columns of $\matA$: first,
compute \emph{any} factorization of the form $\matA = \matB \matZ\transp + \matE$ satisfying the assumptions of the lemma; then, compute sampling and rescaling matrices \math{\matW = \Omega \matS} which satisfy the rank assumption and control the error $\norm{\matE \matW (\matZ\transp\matW)^+}_\xi$.

An immediate corollary of Lemma~\ref{lem:genericNoSVD} emerges by considering the SVD of $\matA$. More specifically, consider the following factorization of $\matA$: $\matA = \matA\matV_k\matV_k\transp + \left(\matA-\matA_k\right)$, where $\matV_k$ is the matrix of the top $k$ right singular vectors of $\matA$. In the parlance of Lemma~\ref{lem:genericNoSVD}, $\matZ = \matV_k$, $\matB = \matA \matV_k$, $\matE = \matA -\matA_k$, and clearly \math{\matE\matZ=\bm{0}_{m \times k}}.
\begin{lemma}
\label{lem:generic}
Let $\matW \in \mathbb{R}^{n \times r}$ be a matrix with $rank(\matV_k\transp\matW) = k$. Let \math{\matC=\matA\matW}; then,
$$
\norm{\matA - \matC \matC^+ \matA}_\xi^2 \leq
\norm{\matA - \Pi_{\matC,k}^{\xi}(\matA)}_\xi^2
\leq \XNormS{\matA-\matA_k} + \norm{(\matA-\matA_k) \matW (\matV_k\transp \matW)^+}_\xi^2.
$$
\end{lemma}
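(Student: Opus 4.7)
The plan is to obtain Lemma \ref{lem:generic} as a direct specialization of the more general Lemma \ref{lem:genericNoSVD} by plugging in the factorization induced by the SVD. There is essentially no new content to prove beyond checking that the hypotheses of the earlier lemma are satisfied.

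First I would write down the canonical factorization
\[
\matA \;=\; \matA \matV_k \matV_k\transp \;+\; (\matA - \matA_k),
\]
and set, in the notation of Lemma \ref{lem:genericNoSVD}, $\matZ := \matV_k \in \R^{n \times k}$, $\matB := \matA \matV_k \in \R^{m \times k}$, and $\matE := \matA - \matA_k \in \R^{m \times n}$. The identity above just reflects that $\matA_k = \matA\matV_k\matV_k\transp$, since $\matV_k\matV_k\transp$ is the projector onto the top-$k$ right singular subspace of $\matA$.

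Next I would verify the two structural hypotheses of Lemma \ref{lem:genericNoSVD}. The relation $\matZ\transp\matZ = \matV_k\transp \matV_k = \matI_k$ is immediate from orthonormality of the right singular vectors. For the orthogonality condition, use the SVD to write $\matA - \matA_k = \matU_{\rho-k}\Sigma_{\rho-k}\matV_{\rho-k}\transp$, so that
\[
\matE \matZ \;=\; \matU_{\rho-k}\Sigma_{\rho-k}\matV_{\rho-k}\transp \matV_k \;=\; \bm{0}_{m \times k},
\]
since the columns of $\matV_k$ and $\matV_{\rho-k}$ are mutually orthogonal. The remaining hypothesis, namely $\rank(\matZ\transp\matW) = \rank(\matZ) = k$, is assumed in the statement of Lemma \ref{lem:generic} itself.

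With all hypotheses verified, I would simply invoke Lemma \ref{lem:genericNoSVD} with $\matC = \matA\matW$ and read off
\[
\norm{\matA - \matC\matC^+\matA}_\xi^2 \;\leq\; \norm{\matA - \Pi_{\matC,k}^\xi(\matA)}_\xi^2 \;\leq\; \XNormS{\matE} + \norm{\matE\matW(\matZ\transp\matW)^+}_\xi^2,
\]
which, after substituting $\matE = \matA - \matA_k$ and $\matZ = \matV_k$, is exactly the claimed bound. The only conceivable subtlety is the orthogonality check $\matE\matZ = \bm{0}$, but it follows transparently from the block structure of the SVD, so I do not expect any real obstacle in this proof.
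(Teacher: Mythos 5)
Your proposal is correct and matches the paper's own derivation exactly: the paper obtains Lemma~\ref{lem:generic} as an immediate corollary of Lemma~\ref{lem:genericNoSVD} via the factorization $\matA = \matA\matV_k\matV_k\transp + (\matA-\matA_k)$ with $\matZ=\matV_k$, $\matB=\matA\matV_k$, $\matE=\matA-\matA_k$, and the observation $\matE\matZ=\bm{0}_{m\times k}$. Your explicit verification of the orthogonality condition from the block structure of the SVD is exactly the right (and only) check needed.
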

\noindent The above lemma will be useful for designing the deterministic (spectral norm and Frobenius norm) column-reconstruction algorithms of Theorems~\ref{theorem:intro1} and~\ref{theorem:intro2} in Chapter \ref{chap4}. However, computing the SVD is costly and thus we would like to design a factorization of the form $\matA = \matB \matZ\transp + \matE$ that is as good as the SVD, but can be computed much faster. The next two lemmas achieve this goal. The proposed algorithms are extensions of the algorithms presented in \cite{RST09,HMT}. We will use these factorizations to design fast column reconstruction algorithms in Theorems \ref{thmFast1}, \ref{thmFast2}, and \ref{thmFast3} in Chapter~\ref{chap4}. 
\begin{lemma}[Randomized fast spectral norm SVD]
\label{tropp1}
Given \math{\matA\in\R^{m\times n}} of rank $\rho$, a target rank $2\leq k < \rho$, and
$0 < \epsilon < 1$,
there exists an algorithm that
computes a factorization  $\matA = \matB \matZ\transp + \matE$, with $\matB = \matA \matZ$, $\matZ\transp\matZ = \matI_k$, and \math{\matE\matZ=\bm{0}_{m \times k}} such that
$$\Expect{\TNorm{\matE}} \leq \left(\sqrt{2}+\epsilon\right)
\TNorm{\matA - \matA_k}.$$
The proposed algorithm runs in 
$O\left(mnk\frac{\log\left( k^{-1}\min\{m,n\}\right)}{\log\left(1+\epsilon\right)}\right)$ time.
We will use the statement $\matZ = FastSpectralSVD(\matA, k, \epsilon)$ to denote this procedure. 
\end{lemma}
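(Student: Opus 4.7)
The plan is to adapt the randomized subspace iteration scheme of \cite{RST09,HMT}, with the twist that $\matZ$ is required to have \emph{exactly} $k$ columns and the approximation is measured in the spectral norm.

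\textbf{Algorithm.} Set $q = \lceil \log(\min\{m,n\}/k) / \log(1+\epsilon) \rceil$. Draw a standard Gaussian matrix $\mat{G} \in \R^{m \times k}$, and form
$$\matY = \left(\matA\transp \matA\right)^{q} \matA\transp \mat{G} \in \R^{n \times k}$$
as a sequence of $2q+1$ multiplications against $\matA$ or $\matA\transp$, each applied to a running $k$-column factor. Let $\matZ \in \R^{n \times k}$ be the orthogonal factor of a thin QR decomposition of $\matY$, and set $\matB = \matA\matZ$, $\matE = \matA - \matB\matZ\transp$. By construction $\matZ\transp \matZ = \matI_k$, $\matB = \matA\matZ$, and $\matE\matZ = \matA\matZ - \matA\matZ\matZ\transp\matZ = \bm{0}_{m \times k}$. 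Each multiplication by $\matA$ or $\matA\transp$ costs $O(mnk)$, the Gaussian draw costs $O(mk)$, and the QR costs $O(nk^2)$; the dominant cost is $O(mnk\,q)$, which matches the stated complexity for the chosen $q$.

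\textbf{Error analysis.} Write the SVD $\matA = \matU_\matA \Sigma_\matA \matV_\matA\transp$ and split $\matV_\matA\transp \mat{G}$ into a top $k \times k$ block $\Omega_1$ and a bottom $(\rho-k) \times k$ block $\Omega_2$. The deterministic structural identity underlying the HMT framework, extended to $q$ subspace iterations, yields
$$\TNorm{\matA(\matI - \matZ\matZ\transp)}^{2(2q+1)} \le \sigma_{k+1}^{2(2q+1)} + \TNormS{\Sigma_{\rho-k}^{2q+1}\,\Omega_2\,\Omega_1^{+}},$$
which holds whenever $\Omega_1$ is invertible (almost surely for a Gaussian $\mat{G}$). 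Using $\TNorm{\Sigma_{\rho-k}^{2q+1}} = \sigma_{k+1}^{2q+1}$, a joint moment estimate of the form $\Expect{\TNormS{\Omega_2 \Omega_1^+}} \le f(k,\min\{m,n\})$ with $f$ polynomial in its arguments, together with Jensen's inequality used to pull the expectation through the $(2(2q+1))$-th root, produces
$$\Expect{\TNorm{\matA - \matA\matZ\matZ\transp}} \le \bigl(1 + f(k,\min\{m,n\})\bigr)^{1/(2(2q+1))} \sigma_{k+1}(\matA).$$
Substituting the choice of $q$ makes the bracketed factor at most $\sqrt{2}+\epsilon$, giving the stated expected-error bound.

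\textbf{Main obstacle.} The crux is the joint moment estimate for $\TNormS{\Omega_2\Omega_1^+}$. Since $\Omega_1$ is \emph{square} Gaussian, $\Expect{\TNorm{\Omega_1^+}}$ is infinite, so one cannot split the factors using submultiplicativity. Instead, one must exploit the independence of $\Omega_2$ from $\Omega_1$ and the smoothing effect of multiplying on the left by a Gaussian matrix, so that the product remains integrable; the resulting finite constant is precisely what forces the $\sqrt{2}$ (rather than $1$) multiplier in the final bound, despite the $(2(2q+1))$-th-root attenuation.
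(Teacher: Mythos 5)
Your algorithm uses a Gaussian test matrix with \emph{exactly} $k$ columns, and the whole argument hinges on the claim that $\Expect{\TNormS{\Omega_2\Omega_1^+}}$ is finite for a square $k\times k$ Gaussian block $\Omega_1$. That claim is false, and it is where the proof breaks. Conditioning on $\Omega_1$ and using the independence you invoke, one gets \emph{exactly} $\Expect{\FNormS{\Omega_2\Omega_1^+}\mid\Omega_1}=(\rho-k)\FNormS{\Omega_1^+}=(\rho-k)\,\trace\left((\Omega_1\transp\Omega_1)^{-1}\right)$, and the expected trace of an inverse Wishart matrix whose degrees of freedom equal its dimension is infinite (this is the same heavy tail of $\sigma_{\min}(\Omega_1)^{-1}$ that makes $\Expect{\TNorm{\Omega_1^+}}$ infinite). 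Since $\Omega_2\Omega_1^+$ has rank at most $k$, $\TNormS{\Omega_2\Omega_1^+}\ge\FNormS{\Omega_2\Omega_1^+}/k$, so $\Expect{\TNormS{\Omega_2\Omega_1^+}}=\infty$ as well: left-multiplication by an independent Gaussian does not smooth away the singularity. Consequently your Jensen step, which pulls the expectation \emph{inside} the $(2(2q+1))$-th root, bounds the error by $\left(1+\infty\right)^{1/(2(2q+1))}$ and is vacuous. (One could try to bound $\Expect{(1+\TNormS{\Omega_2\Omega_1^+})^{1/(2(2q+1))}}$ directly from tail estimates, since low fractional moments of $\TNorm{\Omega_1^+}$ do exist for $q\ge1$, but that is a different argument, you have not supplied it, and it would not produce the clean $\sqrt2+\epsilon$ constant.)

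The paper avoids this problem entirely by oversampling: it draws a Gaussian sketch with $r=k+p$ columns with $p=k\ge2$, so that the top block $\Omega_1$ is $k\times 2k$ and Propositions 10.1--10.2 of~\cite{HMT} give \emph{finite} bounds $\Expect{\TNorm{\Omega_1^+}}\le e\sqrt{k+p}/p$ and $\Expect{\FNormS{\Omega_1^+}}=k/(p-1)$. This yields a rank-$r$ subspace $\matY$, not a rank-$k$ one, and the exactly-$k$-column factor $\matZ$ is then extracted by forming $\matQ(\matQ\transp\matA)_k$ and taking its right singular vectors. Your attribution of the $\sqrt2$ is also off: it is not a Gaussian moment constant but the price of Lemma~\ref{lem:bestF} --- the best rank-$k$ spectral-norm approximation $\Pi^2_{\matY,k}(\matA)$ inside the sketched subspace cannot be computed exactly, only to within a factor $\sqrt2$, and $\epsilon/\sqrt2$ is then budgeted to the power-iteration exponent $q$ so that the product is $\sqrt2+\epsilon$. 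To repair your proof you would either need to adopt this oversample-then-truncate structure, or replace the Jensen step with a genuine fractional-moment or high-probability analysis of the square-Gaussian case.
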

\begin{lemma}[Randomized fast Frobenius norm SVD]\label{tropp2}
Given \math{\matA\in\R^{m\times n}} of rank $\rho$, a target rank $2\leq k < \rho$, and $0 < \epsilon < 1$, there exists an algorithm
that  computes a factorization $\matA = \matB \matZ\transp + \matE$, with $\matB = \matA \matZ$, $\matZ\transp\matZ = \matI_k$, and \math{\matE\matZ=\bm{0}_{m \times k}}
such that
$$\Expect{\FNormS{\matE}} \leq (1+{\epsilon})
\FNormS{\matA - \matA_k}.$$
The proposed algorithm runs in $O\left(mnk\epsilon^{-1}\right)$ time.
We will use the statement $\matZ = FastFrobeniusSVD(\matA, k, \epsilon)$ to denote this procedure. 
\end{lemma}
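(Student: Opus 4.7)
The plan is to use a one-pass Gaussian row-sketch followed by a truncated SVD of the sketch, in the spirit of~\cite{Sar06,RST09,HMT}. Specifically, set $p = \lceil k/\epsilon\rceil + 1$ and $\ell = k + p = O(k/\epsilon)$, draw $\mat{G} \in \R^{m \times \ell}$ with i.i.d.\ $\mathcal{N}(0,1)$ entries, form the sketch $\matY = \mat{G}\transp\matA \in \R^{\ell \times n}$, compute its thin SVD $\matY = \matU_Y \Sigma_Y \matV_Y\transp$, and let $\matZ \in \R^{n\times k}$ consist of the first $k$ columns of $\matV_Y$. Output $\matB = \matA\matZ$ and $\matE = \matA - \matB\matZ\transp = \matA(\matI_n - \matZ\matZ\transp)$. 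The structural requirements are immediate from the construction: $\matZ\transp\matZ = \matI_k$ by orthonormality of singular vectors; $\matE\matZ = \matA\matZ - \matA\matZ\matZ\transp\matZ = \bm{0}_{m\times k}$; and $\matA = \matB\matZ\transp + \matE$.

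For the expected-error bound, the plan is to split $\matA = \matA_k + (\matA - \matA_k)$ via the SVD of $\matA$ and write
$$\matA(\matI_n - \matZ\matZ\transp) \;=\; \matA_k(\matI_n - \matZ\matZ\transp) \;+\; (\matA - \matA_k)(\matI_n - \matZ\matZ\transp).$$
Because $\matV_k\transp\matV_{\rho-k} = \bm{0}$ and $\matU_k\transp\matU_{\rho-k} = \bm{0}$, a short trace computation shows the two summands above are orthogonal in the Frobenius inner product, so matrix Pythagoras (Lemma~\ref{lem:pyth}) gives
$$\FNormS{\matE} \;=\; \FNormS{\matA_k(\matI_n - \matZ\matZ\transp)} \;+\; \FNormS{(\matA - \matA_k)(\matI_n - \matZ\matZ\transp)}.$$
The second summand is at most $\FNormS{\matA - \matA_k}$ since $(\matI_n - \matZ\matZ\transp)$ is an orthogonal projection. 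For the first summand, by construction $\matZ$ maximizes $\FNormS{\matY\matZ'}$ over all $n\times k$ orthonormal $\matZ'$, hence in particular $\FNormS{\mat{G}\transp\matA\matZ} \ge \FNormS{\mat{G}\transp\matA\matV_k}$; combining this optimality with the Gaussian-Wishart identity $\Expect{\FNormS{(\matV_k\transp\mat{G})^+}} = k/(p-1)$ and the approximate-isometry behavior of $\mat{G}\transp$ on $\col(\matU_k)$ (standard Sarlos-style analysis, see Section~10 of~\cite{HMT}) yields
$$\Expect{\FNormS{\matA_k(\matI_n - \matZ\matZ\transp)}} \;\le\; \frac{k}{p-1}\,\FNormS{\matA - \matA_k}.$$
Substituting $p - 1 \ge k/\epsilon$ delivers $\Expect{\FNormS{\matE}} \le (1+\epsilon)\FNormS{\matA - \matA_k}$.

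The running time is dominated by forming the sketch $\matY = \mat{G}\transp\matA$ at $O(mn\ell) = O(mnk\epsilon^{-1})$; the thin SVD of $\matY \in \R^{\ell\times n}$ costs $O(n\ell^2) = O(nk^2\epsilon^{-2})$ and assembling $\matB = \matA\matZ$ costs $O(mnk)$, and under the natural assumption $k/\epsilon \le \min\{m, n\}$ all are absorbed into the claimed $O(mnk\epsilon^{-1})$ bound. The hard part will be the first-summand bound on $\Expect{\FNormS{\matA_k(\matI_n - \matZ\matZ\transp)}}$: the ``obvious'' route, which first bounds $\Expect{\FNormS{\matA - \matA\matZ_1\matZ_1\transp}}$ using the HMT projection bound for the full $\ell$-column basis $\matZ_1$ of the sketch and then separately bounds the top-$k$ truncation error $\FNormS{\matA\matZ_1 - (\matA\matZ_1)_k}$, only delivers a $(2 + \epsilon)$-factor; obtaining the clean $(1+\epsilon)$ factor requires the Sarlos-style subspace-embedding viewpoint showing that $\mat{G}\transp$ is a $(1\pm O(\epsilon))$-isometry on $\col(\matU_k)$ when $\ell = \Omega(k/\epsilon)$, so that the top-$k$ right singular subspace of $\matY$ is in expectation essentially $\col(\matV_k)$.
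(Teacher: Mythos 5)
Your structural checks are fine, and your Pythagoras split of $\matE=\matA(\matI_n-\matZ\matZ\transp)$ is valid (indeed $\matA_k\transp(\matA-\matA_k)=\bm{0}$, so Lemma~\ref{lem:pyth} applies directly and the second summand is bounded by $\FNormS{\matA-\matA_k}$ as you say). The gap is the first summand. Your $\matZ$ is the top-$k$ right singular subspace of the sketch $\mat{G}\transp\matA$ itself, and for that estimator the inequality $\Expect{\FNormS{\matA_k(\matI_n-\matZ\matZ\transp)}}\le \frac{k}{p-1}\FNormS{\matA-\matA_k}$ is asserted, not proved: you invoke an optimality property of $\matZ$, a Wishart identity written for $(\matV_k\transp\mat{G})^+$ (which does not even typecheck, since $\matV_k\in\R^{n\times k}$ while $\mat{G}\in\R^{m\times \ell}$ --- you mean $\matU_k\transp\mat{G}$), and an ``approximate isometry on $\col(\matU_k)$'', but never show how these combine. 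The identity $\Expect{\FNormS{\Omega^+}}=k/(p-1)$ arises naturally in the analysis of the estimator that projects $\matA$ onto the range of the sketch and then truncates, not in the analysis of the top-$k$ subspace of the sketch; the latter is a genuinely different algorithm, its $(1+\epsilon)$ guarantee with only $\ell=O(k/\epsilon)$ Gaussian directions is not a citable standard fact (one must control the selection bias of maximizing $\FNormS{\mat{G}\transp\matA\matZ'}$ over all $k$-dimensional subspaces), and your own closing paragraph concedes that the routes you can actually execute only give $2+\epsilon$. As written, the central step of the lemma's headline bound is missing.

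The paper sidesteps this difficulty entirely. It sketches on the other side, $\matY=\matA\matR$ with $\matR\in\R^{n\times(k+p)}$ Gaussian, and --- crucially --- goes back to $\matA$: it takes $\matQ$ to be an orthonormal basis for $\matY$, forms $(\matQ\transp\matA)_k$, and lets $\matZ$ be its right singular vectors, so that $\FNormS{\matE}\le\FNormS{\matA-\matQ(\matQ\transp\matA)_k}=\FNormS{\matA-\Pi_{\matY,k}^F(\matA)}$ by Lemma~\ref{lem:bestF} and the optimality of projecting onto $\mathrm{span}(\matZ)$. The expectation of the right-hand side is then controlled by the structural bound of Lemma~\ref{lem:generic}, namely $\FNormS{\matA-\Pi_{\matY,k}^F(\matA)}\le\FNormS{\matA-\matA_k}+\FNormS{(\matA-\matA_k)\matR(\matV_k\transp\matR)^+}$, whose stochastic term factors (by rotational invariance and independence of $\matV_k\transp\matR$ and $\matV_{\rho-k}\transp\matR$) into exactly the two Gaussian identities $\Expect{\FNormS{\matX\matR\matY}}=\FNormS{\matX}\FNormS{\matY}$ and $\Expect{\FNormS{\Omega_1^+}}=k/(p-1)$, giving the clean $1+k/(p-1)$ factor with no selection-bias issue. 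If you want to salvage your write-up, replace ``$\matZ$ equals the top-$k$ right singular vectors of the sketch'' with ``$\matZ$ equals the right singular vectors of $(\matQ\transp\matA)_k$'' and route the argument through Lemmas~\ref{lem:generic} and~\ref{lem:bestF}.
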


\paragraph{Remark.} Notice that any $\beta$-approximation to $ \TNormS{ \matA - \Pi_{\matC,k}^{2}(\matA)}$ in 
Lemma~\ref{lem:bestF} implies a $\left( \sqrt{\beta} + \epsilon \right)$-approximation in 
Lemma~\ref{tropp1}; in particular a $\beta = \left( 1 + \epsilon \right)$-approximation 
would imply a relative error approximation in Lemma~\ref{tropp1}. To our best knowledge, 
that would be the first relative error fast low-rank approximation algorithm with 
respect to the spectral norm. 


\chapter{BACKGROUND AND RELATED WORK}\label{chap3} 
\footnotetext[2]{Portions of this chapter previously appeared as:
C. Boutsidis and P. Drineas, Random Projections for the Nonnegative
Least Squares Problem, Linear Algebra and its Applications, 431(5-7):760-771, 2009, as
C. Boutsidis, A. Zouzias, and P. Drineas, Random Projections for $k$-means Clustering, 
Advances in Neural Information Processing Systems (NIPS), 2010, as
C. Boutsidis, M.W. Mahoney, and P. Drineas,
An Improved Approximation Algorithm for the Column Subset Selection Problem,
Proceedings of the 20th Annual ACM-SIAM Symposium on Discrete Algorithms (SODA), 2009,
and as
C. Boutsidis, P. Drineas, and M. Magdon-Ismail, 
Near-Optimal Column-Based Matrix Reconstruction, arXiv:1103.0995, 2011.}

The primary goal of this chapter is to give a comprehensive overview
of the broad topic of matrix sampling algorithms. We achieve this
goal in three steps. 

First, in Section \ref{chap31}, we present (randomized and deterministic) techniques
for selecting columns from matrices. In total, we present ten such techniques.
Most of these techniques appeared in prior work; for example, the randomized method
of section \ref{chap314} became popular~\cite{DMM08,SS08} after 
it introduced in the celebrated work of Rudelson and Virshynin~\cite{RV07}, and
the deterministic technique of section \ref{chap315} corresponds to the
seminal work of Gu and Eisenstat on Strong Rank-Revealing QR Factorizations~\cite{GE96}.
Two of them  are novel techniques (see Sections \ref{chap317} and \ref{chap318}) and
two of them are not exactly techniques for column selection, rather they construct a small subset
of columns that are linear combinations of the columns of the input matrix
 (see Sections \ref{chap319} and \ref{chap3110}). 
\emph{ {\bf We use these ten elementary techniques in our algorithms in Chapters \ref{chap4}, \ref{chap5}, and \ref{chap6}. }}
Each of these ten elementary techniques will be supported by a Lemma that, in some sense, 
describes the quality of the sampled columns. These lemmas describe the
``spectral properties'' of the submatrices constructed with the
corresponding technique. These lemmas lie in the heart of the proofs 
presented in Chapters \ref{chap4}, \ref{chap5}, and \ref{chap6}. 

Second, we give a precice description of prior work
directly related to the three problems that we study in the present dissertation 
(see Sections \ref{chap32}, \ref{chap33}, and \ref{chap34}). 

Finally, we attempt to draw the ``big picture'' of the topic of matrix sampling algorithms.
We do so by presenting several related problems from this area in Section \ref{chap35}. 

\section{Sampling Techniques for Matrices} \label{chap31}
 
\subsection{Randomized Additive-Error Sampling}\label{chap311} 

Frieze, Kannan, and Vempala~\cite{FKV98} presented the first algorithm for fast column-based
low-rank matrix approximations. The algorithm of the lemma below is 
randomized and offers ``additive-error'' approximation guarantees, i.e. 
the approximation error $\FNormS{\matA - \Pi^F_{\matC,k}(\matA)}$ is upper
bounded by the ``optimal'' term $\FNormS{\matA-\matA_k}$ plus an additive
term which depends on $\FNormS{\matA}$. A relative error algorithm would
replace this term with $\FNormS{\matA-\matA_k}$,
delivering a much more accurate worst-case approximation bound.
The advantage of the algorithm of the following lemma is that its running time
is essentially linear on the dimensions of $\matA$.
\begin{lemma}\label{lem:kar}
Given a matrix \math{\matA\in\R^{m\times n}} of rank $\rho$, a target rank $k < \rho$, and an oversampling parameter $0 < r \le n$, 
there is a $O(mn+r\log(r))$ randomized algorithm to construct \math{\matC\in\R^{m\times r}}:
\mand{ 
\Expect{ \FNormS{\matA - \matC \matC^+ \matA} }
\le
\Expect{ \FNormS{\matA - \Pi^F_{\matC,k}(\matA)} }
\le
\FNormS{\matA-\matA_k}+\frac{k}{r}\FNormS{\matA}.
}
We will write $\matC = AdditiveSampling(\matA, r)$ to denote this randomized algorithm. 
\end{lemma}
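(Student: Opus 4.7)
The plan is to use length-squared (Frieze--Kannan--Vempala) column sampling. I would compute probabilities $p_i = \TNormS{\a_i}/\FNormS{\matA}$ in $O(mn)$ time, then draw $r$ i.i.d.\ indices $i_1,\ldots,i_r$ from this distribution in $O(n+r\log r)$ time using the implementation from Chapter~\ref{chap2}. Form $\matC=\matA\Omega\matS$, where $\Omega$ is the sampling matrix of Section~\ref{chap22} and $\matS$ rescales the $j$-th selected column by $1/\sqrt{rp_{i_j}}$. The inequality $\FNormS{\matA-\matC\matC^+\matA}\le\FNormS{\matA-\Pi^F_{\matC,k}(\matA)}$ is immediate since $\matC\matC^+\matA$ is the unconstrained projection of $\matA$ onto the column span of $\matC$, so the main task reduces to upper-bounding the rank-restricted quantity.

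The workhorse is an approximate-matrix-multiplication bound for $\matA\matA\transp$. Each sampled rescaled outer product $\a_{i_j}\a_{i_j}\transp/(r p_{i_j})$ is an independent rank-one matrix with mean $\matA\matA\transp/r$; a standard second-moment calculation, exploiting exactly the length-squared choice of $p_i$ to cancel the column-norm dependence in the variance, yields
\[
\Expect{\FNormS{\matC\matC\transp-\matA\matA\transp}}\;\le\;\tfrac{1}{r}\FNormS{\matA}\cdot\FNormS{\matA}.
\]
To pass from this spectral-proximity statement to a column-reconstruction bound, I would let $\matH\in\R^{m\times k}$ denote the top $k$ left singular vectors of $\matC$; then $\matH\matH\transp\matA$ is rank at most $k$ and lies in the column span of $\matC$, so by optimality $\FNormS{\matA-\Pi^F_{\matC,k}(\matA)}\le\FNormS{\matA-\matH\matH\transp\matA}=\FNormS{\matA}-\trace(\matH\transp\matA\matA\transp\matH)$.

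The remaining step is to compare $\trace(\matH\transp\matA\matA\transp\matH)$ with $\sum_{i=1}^k\sigma_i^2(\matA)$. Writing $\matA\matA\transp=\matC\matC\transp+\matE$ with $\matE=\matA\matA\transp-\matC\matC\transp$, using that $\matH$ maximizes $\trace(\matX\transp\matC\matC\transp\matX)$ over orthonormal $m\times k$ matrices $\matX$, and combining Hoffman--Wielandt on the sequence $(\sigma_i^2(\matA)-\sigma_i^2(\matC))_i$ with the trace-Cauchy bound $|\trace(\matE\matH\matH\transp)|\le\FNorm{\matE}\cdot\FNorm{\matH\matH\transp}=\sqrt{k}\,\FNorm{\matE}$, I obtain a bound of the form $\FNormS{\matA-\matA_k}+C(k)\cdot\FNorm{\matE}$. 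Taking expectations and plugging in the matrix-multiplication bound finishes the argument. The hard part will be making the $k$-dependence match the claimed $k/r$: a naive application of Jensen's inequality, $\Expect{\FNorm{\matE}}\le\sqrt{\Expect{\FNormS{\matE}}}$, loses a square root and yields only $\sqrt{k/r}\,\FNormS{\matA}$. Closing this gap requires exploiting the mean-zero property $\Expect{\matE}=\bm{0}_{m\times m}$ and working with expected squared traces directly, rather than bounding each Frobenius error in the worst case.
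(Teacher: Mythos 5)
Your overall plan --- length-squared sampling plus the approximate matrix multiplication bound $\Expect{\FNormS{\matC\matC\transp-\matA\matA\transp}}\le\FNormQ{\matA}/r$, followed by perturbation of the top-$k$ left singular subspace of $\matC$ --- is a classical route, but it cannot deliver the bound stated in the lemma, and you have correctly located where it breaks. The subspace-perturbation step produces an additive error of the form $c\sqrt{k}\,\FNorm{\matA\matA\transp-\matC\matC\transp}$, while the quantity whose expectation you control is the \emph{square} of that Frobenius norm. The loss in Jensen's inequality here is not recoverable by ``exploiting $\Expect{\matE}=\bm{0}_{m\times m}$'': the term you would need to handle is $\trace(\matH\matH\transp\matE)$, where $\matH$ is the top-$k$ left singular subspace of $\matC$ and is therefore correlated with $\matE$ through the sample, so the mean-zero property of $\matE$ buys you nothing. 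Along this route the best available expectation bound is $\FNormS{\matA-\matA_k}+O(\sqrt{k/r})\FNormS{\matA}$, i.e.\ $r=O(k/\epsilon^2)$ columns for additive error $\epsilon\FNormS{\matA}$, whereas the lemma claims the stronger $r=k/\epsilon$ dependence.

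The $k/r$ bound requires a different decomposition (this is the Frieze--Kannan--Vempala argument as sharpened by Deshpande et al.; the lemma is exactly the special case of Lemma~\ref{oneround} with $\matC_1$ empty). Let $\v_1,\ldots,\v_k$ be the top right singular vectors of $\matA$ and, for each $j$, form the estimator $\w_j=\frac{1}{r}\sum_{l=1}^{r}\frac{(\v_j)_{i_l}}{p_{i_l}}\,\a_{i_l}$ of $\matA\v_j=\sigma_j\u_j$; each $\w_j$ lies in the column span of $\matC$ by construction, and it is unbiased. A direct variance computation --- this is where the length-squared probabilities cancel the column norms --- gives $\Expect{\TNormS{\matA\v_j-\w_j}}\le\FNormS{\matA}/r$ with no square root lost, because the expected squared error \emph{is} the variance. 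The rank-$k$ matrix $\matF=\sum_{j=1}^k\w_j\v_j\transp$ lies in the span of $\matC$, and expanding $\FNormS{\matA-\matF}$ in the orthonormal basis of right singular vectors gives exactly $\FNormS{\matA-\matF}=\FNormS{\matA-\matA_k}+\sum_{j=1}^k\TNormS{\matA\v_j-\w_j}$; taking expectations and using the optimality of $\Pi^F_{\matC,k}(\matA)$ yields the claimed $\frac{k}{r}\FNormS{\matA}$. So the missing idea is the construction of the vectors $\w_j$, not a tighter analysis of $\FNorm{\matC\matC\transp-\matA\matA\transp}$.
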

Let $\a_i \in \R^m$ denotes the $i$-th column of $\matA$ as a column vector.
The algorithm mentioned in the lemma can be implemented as follows.
For $i=1,...,n$ compute:
\vspace{-.13in} 
$$p_i = \frac{ \TNormS{ \a_i } }{ \FNormS{\matA} }.$$
Now, construct a sampling matrix $\Omega \in \R^{n \times r}$ as follows. 
Initially, $\Omega = \bm{0}_{n \times r}$. Then, for every column $j=1,...,r$ of $\Omega$, 
independently, pick an index $i$ from the set $\{1,2,...,n\}$ with probability $p_i$ and 
set $\Omega_{ij} = 1$. Return $\matC = \matA \Omega$.
One needs $O(mn)$ time to compute the sampling probabilities and
$O(n+r\log(r))$ to choose the $r$ columns, in total $O(mn + r\log(r))$. Although, 
$AdditiveSampling$ is not used in later chapters, we included it 
in our discussion since it is the first matrix sampling algorithm;
also, it serves as a prequel to the presentation of the algorithm in
the next subsection, which we will use in Theorem \ref{thmFast3} in
Section \ref{chap42}. 
\vspace{-.2in}
\paragraph{Remark.} An interesting open question is whether there 
exists a deterministic algorithm achieving a similar ``additive-error'' 
approximation bound.

\subsection{Randomized Adaptive Sampling}\label{chap312}
As in~\cite{FKV98}, Desphande et al~\cite{DRVW06} continue on the topic of fast
column-based low-rank matrix approximations and present an
extension of the seminal result of~\cite{FKV98}. More specifically, they ask and answer in affirmative and constructively
the following question: given $\matA$ and an initial ``good'' subset
of columns $\matC_1$, is it possible to select columns from $\matA$
and improve the above additive-error algorithm? The following
lemma, which is Theorem 2.1 of~\cite{DRVW06}, presents an algorithm
that replaces the additive error term $\FNormS{\matA}$ in the bound
of Lemma~\ref{lem:kar} with the term $ \FNormS{\matA - \matC_1 \matC_1^+ \matA}$.
 
\begin{lemma}\label{oneround}
Given a matrix $\matA \in \R^{m \times n}$ of rank $\rho$, a matrix $\matC_1 \in \R^{m \times r}$ consisting 
of $r$ columns of $\matA$, a target rank $k < \rho$, and an oversampling parameter $0 < s \le n-r$, there is a
$O( m r \min\{m,r\} + m r n + s\log(s)  )$ 
randomized algorithm
to construct $\matC \in \R^{m \times (r+s)}$: 
\vspace{-0.15in}
$$
\Expect{ \FNormS{\matA - \matC \matC^+ \matA} }
\le
\Expect{ \norm{ \matA - \Pi_{\matC,k}^{F}(\matA) }_F^2 } 
\le 
\FNormS{ \matA - \matA_k } + \frac{k}{s} \norm{\matA - \matC_1 \matC_1^+ \matA}_F^2.$$
We write $\matC = AdaptiveSampling(\matA, \matC_1, s)$ to denote this randomized algorithm. 
\end{lemma}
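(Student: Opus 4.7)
The plan is to prove the lemma via adaptive importance sampling with respect to the residual of $\matC_1$, and then exhibit an explicit rank-$k$ candidate inside $\col(\matC)$ whose expected squared error matches the claimed bound. Concretely, first I would compute $\matE = \matA - \matC_1 \matC_1^+ \matA$ (this dominates the cost: $O(mr\min\{m,r\}+mrn)$ via an SVD/QR of $\matC_1$ followed by two multiplications), and then draw $s$ column indices i.i.d.\ with probabilities $p_i = \TNormS{\e_i}/\FNormS{\matE}$ where $\e_i$ is the $i$-th column of $\matE$. The corresponding columns of $\matA$ are appended to $\matC_1$ to form $\matC \in \R^{m\times(r+s)}$, requiring an additional $O(mn + s\log s)$ for probability computation and sampling.

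The key observation is that every column of $\matE$ sits in $\col(\matC)$: for any index $i$, $\e_i = \a_i - \matC_1\matC_1^+\a_i$ is a linear combination of $\a_i$ and the columns of $\matC_1$, all of which lie in $\col(\matC)$ once $\a_i$ is sampled. Therefore sampled columns of $\matE$ are effectively available inside $\col(\matC)$, which is what allows us to reduce the analysis to an importance-sampling estimator for linear combinations of the $\e_j$'s. Let $\matA_k = \sum_{i=1}^k (\matA \v_i) \v_i\transp$ be the SVD-based best rank-$k$ approximation, and split $\matA \v_i = \matC_1\matC_1^+\matA\v_i + \matE\v_i$. For each $i \le k$ I will construct an unbiased importance-sampling estimator
\[
\widehat{\matE\v_i} \;=\; \frac{1}{s}\sum_{t=1}^s \frac{v_{j_t,i}}{p_{j_t}}\,\e_{j_t},
\]
so that $\y_i := \matC_1\matC_1^+\matA\v_i + \widehat{\matE\v_i} \in \col(\matC)$, and set $\matY = \sum_{i=1}^k \y_i \v_i\transp$, which has rank at most $k$ and lies in $\col(\matC)$.

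The error analysis then proceeds in two steps. First, $\matA - \matY = (\matA - \matA_k) + \sum_{i=1}^k (\matE\v_i - \widehat{\matE\v_i})\v_i\transp$, and the row spaces of the two summands are orthogonal (the first lies in $\col(\matV_{\rho-k})$, the second in $\col(\matV_k)$), so by the matrix Pythagorean theorem (Lemma~\ref{lem:pyth}) and the orthonormality of the $\v_i$,
\[
\FNormS{\matA - \matY} = \FNormS{\matA - \matA_k} + \sum_{i=1}^k \TNormS{\matE\v_i - \widehat{\matE\v_i}}.
\]
Second, a standard single-term variance calculation for the importance-sampling estimator with our choice of $p_j$ yields $\Expect{\TNormS{\matE\v_i - \widehat{\matE\v_i}}} \le \tfrac{1}{s}\sum_j \tfrac{v_{j,i}^2}{p_j}\TNormS{\e_j} = \tfrac{1}{s}\FNormS{\matE}\TNormS{\v_i} = \tfrac{1}{s}\FNormS{\matE}$. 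Summing over the $k$ directions gives $\Expect{\FNormS{\matA-\matY}} \le \FNormS{\matA-\matA_k} + \tfrac{k}{s}\FNormS{\matE}$, and since $\Pi^F_{\matC,k}(\matA)$ is by definition the best rank-$k$ approximation of $\matA$ in $\col(\matC)$, the same bound holds for it; the weaker bound on $\FNormS{\matA - \matC\matC^+\matA}$ follows from $\FNormS{\matA-\matC\matC^+\matA} \le \FNormS{\matA-\Pi^F_{\matC,k}(\matA)}$.

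The main obstacle is the verification that the candidate $\matY$ genuinely lies in $\col(\matC)$ together with the orthogonality needed to invoke Pythagoras; once those are in place, the variance bound is mechanical. A secondary subtlety is bookkeeping the running time so that the full cost matches $O(mr\min\{m,r\} + mrn + s\log s)$, which comes out of forming $\matE$ via an orthonormal basis for $\matC_1$ and reusing it (no second pass over $\matA$ is needed after the probabilities are computed).
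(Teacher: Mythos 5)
Your proposal is correct: the algorithm (residual-based sampling probabilities, appending the sampled columns of $\matA$ to $\matC_1$) matches exactly what the paper describes, and your analysis — the unbiased importance-sampling estimator $\widehat{\matE\v_i}$, the observation that sampled residual columns lie in $\col(\matC)$, matrix Pythagoras, and the per-direction variance bound $\FNormS{\matE}/s$ — is the standard proof of Theorem 2.1 of \cite{DRVW06}, which the paper cites rather than reproves. The only point worth making explicit is the convention for indices with $\e_j=\bm{0}$ (hence $p_j=0$): these contribute nothing to $\matE\v_i$ and must simply be excluded from the estimator so that it remains well defined and unbiased.
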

The algorithm mentioned in the lemma is similar with the one
described in the previous section with the only difference 
being the sampling probabilities which now depend on the 
matrix $\matA - \matC_1 \matC_1^+ \matA$ instead of the matrix $\matA$.
More specifically, define the residual error matrix $\matB = \matA - \matC_1 \matC_1^+ \matA
\in \R^{m \times n}$. For $i=1,\ldots,n$, let
$$p_i = \frac{\TNormS{\b_{i}}}{\FNormS{\matB}},$$
where $\b_i$ is the $i$-th column of $\matB$. 
Construct a sampling matrix $\Omega \in \R^{n \times s}$ as follows. 
Initially, $\Omega = \bm{0}_{n \times s}$. Then, for every column $j=1,...,s$ of $\Omega$, 
independently, pick an index $i$ from the set $\{1,2,...,n\}$ with probability $p_i$ and 
set $\Omega_{ij} = 1$.
Let $\matC_2 = \matA \Omega \in \R^{m \times s}$ contains the $s$ sampled columns;
then, $\matC = [\matC_1\ \ \matC_2] \in \R^{m \times (r+s)}$ contains the columns 
of both $\matC_1$ and $\matC_2$, all of which are columns of \math{\matA}.
One needs $O(mr\min\{m,r\})$ to compute $\matC_1^+$, $O(mnr)$ to construct $\matB$,
$O(mn)$ to construct the $p_i$'s, and $O(n+s\log(s))$ to sample the additional $s$ columns;
in total $O( mr\min\{m,r\} + mnr + s\log(s) )$.
\vspace{-0.15in}
\paragraph{Remark.} Lemma~\ref{lem:kar} is a special case of Lemma~\ref{oneround}
with $\matC_1$ being an empty matrix.

\subsection{Randomized Volume Sampling}\label{chap313}

Desphande and collaborators~\cite{DRVW06, DV06, DR10} introduced 
a randomized technique that samples column submatrices from 
the input matrix with probabilities that are proportional to the 
volume of the simplex formed by the columns in the submatrix and the origin. 
Notice that this technique can either sample single columns (submatrices
with just one column) or sets of multiple columns. Similarly to the
techniques of the previous two sections, the columns returned by
volume sampling approaches offer provably accurate column-based
low-rank approximations with respect to the Frobenius norm. 
\cite{DRVW06, DV06, DR10} discuss in 
detail this line of research; here, we give a quick summary 
and highlight the most important results. 
\begin{lemma}[Theorem 1.3, \cite{DRVW06}]\label{lem:vol1}
Fix $\matA \in \R^{m \times n}$ of rank $\rho$ and target rank $k < \rho$.
For $i=1,...,\binom{n}{k}$, consider all possible matrices $\matC_1$, $\matC_2$, ..., $\matC_{\binom{n}{k}} \in \R^{m \times k}$ consisting
of $k$ columns of $\matA$, and probabilities
\vspace{-0.1in}
$$ p_i = \frac{\det{(\matC_i\transp \matC_i})}{ \sum_{j=1}^{\binom{n}{k}} \det{(\matC_j\transp \matC_j)}}.$$
In one random trial, pick the matrix $\matC_i$ with probability $p_i$; then:
$$\Expect{ \FNormS{ \matA - \matC_i \matC_i^+ \matA } } \le (k+1) \FNormS{\matA - \matA_k}.$$
\end{lemma}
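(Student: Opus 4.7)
The plan is to express the expected squared error in closed form as a ratio of elementary symmetric polynomials of the squared singular values of $\matA$, and then bound that ratio by $\FNormS{\matA-\matA_k}$. The geometric bridge that makes this work is the classical base-times-height identity for parallelepipeds: for any $k$-column matrix $\matC_i$ and any extra column $\a_j$,
$$\det\bigl([\matC_i\ \a_j]\transp[\matC_i\ \a_j]\bigr) = \det(\matC_i\transp\matC_i)\cdot\norm{\a_j-\matC_i\matC_i^+\a_j}_2^2,$$
since $\matC_i\matC_i^+\a_j$ is the orthogonal projection of $\a_j$ onto the column span of $\matC_i$, so $\norm{\a_j-\matC_i\matC_i^+\a_j}_2$ is exactly the height of the $(k+1)$-parallelepiped above its $\matC_i$-base. (If $\matC_i$ is rank-deficient both sides vanish and $p_i=0$, so such terms are harmless.) Writing $Z=\sum_j\det(\matC_j\transp\matC_j)$, this identity collapses the factor $\det(\matC_i\transp\matC_i)$ in $p_i$ against the squared distance, leaving $p_i\cdot\norm{\a_j-\matC_i\matC_i^+\a_j}_2^2 = \det([\matC_i\ \a_j]\transp[\matC_i\ \a_j])/Z$.

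Starting from $\FNormS{\matA-\matC_i\matC_i^+\matA}=\sum_{j=1}^n\norm{\a_j-\matC_i\matC_i^+\a_j}_2^2$, I would then take expectations, swap sums, and count: whenever $\a_j$ is already a column of $\matC_i$ the $(k+1)\times(k+1)$ Gram determinant vanishes, so the surviving pairs $(i,j)$ are in bijection with pairs $(T,j)$ where $T$ is a $(k+1)$-subset of column indices and $j\in T$ is the distinguished element. Each such $T$ contributes exactly $k+1$ times, yielding
$$\Expect{\FNormS{\matA-\matC_i\matC_i^+\matA}} = \frac{(k+1)\,\sum_{|T|=k+1}\det(\matA_T\transp\matA_T)}{\sum_{|T|=k}\det(\matA_T\transp\matA_T)},$$
where $\matA_T$ denotes the column submatrix of $\matA$ indexed by $T$.

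Next I would recognize $\det(\matA_T\transp\matA_T)$ as a $|T|\times|T|$ principal minor of the PSD matrix $\matA\transp\matA$. A standard identity (the coefficient of $(-\lambda)^{n-|T|}$ in the characteristic polynomial) says that the sum of all $j$-by-$j$ principal minors of a symmetric matrix equals $e_j$ evaluated at its eigenvalues, where $e_j$ is the $j$-th elementary symmetric polynomial. Applied to the nonzero eigenvalues $\sigma_1^2,\ldots,\sigma_\rho^2$ of $\matA\transp\matA$, the expected error reduces to $(k+1)\,e_{k+1}(\sigma_1^2,\ldots,\sigma_\rho^2)/e_k(\sigma_1^2,\ldots,\sigma_\rho^2)$.

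The final step is the elementary inequality $e_{k+1}(x_1,\ldots,x_\rho)\le e_k(x_1,\ldots,x_\rho)\cdot\sum_{\ell>k}x_\ell$ for nonnegative $x_i$, which applied to $x_i=\sigma_i^2$ gives the claimed bound because $\sum_{\ell>k}\sigma_\ell^2=\FNormS{\matA-\matA_k}$. To prove the inequality, group the monomials of $e_{k+1}$ by their largest index $\ell$; since each $(k+1)$-subset of $[\rho]$ must contain some index $>k$, we have $\ell\ge k+1$, and each monomial factors as $x_\ell$ times a $k$-product over $[\ell-1]$, giving $e_{k+1}(x)=\sum_{\ell>k}x_\ell\cdot e_k(x_1,\ldots,x_{\ell-1})\le e_k(x_1,\ldots,x_\rho)\sum_{\ell>k}x_\ell$ by monotonicity of $e_k$ in each nonnegative argument. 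The main obstacle is really only the first step, the geometric identity turning a squared distance times a base volume into the Gram determinant of the enlarged parallelepiped; everything after that is accounting via principal-minor sums and a one-line symmetric-function bound.
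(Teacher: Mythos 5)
Your proof is correct, and it is essentially the argument of Deshpande, Rademacher, Vempala, and Wang: the paper itself does not prove this lemma but quotes it verbatim from \cite{DRVW06}, whose proof proceeds exactly via the Gram-determinant base-times-height identity, the $(k+1)$-fold counting over $(k+1)$-subsets, the principal-minor/elementary-symmetric-polynomial identity giving the exact expectation $(k+1)e_{k+1}(\sigma^2)/e_k(\sigma^2)$, and the bound $e_{k+1}\le e_k\cdot\sum_{\ell>k}\sigma_\ell^2$. All of your steps check out, including the well-definedness of the probabilities (since $\rho>k$ forces $e_k>0$) and the handling of rank-deficient subsets.
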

\noindent Unfortunately, the above lemma does not indicate an efficient algorithm to compute
the probabilities $p_i$'s; so, from an algorithmic perspective, the result is not useful. 
Propositions 1 and 2 in~\cite{DV06} made progress by quickly approximating these probabilities at
the cost of replacing the approximation factor $(k+1)$ with $(k+1)!$. Recently, 
Theorem 7 in~\cite{DR10} presented an efficient randomized algorithm for computing 
these probabilities exactly in $O( k n m^3 \log(m) )$. A derandomization of this
randomized algorithm led to a deterministic column-sampling algorithm that, again, runs
in  $O( k n m^3 \log(m) )$ and achieves approximation error $(k+1)$. Finally,
by leveraging a random projection type result of~\cite{MZ08}, 
Theorem 9 in~\cite{DR10} presents an algorithm that, for any $0 < \epsilon < 1$,
runs in 
$ O( n m \log(n)k^2\epsilon^{-1} + n \log^3(n)k^7\epsilon^{-6}\log( k\epsilon^{-1} \log(n) ) )$,
and achieves approximation 
$\Expect{ \FNormS{ \matA - \matC \matC^+ \matA } } \le  (1+\epsilon) (k+1) \FNormS{\matA - \matA_k}.$
Finally, recently~\cite{GK11} volume sampling extended to sample any $r \ge k$ columns.

\subsection{Randomized Subspace Sampling }\label{chap314}
\begin{definition}[Random Sampling with Replacement~\cite{RV07}] \label{def:sampling}
Let $\matX \in \R^{n \times k}$ with $n > k$; $\x_i\transp \in \R^{1 \times k}$ 
denotes the $i$-th row of $\matX$ and $0 < \beta \leq 1$.
For $i=1,...,n,$ if $\beta=1$, then $p_i = (\x_i\transp \x_i) / \FNormS{\matX}$,
otherwise compute some $p_i \geq \beta (\x_i\transp \x_i) / \FNormS{\matX}$ with $ \sum_{i=1}^{n} p_i = 1$. 
Let $r$ be an integer with $ 1 \le r \le n$. 
Construct a sampling matrix $\Omega \in \R^{n \times r}$ and a rescaling matrix $\matS \in \R^{r \times r}$ as follows. Initially, $\Omega = \bm{0}_{n \times r}$ and $\matS=\bm{0}_{r \times r}$. 
Then, for every column $j=1,...,r$ of $\Omega$, $\matS$, independently, pick an index $i$ from the set $\{1,2,...,n\}$ with probability $p_i$ and set $\Omega_{ij} = 1$ and $\matS_{jj} = 1/\sqrt{p_i r}$. To denote this 
$O(nk + r\log(r))$ time randomized procedure we will write: 
\vspace{-0.16in}
$$[\Omega, \matS] = SubspaceSampling(\matX, \beta, r).$$ 
\end{definition}
\vspace{-0.16in}
It is interesting
to consider applying this technique for selecting columns from short-fat matrices of orthonormal 
rows. Drineas et. al~\cite{DMM06a} are credited for applying the method of \cite{RV07}
to such matrices. The term ``subspace sampling'' is from \cite{DMM06a}
and denotes the fact that the sampled columns ``capture'' the subspace
of interest. 
\begin{lemma}[Originally proved in \cite{RV07}]\label{lem:random} 
Let $\matV \in \R^{n \times k}$ with $n > k$ and $\matV\transp \matV = \matI_{k}$.
Let $0 < \beta \le 1$, $0 < \delta \le 1$, and $ 4 k \ln( 2 k / \delta) / \beta < r \leq n$.
Let $[\Omega, \matS] = SubspaceSampling(\matV, \beta,  r)$.
Then, for all $i=1,...,k$, w.p. at least $1 - \delta$:
\vspace{-0.1in} 
$$1 -  \sqrt{\frac{4 k \ln(2 k / \delta )}{ r \beta }}  \leq  \sigma_i^2(\matV\transp \Omega \matS)  
\leq 1+ \sqrt{\frac{4  k \ln(2k/\delta)}{ r \beta }}.
$$
\end{lemma}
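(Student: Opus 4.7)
The plan is to write the matrix of interest as a sum of independent random rank-one PSD matrices and apply a matrix concentration inequality (Ahlswede--Winter / matrix Chernoff type). Concretely, let $\v_i\transp \in \R^{1 \times k}$ denote the $i$-th row of $\matV$, and let $i_t \in \{1,\dots,n\}$ be the index drawn at the $t$-th trial ($t=1,\dots,r$) in the $SubspaceSampling$ procedure. By construction of $\Omega$ and $\matS$, the $t$-th column of $\matV\transp \Omega \matS$ equals $\v_{i_t}/\sqrt{r p_{i_t}}$, so
$$
\matV\transp \Omega \matS \matS\transp \Omega\transp \matV \;=\; \sum_{t=1}^{r} X_t, \qquad X_t \;=\; \frac{1}{r p_{i_t}}\,\v_{i_t}\v_{i_t}\transp,
$$
and the $X_t$'s are i.i.d.\ PSD random matrices in $\R^{k \times k}$.

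Next I would compute the mean and a uniform operator-norm bound on each $X_t$. For the mean,
$$
\Expect{X_t} \;=\; \sum_{i=1}^{n} p_i \cdot \frac{1}{r p_i}\,\v_i \v_i\transp \;=\; \frac{1}{r}\,\matV\transp \matV \;=\; \frac{1}{r}\,\matI_k,
$$
so $\Expect{\sum_t X_t} = \matI_k$. For the operator-norm bound, I use $\|X_t\|_2 = \|\v_{i_t}\|_2^2/(r p_{i_t})$ together with the sampling assumption $p_i \ge \beta \|\v_i\|_2^2/\FNormS{\matV}$ and the identity $\FNormS{\matV} = \trace(\matV\transp \matV) = k$ to conclude
$$
\|X_t\|_2 \;\le\; \frac{\FNormS{\matV}}{r\beta} \;=\; \frac{k}{r\beta}.
$$

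The main step is then to invoke a matrix Chernoff / Ahlswede--Winter bound for sums of independent bounded PSD matrices. Such a bound yields, for every $\epsilon \in (0,1)$,
$$
\Prob\!\left[\,\Big\|\sum_{t=1}^r X_t - \matI_k\Big\|_2 \ge \epsilon\,\right] \;\le\; 2k\,\exp\!\left(-\frac{r\beta\,\epsilon^2}{4k}\right).
$$
Setting the right-hand side equal to $\delta$ and solving gives $\epsilon = \sqrt{4k\ln(2k/\delta)/(r\beta)}$, which is a valid choice (i.e.\ $\epsilon < 1$) precisely under the hypothesis $r > 4k\ln(2k/\delta)/\beta$. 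Thus with probability at least $1-\delta$,
$$
\big\|\matV\transp \Omega \matS \matS\transp \Omega\transp \matV - \matI_k\big\|_2 \;\le\; \sqrt{\frac{4k\ln(2k/\delta)}{r\beta}}.
$$

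Finally, I apply Lemma \ref{lem:eq} with $\matW = \Omega \matS$: the equivalence of its first and third statements converts the operator-norm bound above directly into the singular-value sandwich
$$
1-\sqrt{\tfrac{4k\ln(2k/\delta)}{r\beta}} \;\le\; \sigma_i^2(\matV\transp \Omega \matS) \;\le\; 1+\sqrt{\tfrac{4k\ln(2k/\delta)}{r\beta}}, \quad i=1,\dots,k,
$$
which is exactly the claim. The hard part is citing (or proving) the correct matrix concentration inequality with the right constants: a standard Ahlswede--Winter argument using Golden--Thompson and the moment-generating-function approach for the extremal eigenvalues of $\sum_t X_t$ produces the $2k$ prefactor and the $\epsilon^2/(\text{bound on }\|X_t\|)$ exponent, which is what drives the precise form of the threshold $r > 4k\ln(2k/\delta)/\beta$.
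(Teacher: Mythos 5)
Your proposal is correct, but it takes a different (and more self-contained) route than the paper. The paper's own ``proof'' of Lemma~\ref{lem:random} is a one-line reduction: it simply invokes Theorem~2 of~\cite{Mag10} with $\matS=\matI$ and a change of variables expressing $\epsilon$ in terms of $r$, $\beta$, and $k$, offloading all the analytic work to that reference (which in turn descends from the Rudelson--Vershynin argument). You instead reconstruct the underlying argument from scratch: the decomposition of $\matV\transp\Omega\matS\matS\transp\Omega\transp\matV$ into i.i.d.\ rank-one PSD summands, the computation $\Expect{\sum_t X_t}=\matI_k$, the uniform bound $\TNorm{X_t}\le k/(r\beta)$ via $p_i\ge\beta\TNormS{\v_i}/k$, a matrix Chernoff/Ahlswede--Winter tail bound, and finally the conversion to a singular-value sandwich via the equivalence of statements~1 and~3 of Lemma~\ref{lem:eq} --- all of these steps are sound, and the hypothesis $r>4k\ln(2k/\delta)/\beta$ is exactly what makes your $\epsilon<1$. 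The only point to be careful about is the one you already flag: the constant $4$ in the exponent must be justified by the specific concentration inequality you cite (the standard Tropp-style matrix Chernoff bound gives exponents $-\epsilon^2\mu/(2R)$ and $-\epsilon^2\mu/(3R)$ for the lower and upper tails with $\epsilon\in(0,1)$, so after a union bound the weaker $-\epsilon^2/(4R)$ form does follow, but you should state which inequality you are using rather than leave it generic). What your approach buys is a proof that is verifiable without chasing the notation of~\cite{Mag10}; what the paper's approach buys is brevity and exact alignment of constants with the cited theorem.
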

\vspace{-0.1in}
\begin{proof}
In Theorem 2 of~\cite{Mag10}, set $\matS = \matI$  and 
replace $\epsilon$ in terms of $r, \beta,$ and $d$. The lemma is proved;
one should be careful to fit this into our notation. 
\end{proof}
\begin{lemma} [See Appendix for the proof] \label{lem:fnorm}
For any $\beta$, $r$, $\matX \in \R^{n \times k}$, and $\matY \in \R^{m \times n}$, let 
$[\Omega, \matS] = SubspaceSampling(\matX, \beta,  r)$; then, w.p. $1-\delta$:
$ \FNormS{ \matY \Omega \matS } \leq \frac{1}{\delta} \FNormS{ \matY }. $ 
\end{lemma}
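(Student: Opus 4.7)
The plan is a direct two-step argument: compute $\Expect{\FNormS{\matY\Omega\matS}}$ exactly, then apply Markov's inequality. Note that the bound does not depend on $\beta$ or on $\matX$, which strongly suggests the expectation calculation is insensitive to the specific values of the $p_i$.

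First, I would write out $\matY\Omega\matS$ column by column. The $j$-th column of $\Omega\matS$ is $\e_{i_j}/\sqrt{p_{i_j} r}$, where $i_j \in \{1,\dots,n\}$ is the index sampled in the $j$-th trial (with probability $p_{i_j}$). Hence the $j$-th column of $\matY\Omega\matS$ equals $\y_{i_j}/\sqrt{p_{i_j} r}$, where $\y_i$ denotes the $i$-th column of $\matY$. Summing squared Euclidean norms over the $r$ columns gives
\[
\FNormS{\matY\Omega\matS} \;=\; \sum_{j=1}^{r} \frac{\TNormS{\y_{i_j}}}{p_{i_j}\, r}.
\]

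Next, I would take expectations. The $r$ trials are independent, and for each trial
\[
\Expect{\frac{\TNormS{\y_{i_j}}}{p_{i_j}\, r}} \;=\; \sum_{i=1}^{n} p_i \cdot \frac{\TNormS{\y_i}}{p_i\, r} \;=\; \frac{1}{r}\sum_{i=1}^{n} \TNormS{\y_i} \;=\; \frac{1}{r}\FNormS{\matY}.
\]
Summing over the $r$ trials and using linearity of expectation gives $\Expect{\FNormS{\matY\Omega\matS}} = \FNormS{\matY}$. Notice how the rescaling $1/(p_i r)$ is precisely chosen to cancel the sampling probability, which is why the result does not depend on $\beta$ or $\matX$.

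Finally, since $\FNormS{\matY\Omega\matS}$ is a nonnegative random variable with expectation $\FNormS{\matY}$, Markov's inequality (from the preliminaries) with $t = 1/\delta$ yields
\[
\Probab{\FNormS{\matY\Omega\matS} \,>\, \tfrac{1}{\delta}\FNormS{\matY}} \;\le\; \delta,
\]
which is the claimed bound. There is no real obstacle here; the only thing to be careful about is that the definition of $SubspaceSampling$ allows the $p_i$ to deviate from the ``exact'' leverage-score values as long as $p_i \ge \beta \TNormS{\x_i}/\FNormS{\matX}$ and $\sum_i p_i = 1$. The expectation computation above uses only the summing-to-one condition, so the argument is indifferent to this slack.
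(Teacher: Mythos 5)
Your proposal is correct and matches the paper's own proof essentially step for step: both compute $\Expect{\FNormS{\matY\Omega\matS}}=\FNormS{\matY}$ by summing the expected squared norms of the $r$ sampled-and-rescaled columns (where the $1/(p_i r)$ rescaling cancels the sampling probability), and then apply Markov's inequality. Your added remark that only $\sum_i p_i = 1$ is needed, so the bound is independent of $\beta$ and $\matX$, is accurate and consistent with the paper's argument.
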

\vspace{-0.3in}
\paragraph{Remark.} 
Quite recently, Zouzias~\cite{Zou11} proved the deterministic analog of Lemma \ref{lem:random}.
More specifically, Theorem 11 in~\cite{Zou11} proves that, on input $\matV \in \R^{n \times k}$ 
and $r > 30 k \ln(2k)$, there is a deterministic $\tilde{O}(n r k \log^2(k))$ algorithm that
returns $\Omega \in \R^{n \times r}$, $\matS \in \R^{r \times r}$ such that, for all $i=1,...,k$:
$1 -  \sqrt{30 k \ln(2 k ) / r }  \leq  \sigma_i^2(\matV\transp \Omega \matS)  
\leq 1+ \sqrt{30  k \ln(2k) / r }.
$
The notation $\tilde{O}(\cdot)$ hides $\log ( \log(1/ \epsilon))$ and $ \log ( \log (k) ) $ factors.

\subsection{Deterministic Sampling with the Strong RRQR~}\label{chap315}
Gu and Eisenstat~\cite{GE96} give a deterministic algorithm for selecting $k < n$ columns from generic
$m \times n$ matrices. Here, we are only interested in sampling 
$k$ columns from $k \times n$ short-fat matrices; so, we will
restate the main result of \cite{GE96} as Lemma \ref{lem:rrqrGE} below to fit our notation. 
(Lemma \ref{lem:rrqrGE} is not immediate from~\cite{GE96}, so we include a proof in the Appendix.) 
Lemma \ref{lem:rrqr} is a simple corollary of Lemma \ref{lem:rrqrGE}. 
Notice that our lemmas here guarantee the existence of a deterministic technique;
the actual description of this technique can be found as Algorithm 4 in~\cite{GE96}.

\begin{lemma}[Extension of Algorithm 4 of \cite{GE96} to short-fat Matrices]
\label{lem:rrqrGE}
Let $\matX\in\R^{n\times m}$ (\math{m\le n}) have rank \math{\rho_\matX},
and \math{1\le k\le \rho_\matX}. Let $f > 1$.
There is a deterministic $O( m n k \log_{f}( n ) )$ algorithm to construct 
a permutation matrix $\Pi\in\R^{n \times n}$, a matrix $\matQ_{\matX} \in\R^{m \times m}$ 
with orthonormal columns, and a matrix $\matR \in\R^{m \times n}$:
\mand{
\matX\transp
\Pi = 
\matQ \matR = \matQ_{\matX} \left(%
\begin{array}{cc}
\matA_{k} & \matB_{k} \\
 \bm{0}_{(m-k) \times k} & \tilde{\matC}_{k} \\
\end{array}
\right);
\matA_{k} \in \R^{k\times k};
\
\matB_{k} \in \R^{k\times (n-k)};
\tilde{\matC}_{k} \in \R^{(m-k)\times (n-k)};
}
such that for all $i=1,...,k$, and $j=k+1,...,m$:
\mand{ 
\frac{\sigma_i(\matX\transp)}{\sqrt{ f^2k(n-k) + 1 }} 
\leq \sigma_{i}(\matA_{k}) \leq \sigma_i(\matX\transp); 
 \sigma_{j}(\matX\transp) \le \sigma_{j-k}(\tilde{\matC}_{k}) \leq \sqrt{ f^2k(n-k) + 1 } \sigma_{j}(\matX\transp)
}
\end{lemma}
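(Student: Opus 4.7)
The plan is to invoke the Strong RRQR algorithm of Gu and Eisenstat (Algorithm 4 of~\cite{GE96}) directly on the short-fat matrix $\matX\transp \in \R^{m \times n}$ and read off the desired factorization. Their algorithm produces a pivoted QR decomposition $\matX\transp \Pi = \matQ_\matX \matR$, where $\Pi$ is a permutation, $\matQ_\matX \in \R^{m \times m}$ is orthogonal, and $\matR \in \R^{m \times n}$ is upper trapezoidal. Writing the upper trapezoid in $2 \times 2$ block form with a $k \times k$ top-left block $\matA_k$, an off-diagonal block $\matB_k \in \R^{k \times (n-k)}$, a zero block below $\matA_k$, and a trailing block $\tilde{\matC}_k \in \R^{(m-k) \times (n-k)}$ gives the advertised structure. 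The claimed running time $O(mnk\log_f n)$ carries over verbatim from~\cite{GE96}, since their complexity analysis treats general $m \times n$ input dimensions without any restriction on the aspect ratio.

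For the singular value bounds, I would split the four inequalities into two groups. The ``strong'' inequalities $\sigma_i(\matA_k) \geq \sigma_i(\matX\transp)/\sqrt{f^2 k(n-k)+1}$ and $\sigma_{j-k}(\tilde{\matC}_k) \leq \sqrt{f^2 k(n-k)+1}\,\sigma_j(\matX\transp)$ are precisely Theorem 3.2 of~\cite{GE96}, after identifying $\matA_k$ and $\tilde{\matC}_k$ with the authors' blocks and using that $\sigma_i(\matM) = \sigma_i(\matM\transp)$. These follow from Gu and Eisenstat's ``$f$-bounded swap'' termination criterion, which guarantees that no column exchange between the leading $k$ and trailing $n-k$ positions can increase $|\det(\matA_k)|$ by more than a factor of $f$; at termination a short algebraic manipulation converts this guarantee into the claimed singular value inequalities.

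The ``interlacing'' bounds $\sigma_i(\matA_k) \leq \sigma_i(\matX\transp)$ and $\sigma_{j-k}(\tilde{\matC}_k) \geq \sigma_j(\matX\transp)$ follow from the Cauchy interlacing theorem for submatrices. Since $\matX\transp \Pi = \matQ_\matX \matR$ and both orthogonal and permutation multiplications preserve singular values, $\sigma_i(\matX\transp) = \sigma_i(\matR)$. The block $\matA_k$ is a leading $k \times k$ submatrix of $\matR$, so standard singular value interlacing yields $\sigma_i(\matA_k) \leq \sigma_i(\matR)$; likewise $\tilde{\matC}_k$ is a trailing submatrix of $\matR$, yielding the dual inequality $\sigma_{j-k}(\tilde{\matC}_k) \geq \sigma_j(\matR)$.

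The main obstacle I anticipate is bookkeeping rather than mathematics: Gu and Eisenstat state their guarantees most cleanly in terms of $\sigma_{\min}(\matA_k)$ and $\sigma_{\max}(\tilde{\matC}_k)$, whereas the lemma needs a uniform bound for every index $i \in \{1,\dots,k\}$ and $j \in \{k+1,\dots,m\}$. This strengthening is standard: it is obtained either by combining their Theorem 3.2 with the min-max characterization of singular values, or by noting that the $f$-bounded-swap invariant forces the full spectrum of $\matA_k$ (resp.\ $\tilde{\matC}_k$) to be controlled, not merely its extreme singular value. Once the index-wise version is recorded, the four inequalities of the lemma follow by assembling the strong-RRQR and interlacing bounds above.
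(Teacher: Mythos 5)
Your proposal assumes away the one nontrivial point of this lemma. Gu and Eisenstat state Algorithm 4 and Theorem 3.2 for matrices with at least as many rows as columns (note the thesis's own Definition of the RRQR factorization and Table of RRQR algorithms, both of which carry the hypothesis $m \geq n$), whereas here $\matX\transp \in \R^{m\times n}$ with $m \le n$ is short-fat. The claim that ``their complexity analysis treats general $m\times n$ input dimensions without any restriction on the aspect ratio'' is exactly what must be justified, and the paper explicitly flags that the lemma ``is not immediate from \cite{GE96}.'' The paper's proof bridges this by padding: it forms the square matrix $\matY\transp = [\matX, \bm{0}_{n-m\times n}]$, runs Algorithm 4 of \cite{GE96} on $\matY$ and $k$, and gets the four singular-value inequalities directly from their Theorem 3.2 because appending zero rows does not change singular values. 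The remaining work — which your proposal also skips — is the running time: applied to the padded $n\times n$ matrix, \cite{GE96} gives $O(n^2 k \log_f n)$, not $O(mnk\log_f n)$. The paper recovers the claimed bound by showing that every QR refactorization of $\matY\transp\Pi$ has the block form in which the bottom $n-m$ rows stay identically zero, so both the column-selection function $\rho(\matR,k)$ and the update steps depend only on the top $m\times n$ block; hence the algorithm can be run as if on $\matX\transp$ alone. Without either the padding argument or an explicit re-derivation of \cite{GE96}'s analysis for $m < n$, your proof has a genuine gap.

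A secondary, harmless point: all four inequalities (both the $\sqrt{f^2k(n-k)+1}$ bounds and the interlacing-type bounds $\sigma_i(\matA_k)\le\sigma_i(\matX\transp)$, $\sigma_{j-k}(\tilde{\matC}_k)\ge\sigma_j(\matX\transp)$) are already stated index-wise in Theorem 3.2 of \cite{GE96}, so the ``bookkeeping'' strengthening you anticipate, and the separate Cauchy-interlacing argument, are not needed once the reduction to \cite{GE96} is in place.
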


\begin{lemma}
\label{lem:rrqr}
Let \math{\matX \in \R^{n \times k}} with $n \geq k$; there exists a deterministic algorithm 
that runs in $O( n k^2 \log( n ) )$ time and constructs a sampling matrix $\Omega \in \R^{n \times k}$ such that
\mand{
\sigma_k(\matX\transp \Omega) \ge \frac{\sigma_k(\matX\transp)}{ \sqrt{ 4k(n-k) + 1 } }
\qquad\text{and}\qquad
\XNorm{\matX\transp \Omega}\leq \XNorm{\matX\transp}.}
We write $\Omega = RRQRSampling(\matX,k)$ to denote such a deterministic procedure.
\end{lemma}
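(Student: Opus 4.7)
The plan is to derive Lemma~\ref{lem:rrqr} as a direct specialization of Lemma~\ref{lem:rrqrGE}. Observe that the hypotheses match if we regard $\matX \in \R^{n \times k}$ as an $n \times m$ matrix with $m = k$, and invoke Lemma~\ref{lem:rrqrGE} with target rank $k$ and parameter $f = 2$. With this choice, the block $\tilde{\matC}_k$ has dimensions $(m-k) \times (n-k) = 0 \times (n-k)$ and disappears entirely, so the factorization collapses to
$$\matX\transp \Pi \;=\; \matQ_\matX \bigl(\matA_k \ \ \matB_k\bigr),\qquad \matA_k \in \R^{k \times k},\ \matB_k \in \R^{k\times (n-k)},$$
with $\matQ_\matX \in \R^{k \times k}$ orthogonal. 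I would then define $\Omega \in \R^{n \times k}$ to be the first $k$ columns of the permutation matrix $\Pi$; since $\Pi$'s columns are standard basis vectors with distinct indices, $\Omega$ is a valid sampling matrix in the sense of Section~\ref{chap22}.

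Next, reading off the first $k$ columns of both sides of the factorization yields $\matX\transp \Omega = \matQ_\matX \matA_k$. Because $\matQ_\matX$ has orthonormal columns, it preserves all singular values, so $\sigma_k(\matX\transp \Omega) = \sigma_k(\matA_k)$. Applying the lower bound on $\sigma_k(\matA_k)$ supplied by Lemma~\ref{lem:rrqrGE} with $f = 2$ gives
$$\sigma_k(\matX\transp \Omega) \;\ge\; \frac{\sigma_k(\matX\transp)}{\sqrt{f^2 k (n-k) + 1}} \;=\; \frac{\sigma_k(\matX\transp)}{\sqrt{4 k (n-k) + 1}},$$
which is the first claimed inequality.

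For the norm bound, since $\Omega$ has orthonormal columns we have $\TNorm{\Omega}=1$, and spectral submultiplicativity gives $\TNorm{\matX\transp \Omega} \le \TNorm{\matX\transp}\TNorm{\Omega} = \TNorm{\matX\transp}$; for the Frobenius case, $\matX\transp\Omega$ consists of a subset of the columns of $\matX\transp$, so $\FNormS{\matX\transp \Omega}$ is a partial sum of squared column norms of $\matX\transp$ and is at most $\FNormS{\matX\transp}$. This covers both $\xi = 2$ and $\xi = F$. Finally, the runtime of Lemma~\ref{lem:rrqrGE} with $m = k$ and $f = 2$ is $O(m n k \log_f(n)) = O(n k^2 \log n)$, matching the stated complexity.

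I do not expect any real obstacle; the argument is essentially bookkeeping, and the only subtle point is recognizing that choosing $m = k$ in Lemma~\ref{lem:rrqrGE} eliminates the $\tilde{\matC}_k$ block so that the $k \times k$ leading block $\matA_k$ alone controls $\sigma_k(\matX\transp\Omega)$.
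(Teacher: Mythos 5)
Your proposal is correct and follows exactly the paper's own route: apply Lemma~\ref{lem:rrqrGE} with $m=k$ and $f=2$, take $\Omega$ to be the first $k$ columns of $\Pi$, read off $\sigma_k(\matX\transp\Omega)=\sigma_k(\matA_k)$, and bound the norms via submultiplicativity with $\TNorm{\Omega}=1$. The extra observations you make (the vanishing $\tilde{\matC}_k$ block, the explicit Frobenius-norm column-subset argument) are accurate elaborations of steps the paper leaves implicit.
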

\begin{proof}
Compute a QR factorization of $\matX$ with Lemma \ref{lem:rrqrGE} and
let $\Omega$ be the first $k$ columns of the matrix $\Pi$.
The bound for $\sigma_k(\matX\transp \Omega)$ follows by applying the first bound in Lemma~\ref{lem:rrqrGE} 
with \math{i=k} and \math{f=2}. 
The $\xi$-norm upper bound follows by spectral submultiplicativity and
$\TNorm{\Omega}=1$. The running time is from Lemma 
\ref{lem:rrqrGE} with $m=k$.
\end{proof}

\subsection{Deterministic Sampling with the Barrier Method I}\label{chap316}
\begin{lemma} [Single-set Spectral Sparsification \cite{BSS09}]
\label{lem:1set}
Let \math{\matV\in\R^{n\times k}} with \math{\matV\transp\matV=\matI_{k}}.
Let $r$ is an integer and assume $k < r \le n$.
One can construct (deterministically, in $O(r n k^2)$ time) 
a sampling matrix \math{\Omega\in\R^{n\times r}} and a 
positive diagonal rescaling matrix \math{\matS\in\R^{r\times r}} such 
that, for all $i=1,...,k$:
$$    1 -  \sqrt{ \frac{ k }{ r  } } 
\leq  \sigma_i(\matV\transp \Omega \matS)  
\leq  1 +  \sqrt{\frac{ k }{ r }}.
$$
\end{lemma}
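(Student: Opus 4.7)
This is the Batson--Spielman--Srivastava theorem specialized to the tall, orthonormal matrix $\matV$; the plan is to adapt their barrier-method argument. Let $\v_1,\ldots,\v_n \in \R^k$ denote the rows of $\matV$, so that the isometry condition $\matV\transp\matV = \matI_k$ reads $\sum_{i=1}^n \v_i \v_i\transp = \matI_k$. Choosing the sampling and rescaling pair $(\Omega,\matS)$ is equivalent to choosing $r$ nonnegative weights $s_1^2,\ldots,s_r^2$ and indices $i_1,\ldots,i_r$ so that
\begin{equation*}
\matM \;=\; \matV\transp\Omega\matS(\matV\transp\Omega\matS)\transp \;=\; \sum_{j=1}^r s_j^2\, \v_{i_j} \v_{i_j}\transp
\end{equation*}
has all eigenvalues inside $[(1-\sqrt{k/r})^2,\,(1+\sqrt{k/r})^2]$; taking square roots gives the claimed singular-value bounds on $\matV\transp\Omega\matS$.

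\textbf{Barrier potentials.} I would introduce the upper and lower potentials
\begin{equation*}
\Phi^u(\matA) \;=\; \trace\!\bigl((u\matI_k-\matA)^{-1}\bigr), \qquad \Phi_\ell(\matA) \;=\; \trace\!\bigl((\matA-\ell\matI_k)^{-1}\bigr),
\end{equation*}
which are finite only when $\lambda_{\max}(\matA)<u$ and $\lambda_{\min}(\matA)>\ell$, and blow up as an eigenvalue approaches the corresponding barrier. Keeping these potentials bounded from above forces an eigenvalue gap to $u$ and $\ell$, respectively. Starting from $\matA_0=\bm{0}_{k\times k}$, with carefully chosen barriers $u_0,\ell_0$ and increments $\delta_U,\delta_L$, I would build $\matA_\tau$ iteratively: at step $\tau$, pick an index $i\in\{1,\ldots,n\}$ and a scalar $t>0$, set
\begin{equation*}
\matA_{\tau+1} \;=\; \matA_\tau + t\,\v_i\v_i\transp, \qquad u_{\tau+1}=u_\tau+\delta_U, \qquad \ell_{\tau+1}=\ell_\tau+\delta_L,
\end{equation*}
and demand that neither potential increases. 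After $r$ steps, the resulting $\matM=\matA_r$ (rescaled if necessary) satisfies $\ell_r\matI_k \preceq \matM \preceq u_r\matI_k$, and the parameters are tuned so that $u_r/\ell_r \le \bigl((1+\sqrt{k/r})/(1-\sqrt{k/r})\bigr)^2$.

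\textbf{Existence of a feasible update.} The key per-step lemma is shown by Sherman--Morrison: expanding $(u_{\tau+1}\matI-\matA_{\tau+1})^{-1}$ and $(\matA_{\tau+1}-\ell_{\tau+1}\matI)^{-1}$, the ``potentials do not increase'' requirement reduces to finding $i$ with
\begin{equation*}
U_\tau(\v_i) \;\le\; \tfrac{1}{t} \;\le\; L_\tau(\v_i),
\end{equation*}
where $U_\tau(\cdot),L_\tau(\cdot)$ are explicit quadratic forms assembled from $(u_\tau\matI-\matA_\tau)^{-1}$, $(\matA_\tau-\ell_\tau\matI)^{-1}$, and the current potential values. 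The existence of such an index $i$ is proved by an averaging argument: using $\sum_i \v_i\v_i\transp = \matI_k$ one shows $\sum_i U_\tau(\v_i) \le \sum_i L_\tau(\v_i)$, and then some $i$ witnesses the inequality pointwise, at which point any $t$ in the resulting nonempty interval is legal. Invariance of the averaged inequality across iterations is what the parameter choices $\delta_U,\delta_L,u_0,\ell_0$ are designed to enforce.

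\textbf{Main obstacle and running time.} The \emph{main technical obstacle} is the simultaneous tuning of $\delta_U,\delta_L,u_0,\ell_0$: they must be small enough that the averaged inequality propagates, yet the telescoped barrier shifts $u_r - u_0 = r\delta_U$ and $\ell_r-\ell_0 = r\delta_L$ must yield the tight spectral window $[(1-\sqrt{k/r})^2,(1+\sqrt{k/r})^2]$. This is exactly the delicate calculation in~\cite{BSS09} and I would import those parameter choices verbatim. For the runtime, each of the $r$ iterations maintains the two $k\times k$ inverse matrices via rank-one Sherman--Morrison updates in $O(k^2)$, then evaluates $U_\tau(\v_i),L_\tau(\v_i)$ for each of the $n$ candidate rows at $O(k^2)$ apiece; this totals $O(rnk^2)$, as claimed.
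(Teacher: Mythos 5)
Your proposal is correct and follows essentially the same route as the paper: the paper obtains Lemma~\ref{lem:1set} as the special case $\matU=\matV$ of its dual-set sparsification lemma, whose appendix proof is exactly the barrier-method argument you describe — the two trace potentials, the Sherman--Morrison rank-one update lemmas, the averaging argument $\sum_i U \le 1-\sqrt{k/r} \le \sum_i L$ guaranteeing a feasible index, the final rescaling, and the $O(rnk^2)$ running time.
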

Lemma \ref{lem:1set} is due to Batson, Srivastava, and Spielman~\cite{BSS09}. 
The deterministic algorithm promised 
in the lemma can be found in the (constructive) proof of Theorem 3.1 
in~\cite{BSS09}. In the next subsection, we present an important generalization of Lemma \ref{lem:1set}
to sample two orthonormal matrices simultaneously. Lemma \ref{lem:1set}
follows from Lemma \ref{lem:2setS} with $\matU = \matV$, so the
algorithm that we present in the proof of Lemma \ref{lem:2setS}
in the Appendix is the algorithm of Lemma \ref{lem:1set}.  
Finally, we restate Lemma \ref{lem:1set} in a different form, 
which is the form that we will use it in Section \ref{chap51}.
\begin{corollary} \label{cor:1set} Frame the hypothesis in Lemma \ref{lem:1set}. 
For those $\Omega, \matS$ and 
$\forall\y\in\R^k$:
$$ \left(1-\sqrt{\frac{k}{r}}\right)^2 \TNormS{\matV \y}
\leq \TNormS{\matS\transp \Omega\transp \matV \y}
\leq \left(1+\sqrt{\frac{k}{r}}\right)^2 
\TNormS{\matV \y}.$$
We write $[\Omega, \matS] = BarrierSamplingI(\matV, r)$ to denote such a procedure. 
\end{corollary}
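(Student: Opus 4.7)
The corollary is a quadratic-form restatement of the singular value bounds already established in Lemma~\ref{lem:1set}, so my plan is to deduce it directly from that lemma via the Rayleigh quotient, essentially following the recipe recorded in Lemma~\ref{lem:eq}.

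First, I would unfold the norm on the middle term: writing $\matB = \matV\transp\Omega\matS \in \R^{k\times r}$, one has $\matS\transp\Omega\transp\matV\y = \matB\transp\y$, and therefore
$$\TNormS{\matS\transp\Omega\transp\matV\y}
= \y\transp \matB\matB\transp \y.$$
The eigenvalues of $\matB\matB\transp$ are the squared singular values $\sigma_i^2(\matV\transp\Omega\matS)$ for $i=1,\dots,k$, since $\matB$ has $k$ rows.

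Next, I would apply the standard Rayleigh quotient inequality $\lambda_{\min}(\matX)\TNormS{\y} \le \y\transp \matX \y \le \lambda_{\max}(\matX)\TNormS{\y}$ to $\matX = \matB\matB\transp$. Combined with the singular value bounds from Lemma~\ref{lem:1set} (squaring both sides, which is legitimate since the bounds are nonnegative whenever $r \ge k$), this yields
$$\left(1-\sqrt{k/r}\right)^2\TNormS{\y}
\le \TNormS{\matS\transp\Omega\transp\matV\y}
\le \left(1+\sqrt{k/r}\right)^2\TNormS{\y}.$$

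Finally, to replace $\TNormS{\y}$ on both sides with $\TNormS{\matV\y}$, I would use the assumption $\matV\transp\matV = \matI_k$: then $\TNormS{\matV\y} = \y\transp\matV\transp\matV\y = \y\transp\y = \TNormS{\y}$. Substituting this identity into the previous display gives the claimed two-sided bound.

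There is no real obstacle here; the only thing to be careful about is the squaring step, and the observation that $\matV$ having orthonormal columns makes the $\y$-norm and the $\matV\y$-norm agree. Everything else is immediate from Lemma~\ref{lem:1set}, which supplies the sampling matrices $\Omega$ and $\matS$ together with the singular value estimates that drive the argument.
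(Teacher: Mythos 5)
Your proposal is correct and follows exactly the route the paper intends: the corollary is just the quadratic-form restatement of the singular-value bounds of Lemma~\ref{lem:1set}, obtained via the Rayleigh-quotient argument that the paper records as the equivalence of statements (3) and (5) in Lemma~\ref{lem:eq}, together with the observation that $\matV\transp\matV=\matI_k$ makes $\TNormS{\matV\y}=\TNormS{\y}$. The one point needing care — that $1-\sqrt{k/r}>0$ for $r>k$ so the lower bound survives squaring — is handled correctly.
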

\vspace{-0.33in}
\paragraph{Remark.} An improvement of this result (on the run time) appeared recently by 
Zouzias in~\cite{Zou11}. More specifically, in the parlance
of Lemma \ref{lem:1set}, Theorem 12 in~\cite{Zou11} shows that there is
a $ \tilde{O}( n r k \log^3(k) + k^2 \log(k) r ) $ deterministic algorithm with 
$ \left(1-\sqrt{\frac{k}{r}}\right)^3 \TNormS{\matV \y}
\leq \TNormS{\matS\transp \Omega\transp \matV \y}
\leq \left(1+\sqrt{\frac{k}{r}}\right)^3 
\TNormS{\matV \y}.$
The notation $\tilde{O}(\cdot)$ hides $\log ( \log(1/ \epsilon))$ and 
$\log ( \log (k) ) $ factors. The algorithm of Theorem 12 in~\cite{Zou11}
combines the algorithm of Theorem 11 in~\cite{Zou11}, which we mentioned
in the remark of Section \ref{chap314}, along with the algorithm of 
Lemma \ref{lem:1set}. More specifically: if $r > 30 k \ln(2k)$,
run the algorithm of Theorem 11 in~\cite{Zou11}; if $r < 30 k \ln(2k)$,
select $\hat{r} = 30 k \ln(2k)$ columns with the
algorithm of Theorem 11 in~\cite{Zou11} and then, by using the algorithm
of Lemma \ref{lem:1set}, down-sample this to exactly $r$ columns. This
improvement implies a run time improvement in Theorem \ref{lem:regression} of this thesis. 

\subsection{Deterministic Sampling with the Barrier Method II}\label{chap317}
Lemma \ref{lem:2setS} below guarantees the existence of a
deterministic algorithm for sampling columns from two matrices simultaneously.
The details of the corresponding algorithm can be found in
the (constructive) proof of that lemma that we give in the Appendix. 
Our algorithm generalizes Lemma~\ref{lem:1set}; in fact,
setting $\matU = \matV$ in our Lemma gives Lemma~\ref{lem:1set}, which is
Theorem 3.1 in~\cite{BSS09}.
The innovation here is that we develop an algorithm that samples columns
from two different matrices simultaneously. 
\begin{lemma} [Dual Set Spectral Sparsification]
\label{lem:2setS}
Let \math{\matV\in\R^{n\times k}} and \math{\matU\in\R^{n\times \ell}}
with \math{\matV\transp\matV=\matI_{k}} and 
\math{\matU\transp\matU=\matI_{\ell}}. Let $r$ is an integer
with \math{k < r \le n} and assume \math{k,\ell \le n}.
One can construct (deterministically, in $O(r n(k^2+\ell^2))$ time) 
a sampling matrix \math{\Omega\in\R^{n\times r}} and a 
positive diagonal rescaling matrix \math{\matS\in\R^{r\times r}} such 
that: 
\mand{
\sigma_{k}(\matV\transp \Omega \matS ) \ge 1 - \sqrt{\frac{k}{r}}; \quad
\qquad \text{and}\qquad \TNorm{\matU\transp \Omega \matS} 
\le 1 + \sqrt{ \frac{\ell}{r} }.
}
We write $[\Omega, \matS] = BarrierSamplingII(\matV,\matU, r)$ 
to denote such a procedure. 
\end{lemma}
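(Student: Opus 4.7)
The plan is to generalize the barrier-function method of Batson--Srivastava--Spielman~\cite{BSS09}, which proves Lemma~\ref{lem:1set}, by maintaining two spectral barriers \emph{simultaneously}: a lower barrier on a running $k\times k$ matrix tied to $\matV$, and an upper barrier on a running $\ell\times\ell$ matrix tied to $\matU$. Write the rows of $\matV$ and $\matU$ as vectors $v_1,\ldots,v_n\in\R^k$ and $u_1,\ldots,u_n\in\R^\ell$; orthonormality gives $\sum_i v_i v_i\transp = \matI_k$ and $\sum_i u_i u_i\transp = \matI_\ell$. The algorithm is iterative: over $r$ steps it picks indices $i_1,\ldots,i_r$ and positive weights $s_1,\ldots,s_r$, yielding $\Omega$ (whose $j$-th column is $\e_{i_j}$) and a diagonal $\matS$ with entries $\sqrt{s_1},\ldots,\sqrt{s_r}$. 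The running matrices $\matA_\tau=\sum_{j\le\tau} s_j v_{i_j}v_{i_j}\transp$ and $\matB_\tau=\sum_{j\le\tau} s_j u_{i_j}u_{i_j}\transp$ satisfy $\matA_r = (\matV\transp\Omega\matS)(\matV\transp\Omega\matS)\transp$ and $\matB_r = (\matU\transp\Omega\matS)(\matU\transp\Omega\matS)\transp$, so controlling their extreme eigenvalues is equivalent to controlling $\sigma_k(\matV\transp\Omega\matS)$ and $\TNorm{\matU\transp\Omega\matS}$.

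Equip each side with its resolvent-trace potential and a moving barrier: $\Phi_l(\matA,l)=\trace\bigl((\matA-l\matI_k)^{-1}\bigr)$ for $\matA\succ l\matI_k$, and $\Phi_u(\matB,u)=\trace\bigl((u\matI_\ell-\matB)^{-1}\bigr)$ for $\matB\prec u\matI_\ell$. Advance the barriers by fixed shifts $\delta_l,\delta_u>0$ per iteration, starting from carefully chosen $l_0<0$ and $u_0>0$ that fix the initial potentials to prescribed targets. At each step, the Sherman--Morrison identity writes the change in each potential, under a rank-one update with weight $s$ and index $i$, in closed form; this gives, for every candidate $i$, a lower acceptability threshold $L_i$ on $1/s$ for the lower barrier and an upper threshold $U_i$ on $1/s$ for the upper barrier. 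An iteration succeeds precisely when some $i$ satisfies $L_i \ge U_i$, in which case any $s$ with $1/L_i\le s\le 1/U_i$ advances both barriers without increasing either potential.

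The heart of the proof, and the main obstacle, is to show that such an index $i$ always exists. The strategy is an averaging identity: summing $L_i$ and $U_i$ over $i\in\{1,\ldots,n\}$ and exploiting $\sum_i v_iv_i\transp=\matI_k$, $\sum_i u_iu_i\transp=\matI_\ell$, each aggregate reduces to an expression in the two current potentials, their derivatives in the shift, and the chosen $\delta_l,\delta_u$. Mirroring the BSS calculation, one picks $\delta_l,\delta_u$ as constants independent of $\tau$, and chooses $l_0,u_0$ so that $l_r=r(1-\sqrt{k/r})^2$ and $u_r=r(1+\sqrt{\ell/r})^2$; a short computation then verifies the step invariant $\sum_i L_i \ge \sum_i U_i$ (producing the required index), together with the invariants $\Phi_l(\matA_\tau,l_\tau)\le\Phi_l(\matA_0,l_0)$ and $\Phi_u(\matB_\tau,u_\tau)\le\Phi_u(\matB_0,u_0)$ (which keep each barrier safely strict). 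After $r$ iterations, $\lambda_{\min}(\matA_r)\ge l_r$ and $\lambda_{\max}(\matB_r)\le u_r$; absorbing a factor $1/\sqrt{r}$ into $\matS$ yields the claimed $\sigma_k(\matV\transp\Omega\matS)\ge 1-\sqrt{k/r}$ and $\TNorm{\matU\transp\Omega\matS}\le 1+\sqrt{\ell/r}$. The $O(rn(k^2+\ell^2))$ running time follows because each iteration maintains the two resolvents $(\matA_\tau-(l_\tau+\delta_l)\matI_k)^{-1}$ and $((u_\tau+\delta_u)\matI_\ell-\matB_\tau)^{-1}$ in factored form (cost $O(k^2+\ell^2)$ per update) and probes them against all $n$ row pairs $(v_i,u_i)$ (cost $O(n(k^2+\ell^2))$ per iteration).
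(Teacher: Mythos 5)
Your proposal is correct and follows essentially the same route as the paper's proof: a dual-barrier generalization of the Batson--Srivastava--Spielman argument, with a lower resolvent-trace potential tracking $\lambda_{\min}$ of the $\matV$-side matrix and an upper one tracking $\lambda_{\max}$ of the $\matU$-side matrix, Sherman--Morrison-derived thresholds $L_i$ and $U_i$, and an averaging identity over the two decompositions of the identity to guarantee a feasible index at every step (the paper's Lemma on feasibility shows $\sum_i U(\u_i,\cdot)\le 1-\sqrt{k/r}\le\sum_i L(\v_i,\cdot)$ with $\delta_\scl=1$ and $\delta_\scu=(1+\sqrt{\ell/r})/(1-\sqrt{k/r})$, followed by a final rescaling of the weights by $r^{-1}(1-\sqrt{k/r})$). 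The only differences are bookkeeping conventions in the choice of barrier endpoints and the final normalization, which do not change the argument.
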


\subsection{Deterministic Sampling with the Barrier Method III}\label{chap318}
The algorithm of Lemma \ref{lem:2setF} is similar to that of Lemma \ref{lem:2setS}.
The innovation here is to control the Frobenius norm of the sub-sampled matrix. 
A (constructive) proof of this result is given in the Appendix. 
\begin{lemma} [Dual Set Spectral-Frobenius Sparsification]
\label{lem:2setF}
Let \math{\matV\in\R^{n\times k}} with \math{\matV\transp\matV=\matI_{k}} and
\math{\matA\in\R^{\ell \times n}}. Let $r$ is an integer
with \math{k < r \le n} and assume \math{k \le n}.
One can construct (deterministically, in $O(r n k^2  + \ell n)$ time) 
a sampling matrix \math{\Omega\in\R^{n\times r}} and a 
positive diagonal rescaling matrix \math{\matS\in\R^{r\times r}} such 
that:   
\mand{
\sigma_{k}(\matV\transp \Omega \matS ) \ge 1 - \sqrt{\frac{k}{r}}; \quad
\qquad \text{and}\qquad \FNorm{\matA \Omega \matS} \le \FNorm{\matA}.}
We write $[\Omega, \matS] = BarrierSamplingIII(\matV,\matA, r)$ to denote such a procedure. 
\end{lemma}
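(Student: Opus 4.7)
The plan is to adapt the iterative ``barrier method'' of Batson--Srivastava--Spielman (BSS) that underlies Lemmas~\ref{lem:1set} and~\ref{lem:2setS}, replacing the upper spectral barrier (which in Lemma~\ref{lem:2setS} controls $\TNorm{\matU\transp \Omega \matS}$) by a direct scalar accumulator tracking $\FNormS{\matA \Omega \matS}$. Since Lemma~\ref{lem:2setF} only requires a lower bound on $\sigma_k$ on the $\matV$ side and a \emph{Frobenius} (rather than spectral) bound on the $\matA$ side, only the lower BSS barrier is needed, and the $\matA$ side reduces to a one-dimensional potential.

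Concretely, I would construct $\Omega$ and $\matS$ iteratively over $r$ steps. After $t$ steps the state consists of the PSD matrix $A_t = \sum_{s \le t} w_s v_{i_s} v_{i_s}\transp \in \R^{k\times k}$ (with $v_i\in\R^k$ the $i$-th column of $\matV\transp$), the scalar $F_t = \sum_{s \le t} w_s \|a_{i_s}\|_2^2$ (with $a_i\in\R^\ell$ the $i$-th column of $\matA$), and a lower-barrier parameter $\ell_t$. I use the standard BSS potential $\Phi_L(\ell, A) = \trace((A - \ell I_k)^{-1})$, initialized with $\ell_0 = -\sqrt{kr}$ (so $\Phi_L(\ell_0, 0) = \sqrt{k/r}$) and advanced by $\delta_L = 1$ per step. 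At each step I seek an index $i\in[n]$ and weight $w>0$ so that (a) the lower barrier is preserved, i.e., $\Phi_L(\ell_{t+1}, A_t + w v_i v_i\transp) \le \Phi_L(\ell_t, A_t)$, and (b) the Frobenius increment satisfies $w \|a_i\|_2^2 \le \FNormS{\matA}$. By Sherman--Morrison, (a) is equivalent to $w \ge 1/L(v_i)$ for the familiar BSS score $L(v_i) = v_i\transp M^{-2} v_i / \Delta - v_i\transp M^{-1} v_i$ with $M = A_t - \ell_{t+1} I$ and $\Delta = \Phi_L(\ell_{t+1}, A_t) - \Phi_L(\ell_t, A_t)$; choosing $w = 1/L(v_i)$ makes (a) tight, and (b) then reduces to $L(v_i) \ge \|a_i\|_2^2/\FNormS{\matA}$.

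The main obstacle is proving that such an index exists at every step. The BSS identity $\sum_i L(v_i) = \trace(M^{-2})/\Delta - \Phi_L(\ell_{t+1}, A_t)$ (using $\sum_i v_i v_i\transp = \matV\transp\matV = I_k$), combined with the maintained invariant $\Phi_L(\ell_t, A_t)\le\sqrt{k/r}$ from the analysis underlying Lemma~\ref{lem:1set}, gives $\sum_i L(v_i) \ge 1$ throughout the run. A Frobenius-weighted max-average argument then closes the gap:
$$
\max_{i}\frac{L(v_i)}{\|a_i\|_2^2} \;\ge\; \frac{\sum_j \|a_j\|_2^2\cdot L(v_j)/\|a_j\|_2^2}{\sum_j \|a_j\|_2^2} \;=\; \frac{\sum_j L(v_j)}{\FNormS{\matA}} \;\ge\; \frac{1}{\FNormS{\matA}},
$$
so there is always an index $i^\star$ with $L(v_{i^\star}) \ge \|a_{i^\star}\|_2^2/\FNormS{\matA}$, and the pair $(i^\star, w = 1/L(v_{i^\star}))$ is feasible.

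After $r$ steps I will have $F_r \le r\FNormS{\matA}$ and $\lambda_{\min}(A_r) \ge \ell_r = r(1 - \sqrt{k/r})$. A final uniform rescaling $\matS \to \matS/\alpha$ with $\alpha = \sqrt{\ell_r}/(1-\sqrt{k/r})$ simultaneously equalizes $\sigma_k(\matV\transp \Omega \matS) = 1 - \sqrt{k/r}$ and shrinks $\FNormS{\matA \Omega \matS} = F_r/\alpha^2 \le (1-\sqrt{k/r})\FNormS{\matA} \le \FNormS{\matA}$, so both claimed bounds hold. The runtime is dominated by evaluating $L(v_i)$ for all $i$ at each step: maintaining $M^{-1}$ and $M^{-2}$ across iterations costs $O(k^3)$ per update, the $n$ quadratic forms $v_i\transp M^{-1}v_i$ and $v_i\transp M^{-2}v_i$ cost $O(nk^2)$ per step, and the $\|a_i\|_2^2$'s need a one-time $O(\ell n)$ precomputation, for a total of $O(rnk^2 + \ell n)$.
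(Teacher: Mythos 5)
Your proposal is essentially the paper's own proof: the paper runs the same lower-barrier BSS iteration on the columns of $\matV\transp$ and replaces the upper spectral barrier by exactly your scalar accumulator (there written as $\trace(\matB_\tau)=\sum_i s_i\TNormS{\a_i}$ with the per-step test $U_F(\a_j,\delta_\scu)=\delta_\scu^{-1}\a_j\transp\a_j\le t^{-1}\le L(\v_j,\cdot)$, proved feasible by the same averaging argument, followed by the same final rescaling by $r^{-1}(1-\sqrt{k/r})$). The only quibble is that the correct lower bound is $\sum_i L(\v_i)\ge 1-\sqrt{k/r}$ rather than $\ge 1$, so the per-step Frobenius budget is $\FNormS{\matA}/(1-\sqrt{k/r})$ rather than $\FNormS{\matA}$; your final normalization absorbs exactly this factor, so the conclusion is unaffected.
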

\paragraph{Remark.}
Although in Lemmas \ref{lem:2setS} and \ref{lem:2setF} we assumed that the
matrices are orthonormal, this is not necessary
(see Lemmas \ref{theorem:2setGeneral} and \ref{theorem:2setGeneralF} in the Appendix).

\subsection{Random Projections}\label{chap319}
A classical result of Johnson and Lindenstrauss~\cite{JL84} states that, 
for any $ 0 < \epsilon < 1$, any set of $m$ points in $n$ dimensions 
(rows in $\matA \in \R^{m \times n}$) 
can be linearly projected into $r=O\left(\log (m) /\epsilon^2\right)$ dimensions 
while preserving all the pairwise distances of the points within a factor of $1\pm\epsilon$. 
More precisely,~\cite{JL84} showed the existence of a (random orthonormal) matrix 
$R \in R^{n \times r}$ such that, 
for all $i,j=1,...,m$, and with high probability (over the randomness of $R$, 
$\matA_{(i)}$ denotes the $i$-th row of $\matA$):
\vspace{-0.15in}
$$
(1- \epsilon) \norm{ \matA_{(i)} - \matA_{(j)} } 
\leq  \norm{ \matA_{(i)}R - \matA_{(j)}R } \leq  
(1 + \epsilon) \norm{ \matA_{(i)} - \matA_{(j)}}.
$$
Subsequent research simplified the proof of~\cite{JL84} by showing that such an
embedding can be generated using an $n \times r$ random Gaussian 
matrix $R$, i.e. a matrix whose entries are i.i.d. Gaussian random variables 
with zero mean and variance $1/\sqrt{r}$~\cite{IM98}. Recently,~\cite{AC06} 
presented the so-called Fast Johnson-Lindenstrauss Transform which describes
an $R$ such that $\matA R$ can be computed fast. In this thesis, we will use
a construction by Achlioptas: \cite{Ach03} proved that a rescaled random sign matrix,
i.e. a matrix whose entries have i.i.d values $\pm1/\sqrt{r}$ with probability $1/2$,  
satisfy the above equation. We use such random projection embeddings in Section \ref{chap62}. 
Here, we summarize some properties of such matrices that might be of independent interest. 
It is particularly interesting to apply such embeddings in short-fat matrices with orthonormal 
columns. Sarlos is credited for this idea in~\cite{Sar06}.
\begin{lemma}[See Appendix for the Proof]\label{lem:rpall}
Let $\matA \in \R^{m \times n}$ has rank $\rho$, $k < \rho$, and $0 < \epsilon < \frac{1}{3}$.
Let $R \in \R^{n \times r}$ is a rescaled random sign matrix constructed as we described above 
with $r = c_0 k \epsilon^{-2}$, where $c_0$ is a (sufficiently large) constant. 
\begin{enumerate}
    
    \item For all $i=1,...,k$ and w.p. $0.99$:  $ 1 - \epsilon \le \sigma_i(\matV_k\transp R) \le 1 +  \epsilon.$
    
    \item For any $\matX \in \R^{m \times n}$, $\matY \in \R^{n \times k}$, and r:
    $ \EE{ \FNorm{ \matX\matY - \matX R R\transp \matY}^2 } \leq \frac{2}{r} \FNorm{\matX}^2 \FNorm{\matY}^2. $
    
    \item For any $\matX \in \R^{m \times n}$ and $r$: $ \EE{\FNorm{\matX R}^2} = 
    \FNorm{\matX}^2$ and $\var{\FNorm{\matX R}} \leq  2 \FNorm{\matX}^4 / r.$
     
    \item W.p. $0.99$: $\norm{(\matV_k\transp R)^+ - (\matV_k\transp R)\transp }\ \leq\ 3 \epsilon.$
    
    \item For any $\matX \in \R^{m \times n}$ and w.p. $0.99$: $\FNorm{ \matX R }\ \leq\ \sqrt{(1+\epsilon)} \FNorm{\matX}.$
    
    \item W.p. $0.97$:  $\matA_k  = \matA R (\matV_k\transp  R)^+\matV_k\transp  + \matE;$
          $\matE \in \R^{m \times n}$ with $\FNorm{\matE} \leq 4 \epsilon \FNorm{\matA-\matA_k}$. 
\end{enumerate}
\end{lemma}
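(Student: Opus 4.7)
\medskip

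\noindent\textbf{Proof proposal for Lemma \ref{lem:rpall}.}
I would prove the six claims in the listed order, since each later claim leans on earlier ones. Throughout, let $R\in\R^{n\times r}$ have i.i.d.\ entries $\pm 1/\sqrt{r}$, so that $\EE{R_{ab}^2}=1/r$ and distinct entries of $R$ are independent with mean zero. The overall strategy is: prove the concentration of $\matV_k\transp R$ (part 1), then the two ``approximate matrix multiplication'' style identities (parts 2 and 3), then derive parts 4, 5, 6 from these moment estimates together with a union bound.

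For part 1 (a subspace embedding), I would show $\TNorm{\matV_k\transp RR\transp\matV_k-\matI_k}\le \epsilon$ with probability $0.99$. The plan is a standard $\epsilon$-net argument on the unit sphere of $\R^k$: fix a $1/4$-net $\mathcal{N}$ of size at most $9^k$ in the unit ball of $\R^k$, apply the Johnson--Lindenstrauss moment inequality to each fixed vector $\matV_k\y$ (which has unit norm because $\matV_k$ is orthonormal) to control $\TNormS{R\transp\matV_k\y}-1$, and then union-bound. Choosing $r=c_0 k/\epsilon^2$ with $c_0$ large enough makes the failure probability $<0.01$. By Lemma~\ref{lem:eq} this is equivalent to the singular-value bound claimed.

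For part 2, I would expand $\FNorm{\matX\matY-\matX RR\transp\matY}^2$ entry-by-entry. The $(i,j)$ entry of $\matX RR\transp\matY$ is $\sum_{\ell,p,q}\matX_{i\ell}R_{\ell p}R_{qp}\matY_{qj}$, whose expectation collapses to $(\matX\matY)_{ij}$ since $\EE{R_{\ell p}R_{qp}}=\delta_{\ell q}/r$. Squaring, taking expectation, and collecting the surviving (diagonal, non-matching) terms yields $\EE{\FNorm{\matX\matY-\matX RR\transp\matY}^2}\le \frac{2}{r}\FNorm{\matX}^2\FNorm{\matY}^2$; this is the bound in~\cite{Sar06,DMMS07}. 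Part 3 is similar: $\EE{\FNorm{\matX R}^2}=\trace(\matX\,\EE{RR\transp}\,\matX\transp)=\FNorm{\matX}^2$, and the variance is a direct fourth-moment calculation using independence of the entries of $R$.

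Part 4 follows immediately from part 1 by looking at the SVD of $\matV_k\transp R=U\Sigma V\transp$: then $(\matV_k\transp R)^+-(\matV_k\transp R)\transp=V(\Sigma^{-1}-\Sigma)U\transp$, whose spectral norm is $\max_i|\sigma_i-\sigma_i^{-1}|=\max_i|\sigma_i^2-1|/\sigma_i$. On the event of part 1, each $\sigma_i\in[1-\epsilon,1+\epsilon]$, so $|\sigma_i^2-1|\le 2\epsilon+\epsilon^2$ and $\sigma_i\ge 1-\epsilon$; since $\epsilon<1/3$, the ratio is bounded by $3\epsilon$. Part 5 follows from part 3 via Chebyshev applied to the scalar $\FNorm{\matX R}^2$: with $r=c_0 k/\epsilon^2$ and $c_0$ large, $\Pr[\FNorm{\matX R}^2\ge(1+\epsilon)\FNorm{\matX}^2]\le 2/(c_0 k)\le 0.01$.

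Part 6 is the main obstacle, and is where the parts must be combined carefully. The plan is to split $\matA=\matA_k+\matA_{\rho-k}$, apply $R(\matV_k\transp R)^+\matV_k\transp$, and use $\matA_k R(\matV_k\transp R)^+\matV_k\transp=\matU_k\Sigma_k\matV_k\transp R(\matV_k\transp R)^+\matV_k\transp=\matA_k$ on the high-probability event where $\matV_k\transp R$ has full row rank (from part 1). Therefore
\[
\matE=-\matA_{\rho-k}R(\matV_k\transp R)^+\matV_k\transp.
\]
I would decompose
\[
\matA_{\rho-k}R(\matV_k\transp R)^+=\matA_{\rho-k}RR\transp\matV_k+\matA_{\rho-k}R\bigl[(\matV_k\transp R)^+-(\matV_k\transp R)\transp\bigr].
\]
The first summand is controlled by part 2 with $\matX=\matA_{\rho-k}$, $\matY=\matV_k$, exploiting the crucial orthogonality $\matA_{\rho-k}\matV_k=\bm0$ (so the true product vanishes) and $\FNorm{\matV_k}^2=k$; then Markov plus $r=c_0 k/\epsilon^2$ gives Frobenius norm $\le\epsilon\FNorm{\matA-\matA_k}$ w.p.~$0.99$. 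The second summand is bounded by spectral submultiplicativity together with part 4 and part 5, yielding at most $3\epsilon\sqrt{1+\epsilon}\,\FNorm{\matA-\matA_k}$. A union bound over the events of parts 1, 2, 4, 5 gives probability at least $0.97$, and summing the two pieces yields $\FNorm{\matE}\le 4\epsilon\FNorm{\matA-\matA_k}$ after tightening the constant $c_0$. The delicate step is collecting the constants from parts 1, 4, 5 in a way that the final bound is $4\epsilon$ rather than a larger multiple; this forces $c_0$ to be chosen large enough that several Chebyshev/Markov tail terms each contribute strictly less than~$\epsilon\FNorm{\matA-\matA_k}$.
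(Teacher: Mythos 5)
Your proposal is correct and follows essentially the same route as the paper's proof: parts 4 and 5 via the SVD/Chebyshev arguments you describe, and part 6 via exactly the paper's decomposition $\matA_{\rho-k}R(\matV_k\transp R)^+ = \matA_{\rho-k}RR\transp\matV_k + \matA_{\rho-k}R\bigl[(\matV_k\transp R)^+-(\matV_k\transp R)\transp\bigr]$, using the orthogonality $\matA_{\rho-k}\matV_k=\bm{0}$, Markov on the first piece, and parts 4--5 on the second. The only cosmetic difference is that the paper imports parts 1--3 as black boxes from Achlioptas and Sarlos rather than re-deriving them via the $\epsilon$-net and moment computations you sketch (which is what those citations do anyway), and it budgets the Markov tail in part 6 at $0.5\epsilon$ so the constants sum to $4\epsilon$ — the tightening of $c_0$ you already anticipate.
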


\subsection{Subsampled Randomized Hadamard Transform} \label{chap3110}
We give the definitions of the ``Normalized Walsh-Hadamard'' and the ``Subsampled Randomized Hadamard Transform'' matrices as well as a few basic facts for computations with such matrices. 
We use these matrices in Section \ref{chap52}.

\begin{definition} [Normalized Walsh-Hadamard Matrix] \label{walsh}
Fix an integer $m = 2^p$, for $p = 1,2,3, ...$. The (non-normalized) 
$m \times m$ matrix of the Hadamard-Walsh
transform is defined recursively as follows:
$$ \matH_m = \left[
\begin{array}{cc}
  \matH_{m/2} &  \matH_{m/2} \\
  \matH_{m/2} & -\matH_{m/2}
\end{array}\right],
\qquad \mbox{with} \qquad
\matH_2 = \left[
\begin{array}{cc}
  +1 & +1 \\
  +1 & -1
\end{array}\right].
$$
The $m \times m$ normalized matrix of the Hadamard-Walsh transform
is equal to $$\matH = m^{-\frac{1}{2}}H_m.$$
\end{definition}
\begin{definition} [Subsampled Randomized Hadamard Transform (SRHT) matrix]
\label{srht}
Fix integers $r$ and $m = 2^p$ with $r < m$ and $p = 1,2,3, ...$. 
A SRHT is an $r \times m$ matrix of the form
$$ \Theta = \matS\transp \Omega\transp  \matD \matH,$$ 
where
\begin{itemize}

\item $\matH \in \R^{m \times m}$ is a normalized Walsh-Hadamard matrix.  

\item $\matD \in \R^{m \times m}$ is a diagonal matrix constructed as follows: each diagonal element is 
a random variable taking values $\{+1, -1\}$ with equal probability.
    
\item $\Omega \in \R^{m \times r}$ is a sampling matrix constructed as follows: for $j=1,2,...,r$ $i.i.d$
random trials pick a vector $\e_j$ from the standard basis of $\R^m$ with probability $\frac{1}{m}$
and set the $j$-th column of $\Omega$ equal to that vector. 

\item $\matS \in \R^{r \times r}$ is a rescaling (diagonal) matrix containing the value $\sqrt{\frac{m}{r}}$.

\end{itemize}

\end{definition}
\begin{proposition} [Fast Matrix-Vector Multiplication, Theorem 2.1 in~\cite{AL08}] \label{fast}
Given $\x \in \R^m$ and integer $r < n$, one can compute the product $\Theta \x$ 
with at most $2 m \log(r + 1) )$ operations. 
\end{proposition}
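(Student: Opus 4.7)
The plan is to apply $\Theta \x = \matS\transp \Omega\transp \matD \matH \x$ from right to left, exploiting that only $r$ of the $m$ coordinates of the intermediate vector $\matD \matH \x$ are ever inspected. Since $\matD$ is diagonal and $\matS\transp$ is the scalar rescaling by $\sqrt{m/r}$, the $r$ selected entries of $\matS\transp \Omega\transp \matD \matH \x$ are obtained from the corresponding $r$ entries of $\matH \x$ by $r$ sign flips and $r$ scalar multiplications. Hence the whole problem reduces to producing a prescribed set of $r$ coordinates of the Walsh--Hadamard transform $\matH \x$, and the real issue is to avoid the $O(m \log m)$ cost of the full transform when $r \ll m$.

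I would use a pruned version of the Fast Walsh--Hadamard Transform driven by the block structure of $\matH$. Writing $\x=(\x_1;\x_2)$ with $\x_1,\x_2\in\R^{m/2}$, the top half of $\matH\x$ is (up to the uniform normalization $1/\sqrt{2}$) the $(m/2)$-dimensional Hadamard transform of $\x_1+\x_2$, and the bottom half is the Hadamard transform of $\x_1-\x_2$. Given the $r$ target output indices, partition them into the $r_1$ that lie in the top half and $r_2=r-r_1$ in the bottom; form $\x_1\pm\x_2$ with $m$ additions/subtractions; then recurse on subproblems of sizes $(m/2,r_1)$ and $(m/2,r_2)$, where a subproblem requesting zero outputs costs nothing and is pruned away.

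Let $T(m,r)$ denote the cost of this pruned transform. The recursion is
$$T(m,r) \le m + T(m/2,r_1) + T(m/2,r_2), \qquad r_1+r_2=r,\quad T(m,0)=0.$$
I would then prove $T(m,r)\le 2m\log(r+1)$ by induction on $m$. Substituting the hypothesis gives
$$T(m,r) \le m + m\log\left((r_1+1)(r_2+1)\right) \le m + m\log\left(\frac{(r+1)^2}{2}\right) = 2m\log(r+1),$$
where the middle inequality uses $(r_1+1)(r_2+1) \le (r+1)^2/2$ for integers $r_1,r_2\ge 0$ with $r_1+r_2=r\ge 1$. By AM--GM the left side is maximized at $r_1=r_2=r/2$, which reduces the bound to $r^2/4\le (r^2-1)/2$, valid for every integer $r\ge 1$. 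Folding in the $O(r)$ cost of $\matD$ and $\matS\transp$ (charged against the $r$ output coordinates at the leaves of the recursion) yields the claimed $2m\log(r+1)$ bound.

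The main obstacle is nailing the leading constant $2$: the looser bound $O(m\log r)$ falls out of essentially any pruned-FFT analysis, but obtaining exactly $2m\log(r+1)$ requires the tight inequality $(r_1+1)(r_2+1)\le(r+1)^2/2$ at every split in the recursion tree and careful charging of the $\matD$ and $\matS\transp$ passes against the leaves. Since this is exactly the statement and proof of Theorem~2.1 in Ailon--Liberty~\cite{AL08}, one would usually just cite it rather than reprove it from scratch.
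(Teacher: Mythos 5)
Your proof is correct and follows the same pruned fast Walsh--Hadamard recursion that underlies Theorem~2.1 of Ailon--Liberty; the paper itself states Proposition~\ref{fast} purely by citation and offers no proof, so there is nothing different to compare against. One small repair: the relaxation to $r_1=r_2=r/2$ via AM--GM gives $(r+2)^2/4\le(r+1)^2/2$, which requires $r^2\ge 2$ and thus fails at $r=1$; for $r=1$ you must use the integer split $\{r_1,r_2\}=\{0,1\}$ directly, where $(r_1+1)(r_2+1)=2=(r+1)^2/2$ holds with equality, after which the induction closes for all integers $r\ge 1$.
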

\begin{lemma} [\cite{AC06}, Lemma 3 in \cite{DMMS07}]
\label{lem:HU} 
Let $\matU \in \R^{m \times k}$ has orthonormal columns. 
Let $\left(\matD \matH  \matU \right)_{(i)}$ denotes the $i$-th row 
of the matrix $\matD \matH  \matU \in \R^{m \times k}$ and ${\cal E}_i$ denotes
the probabilistic event that
$ \TNormS{ \left( \matD \matH \matU \right)_{(i)} }  \leq \frac{2 k \log(40 m k)}{m}$ (over 
the randomness of $\matD$):
\begin{eqnarray}
\textbf{Pr}[{\cal E}_1 \cup {\cal E}_2 \ldots \cup {\cal E}_m]  \le 0.95. 
\end{eqnarray}
\end{lemma}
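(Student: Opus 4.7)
The plan is to treat each entry of $\matD\matH\matU$ as a bounded Rademacher sum driven by the random signs on the diagonal of $\matD$, apply Hoeffding's inequality entry-wise, and then union-bound over all $mk$ entries so that the row-norm bound holds for every row index $i$ simultaneously. The orthonormality of the columns of $\matU$, combined with the uniform bound $\matH_{ij}^2=1/m$, is exactly what keeps the Hoeffding variance proxy small.

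First, I would fix indices $i\in\{1,\dots,m\}$ and $\ell\in\{1,\dots,k\}$. Expanding the $(i,\ell)$ entry against the random diagonal realises it as a weighted Rademacher sum whose coefficient at position $j$ is (up to a deterministic sign) $\matH_{ij}\matU_{j\ell}$. Because $\matH_{ij}^2=1/m$ uniformly in $j$ and the $\ell$-th column of $\matU$ is a unit vector, the sum of squared coefficients equals $(1/m)\sum_{j}\matU_{j\ell}^2 = 1/m$. Hoeffding's inequality then gives
\[
\textbf{Pr}\bigl[\,\bigl|(\matD\matH\matU)_{i\ell}\bigr|>t\,\bigr]\ \le\ 2\exp(-m t^{2}/2).
\]

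Next, choosing $t^2=2\log(40mk)/m$ drives the right-hand side to $2/(40mk)=1/(20mk)$. A union bound across all $mk$ index pairs $(i,\ell)$ shows that, with probability at least $1-mk\cdot\tfrac{1}{20mk}=0.95$, every entry simultaneously satisfies $(\matD\matH\matU)_{i\ell}^{2}\le 2\log(40mk)/m$. Summing $k$ such squared entries into the $i$-th row norm yields $\TNormS{(\matD\matH\matU)_{(i)}}\le 2k\log(40mk)/m$ for every $i$ on this common event, which is the claimed bound on the probability of the union of the (complementary) bad-row events.

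The only delicate point is the constant bookkeeping: the ``$40$'' inside the logarithm is precisely what is needed so that the factor of two from Hoeffding, multiplied by the $mk$-fold union bound, leaves a failure probability of at most $0.05$. No deeper concentration machinery is required beyond Hoeffding plus two nested union bounds (across the $k$ columns within a row and across the $m$ rows), so the main ``obstacle'' is essentially just choosing $t$ so that the constants work out.
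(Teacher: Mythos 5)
Your proof is correct and is essentially the standard argument behind this lemma, which the paper itself quotes from \cite{AC06} and \cite{DMMS07} without reproving: entrywise Hoeffding on the Rademacher sum $\sum_{j}\matH_{ij}\matD_{jj}\matU_{j\ell}$ with squared-coefficient sum $\sum_j \matH_{ij}^2\matU_{j\ell}^2 = 1/m$, followed by a union bound over all $mk$ entries, with the constant $40$ chosen exactly as you describe so that $mk\cdot 2\exp(-mt^2/2)=0.05$. One point worth flagging: your expansion implicitly reads the product as $\matH\matD\matU$ (random signs applied before the Hadamard transform), which is the correct SRHT ordering and the only one for which the statement is nontrivial --- under the literal ordering $\matD\matH\matU$ printed in the lemma the row norms would be deterministic --- and, as you note, the displayed inequality must be read as bounding by $0.05$ the probability of the union of the \emph{complements} of the events ${\cal E}_i$.
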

It is interesting to consider applying the SRHT to orthonormal matrices. 
Drineas et al are credited for this idea in~\cite{DMMS07}. 
Let $\matU \in \R^{m \times k}$ has orthonormal columns and 
$m \gg k$. The following lemma studies
the singular values of the matrix $ \Theta \matU $. This result
is very similar with the result of Lemma \ref{lem:random} with
the only difference being the fact that before applying the 
$SubspaceSampling$ method on the rows of $\matU$ we pre-multiply
it with a randomized Hadamard Transform, i.e. we apply the 
$SubspaceSampling$ on the rows of the matrix $\matD \matH \matU$
so, the matrices $\Omega$ and $\matS$ in Definition \ref{srht} can be 
obtained by a special application of  $SubspaceSampling$:
$$ [ \Omega, \matS ] = SubspaceSampling( \matD \matH \matU, \frac{1}{2 \log(40 k m)} , r ).$$
Since $\beta < 1$, we need to specify the sampling probabilities $p_i$'s in
Definition \ref{def:sampling}:
$$ p_i = \frac{1}{m}  
     \geq \frac{1}{2 \log(40 k m)}  \frac{ \TNormS{ \left( \matD \matH\matU \right)_{(i)} } }{ k }
=         \beta                     \frac{ \TNormS{ \left( \matD \matH\matU \right)_{(i)} } }{ k }.$$
The inequality in this derivation is from Lemma \ref{lem:HU}. 
We formalize this discussion in Lemma \ref{preserves}, which can 
be viewed as the analog of Lemma \ref{lem:random}. We should note
that a mild improvement of Lemma \ref{preserves} can be found in~\cite{Tro11}. 
\begin{lemma} \label{preserves}
Let $\matU \in \R^{m \times k}$ with $m > k$ and $\matU\transp \matU = \matI_{k}$.
Let  $\beta = \frac{1}{2 \log(40 k m)}$, and $ 4 k \ln( 2 k / \delta) / \beta < r \leq m$.
Let $[\Omega, \matS] = SubspaceSampling(\matD \matH \matU, \frac{1}{2 \log(40 k m)}, r)$ and
$p_i = 1/m$ in Definition \ref{def:sampling}.
Then, for all $i=1,...,k$, w.p. at least $0.95 - \delta$:
\vspace{-0.18in} 
$$1 -  \sqrt{\frac{8 k \ln(2 k / \delta ) \log(40 k m)}{ r }}  
\leq  \sigma_i^2(\matU \transp \matD\transp \matH\transp \Omega \matS)  
= \sigma_i^2( \Theta \matU )
\leq 1+ \sqrt{\frac{8  k \ln(2 k/\delta) \log(40 k m)}{ r }}.
$$
\vspace{-0.22in}
\end{lemma}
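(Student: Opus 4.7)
The plan is to realize the SRHT as a composition: the randomized Hadamard stage $\matD\matH$ preconditions $\matU$ so that its row norms are essentially uniform, and then the $\Omega \matS$ stage is nothing but an instance of $SubspaceSampling$ applied with the \emph{uniform} probability distribution $p_i = 1/m$. Once this viewpoint is adopted, Lemma \ref{preserves} should follow by stitching together Lemma \ref{lem:HU} (row-norm flattening) with Lemma \ref{lem:random} (spectral bound for subspace sampling) via a union bound.

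First, I would observe that $\matD\matH$ is an orthogonal matrix, so $\matU' := \matD\matH\matU \in \R^{m \times k}$ still satisfies $(\matU')\transp \matU' = \matI_k$. Hence Lemma \ref{lem:random} is \emph{applicable} to $\matU'$ with any valid sampling distribution. The point of the Hadamard preconditioner is to make the uniform distribution a valid $\beta$-approximate subspace-sampling distribution for $\matU'$: by Lemma \ref{lem:HU}, with probability at least $0.95$ over $\matD$, every row satisfies $\|(\matD\matH\matU)_{(i)}\|_2^2 \le 2k\log(40mk)/m$, which rearranges exactly to
\[
\frac{1}{m} \;\ge\; \frac{1}{2\log(40km)} \cdot \frac{\|(\matD\matH\matU)_{(i)}\|_2^2}{k},
\]
so $p_i = 1/m$ satisfies the $\beta$-approximate condition of Definition \ref{def:sampling} with $\beta = 1/(2\log(40km))$. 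This is precisely the event under which the subsequent step of $\Omega,\matS$ is a valid $SubspaceSampling(\matD\matH\matU, \beta, r)$ call, as indicated in the discussion preceding the lemma.

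Conditioned on this event, I would invoke Lemma \ref{lem:random} with matrix $\matU'$ and the above $\beta$. Plugging $\beta = 1/(2\log(40km))$ into the bound $1 \pm \sqrt{4k\ln(2k/\delta)/(r\beta)}$ yields exactly the claimed $1 \pm \sqrt{8k\ln(2k/\delta)\log(40km)/r}$ on $\sigma_i^2((\matU')\transp \Omega \matS) = \sigma_i^2(\matU\transp \matD\transp\matH\transp \Omega \matS)$, which equals $\sigma_i^2(\Theta\matU)$ by unwinding $\Theta = \matS\transp \Omega\transp \matD\matH$ together with $\matH\transp = \matH$ and $\matD\transp = \matD$. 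A union bound over the Hadamard event (failure probability $\le 0.05$) and the sampling event (failure probability $\le \delta$) gives the overall success probability $0.95 - \delta$.

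The main obstacle I anticipate is purely bookkeeping rather than mathematical: verifying that the independence structure behind the union bound is legitimate, i.e.\ that the two randomness sources (the signs of $\matD$ and the sampling in $\Omega$) are independent so that conditioning on Lemma \ref{lem:HU}'s event does not distort the guarantee of Lemma \ref{lem:random}. This is clean because $\matD$ is drawn once at the outset and fixes the matrix $\matU' = \matD\matH\matU$; the subsequent randomness in $\Omega$ only sees the deterministic matrix $\matU'$, and Lemma \ref{lem:random} applies to any fixed orthonormal matrix. A secondary subtlety is that Lemma \ref{lem:random} is stated for sampling matrices arising from $SubspaceSampling$ applied directly to the rows of the target orthonormal matrix; so I would emphasize explicitly that the matrix whose rows are being sampled in Definition \ref{def:sampling} is $\matD\matH\matU$, not $\matU$, before invoking Lemma \ref{lem:random}.
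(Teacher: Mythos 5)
Your proposal is correct and follows essentially the same route as the paper: use Lemma \ref{lem:HU} to certify that the uniform probabilities $p_i = 1/m$ satisfy the $\beta$-approximate condition of Definition \ref{def:sampling} for the orthonormal matrix $\matD\matH\matU$ with $\beta = 1/(2\log(40km))$, then invoke Lemma \ref{lem:random} on that matrix and combine the two failure probabilities. Your explicit remarks on the independence of the randomness in $\matD$ and in $\Omega$ make the union-bound step cleaner than the paper's informal discussion, but the argument is the same.
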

\vspace{-0.32in} 
\paragraph{Remark.} From Lemma \ref{lem:eq}, we restate the latest result as: For any vector $\y \in \R^k$: 
$$ (1-\sqrt{\frac{8 k \ln(2 k / \delta ) \log(40 k m)}{ r }}) \TNormS{\matV \y}  
\le  \TNormS{\Theta \transp \matV \y} \le 
(1 + \sqrt{\frac{8 k \ln(2 k / \delta ) \log(40 k m)}{ r }}) \TNormS{\matV \y} .$$


\section{Prior Work: Low-rank Column-based Approximation} \label{chap32} 

\begin{table}
\begin{center}
\begin{tabular}{l|l|l}
   & Spectral norm ($\xi=2$) & Frobenius norm ($\xi=F$)\\
\hline
\textbf{$r=k$}
&  $\hat\alpha=\sqrt{\frac{n}{k}}$ \cite{DR10} & $\hat\alpha=\sqrt{k+1}$ \cite{DRVW06}   \\
\hline
\textbf{$r>k$}
& $\hat\alpha=\sqrt{\frac{n}{r}}$ (Section \ref{sec:lower})
& $ \hat\alpha=\sqrt{1 + \frac{k}{2r}}$ \cite{DV06}        \\
\hline
\end{tabular}
\caption{
Lower bounds \math{ \XNorm{\matA-\Pi_{\matC,k}^{\xi}(\matA)} / \XNorm{\matA-\matA_k} \ge \hat{\alpha} }.
}
\label{table:31}
\end{center}
\end{table}

We start with a discussion on lower bounds for low-rank column based matrix approximation. The above
table provides a summary on lower bounds for the ratio
$$\frac{\XNorm{\matA-\Pi_{\matC,k}^{\xi}(\matA)}}{\XNorm{\matA-\matA_k}} \ge \hat\alpha.$$
If this ratio is greater or equal to a specific value $\hat\alpha$, this 
means that there exists at least a matrix $\matC$ such that no algorithm
can find an approximation with a factor $\alpha$ which is strictly better than $\hat\alpha$. An algorithm is called near-optimal
if it - asymptotically - matches the best possible approximation bound. Notice that the
approximation ratio is presented with respect to the best rank $k$ low-rank approximation 
computed with the SVD; this case is certainly interesting, but replacing $\matA_k$ with 
$ \matC_{opt} \matC_{opt}^+ \matA $ would have implied more meaningful information for
the hardness of approximation of this column selection problem. Here, $\matC_{opt}$ satisfies:
$$ \matC_{opt} = \arg \min_{\matC } \XNorm{ \matA - \Pi_{\matC,k}^{\xi}(\matA) }.$$

Lower bounds of the above form were available for both spectral and Frobenius norm and $r = k$.
When $r > k$, a Frobenius norm bound was given in~\cite{DRVW06} but a spectral norm bound didn't appear
in prior work. Our Theorem~\ref{theorem:lower1} in the Appendix contributes a new lower bound 
for the spectral norm case when $r > k$. It is worth noting that any lower bound for the ratio 
\math{\XNorm{\matA-\matC\matC^+\matA}/\XNorm{\matA-\matA_k}} also implies a lower bound for 
\math{\XNorm{\matA-\Pi_{\matC,k}^{\xi}(\matA)}/\XNorm{\matA-\matA_k}}; the converse, however, is not true.
See Table \ref{table:31} for a summary of the lower bounds.

\subsection*{The Frobenius norm case ($\xi=F$)}
\begin{table}
\begin{center}
\begin{tabular}{l|l|l}
   & Spectral norm ($\xi=2$) & Frobenius norm ($\xi=F$)\\
\hline
\textbf{$r=k$}
&  $\alpha=\sqrt{4 k(n-k) +1}$ \cite{GE96} & $\alpha=\sqrt{k+1}$ \cite{DR10}   \\
\hline
\textbf{$k < r = o(k \log(k))$}
& -  & -         \\
\hline
\textbf{$r =\Omega(k \log(k))$}
& -
& $ \sqrt{ 1 + O\left(\frac{k}{r - k\log(k)} \right) }$ \cite{DMM06d,DR10,DV06}       \\
\end{tabular}
\caption{
Best Available algorithms (prior to our results) for the equation: 
\math{ \XNorm{\matA-\Pi_{\matC,k}^{\xi}(\matA)}  \le \alpha \XNorm{\matA-\matA_k}}.
We offer $\alpha = O\left(\sqrt{\rho/r}\right)$ for $\xi = 2$, $r > k$ (Theorem~\ref{theorem:intro1}); and
         $\alpha = \sqrt{1 + O\left(k/r\right)}$ for $\xi=F$, $r > 10k$ (Theorem~\ref{thmFast3}).
}
\label{table:32}
\end{center}
\end{table}

We summarize prior work by presenting algorithms with accuracy 
guarantees $\alpha$ with respect 
to the equation
\vspace{-.1in} 
$$\FNorm{\matA-\matC \matC^+\matA} \le  
\FNorm{\matA-\Pi_{\matC,k}^{F}(\matA)} 
\le \alpha \FNorm{\matA-\matA_k}.$$

\paragraph{The $r = k$ case.}
Theorem 8 in~\cite{DR10} describes a deterministic $O(k n m^3 \log(m))$ algorithm with approximation:
\vspace{-.1in} 
$$ \FNorm{\matA-\matC \matC^+\matA} = \FNorm{\matA-\Pi_{\matC,k}^{F}(\matA)} 
\le \sqrt{k+1}  \FNorm{\matA-\matA_k}.$$
This matches the lower bound presented in~\cite{DRVW06}. The
algorithm of~\cite{DR10} is based on the method of volume sampling that we discussed in Section \ref{chap312}.
\cite{DR10} also presented a faster randomized algorithm achieving a $(1+\epsilon)\sqrt{k+1}$ approximation (in expectation), running in 
$O(nm \log(n) k^{2}\epsilon^{-2} + n \log^{3}(n) \cdot k^{7}\epsilon^{-6} \log\left(k \epsilon^{-1} \log(n)\right))$ time (See Theorem 9 in~\cite{DR10}):
\vspace{-.1in} 
$$ \FNorm{\matA-\matC \matC^+\matA} = \FNorm{\matA-\Pi_{\matC,k}^{F}(\matA)} 
\le \left( 1 + \epsilon \right)  \sqrt{k+1}  \FNorm{\matA-\matA_k}.$$
Here $\epsilon$ is given as input and can be made arbitrary small $0 < \epsilon < 1/2$. It would have been interesting though to consider setting $\epsilon$ of the order $k$ such that to improve the run time of the algorithm but, to our best understanding, this is not possible, since Theorem 9 in~\cite{DR10} is based on a random projection type result from~\cite{MZ08} that is based on the construction of Achlioptas in~\cite{Ach03}, 
which breaks unless $\epsilon < 1$. 

\paragraph{The $k < r = o(k \log(k))$ case.} For this range of values of $r$,
we are not familiar with any available algorithm in the literature. 

\paragraph{The $r=\Omega(k\log k)$ case.} When $r$ is asymptotically larger than $k \log(k)$,
there are a few algorithms available that we summarize below. Recall that for general $r > k$,
the lower bound for Frobenius norm approximation is $\hat\alpha = \sqrt{1+\frac{k}{2r}}$. All
algorithms of this paragraph offer approximations of the order $\sqrt{ 1 + O( \frac{k \log(k)}{r})}$,
so they are optimal up to a factor $O\left(\log(k) \right)$. Our Theorem \ref{thmFast3} presents an algorithm which
asymptotically matches this lower bound; in fact it is optimal up to a constant $20$.  

\cite{DMM06d} presented a randomized algorithm that samples
columns from $\matA$ with probabilities proportional to
the Euclidean norms of the rows of $\matV_k$. In fact, the algorithm 
of \cite{DMM06d} applies the $SubspaceSampling$ method that we described 
in Section \ref{chap314} on the matrix $\matV_k\transp$ of the top $k$ right singular vectors $\matA$. 
For fixed $\matA, k$, and 
$r = \Omega(k \log(k)$, \cite{DMM06d} describes a $O( m n \min\{m,n\} + r \log(r) )$ time
algorithm that, with constant probability, guarantees
\vspace{-.1in}   
$$ \FNorm{\matA-\Pi^F_{\matC,k}(\matA)} \le \sqrt{
1 + O\left( \frac{ k \log(k)}{r}  \right) } \FNorm{\matA-\matA_k}.$$ 
\cite{Sar06} showed how to get the same result as~\cite{DMM06d} but with
improved running time $T(\tilde{\matV}_k) + O(nk + r \log(r))$, where $\tilde{\matV}_k \in \R^{n \times k}$ contains the right singular vectors of a rank $k$ matrix that approximates $\matA_k$ and can be computed in $o(mn\min\{m,n\})$
This idea of Sarlos~\cite{Sar06} is similar with our Lemma \ref{lem:genericNoSVD} in Section \ref{chap22}.

In~\cite{DV06}, the authors leveraged adaptive sampling (see Section~\ref{chap312}) and 
volume sampling (see Section~\ref{chap313}) to design a \math{O(mnk^2\log k + r\log(r))} time
randomized algorithm that, with constant probability, obtains approximation error:
\vspace{-.1in} 
$$ \FNorm{\matA-\Pi^F_{\matC,k}(\matA)} \le \sqrt{
 1 + O\left(  \frac{ k }{r - k^2 \log(k)}  \right) } \FNorm{\matA-\matA_k}.$$ 
Notice that this method, for arbitrarily $\epsilon > 0$, 
needs \math{r=O(k^2\log k + k\epsilon^{-1})} columns to obtain $\FNorm{\matA-\Pi^F_{\matC,k}(\matA)} \le 
\left( 1 + \epsilon \right) \FNorm{\matA-\matA_k},$
whereas \cite{DMM06d} needs $r=O(k \log(k) / \epsilon^2)$.
Finally, it is possible to combine the fast volume sampling approach in~\cite{DR10} (setting, for example, $\epsilon=1/2$) with \math{O(\log k)} rounds of adaptive sampling as described in~\cite{DV06} to achieve a relative error approximation using \math{r=O\left(k\log k +k\epsilon^{-1}\right)} columns in time
$O\left(mnk^2 \log(n) + n k^7 \log^{3}(n) \log\left(k \log(n)\right)\right)$. Also, a $(1+\epsilon)$-error
with \math{r=O(k\log k + k\epsilon^{-1})} columns is possible by combining~~\cite{DMM06d} with 
\math{O(\log k)} rounds of adaptive sampling as described in~\cite{DV06}. The run time of this combined
method is $O( mn\min\{m,n\} + r \log(r))$. Theorem \ref{thmFast3} of our work gives 
$(1+\epsilon)$-error with $O(k/\epsilon)$ columns in $O(mnk + nk^3 + n \log(r))$. 
Notice that our result considerably improves both running time and approximation 
accuracy of existing algorithms selecting $r > k$ columns. 

\subsection*{The spectral norm case ($\xi=2$)}
We summarize prior work by presenting algorithms with accuracy 
guarantees $\alpha$ with respect 
to the equation
\vspace{-0.12in}
$$\TNorm{\matA-\matC \matC^+\matA} \le  
\TNorm{\matA-\Pi_{\matC,k}^{2}(\matA)} \le 
\alpha \TNorm{\matA-\matA_k}.$$

\paragraph{The $r = k$ case.} 
The strongest bound for $r=k$ emerges from the Strong Rank Revealing QR (RRQR) algorithm of \cite{GE96} (specifically Algorithm 4 in \cite{GE96}), which, for
\math{f>1}, runs in $O(mnk \log_{f}n )$ time, and constructs $\matC \in \R^{m \times k}$:
\vspace{-0.12in}
$$\TNorm{\matA-\matC \matC^+\matA} =  
\TNorm{\matA-\Pi_{\matC,k}^{2}(\matA)} 
\le \sqrt{f^2 k(n-k) +1} \TNorm{\matA-\matA_k}.$$
One can also trivially get an $\alpha \sqrt{\rho-k}$-approximation in the spectral norm from
any algorithm that guarantees an $\alpha$-approximation in the Frobenius norm, because,
for any matrix $\matX$,
\math{\norm{\matX}_2\le \norm{\matX}_F\le \sqrt{\rank(\matX)}\norm{\matX}_2}; so,
\mand{
\TNorm{\matA - \matC\matC^+\matA} 
\le
\FNorm{\matA - \matC\matC^+\matA}
\le 
\alpha \FNorm{\matA-\matA_k} 
\le 
\alpha \sqrt{\rho-k} \TNorm{\matA-\matA_k}.
}
\mand{
\TNorm{\matA - \Pi_{\matC,k}^{2}(\matA) }
\le
\TNorm{\matA - \Pi_{\matC,k}^{F}(\matA) } 
\le 
\FNorm{\matA - \Pi_{\matC,k}^{F}(\matA)}
\le 
\alpha \FNorm{\matA-\matA_k} 
\le 
\alpha \sqrt{\rho-k}\TNorm{\matA-\matA_k}.
}
\paragraph{The $r > k$ case.} We are not aware of any bound
applicable to this domain other than those obtained by 
extending the available Frobenius norm bounds as indicated above.
\cite{RV07} presents an additive-error spectral norm algorithm but only for a special
case of matrices where the ratio $ \eta = \frac{\FNormS{\matA}}{\TNormS{A}}$ is ``small''. 
More specifically, Theorem 1.1 in~\cite{RV07}, for fixed $\matA$, $0 < \delta < 1$, 
$0 < \epsilon < 1$, constructs a matrix $\matC \in \R^{m \times r}$ with 
$r = \Omega( \frac{\eta}{\epsilon^4 \delta} \log( \frac{\eta}{\epsilon^4 \delta} )  )$
as $\matC = AdditiveSampling(\matA, r)$ 
(see Section \ref{chap311} for the description of $ AdditiveSampling$) such that w.p. 
$1 - \frac{2}{e^{O(\delta)}}$:
$$ \TNorm{ \matA -  \Pi_{\matC,k}^{2}(\matA) } \le \TNorm{ \matA -  \Pi_{\matC,k}^{F}(\matA) }  \leq \TNorm{\matA - \matA_k} + \epsilon \TNorm{\matA}.$$
(The term $\Pi_{\matC,k}^{F}(\matA)$ follows after reorganization of the notation
in~\cite{RV07}. The left inequality
follows by the optimality of $\Pi_{\matC,k}^{2}(\matA)$). 

\subsection*{Column-based approximations with the Rank-Revealing QR}

Finally, we comment on a line of research within the 
Numerical Linear Algebra community that, although does not primarily focus
on low-rank column based matrix approximations, does provide several algorithms
for such a task. This line of research focuses on spectral norm bounds for the $r = k$
case. We start with the definition of the so-called
\textsc{Rank Revealing QR (RRQR) factorization}.
\begin{definition}
\label{def:rrqr}
\textbf{(The RRQR factorization)}
Given a matrix $\matA \in R ^ {m \times n}$  ($m \geq n$) and an integer $k$
($k \leq n$), assume partial $\matQ R$ factorizations of the form:
\begin{eqnarray*}
\label{eq:rrqr}
\matA \Pi = \matQ R
      = \matQ\left(
            \begin{array}{cc}
                 R_{11} & R_{12} \\
                 0  & R_{22}
            \end{array}
         \right)   ,
\end{eqnarray*}
where $\matQ \in R^{m \times n}$ is an orthonormal matrix, $R \in R^{n
\times n}$ is upper triangular, $R_{11} \in R^{k \times k}$, $R_{12}
\in R^{k \times (n-k)}$, $R_{22} \in R^{(n-k) \times (n-k)}$, and
$\Pi \in R^{n \times n}$ is a permutation matrix. The above
factorization is called a RRQR factorization if it satisfies
\begin{eqnarray*}
\label{eq:rrqrb}
\frac{\sigma_{k}(A)} { p_1(k,n)} \leq &\sigma_{min}(R_{11})& \leq
\sigma_{k}(\matA)\\
\sigma_{k+1}(\matA) \leq &\sigma_{max}(R_{22})& \leq p_2(k,n)
\sigma_{k+1}(\matA),
\end{eqnarray*}
where $p_1(k,n)$ and $p_2(k,n)$ are functions bounded by low degree
polynomials in $k$,$n$.
\end{definition}
\begin{table}
\begin{center}
\begin{tabular}{|l|l l|l|l|}
\hline \hline
  \textbf{Method} & \textbf{Reference }  &                       &
\textbf{$p_2(k,n)$}     & \textbf{Time}                             \\
  \hline \hline
  Pivoted QR   &  \text{[Golub, '65]} & \cite{Gol65} &  $\sqrt{(n-k)}
2^{k}$           &        $O(mnk)$                      \\
  \hline
  High RRQR  & \text{[Foster, '86]} & \cite{Fos86}                 &
$\sqrt{n(n-k)}  2^{n-k}    $ &$O(mn^2)$    \\
  \hline
  High RRQR  & \text{[Chan, '87]} & \cite{Cha87}                  &
$\sqrt{n(n-k)}  2^{n-k}    $ & $O(mn^2)$ \\
  \hline
  RRQR   & \text{[Hong/Pan, '92]} &
\cite{HP92}                      & $\sqrt{k(n-k)+ k} $ & -    \\
  \hline
  Low RRQR   & \text{[Chan/Hansen, '94]} &
\cite{CH94}                   & $\sqrt{(k+1)n}  2^{k+1}  $ & $O(mn^2)$\\
  \hline
  Hybrid-I        & \text{[Chandr./Ipsen, '94]} &
\cite{CI94}                  & $\sqrt{(k+1)(n-k)}         $ & - \\
  Hybrid-II       &      &
\cite{CI94}                                                   &
$\sqrt{(k+1)(n-k)}          $ & - \\
  Hybrid-III     &      &
\cite{CI94}                                                   &
$\sqrt{(k+1)(n-k)}          $ & - \\
  \hline
  Algorithm 3   & \text{[Gu/Eisenstat, '96]} &
\cite{GE96}               & $\sqrt{k(n-k)+1}           $  & -  \\
  Algorithm 4   &          &
\cite{GE96}                                      &
$\sqrt{f^2k(n-k)+1}         $ & $O(kmn\log_{f}(n))$ \\
  \hline
  DGEQPY        & \text{[Bischof/ Orti, '98]} &
\cite{BQ98a}               & $O(\sqrt{(k+1)^2(n-k)})   $ & - \\
  DGEQPX        &                 & \cite{BQ98a}
& $O(\sqrt{(k+1)(n-k)})    $   & -   \\
  \hline
    SPQR   & \text{[Stewart, '99]}       &     \cite{Ste99}       & - &
-   \\
  \hline
  Algorithm 1       & \text{[Pan/Tang, '99]} & \cite{PT99}
& $O(\sqrt{(k+1)(n-k)})$    & - \\
  Algorithm 2       &                   & \cite{PT99}
& $O(\sqrt{(k+1)^2(n-k)})$ & - \\
  Algorithm 3       &                   & \cite{PT99}
& $O(\sqrt{(k+1)^2(n-k)}) $ & - \\
  \hline
  Algorithm 2       & \text{[Pan, '00]} &
\cite{Pan00}                           & $O(\sqrt{k(n-k)+ 1})$   &
-  \\
  \hline
\end{tabular}
\caption{
Deterministic Rank-Revealing QR Algorithms. 
\label{table:33}
}
\end{center}
\end{table}
The work of Golub on pivoted $QR$
factorizations~\cite{Gol65} was followed by much research addressing
the problem of constructing an efficient RRQR factorization. Most
researchers improved RRQR factorizations by focusing on improving
the functions $p_1(k,n)$ and $p_2(k,n)$ in
Definition~\ref{def:rrqr}. Let $\Pi_k \in \R^{n \times k}$ denote the first $k$ columns
of the permutation matrix $\Pi$. Then, if $\matC=\matA\Pi_k$:
\vspace{-0.12in}
$$
\XNorm{ \matA - \matC \matC^+ \matA } = \TNorm{\matA-\Pi_{\matC,k}^{2}(\matA)}  = \XNorm{ R_{22}},
$$
for both $\xi=2,F$.
Thus, for $\xi=2$, it follows that
\vspace{-0.12in}
$$
\TNorm{ \matA - \matC \matC^+ \matA } = \TNorm{\matA-\Pi_{\matC,k}^{2}(\matA)} 
 \leq  p_2(k,n) \sigma_{k+1}(\matA)
 = p_2(k,n)\TNorm{\matA-\matA_k}  ,
$$
i.e., any algorithm that constructs a RRQR factorization with provable guarantees also provides the same provable
guarantees for column-based matrix approximation. See Table~\ref{table:33} for a summary
of existing results. A dash implies that the running time
of the algorithm is not explicitly stated in the corresponding citation
or that it depends exponentially on $k$.  
In addition, $m \geq n$ and $f > 1$ for this table. We should note that 
our Theorem~\ref{thmCSSPs} gives an algorithm which is (up to a constant 4) as accurate
as the best algorithm of this table but faster by a factor $O(1/\log(k))$.

\section{Prior Work: Coresets for Least-Squares Regression} \label{chap33} 

We quickly review the definition of the least-squares problem.
The input is a matrix $\matA \in \R^{m \times n}$ ($m \gg n$) of rank $\rho = n$ and a
vector $\b \in \R^m$; the output is a vector $\x_{opt} \in \R^n$ that minimizes the distance
$\TNorm{\matA \x-\b}$, over all $\x$. If no additional constraints are placed on $\x$,
$\x_{opt} = \matA^+ \b$, which can be computed in $O(mn^2)$ time~\cite{GV89}. 
When $\x$ is forced to satisfy additional constraints, for example, all elements of $\x$ are nonnegative,
then, the regression problem does not have an analytical solution and
algorithms can be as complex as the constraints. The results of this section along with
our results from Chapter~\ref{chap5} are summarized in Table~\ref{table:34}.

\paragraph{Coresets for Constrained Regression.}

Dasgupta et. al.~\cite{DDHKM08} presented a randomized 
algorithm\footnote[3]{This algorithm actually provides coresets for the more general problem of minimizing $||\matA \x - \b ||_p$,
for $p=1,2,3,...$, over all $\x$ (possibly constrained) but since our focus is on $p=2$, we do not elaborate
further on the $p \ne 2$ case.}
that, with probability $0.5$, 
constructs a $(1+\epsilon)$-coreset of size
$r = \frac{1296 n}{\epsilon^2}( n \ln(\frac{36}{\epsilon}) + \ln(200))$, 
in time $O(m n \min\{m,n\} + T_c)$. $T_c$ is the time needed to compute
an exact solution of a constrained regression problem with inputs 
$\hat\matC \in \R^{\hat{r} \times n}$ and 
$\hat{\b}_c \in \R^{\hat{r}}$, 
which contain $\hat{r} = 82,944 n ( n \ln(288) + \ln(200))$ 
(rescaled) rows from $\matA$ and $\b$, respectively. The constraints in this smaller
problem are the same as those in the original problem.
$T_c$ depends on the specific constraints on $\x$. For example,
if no constraints are placed on $\x$, $T_c = O( \hat{r} n^2)$ via the SVD. 
The algorithm from~\cite{DDHKM08} runs in two stages. 
In the first stage, 
$$\hat{r} = 82,944 n ( n \ln(288) + \ln(200))$$ 
rows from $\matA, \b$ are selected to form $\hat\matC \in \R^{\hat{r} \times n}$, $\hat{\b}_c \in \R^{\hat{r} }$; 
then, 
in the second stage, exactly
$$r = \frac{1296 n}{\epsilon^2}( n \ln(\frac{36}{\epsilon}) + \ln(200))$$
rows are selected to construct the coreset
$\matC \in \R^{r \times n}$ and $\b_c \in \R^r$. 

In both stages, the sampling of the rows is done with the same probabilistic technique.
More specifically, for a fixed set of probabilities $p_1,...,p_m$, a diagonal 
matrix $\matQ_j \in \R^{m \times m}$ is constructed as follows ($j=1,2$ for the two stages). 
For $i=1,...,m$ the $i$-th diagonal element of $\matQ_j$ is equal to 
$1/\sqrt{p_i}$ with probability $p_i$ and $0$ with probability $1 - p_i$ (Bernoulli process).
In the first stage ($j=1$), the probabilities are 
$$p_i = \min\{1,  \frac{ \TNormS{ (\matU_{\matA})_{(i)} } }{n} \hat r \};$$
in the second stage ($j=2$), the probabilities are refined as: 
$$\hat{p_i} = \min\{ 1, \max\{ p_i, \frac{ z_i }{\TNormS{\z}} r \} \},$$ 
where
$\z = [z1, ..., z_m] = (\matA \hat{\x}_{opt}-\b)\transp \in \R^{1 \times m}$, with 
$\hat{\x}_{opt} =  \arg\min_{\x}|| \matQ_1\matA \x - \matQ_1\b ||_2$.
The final coreset $\matC \in \R^{r \times n}$, $\b_c \in \R^r$ 
corresponds to the non-zero rows and elements from 
$\matQ_2 \matA$ and $\matQ_2 \b$, respectively. We slightly abused notation,
since, in both stages,
the actual rows that are sampled are not exactly $\hat{r}$ and $r$.
It can be proved thought that - in expectation - 
$\hat{r}$ and $r$ rows are sampled from $\matA, \b$ in
the first and the second stage, respectively. 
\begin{table}
\begin{center}
\begin{tabular}{|l|l|l|l|l|l|}
\hline
Year & Ref.  & coreset size $r=$ & Time=$O(x), x = $ & $\delta$ & C/C \\
\hline
2008 & \cite{DDHKM08} & $1296 n( n \ln(\frac{36}{\epsilon}) +\ln(200))/\epsilon^2$ & - & $0.5$ & Y/Y \\
\hline
2011 & Thm~\ref{lem:regression}  &  $ 225(n+1)/\epsilon^2 $  & $mn^2 + mn^3/\epsilon^2$  & $0$ & Y/Y \\
\hline
2011 & Thm~\ref{lem:regression2} & $ 36(n+1)\ln(2(n+1)/\delta_0) /\epsilon^2$ & $mn^2+r\log(r)$ & $\delta_0$ & Y/Y \\
\hline
2006 & \cite{DMM06a} & $3492n^2\ln(3/\delta_1)/\epsilon^2$ & $mn^2 + r\log(r)$ & $\delta_1$ & N/Y \\
\hline
2008 & \cite{DMM08}  & $3200n^2 / \epsilon^2$ & $mn^2 + r\log(r)$ & $0.3$ & N/Y \\
\hline
2008 & \cite{DMM08}  & $c_1 n \log(n) / \epsilon^2$ & $mn^2$ & $0.3$ & N/Y \\
\hline
2006 & \cite{Sar06}  & $c_2 n \log(n)/\epsilon $        & $ mn\log(m) +t_0\log(n)/\epsilon $ & $0.7$ & N/N \\
\hline
2011 & \cite{Zou10}  & $c_3 n / \epsilon $            & $mn\log(m) + t_0/\epsilon$ & $0.7$ & N/N \\
\hline
2007 & \cite{DMMS07} & $\max\{ \xi_1 , 40 n \ln(40mn)/\epsilon \} $ & $mn\log(r)$ & $0.2$ & N/N \\
\hline
2007 & \cite{DMMS07} & $\max\{ c_4(118^2d+98^2) , 60n/\epsilon \} $ & $mn\log(m r t_1)$ & $0.2$ & N/N \\
\hline
2008 & \cite{RT08}   & $\left(\frac{(1+\epsilon)^2 + 1}{(1+\epsilon)^2 - 1}\right)^2 10(n+1)^2$ & $mn\log(r)$ & $0.1$ & N/N \\
\hline
2011 & Thm~\ref{lem:regression3}& $ 72(n+1)\ln(2(n+1)/\delta_2) \xi_2 /\epsilon^2$ & $mn\log(r)$ & $\delta_2$ & N/Y \\
\hline
\end{tabular}
\caption{ Coresets/``Coresets'' for (Constrained) Regression $\TNorm{\matA \x - \b}$; $\matA \in \R^{m \times n}$. 
C/C abbreviates Coreset/Constraint and N/Y (NO/YES) implies ``Coreset'' 
for Constrained Regression. $\rho = \rank(\matA) = n$.
$c_1$, $c_2$,..., are (unspecified) constants.  A dash means that the time
can not be specified exactly. $\delta$ is the failure probability. 
$\xi_2 = \log(40(n+1)m)$.
$\xi_1 =  48^2 n \ln(40mn) \ln(100^2 n \ln(40mn)) $;
 $t_0 = n^2\log^2(m)$; $t_1 = \frac{n \ln(mn)}{m}(\ln(m)+n)$.
}
\label{table:34}
\end{center}
\end{table}
  
\paragraph{Coresets for Unconstrained Regression.}
When no constraints are placed on $\x$, there are a few available
algorithms in the literature that we present below.

First, Theorem 3.1 in~\cite{DMM06a} describes a randomized algorithm
that, with probability at least $1 - \delta$, for any $0 < \delta < 1$,
constructs a $(1+\epsilon)$-coreset of size $r = 3492 n^2 \ln(1/\delta) / \epsilon^2$
in time $O( mn^2 + r \log(r) )$. The
algorithm of ~\cite{DMM06a} is as follows: (i) for $i=1,...,m$, 
compute probabilities $p_i$ that satisfy three conditions 
(let $\matU_{\matA}\transp                 = [ \u_1,..., \u_m ] \in \R^{n \times m}$, $\u_i \in \R^n$, 
and  $ (\matA\matA^+\b -\b)\transp = [ z_1,...,  z_m ] \in \R^{1 \times m}$):
\mand{  
p_i \ge \frac{1}{3} \frac{ \u_i\transp\u_i  }{ \rho }; 
\quad 
p_i \ge \frac{1}{3} \frac{ (\u_i\transp\u_i) z_i } { \sum_{j=1}^{m} (\u_j\transp\u_j) z_j };
\qquad  
p_i \ge \frac{1}{3} \frac{ z_i }{ \TNormS{\x_{opt}} };
}
(ii) sample $r$ rows from $\matA$ in $r$ i.i.d. trials, where in each trial the $i$-th row 
is sampled with probability $p_i$ (an appropriate rescaling also applies to the rows of $\matA$ and $\b$). 
Notice that this algorithm requires the computation of the matrix $\matU_{\matA}$ from the SVD and the optimum solution vector 
$\x_{opt} = \matA^+\b$. The factors $1/3$ in the above conditions can actually
be relaxed to any values $0 < \beta_1, \beta_2, \beta_3 \le 1$, with a change 
in the sampling complexity by a multiplicative factor $ \min\{\beta_1^2, \beta_2^2, \beta_1^2\} / 9$. 
An interesting open question is whether one can approximate these probabilities in $o(mn\min\{m,n\})$
time. Magdon-Ismail in~\cite{Mag10} made progress towards this direction.
Unfortunately, the algorithm of~\cite{Mag10}, which is based on random projections, 
returns probabilities that satisfy only the first condition, so, in
the parlance of Theorem 3.1 in~\cite{DMM06a},~\cite{Mag10} does not provide any useful 
result. As we will see below though, 
sampling with probabilities satisfying only the first condition gives a $(1+\epsilon)$-coreset
with comparable size but only
with constant probability. Notice that the algorithm of this paragraph fails with arbitrarily 
small probability $\delta$ and $r=O(\ln(1/\delta))$. 

Second, the first algorithm of Theorem 5 in~\cite{DMM08} presents a randomized algorithm that, for
any $0 < \beta \le 1$, w.p. $0.7$, constructs a $(1+\epsilon)$-coreset of size 
$r = 3200 n^2 / \beta \epsilon^2$ by running $[\Omega, \matS] = SubspaceSampling(\matU_\matA, \beta, \epsilon)$
as we described in Section~\ref{chap314}. This algorithm needs $O( mn^2 + r \log(r) )$. 
Inspecting carefully the $SubspaceSampling$ method, we see that the probabilities that are used to sample columns
from $\matU_{\matA}\transp$ must satisfy $p_i \ge \beta \frac{ \u_i\transp\u_i  }{ \rho }$,
which is the first of the three conditions that we saw in the previous paragraph.
One can compute these probabilities with $\beta = 1$ through the SVD. 
\cite{Mag10} showed how to approximate these probabilities in $o(mn^2)$ using random 
projections\footnote[4]{The algorithm that approximates the probabilities is randomized and, for any
$0 < \delta < 1$, does so with probability $1-  \delta$ 
in time $O( m n \log(m) + m n \log(n) \log(\frac{1}{\delta}) + \frac{m n^2}{ \log(m) } \log(\frac{1}{\delta}) )$.
So, for $\log(m) = o(n)$, the total run time is $o(mn^2)$.
}. 
An important observation about the algorithm in Theorem 5 in~\cite{DMM08} is that it doesn't need access
to $\b$ in order to compute the coreset. This is useful, because the same coreset can be used for multiple
$\b$. 

Third, the second algorithm of Theorem 5 in~\cite{DMM08} presents a randomized algorithm that, with
constant probability, constructs a $(1+\epsilon)$-coreset of size 
$r = O( n \log(n) / \beta \epsilon^2)$ in time $O( mn^2 )$. Here, the
sampling probabilities are the same as in the previous paragraph; the sampling though is done with
a slightly different way (without replacement). 
The algorithm of this paragraph is the best available in the literature with respect to the coreset
size and running time. Again, the result from~\cite{Mag10}
implies a running time of $o(mn^2)$ to approximate the probabilities.
\vspace{-.18in}
\paragraph{``Coresets'' for Unconstrained Regression.}
If we allow the coreset to be linear combinations of the rows 
of $\matA$ and $\b$, there are several methods that
construct $(1+\epsilon)$-``coresets'' in $o(mn^2)$ time. 
Sarlos in [Eqn. 2, Theorem 12, \cite{Sar06}] gives an algorithm to construct a 
$(1+\epsilon)$-``coreset'' of size $r = O(n \log(n)/\epsilon)$. 
\cite{Sar06} uses the Fast Random Projection method of~\cite{AC06}.
Along the lines in~\cite{Sar06}, Avner and Zouzias in [Eqn. 3.3, Theorem 3.3, \cite{Zou10}] showed that 
$r = O(n / \epsilon)$ rows are enough; the construction in \cite{Zou10} 
uses the method of~\cite{AC06} as well.  
Drineas et. al in~\cite{DMMS07} employ the Subsampled Randomized Hadamard Transform
that we described in Section~\ref{chap3110} to design faster algorithms for constructing
$(1+\epsilon)$-``coresets''. More specifically, [Algorithm 1, Theorem 2, \cite{DMMS07}] 
describes a method which does so with coreset size 
$r = \max\{ 48^2 n \ln(40mn) \ln(100^2 n \ln(40nd)) , 40 n \ln(40mn)/\epsilon \}$. 
The difference of this algorithm with ours (Theorem~\ref{lem:regression2}) is that our ``coreset''
works for arbitrary constraint regression.
Moreover, [Algorithm 2, Theorem 3, \cite{DMMS07}] requires 
$r = \max\{ c(118^2d+98^2) , 60n/\epsilon \}$, for a constant $c$; this algorithm is based
on a (sparse) random projection embedding. 
Along the same lines as \cite{DMMS07} - by using a different matrix than the Hadamard - 
Rokhlin and Tygert in [Lemma 2, \cite{RT08}] describe an algorithm which constructs a ``coreset'' of size 
$r = (\frac{(1+\epsilon)^2 + 1}{(1+\epsilon)^2 - 1})^210(n+1)^2$
w.p. $0.9$. Finally, \cite{AMT10} presents experiments with the randomized method 
of~\cite{DMMS07}. 
 
\section*{Nonnegative Least Squares Regression Problems}
Notice that so far we haven't discussed algorithms and running times
for solving the regression problem per se. Here, we elaborate
on a popular constrained regression problem called NNLS. The purpose
of this section is to give further motivation that coresets for constrained
regression problems are indeed important in several applications.
The Nonnegative Least Squares (NNLS) problem is a constrained
least-squares regression problem where the variables are allowed
to take only nonnegative values. 
NNLS is a quadratic optimization problem with linear inequality
constraints. As such, it is a convex optimization problem and thus
it is solvable (up to arbitrary accuracy) in polynomial time
\cite{Bjo96}. 
The motivation for NNLS problems in data mining and machine
learning stems from the fact that given least-squares regression
problems on nonnegative data such as images, text, etc., it is
natural to seek nonnegative solution vectors. (Examples of
data applications are described in \cite{CP07}.) NNLS is also
useful in the computation of the Nonnegative Matrix
Factorization~\cite{KP07a}, which has received considerable
attention in the past few years. Finally, NNLS is the core
optimization problem and the computational bottleneck in designing
a class of Support Vector Machines~\cite{SSL02}. Since modern
datasets are often massive, there is continuous need for faster,
more efficient algorithms for NNLS.
\vspace{-0.12in}
\paragraph{Existing Algorithms.}
We briefly review NNLS algorithms following the extensive review
in~\cite{CP07}. Such algorithms can be
divided into three general categories: ($i$) active set methods,
($ii$) iterative methods, and ($iii$) other methods. The approach
of Lawson and Hanson in \cite{LH74} seems to be the first
technique to solve NNLS problems. It is a typical example of an
active set method and is implemented as the function
\emph{lsqnonneg} in Matlab. Immediate followups to this work
include the technique of Bro and Jong~\cite{BJ97} which is
suitable for problems with multiple right hand sides, as well as
the combinatorial NNLS approach of Dax~\cite{Dax91}. The
Projective Quasi-Newton NNLS algorithm of \cite{KSD07} is an
example from the second category. It is an iterative approach
based on the Newton iteration and the efficient approximation of
the Hessian matrix. Numerical experiments in \cite{KSD07} indicate
that it is a very fast alternative to the aforementioned active
set methods. 
Finally, the sequential coordinate-wise approach
of~\cite{FHN05} and
interior point methods~\cite{PJV94} are also useful.

\paragraph{A Faster NNLS Algorithm.}
We show how the ``coreset'' construction algorithm of Theorem \ref{lem:regression2}
can be used to speed up existing NNLS algorithms. Recall that the
the construction of the ``coreset'' takes
$T_{cor} = O(m n \log ( n \log(m) / \epsilon^2 )).$ After
this step, we can employ a standard NNLS solver on the smaller
problem. The computational cost of the NNLS solver on the small
problem is denoted as $T_{NNLS}(r,n)$, with $r = O(n\ln(n)\log(nm)/\epsilon^2)$. 
Compare this with 
$T_{NNLS}(m,n)$ which is the time needed to solve the problem exactly. 
$T_{NNLS}(r,n)$ cannot be specified since theoretical running times for 
exact NNLS solvers are unknown. In the sequel we comment on the 
computational costs of some well defined segments of some NNLS solvers.
NNLS is a convex quadratic program:
\begin{eqnarray*}
\min_{x \in \mathbb{R}^n, \x\geq 0} \x^T \matQ \x - 2\q\transp \x, 
\end{eqnarray*}
where $\matQ = \matA\transp \matA \in \mathbb{R}^{n\times n}$ and $\q = \matA\transp \b \in
\mathbb{R}^n$. Computing $\matQ$ and $\q$ takes $O(mn^2)$ time, and
then, the time required to solve the above formulation of the NNLS
problem is independent of $m$. Using this formulation, our
algorithm would necessitate $T_{cor}$ time for the computation
of the ``coreset'', and then $\tilde{\matQ} = \matC\transp \matC$ and
$\tilde{\q} = \matC\transp \b_c$ can be computed in
$T_{MM} = O(rn^2)$ time (MM stands for Matrix Multiplication); 
given our choice of $r$, this implies
$ T_{MM} = O(n^3 \log(n m) / \epsilon^2).$ Overall, the standard
approach would take $O(mn^2)$ time to compute $\matQ$, whereas our
method would need only $T_{cor}$ + $T_{MM}$ time for the
construction of $\tilde{\matQ}$. Note, for example, that when
$m=O(n^2)$ and treating $\epsilon$ as a constant, $\tilde{\matQ}$ 
can be computed $O(n / \log(n m))$ faster.

On the other hand, many standard implementations of NNLS solvers
(and in particular those that are based on active set methods)
work directly on  $\matA$ and $\b$. A
typical cost of these implementations is of the order $O(mn^2)$
per iteration. Other approaches, for example the NNLS method of
\cite{KSD07}, proceed by computing matrix-vector products of the
form $\matA \u$, for an appropriate $n$-dimensional vector $u$, thus
cost typically $O(mn)$ time per iteration. In these cases, our
algorithm needs again $T_{cor}$ preprocessing time, but costs
only $O(rn^2)$ or $O(rn)$ time per iteration, respectively. Again,
the computational savings per
iteration are $O(n / \log(nm))$.

\begin{table*}[htdp]
\begin{center}
\begin{tabular}{|c|c|c|c|c|}
\hline
\textbf{Reference}    & \textbf{Description}     & \textbf{Dimensions} & \textbf{Time = $O(x), x =$} & \textbf{Error }          \\
\hline
 Folklore        & RP  & $O(\log(m)/\epsilon^2)$        & $mn \lceil \epsilon^{-2} \log(m)/ \log(d) \rceil$   & $1+\epsilon$ \\
\hline
\cite{DFKVV99}   & Exact SVD & $k$                                 & $mn\min\{m,n\}$ & $2$    \\
\hline
Theorem \ref{fastkmeans}       & RS & $O(k \log(k)/\epsilon^2)$ & $mnk\epsilon^{-1} + t_0$ & $3+\epsilon$ \\
\hline
Theorem \ref{thm:second_result}      & RP  & $O(k / \epsilon^2)$                & $mn \lceil \epsilon^{-2} k/ \log(n) \rceil$   & $2+\epsilon$ \\
\hline
Theorem \ref{thm:first_result}      & Approx. SVD  & $k$                & $mnk\epsilon^{-1}$   & $2+\epsilon$\\
\hline
\end{tabular}
\end{center}
\label{table:35} 
\caption{\small{Provably Accurate dimensionality reduction methods for
$k$-means clustering. RP stands for Random Projections, similarly
for RS and Random Sampling. The technique in the second row of the table is deterministic;
the others fail with, say, a constant probability. In the Random Projection methods the 
construction is done with random sign matrices and the mailman algorithm (see Section \ref{chap62}). 
$t_0 = k \log(k) \epsilon^{-2} \log(k \log(k) \epsilon^{-1})$.
}}
\end{table*}
\section{Prior Work: Feature Selection for $k$-means Clustering}\label{chap34} 
Dimensionality reduction encompasses the union of two different
approaches: feature selection, which embeds the points into a
low-dimensional space by selecting actual dimensions of the data,
and feature extraction, which finds an embedding by constructing
new artificial features that are, for example, linear combinations
of the original features. Let $\matA$ be an $m\times n$ 
matrix containing $m$ $n$-dimensional points ($\matA_{(i)}$ denotes the $i$-th point of the set), and let $k$ be the
number of clusters. 
We say that an embedding 
$f: \matA \to \R^{r}$ with $f(\matA_{(i)}) = \matC_{(i)}$ for all $i=1,...,m$ 
and some
$r < n$, preserves the clustering structure of $\matA$ within a factor
$\phi$, for some $\phi \geq 1$, if finding an optimal clustering in $\matC \in \R^{m \times r}$
and plugging it back to $\matA \in \R^{m \times n}$ is only a factor of $\phi$ worse than finding the optimal
clustering directly in $\matA$. 

Prior efforts on designing provably accurate dimensionality
reduction methods for $k$-means include:
(i) random projections, where one projects the input points into $ r = O(\log(m)/ \epsilon^2)$ dimensions such
that, the clustering structure is preserved within a factor of $\phi=1+\epsilon$; and
(ii) the Singular Value Decomposition (SVD), where one constructs 
$\matC = \matU_k \Sigma_k \in \R^{m \times k}$ such that
the clustering structure is preserved within a factor of $\phi=2$.
We summarize prior work and our results in Table~\ref{table:35}.
Finally, other techniques, 
for example the Laplacian scores~\cite{HCN06} or the Fisher scores~\cite{FS75}, 
are very popular in applications (see also~\cite{GE03}). 
However, they lack a theoretical analysis;
so, a discussion of those techniques is beyond 
the scope of this thesis.


\section{Similar Studies}\label{chap35} 
There are several more related problems that received 
considerable attention over the last decade or so. We
discuss some of these problems below. The presentation
is careful to highlight the differences and the similarities
with the algorithms and the results presented in this
thesis. Other sources that give the big picture of this
area include the surveys in~\cite{HMT, Kan10}. 

\subsection*{Matrix Multiplication}                 
Approximate Matrix Multiplication studies quick approximations of the product $\matA \matB$,
given $\matA$, $\matB$. Let $\matA \in \R^{m \times n}$ and $\matB \in \R^{n \times p}$. 
Then, a standard BLAS implementation takes $O(m n p)$ time to compute the product $\matA \matB$
exactly. Algorithms that approximate the product $\matA \matB$ in $o(m n p)$ time are of considerable
interest. An idea that received considerable attention
is to do that by first sampling a small subset of the columns of $\matA$ and the corresponding
rows from $\matB$, and then approximate the product $\matA \matB$ with $\tilde{\matA} \tilde{\matB}$;
Let $r>0$ is the number of sampled columns and rows from $\matA$ and $\matB$, respectively. Then
$\tilde{\matA} \in \R^{m \times r}$ and  $\tilde{\matB} \in \R^{r \times n}$ are the subsampled matrices.
Notice that the product $\tilde{\matA} \tilde{\matB}$ takes $O(m r p)$ time; so, one needs to choose $r$
as well as the technique to construct $\tilde{\matA}, \tilde{\matB}$ such that the overall running time
is $o(m n p)$. There are several such approaches in existing literature; we
describe some of them below. 
\cite{DK01, DKM06a} described a randomized sampling based algorithm
with approximation:
$$ \Expect{ \FNormS{ \matA \matB - \tilde{\matA} \tilde{\matB} } }
\le \frac{1}{r} \FNormS{\matA} \FNormS{\matB}.$$
The method of~\cite{DK01, DKM06a} proceeds as follows. First, compute probabilities:
$$ p_i = \frac{\TNorm{ \matA^{(i)} } \TNorm{\matB_{(i)}} }{ \sum_{j=1}^{n} \TNorm{ \matA^{(j)} } \TNorm{\matB_{(j)}} }.$$
Then, construct a sampling matrix $\Omega \in \R^{n \times r}$ and a rescaling matrix $\matS \in \R^{r \times r}$ as follows. Initially, $\Omega = \bm{0}_{n \times r}$ and $\matS=\bm{0}_{r \times r}$. 
Then, for every column $j=1,...,r$ of $\Omega$, $\matS$, independently, pick an index $i$ from the set $\{1,2,...,n\}$ with probability $p_i$ and set $\Omega_{ij} = 1$ and $\matS_{jj} = 1/\sqrt{p_i r}$. Finally,
set $\tilde{\matA} = \matA \Omega \matS$ and $\tilde{\matB} = \matS\transp \Omega\transp \matB$.
It is worth noting that constructing $\Omega, \matS$ by using any set of probabilities $p_1,...,p_n$ gives:
$$ \Expect{ \FNormS{ \matA \matB - \tilde{\matA} \tilde{\matB} } }
\le  \sum_{j=1}^n \frac{ \TNormS{ \matA^{(i)} } \TNormS{\matB_{(i)}} }{ r p_j }  - \frac{1}{r} \FNormS{\matA \matB}.$$
In fact, \cite{DK01, DKM06a} select the appropriate probabilities to get good bounds for the residual error. 
In the above approach, computing the probabilities takes $O(mn + np)$, sampling with replacement necessitates 
$O(n + r\log(r))$, and computing the product $\tilde{\matA} \tilde{\matB}$ takes $O(m r p)$. Overall, this 
approaches runs in $O( mn + np + mpr + r\log(r) )$. 

The approach that we outlined above works for matrices of arbitrary dimensions and this is actually
quite impressive. To our best knowledge, the above approach is the best
sampling-based method for approximate matrix multiplication with respect to the Frobenius norm.
Notice thought that this algorithm is randomized. An interesting open question is whether there
exists a deterministic algorithm with comparable approximation bounds. Although such a deterministic
algorithm might be too costly for approximating the product $\matA \matB$ in $o(m n p)$, we believe
that such a  result will open new directions in designing deterministic algorithms for
other problems involving subsampling, e.g. low-rank matrix approximation and $k$-means clustering. 

For approximations with respect to the spectral norm we refer the reader to
\cite{RV07, Zou10, Mag10, HKZ11}. We comment carefully on the best of these
results, which is statement (ii) in Theorem 3.2 in~\cite{Zou10}. The setting
is similar with the example that we outlined above for the Frobenius norm case.
The construction of $\tilde{\matA}, \tilde{\matB}$ is done with the same
algorithm as well. The results for the spectral norm though do not hold for
arbitrary matrices. It is required that the matrices have low stable rank.
We define the stable rank of a matrix $\matA$ as
$$ sr(\matA) = \frac{ \FNormS{\matA} }{ \TNormS{\matA} }. $$
Assume that $sr(\matA), sr(\matB) \le \tilde{\rho}$. 
Let $\epsilon > 0$ and $r = \Omega( \tilde{\rho} \log(\tilde{\rho} / \epsilon^2) / \epsilon^2 )$.
Then, statement (ii) in Theorem 3.2 in~\cite{Zou10} argues that with probability
at least $1 - \frac{1}{poly(\tilde{\rho})}$:
$$ \TNorm{ \matA \matB  - \tilde{\matA} \tilde{\matB}  } \le \epsilon \TNorm{\matA} \TNorm{\matB}.$$

\paragraph{Deterministic Symmetric Multiplication.} 
Notice that the algorithm just mentioned above is also randomized. Deterministic approaches for the
same problem will be particularly important. The celebrated deterministic sparsification result of~\cite{BSS09} gives
such a bound for short-and-fat matrices with orthonormal rows. We stated this result in Section~\ref{chap316}. 
This bound can be generalized to short-and-fat matrices with any set of rows (with an additional
$O(nk^2)$ cost for the SVD of the matrix - see below), but for now let us focus on
the simple case. Let $\matV \in \R^{n \times k}$ with $n > k$ and $\matV\transp \matV = \matI_k$. Let $r > k$. Then,
deterministically in $O( n k^2 r )$:
$$ \TNorm{ \matV\transp \Omega \matS \matS\transp \Omega\transp \matV - \matV\transp \matV } \le \sqrt{ \frac{9 k}{r} } \TNorm{\matV} \TNorm{\matV}.$$
To see this, recall Lemma~\ref{lem:1set}. Take squares on the right hand side of the equation in that lemma and observe that:
$$ \sigma_i^2( \matV\transp \Omega \matS ) \le 1 + \sqrt{\frac{9k}{r}}.$$   
The result follows by using the first and the third statements of Lemma~\ref{lem:eq} with $\epsilon = \sqrt{\frac{9k}{r}}$.
To extend this to arbitrary $\matA \in \R^{n \times k}$ with $n > k$, is suffices to compute the SVD of 
$\matA$ and apply the above algorithm to the matrix containing the left singular vectors of $\matA$. More specifically,
let the SVD of $\matA$ is $\matA = \matU_{\matA} \Sigma_{\matA} \matV_{\matA}\transp$ with
$\matU_{\matA} \in \R^{n \times k}$, $\Sigma_{\matA} \in \R^{k \times k}$, and
$\matV_{\matA} \in \R^{k \times k}$. Now, consider the following derivations:
\begin{eqnarray*}
\TNorm{ \matA\transp \Omega \matS \matS\transp \Omega\transp \matA - \matA\transp \matA} &=&
 \TNorm{ \matV_{\matA} \Sigma_{\matA} \matU_{\matA}\transp \Omega \matS \matS\transp \Omega\transp \matU_{\matA} \Sigma_{\matA} \matV_{\matA}\transp 
 - \matV_{\matA} \Sigma_{\matA} \matU_{\matA}\transp \matU_{\matA} \Sigma_{\matA} \matV_{\matA}\transp} \\ &=&
\TNorm{ \matV_{\matA} \left( \Sigma_{\matA} \matU_{\matA}\transp \Omega \matS \matS\transp \Omega\transp \matU_{\matA} \Sigma_{\matA} 
 -  \Sigma_{\matA} \matU_{\matA}\transp \matU_{\matA} \Sigma_{\matA} \right) \matV_{\matA}\transp}  \\ &\le&
\TNorm{  \Sigma_{\matA} \matU_{\matA}\transp \Omega \matS \matS\transp \Omega\transp \matU_{\matA} \Sigma_{\matA} 
 -  \Sigma_{\matA} \matU_{\matA}\transp \matU_{\matA} \Sigma_{\matA}}  \\ &=&
\TNorm{  \Sigma_{\matA}  \left(\matU_{\matA}\transp \Omega \matS \matS\transp \Omega\transp \matU_{\matA} 
 -   \matU_{\matA}\transp \matU_{\matA}  \right) \Sigma_{\matA}}  \\ &\le&
 \TNorm{  \matA }  \TNorm{ \matU_{\matA}\transp \Omega \matS \matS\transp \Omega\transp \matU_{\matA}  
 -  \matU_{\matA}\transp \matU_{\matA}  } \TNorm{ \matA} 
\end{eqnarray*}
Clearly, it suffices to approximate the product $\matU_{\matA}\transp \matU_{\matA}$. Overall, in $O( nk^2 + nk^2 r )$: 
$$ \TNorm{ \matA\transp \Omega \matS \matS\transp \Omega\transp \matA - \matA\transp \matA } \le \sqrt{ \frac{9 k}{r} } \TNorm{\matA} \TNorm{\matA}.$$ 

\paragraph{Deterministic Asymmetric Multiplication.}
The above deterministic result applies to the so-called symmetric matrix multiplication problem, i.e. for multiplication involving the product $\matV\transp\matV$. A simple modification of this result suffices to extend this
to the asymmetric case. Let $\matV_1 \in \R^{n_1 \times k}$ and $\matV_2 \in \R^{n_2 \times k}$. 
Let also $\max\{n_1, n_2 \} > 2k$. For now let us assume that $\matV_1\transp \matV_1 = \matV_2\transp \matV_2 = \matI_k$
(the general case is discussed later). For simplicity, let us also assume that $n_1 = n_2 = n$
(otherwise, padding the matrix with the smaller dimension with zeros resolves this issue). Consider the
matrix $ \matA = [ \matV_1;  \matV_2 ] \in \R^{n \times 2k}$ and apply the symmetric multiplication result
(for general matrices) to this $\matA$ with $r > 2k$: 
$$ \TNorm{  
\left[
\begin{matrix}
\left( \matV_1\transp\Omega\matS\matS\transp\Omega\transp\matV_1 - \matV_1\transp\matV_1 \right) & 
\left( \matV_1\transp\Omega\matS\matS\transp\Omega\transp\matV_2 - \matV_1\transp\matV_2 \right)  \\
\left( \matV_2\transp\Omega\matS\matS\transp\Omega\transp\matV_1 - \matV_2\transp\matV_1 \right) & 
\left( \matV_2\transp\Omega\matS\matS\transp\Omega\transp\matV_2 - \matV_2\transp\matV_2 \right)
\end{matrix}
\right]
}  
\le \sqrt{ \frac{18 k}{r} } \TNorm{\matA} \TNorm{\matA} .$$
From the interlacing property of singular values:
$$ \TNorm{\matV_1\transp\Omega\matS\matS\transp\Omega\transp\matV_2 - \matV_1\transp\matV_2} \le
\TNorm{  
\left[
\begin{matrix}
\left( \matV_1\transp\Omega\matS\matS\transp\Omega\transp\matV_1 - \matV_1\transp\matV_1 \right) & 
\left( \matV_1\transp\Omega\matS\matS\transp\Omega\transp\matV_2 - \matV_1\transp\matV_2 \right)  \\
\left( \matV_2\transp\Omega\matS\matS\transp\Omega\transp\matV_1 - \matV_2\transp\matV_1 \right) & 
\left( \matV_2\transp\Omega\matS\matS\transp\Omega\transp\matV_2 - \matV_2\transp\matV_2 \right)
\end{matrix}
\right]
}.
$$
Also, $\TNorm{\matA} \le \sqrt{2}$. Overall, in $O( \max\{n_1, n_2\} k^2 r )$:
$$ \TNorm{\matV_1\transp\Omega\matS\matS\transp\Omega\transp\matV_2 - \matV_1\transp\matV_2} \le \sqrt{ \frac{72 k}{r} } \TNorm{\matV_1} \TNorm{\matV_2}.$$

To extend this to general matrices $\matA \in \R^{n_1 \times k}$, $\matB \in \R^{n_2 \times k}$,
not necessarily matrices of orthonormal columns, one needs to proceed as follows. First, 
compute the SVD of $\matA$, $\matB$ in $O(n_1 k ^2)$ and $O(n_2 k^2)$, respectively. 
Let the SVD of $\matA$ is $\matA = \matU_{\matA} \Sigma_{\matA} \matV_{\matA}\transp$ with
$\matU_{\matA} \in \R^{n_1 \times k}$, $\Sigma_{\matA} \in \R^{k \times k}$, and
$\matV_{\matA} \in \R^{k \times k}$. Let also the SVD of $\matB$ is 
$\matB = \matU_{\matB} \Sigma_{\matB} \matV_{\matB}\transp$ with
$\matU_{\matB} \in \R^{n_2 \times k}$, $\Sigma_{\matB} \in \R^{k \times k}$, and
$\matV_{\matB} \in \R^{k \times k}$. Observe that:
$$ \TNorm{ \matA\transp \Omega \matS \matS\transp \Omega\transp \matB - \matA\transp \matB} \le
\TNorm{  \matA }  \TNorm{ \matU_{\matA}\transp \Omega \matS \matS\transp \Omega\transp \matU_{\matB}  
 -  \matU_{\matA}\transp \matU_{\matB}  } \TNorm{ \matB} 
$$
%
Clearly, it suffices to approximate $\matU_{\matA}\transp \matU_{\matB}$.
Overall, in  $O(k^2(n_1 + n_2 + \max\{n_1, n_2\} r))$:
$$ \TNorm{ \matA\transp \Omega \matS \matS\transp \Omega\transp \matB - \matA\transp \matB } \le \sqrt{ \frac{72 k}{r} } \TNorm{\matA} \TNorm{\matB}.$$

\subsection*{Low-rank Matrix Approximation}         
Although in this thesis we focused on fast column-based low
rank matrix approximations, such fast low-rank approximations
can be obtained by other paths than sampling columns from the input matrix $\matA$.
Examples of this kind of approaches include \cite{Sar06,Har06,HMT} (Frobenius norm)
and~\cite{LWMRT07, MRT10, RST09, HMT} (spectral norm). The main idea in all these papers, except~\cite{Har06}, is
to compute an SVD of the matrix $\matA R$ or $R\transp \matA$, where $R$ is a random projection matrix as
we described in Section~\ref{chap319}. \cite{Har06} computes low-rank approximations based
on ideas from computational geometry. 

A fast low-rank approximation in the spectral norm with impressive approximation guarantees
appeared in~\cite{RST09}. The algorithm of~\cite{RST09}
was analyzed more carefully in~\cite{HMT}, and slightly more carefully in the current thesis;
so, Lemma~\ref{troppextension0} in the Appendix provides, to my best understanding of 
existing literature and results, the most tight analysis of the technique of~\cite{RST09}.

Existing algorithms for fast low-rank approximations in the Frobenius norm are accurate and fast. 
For example, the method of Sarlos~\cite{Sar06} computes a rank-$k$ matrix with $(1+\epsilon)$-error 
and failure probability $0 < \delta < 1$ in  
$O( mnk\epsilon^{-1} \log(\frac{1}{\delta}) + (m+n)k\epsilon^{-2}\log(\frac{1}{\delta}) )$. 
The method in~\cite{Har06} does so in $O(m n k(\epsilon^{-1} + k) \log(k/( \epsilon \delta)))$. 
Below, we contribute a new analysis of an algorithm that employs the Hadamard Transform that we 
discussed in Section~\ref{chap3110}. Our analysis delivers a $(1+\epsilon)$-error with constant
probability in time
$$ O\left( m \cdot n \cdot k \cdot \ln(k) \cdot \log(k n) \cdot \epsilon^{-1} 
+ (m+n) \cdot (k^2 \cdot \ln^2(k) \cdot \log^2(k n) \cdot \epsilon^{-2}) \right).$$
A preliminary analysis of this algorithm appeared in Theorem 11.2 of~\cite{HMT} and Theorem 1 of~\cite{NDT09}. 
\cite{HMT} gives an approximation error that is not tight; \cite{NDT09} claims a running time that, 
as far as I can understand, is not correct. More specifically, an extra $O(m n d)$ term (in the notation of~\cite{NDT09}, $d$ plays the role of $r$ in our notation) in the running time seems to be necessary. 
This is because the computation of the best rank-$k$ matrix $\Pi_{\matC, k}^{F}(\matA)$ takes at least $O(mnd)$. 
\cite{NDT09} quotes~\cite{DRVW06xx}, where this supposed to be done in $O(md^2)$,
but this connection is not clear to the author. 



\paragraph{Relative-error rank $k$ Approximation with the Hadamard Transform.}
\begin{theorem}
Fix $\matA \in \R^{m \times n}$ of rank $\rho$, target rank $k < \rho$,
and an accuracy parameter $0 < \epsilon < 1/2$. Construct a rank $k$ matrix   $\Xi \in \R^{m \times n}$ 
as follows:
\begin{algorithmic}[1]
\STATE Let $r = 200 \cdot k \cdot \ln(40 k) \cdot \log(40 k n) / \epsilon$.
\STATE Using definition~\ref{srht} construct a SRHT matrix $\Theta \in \R^{r \times n}$.
\STATE Construct the matrix $\matC = \matA \Theta\transp$.
\STATE Using the algorithm of Section~\ref{chap22} construct $\Pi_{\matC,k}^{F}(\matA)$.
\STATE Return $\Xi = \Pi_{\matC,k}^{F}(\matA) \in \R^{m \times n}$ of rank at most $k$. 
\end{algorithmic}
Then, with probability at least $0.7$:
$$ \FNormS{ \matA - \Xi } \le \left( 1 + \epsilon \right) \FNormS{\matA - \matA_k}.$$
The algorithm runs in $O\left( mn k \ln(k) \log(k n)\epsilon^{-1} + (m+n)(k^2 \ln^2(k) \log^2(k n)\epsilon^{-2}) \right)$.
\end{theorem}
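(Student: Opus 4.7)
The plan is to invoke Lemma~\ref{lem:generic} with $\matW = \Theta\transp$ and $\matC = \matA\Theta\transp$, which yields the structural bound
$$\FNormS{\matA - \Pi_{\matC,k}^{F}(\matA)} \le \FNormS{\matA-\matA_k} + \FNormS{(\matA-\matA_k)\,\Theta\transp\,(\matV_k\transp\Theta\transp)^{+}},$$
provided the rank hypothesis $\rank(\matV_k\transp\Theta\transp)=k$ is satisfied. The whole proof then reduces to showing that, with probability at least $0.7$, the residual term on the right is bounded by $\epsilon\,\FNormS{\matA-\matA_k}$.

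First, I would control the pseudo-inverse by applying Lemma~\ref{preserves} to $\matV_k\in\R^{n\times k}$ (playing the role of $\matU$, with $m$ relabeled to $n$) and the SRHT $\Theta\in\R^{r\times n}$. Setting the internal failure parameter to $\delta=0.05$ and using $r = 200\,k\ln(40k)\log(40kn)/\epsilon$, the squared singular values of $\Theta\matV_k$ all lie in $[1-\sqrt{\epsilon/25},\,1+\sqrt{\epsilon/25}]$ with probability at least $0.9$. Consequently $\matV_k\transp\Theta\transp$ has full row rank $k$ (so Lemma~\ref{lem:generic} applies) and
$$\TNormS{(\matV_k\transp\Theta\transp\Theta\matV_k)^{-1}} \le (1-\sqrt{\epsilon/25})^{-1} = 1+O(\sqrt{\epsilon}).$$

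Second, I would exploit the identity $(\matA-\matA_k)\matV_k = \bm{0}_{m\times k}$. Writing $\matE = \matA-\matA_k$ and using the closed form $(\matV_k\transp\Theta\transp)^{+} = \Theta\matV_k\,(\matV_k\transp\Theta\transp\Theta\matV_k)^{-1}$ valid under the rank condition, one obtains
$$\matE\,\Theta\transp\,(\matV_k\transp\Theta\transp)^{+} \;=\; \matE\,(\Theta\transp\Theta-\matI_n)\,\matV_k\,(\matV_k\transp\Theta\transp\Theta\matV_k)^{-1},$$
since $\matE\matV_k = 0$. Spectral submultiplicativity then gives
$$\FNormS{\matE\,\Theta\transp\,(\matV_k\transp\Theta\transp)^{+}} \;\le\; \FNormS{\matE\,(\Theta\transp\Theta-\matI_n)\,\matV_k}\cdot\TNormS{(\matV_k\transp\Theta\transp\Theta\matV_k)^{-1}}.$$
The key remaining inequality is an approximate-matrix-multiplication bound for the SRHT of the form $\EE{\FNormS{\matE\,\Theta\transp\Theta\,\matV_k}} = O\!\left(\tfrac{k}{r}\right)\FNormS{\matE}$ (possibly with poly-logarithmic overhead that is absorbed in our choice of $r$). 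This follows from the standard two-step argument: the sign-flips $\matD$ and the Hadamard matrix $\matH$ equalize the row norms of $\matD\matH\matV_k$ (Lemma~\ref{lem:HU}), after which the subsampling-plus-rescaling step $\matS\transp\Omega\transp$ implements a $SubspaceSampling$-style AMM estimator with the $p_i=1/m$ probabilities justified as in the remark preceding Lemma~\ref{preserves}.

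Finally, a union bound combines the subspace-embedding event (probability $\ge 0.9$) with the Markov-type event $\FNormS{\matE\,\Theta\transp\Theta\,\matV_k} \le 5\cdot O(k/r)\FNormS{\matE}$ (probability $\ge 0.8$), yielding probability $\ge 0.7$ that the residual is $\le \epsilon\,\FNormS{\matA-\matA_k}$; the slack in the $200\ln(40k)\log(40kn)/\epsilon$ definition of $r$ absorbs the constants and logarithmic terms from the AMM estimate and the $1+O(\sqrt{\epsilon})$ factor from the pseudo-inverse bound. The running time is dominated by (i) forming $\matC=\matA\Theta\transp$ through $m$ applications of the fast Hadamard transform, costing $O(mn\log r)$ via Proposition~\ref{fast}, and (ii) computing $\Pi_{\matC,k}^{F}(\matA)$ in $O(mnr+(m+n)r^{2})$ via Lemma~\ref{lem:bestF}; plugging in $r = \Theta\!\left(k\ln(k)\log(kn)/\epsilon\right)$ yields the claimed complexity. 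The main obstacle is pinning down the correct SRHT-AMM inequality with usable constants — the log factors in $r$ are precisely what buys us enough room to absorb the $\log(40kn)$ overhead that Lemma~\ref{lem:HU} forces on the row norms of $\matD\matH\matV_k$.
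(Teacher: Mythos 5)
Your proposal is correct and follows the paper's skeleton — Lemma~\ref{lem:generic} with $\matW=\Theta\transp$, the subspace-embedding bound of Lemma~\ref{preserves} with $\delta=0.05$, and the DKM-style approximate-matrix-multiplication bound on $\FNormS{\matE\,\Theta\transp\Theta\,\matV_k}$ with the $p_i=1/n$ probabilities justified by Lemma~\ref{lem:HU} — but you assemble the residual term differently, and your route is the cleaner of the two. The paper writes $(\matV_k\transp\Theta\transp)^+=(\matV_k\transp\Theta\transp)\transp+\bigl[(\matV_k\transp\Theta\transp)^+-(\matV_k\transp\Theta\transp)\transp\bigr]$, applies $\FNormS{\matX+\matY}\le 2\FNormS{\matX}+2\FNormS{\matY}$, and therefore needs a \emph{third} probabilistic event, namely $\FNormS{\matE\,\Theta\transp}\le 10\,\FNormS{\matE}$ via Lemma~\ref{lem:fnorm} and Markov, to control the cross term against the perturbation $\Sigma_{\matX}^{-1}-\Sigma_{\matX}$. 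You instead use the exact identity $(\matV_k\transp\Theta\transp)^+=\Theta\matV_k(\matV_k\transp\Theta\transp\Theta\matV_k)^{-1}$ (valid once the embedding guarantees full row rank) together with $\matE\matV_k=\bm{0}$, which collapses the residual to the single AMM term times $\TNormS{(\matV_k\transp\Theta\transp\Theta\matV_k)^{-1}}$. This saves one union-bound event, avoids the factor-of-$2$ splitting, and yields a tighter constant; the only slip is cosmetic — the squared spectral norm of the inverse Gram matrix is bounded by $(1-\sqrt{\epsilon/25})^{-2}$ rather than $(1-\sqrt{\epsilon/25})^{-1}$, which is still $1+O(\sqrt{\epsilon})$ and harmless. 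Your probability accounting ($0.1$ for the embedding, $0.2$ for the Markov step on the AMM term, total failure $0.3$) and the running-time analysis match the paper's.
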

\begin{proof}
We first comment on running time. Step 3 takes $O( m n \log(r) )$ (Lemma~\ref{fast}).
Step 4 takes $O( mnr + (m+n)r^2 )$ (from Lemma~\ref{lem:bestF}). Our choice of $r$ gives
the overal running time. We continue by manipulating the term $\FNormS{ \matA - \Xi }$.
We would like to apply Lemma~\ref{lem:generic} with $\matW = \Theta\transp$ and $\xi=F$.
First, notice that Lemma~\ref{preserves} gives:
$$1 -  \sqrt{\frac{8 k \ln(2 k / \delta ) \log(40 k n)}{ r }}  
\leq  \sigma_i^2( \Theta \matV_k )
\leq 1+ \sqrt{\frac{8  k \ln(2 k/\delta) \log(40 k n)}{ r }}.
$$
Now, our choice of $r$ implies ($\delta=0.05$) that w.p. at least $0.9$:
 $$1 -  \frac{\sqrt{\epsilon}}{\sqrt{25}}
\leq  \sigma_i^2( \Theta \matV_k )
\leq 1+ \frac{\sqrt{\epsilon}}{\sqrt{25}}.
$$
The assumption on $\epsilon < 1/2$ and the left hand side of this inequality imply that $\rank(\matV_k\transp \Theta\transp) = k$;
so, we can apply Lemma~\ref{lem:generic} (recall that $\Xi = \Pi_{\matC,k}^{F}(\matA)$):
$$
\FNormS{\matA - \Xi } \le \FNormS{\matA-\matA_k} + \FNormS{(\matA-\matA_k) \Theta\transp (\matV_k\transp \Theta\transp)^+}.
$$
We will return to this generic equation later. First, we
prove three results of independent interest.

\paragraph{First result of independent interest.}
Recall that by Lemma \ref{preserves} and our choice of $r$, for all $i=1,...,k$ and w.p. $0.9$:
$1 - \frac{\sqrt{\epsilon}}{\sqrt{25}}  \leq  \sigma_i^2(\matV_k\transp \Theta\transp)   \leq 1 + \frac{\sqrt{\epsilon}}{\sqrt{25}}.$
Let $\matX = \matV_k\transp \Theta\transp \in \R^{k \times r}$ with SVD:
$\matX = \matU_{\matX} \Sigma_{\matX} \matV_\matX\transp$.
Here,  $\matU_{\matX} \in \R^{k \times k}$, 
$\Sigma_{\matX} \in \R^{k \times k}$, and $\matV_\matX \in \R^{r \times k}$.
By taking the SVD of $\matX^+$ and $\matX\transp$:
$$
\TNorm{ (\matV_k\transp \Theta\transp)^+ - (\matV_k\transp \Theta\transp)\transp } = 
\TNorm{ \matV_{\matX} \Sigma_{\matX}^{-1} \matU_{\matX}\transp - \matV_{\matX}\Sigma_{\matX} \matU_{\matX}\transp } = 
= \TNorm{ \Sigma_{\matX}^{-1} - \Sigma_{\matX} },$$ 
since $\matV_{\matX}$ and
$\matU_{\matX}\transp $ can be dropped without changing any unitarily
invariant norm. 
Let $\matY = \Sigma_{\matX}^{-1} - \Sigma_{\matX} \in \R^{k \times k}$ be diagonal;
Assuming that, for all
$i=1,...,k$, $\tau_i(\matY)$ denotes the $i$-th diagonal element
of $\matY$:
$ \tau_i(\matY)  = \frac{ 1 - \sigma_i^2(\matX)  }{ \sigma_{i}(\matX) }.$
Since $\matY$ is a diagonal matrix:
\[\TNorm{ \matY }\ =\ \max_{1 \leq i \leq k} \abs{\tau_i(\matY)} \ =\ 
\max_{1 \leq i \leq k} \frac{ \abs{ 1 - \sigma_i^2(\matX)}}{ \sigma_{i}(\matX) } \le \frac{\frac{\sqrt{\epsilon}}{\sqrt{25}}}{\sqrt{1-\frac{\sqrt{\epsilon}}{\sqrt{25}}}} .\]
The inequality follows by using the bounds for $\sigma_{i}^2(\matX)$ from above. 
The failure probability is $0.1$ because the bounds for $\sigma_{i}^2(\matX)$
fail with this probability.  Overall, we proved that
$$
\TNorm{ (\matV_k\transp \Theta\transp)^+ - (\matV_k\transp \Theta\transp)\transp } 
\le \frac{\frac{\sqrt{\epsilon}}{\sqrt{25}}}{\sqrt{1-\frac{\sqrt{\epsilon}}{\sqrt{25}}}}
$$

\paragraph{Second result of independent interest.}
Consider the term: $\FNormS{(\matA-\matA_k) \Theta\transp \Theta \matV_k }$. We would like to upper
bound this term. Recall that $\Theta\transp = \matH \matD \Omega \matS \in \R^{n \times r}$. 
Eqn.~(4) of Lemma 4 of \cite{DKM06a} gives a result for the above matrix-multiplication-type term
and any set of probabilities $p_1, p_2,...,p_n$ (set $\matE = (\matA-\matA_k) \matH \matD$, $\matZ = \matD\transp \matH\transp \matV_k$) :
$$ \Expect{ \FNormS{ (\matA-\matA_k) \matH \matD \matD\transp \matH\transp \matV_k - 
(\matA-\matA_k) \matH \matD \Omega \matS \matS\transp \Omega\transp \matD\transp \matH\transp \matV_k } }$$ 
$$ \le
\sum_{i=1}^{n} \frac{ \TNormS{\matE^{(i)}} \TNormS{ \matZ_{(i)} } }{r p_i} - \frac{1}{r} \FNormS{\matE \matZ}. $$
First, notice that $\matE \matZ = \bm{0}_{m \times k}$. Our choice of $p_i$'s is:
$$ p_i = \frac{1}{n}  
     \geq \frac{1}{2 \log(40 k n)}  \frac{ \TNormS{ \left( \matD \matH\matV_k \right)_{(i)} } }{ k }.
$$
By using this inequality and rearranging:
$$ \Expect{ \FNormS{ (\matA-\matA_k) \Theta\transp \Theta \matV_k } } \le
\frac{2 k \log(40 k n)}{r} \FNormS{(\matA-\matA_k) \matH \matD} = \frac{2 k \log(40 k n)}{r} \FNormS{\matA-\matA_k},$$
since $\matH \matD$ can be dropped without changing the Frobenius norm. 
Finally, apply Markov's inequality to the random variable 
$x = \FNormS{(\matA-\matA_k) \Theta\transp \Theta \matV_k }$ to get that with probability $0.9$
$$ \FNormS{ (\matA-\matA_k) \Theta\transp \Theta \matV_k }  \le \frac{20 k \log(40 k n)}{r} \FNormS{\matA-\matA_k},$$

\paragraph{Third result of independent interest.} We would like to compute an upper bound for the
term $\FNormS{(\matA-\matA_k) \Theta\transp}$. Replace $\Theta\transp = \matH \matD \Omega \matS \in \R^{n \times r}$.
Then, Lemma~\ref{lem:fnorm} on the random variable $x = \FNormS{(\matA-\matA_k) \Theta\transp}$ implies that with
probability $0.9$:
$$ \FNormS{(\matA-\matA_k) \Theta\transp} \le 10 \FNormS{(\matA-\matA_k) \matH \matD}. $$
Notice that $\matH \matD$ can be dropped without changing the Frobenius norm; so, w.p. $0.9$:
$$ \FNormS{(\matA-\matA_k) \Theta\transp} \le 10 \FNormS{\matA-\matA_k}. $$

\paragraph{Back to the generic equation.} Equipped with the above bounds, 
we are ready to conclude the proof of the theorem. We continue by manipulating
our generic equation as follows:
$
\FNormS{\matA - \Xi } \le
$
\vspace{-.2in}
\begin{eqnarray*}
&\leq&  \FNormS{\matA-\matA_k} + \FNormS{(\matA-\matA_k) \Theta\transp (\matV_k\transp \Theta\transp)^+}  \\
&\leq&  \FNormS{\matA-\matA_k} + 2\FNormS{(\matA-\matA_k) \Theta\transp \Theta \matV_k} + 
2\FNormS{(\matA-\matA_k) \Theta\transp  ( (\matV_k\transp \Theta\transp)^+ - (\matV_k\transp \Theta\transp)\transp )} \\
&\leq& \FNormS{\matA-\matA_k} + 2\FNormS{(\matA-\matA_k) \Theta\transp \Theta \matV_k} + 
2\FNormS{(\matA-\matA_k) \Theta\transp} \TNormS{( (\matV_k\transp \Theta\transp)^+ - (\matV_k\transp \Theta\transp)\transp )} \\
&\leq& 
\FNormS{\matA-\matA_k} + 2 \frac{20 k \log(40 k n)}{r} \FNormS{\matA-\matA_k} + 
2\cdot 10 \FNormS{\matA-\matA_k} \frac{\frac{\epsilon}{25}}{1-\frac{\sqrt{\epsilon}}{\sqrt{25}}} \\
&\leq&  \FNormS{\matA-\matA_k} +   \left(   \frac{40}{376\ln(40)} + \frac{20/25}{1 - 1/\sqrt{2 \cdot 25}} \right) \epsilon  \FNormS{\matA-\matA_k} 
 \le (1 + 0.986 \cdot \epsilon) \FNormS{\matA-\matA_k} 
\end{eqnarray*}
The failure probability follows by a union bound on all the probabilistic events
involved in the proof of the theorem.
\end{proof}

\subsection*{Column/row based Matrix Approximation} 
This thesis focused on low-rank approximation of matrices expressed as 
$$\matA \approx \matC \matC^+\matA,$$
with $\matC$ containing columns of $\matA$. Factorizations of the form
$$\matA \approx \matC_{1} (\matC_1^+ \matA \matC_2^+) \matC_2\transp$$
are also of considerable interest. Here, $\matC_1$ contains columns of $\matA$ and $\matC_2$ contains columns
of $\matA\transp$. We refer the reader to~\cite{DMM08} for applications of such column-row approximations. 
In terms of algorithms, both relative error~\cite{DMM08} and additive error~\cite{DK03,DKM06c} bounds are available
(in the Frobenius norm).
The relative error algorithm of~\cite{DMM08} employs the randomized technique that we described in Section~\ref{chap314}.
The additive error algorithm of~\cite{DK03,DKM06c} employs the randomized technique that we described in Section~\ref{chap311}. Notice that all~\cite{DK03,DKM06c,DMM08} provide randomized algorithms with Frobenius norm bounds and the 
setting is that one is given $\matA, k$ and an oversampling parameter $r > k$; the algorithm then returns $\matC_1$, $\matC_2$ containing $r$ columns and rows, respectively. 

Deterministic column-row decompositions with spectral norm bounds by selecting exactly $r = k$ columns/rows are
described in~\cite{HP97, Pan03, MG03, GM04} and~\cite{Tyr96, Tyr00, GTZ97a, GTZ97b}. The algorithms in
\cite{Tyr96, Tyr00, GTZ97a, GTZ97b} construct $\matC_1$ by running the method that we described in Section~\ref{chap315}
on $\matV_k\transp$ and  $\matC_2$ by running the same deterministic method on $\matU_k\transp$. It is quite interesting
that in these papers appeared a preliminary version of Lemma~\ref{lem:generic} that we described in Section~\ref{chap22}.
The techniques of~\cite{HP97, Pan03, MG03, GM04} discuss the so-called Rank-Revealing LU factorization. It can be proved
(well, with a little effort) that a RRLU factorization implies a factorization of the form 
$\matA \approx \matC_{1} (\matC_1^+ \matA \matC_2^+) \matC_2\transp$ with provable approximation bounds. 
(Recall that in Section~\ref{chap32} we saw that a Rank-Revealing QR factorization implies a factorization of the 
form $\matA \approx \matC \matC^+\matA$ with provable approximation bounds.)

\subsection*{Subspace Approximation}                
The column-based low-rank matrix approximation problem
(in the Frobenius norm) studied in this thesis is as follows.
Fix $\matA \in \R^{m \times n}$, $k$, and $r > k$. The goal is to find a
set of $r$ columns from $\matA$ that contain a $k$-dimensional
subspace which is as good as the $k$-dimensional subspace of the
Singular Value Decomposition. We focused on algorithms with approximations 
of the form:
$$ \FNorm{\matA-\Pi_{\matC,k}^{F}(\matA)} \le \alpha \FNorm{\matA-\matA_k}.$$
Let $\tilde{\matU}_k \in \R^{m \times k}$ be the best $k$-subspace in the column space of $\matC$,
i.e. $ \Pi_{\matC,k}^{F}(\matA) = \tilde{\matU}_k \tilde{\matU}_k\transp\matA $.
Also, let $\matA$ has column representation $\matA = [ \a_1,\a_2...,\a_n]$ and
recall that $\matA_k = \matU_k \matU_k\transp \matA$ with $\matU_k \in \R^{m \times k}$.
The above equation is equivalent to:
$$ \left( \sum_{i=1}^{n} \left( \TNorm{ \a_i - \tilde{\matU}_k \tilde{\matU}_k\transp \a_i } \right)^2 \right)^{\frac{1}{2}}
\le
\alpha  \left( \sum_{i=1}^{n} \left( \TNorm{ \a_i - \matU_k \matU_k\transp \a_i } \right)^2 \right)^{\frac{1}{2}}. 
$$
Define the function 
$$ d(\x, \matU, p) = \left( \PNorm{ \x - \matU \matU \transp \x } \right)^2,$$
to be the $p$-norm distance of the vector $\x \in \R^m$ from the subspace $\matU \in \R^{m \times k}$. 
The $p$-norm of a vector $\x = [x_1,..., x_m]$ is defined as $\PNorm{\x} = \left( \sum_{i=1}^m |x_i|^p\right)^\frac{1}{p}$.
So, the equation corresponding to our problem can be revised as:
$$ \left( \sum_{i=1}^{n} d(\a_i, \tilde{\matU}_k, 2) \right)^\frac{1}{2} \le
\alpha \left( \sum_{i=1}^{n} d(\a_i, \matU_k, 2) \right)^\frac{1}{2}.
$$
In words, we seek a subset of $r$ points (columns from $\matA$) that contain
a $k$-subspace $\tilde{\matU}_k$ that is as good as the optimal $k$-subspace from the SVD. Replacing $p = 2$
with $p=1, 3, 4, ...$ corresponds to the more general subspace approximation problem
that received considerable attention as well. We refer the reader to~\cite{SV07,DV07,DTV09,FL11}
and references therein for background and motivation for this generalized problem. 
In the general case ($p \ne 2$), SVD does not provide an analytical expression for the 
best $k$-subspace but one still
seeks algorithms with approximations of the form:
$$ \left( \sum_{i=1}^{n} d(\a_i, \tilde{\matH}_{(p,k)}, p) \right)^\frac{1}{p} \le
\alpha \left( \sum_{i=1}^{n} d(\a_i, \matH_{(p,k)}, p) \right)^\frac{1}{p}.
$$
Here $\matH_{(p,k)} \in \R^{m \times k}$ is the best $k$-subspace with respect to the $p$ norm:
$$ \matH_{(p,k)} = \arg \min_{ \matH \in \R^{m \times k} } \sum_{i=1}^{n} d(\a_i, \matH, p); $$
and $\tilde{\matH}_{(p,k)}$ is the best $k$-subspace within the column space of $\matA$, i.e 
$\tilde{\matH}_{(p,k)} = \matC \matX \in \R^{m \times k}$, with $\matX \in \R^{r \times k}$ as:
$$ \matX = \arg \min_{ \matX \in \R^{r \times k}, \matH = \matC \matX } \sum_{i=1}^{n} d(\a_i, \matH, p). $$
It is worth mentioning that the results of this line of research for $p=2$ are not
better than the results we presented in this thesis. For example, \cite{SV07} describes
a randomized algorithm that finds a subset of $r = O(k^2 \epsilon^{-1} \log(k \epsilon^{-1}))$
points with corresponding approximation $\alpha = 1 + \epsilon$. Unhappily, this algorithm runs in
time exponential in $k \epsilon^{-1}$. \cite{DV07} presents several interesting results that can
be viewed as extensions of the methods that we presented in Sections~\ref{chap311} - \ref{chap313}
for the general subspace approximation problem. For example, Theorem 5 from~\cite{DV07} is
the analog of the additive-error algorithm that we discussed in Section~\ref{chap311}.
\begin{theorem}
Fix $\matA$, $k$, $p$, $0 < \epsilon, \delta < 1$. There is a randomized algorithm that samples
$r = O\left( k (\frac{2k}{\delta})^p \frac{k}{\delta} \log(\frac{k}{\delta}) \right)$
columns from $\matA$ such that w.p.
$1 - \frac{1}{k}$ there is a $k$-dimensional subspace 
$\matH_S \in \R^{m \times k}$ within the column space of the sampled columns:
$$ \left( \sum_{i=1}^{n} d(\a_i, \matH_S, p) \right)^{\frac{1}{p}} 
\le \left( \sum_{i=1}^{n} d(\a_i, \matH_{(p,k)}, p) \right)^{\frac{1}{p}} + 
\epsilon \left( \sum_{i=1}^{n} \PNorm{\a_i} \right)^{\frac{1}{p}}
$$
\end{theorem}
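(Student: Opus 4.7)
The plan is to generalize the additive-error Frieze--Kannan--Vempala argument of Section~\ref{chap311} to arbitrary $p$-norms. First I would sample $r$ columns from $\matA$ i.i.d.\ with replacement, using probabilities $p_i = \PNorm{\a_i}^p / \sum_{j=1}^n \PNorm{\a_j}^p$; call the resulting matrix $\matC \in \R^{m \times r}$. The candidate subspace $\matH_S$ would be defined to be the best $k$-dimensional subspace (in the sense of the $p$-norm objective) lying inside the column span of $\matC$. Its existence is guaranteed combinatorially (the minimizer over a compact space); the theorem only asserts existence and makes no efficiency claim, which sidesteps the fact that for $p \ne 2$ there is no SVD-style closed-form for the best $k$-subspace inside $\col(\matC)$.

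The core argument would then run as follows. Let $\matH^* = \matH_{(p,k)}$ be the global optimum and, for each $i$, fix a nearest point $\a_i^{\parallel} \in \matH^*$ to $\a_i$ in the $p$-norm, with residual $\a_i^{\perp} = \a_i - \a_i^{\parallel}$, so that $\sum_i \PNorm{\a_i^{\perp}}^p$ equals the optimal cost. I would show that the sampled columns span a subspace containing some $k$-dimensional $\matH_S$ approximating $\matH^*$ well enough that, for each $i$, there exists $\hat{\a}_i \in \matH_S$ whose aggregate $p$-norm distance to $\a_i^{\parallel}$ is controlled by $\epsilon \bigl(\sum_i \PNorm{\a_i}\bigr)^{1/p}$. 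A pointwise triangle inequality in the $p$-norm, summed across $i$, would then give
$$\Bigl(\sum_i d(\a_i, \matH_S, p)\Bigr)^{1/p} \le \Bigl(\sum_i \PNorm{\a_i^{\perp}}^p\Bigr)^{1/p} + \epsilon \Bigl(\sum_i \PNorm{\a_i}\Bigr)^{1/p},$$
which is the claim. The high-probability statement would be obtained by combining an $\epsilon$-net on the Grassmannian of $k$-subspaces of $\R^m$ with a Bernstein/Hoeffding tail bound for the importance-weighted estimator $(1/r)\sum_j (1/p_{i_j}) d(\a_{i_j}, \matH, p)$ of $\sum_i d(\a_i, \matH, p)$; the $(2k/\delta)^p$ factor in $r$ would trace back to the worst-case range of these summands and an application of Hoelder's inequality to control the $p$-th powers.

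The main obstacle will be the absence of an orthogonal, SVD-style parallel--perpendicular decomposition for $p \ne 2$: the Pythagorean theorem (Lemma~\ref{lem:pyth}), which drives the $p=2$ FKV analysis, has no clean analog outside $p=2$. Replacing it by the $p$-norm triangle inequality is what leaks the $(2k/\delta)^p$ factor into the sample complexity. A secondary obstacle is that the nearest-point map $\a_i \mapsto \a_i^{\parallel}$ is not linear in $\a_i$ for $p \ne 2$, so ``projection onto $\matH^*$'' cannot be treated as a single matrix multiplication; the analysis must proceed pointwise and be re-aggregated through concentration, rather than by the matrix-multiplication approximation bounds that make the $p=2$ proof short.
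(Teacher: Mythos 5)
There is no in-paper proof to compare against here: this theorem is quoted verbatim as Theorem 5 of~\cite{DV07} in the survey of related work (Section~\ref{chap35}), and the thesis offers no argument for it. Judged on its own, your proposal gets the right sampling distribution ($p_i \propto \PNorm{\a_i}^p$) and correctly diagnoses why the $p=2$ machinery (matrix Pythagoras, linearity of projection) is unavailable, but the heart of the proof is missing. The statement to be proved is a \emph{span-containment} guarantee: the column span of the sample contains a near-optimal $k$-subspace. You assert this ("I would show that the sampled columns span a subspace containing some $\matH_S$ approximating $\matH^*$\dots") and then propose to certify it by an $\epsilon$-net over the Grassmannian plus a Hoeffding bound on the importance-weighted estimator of the cost. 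That mechanism, even if the concentration went through, proves a different statement --- a uniform-convergence/coreset guarantee saying the optimizer \emph{for the weighted sample} is good for the full data --- and for $p\neq 2$ that optimizer need not lie in the span of the sampled columns, because orthogonal projection onto the span is not a contraction in the $\ell_p$ norm. Moreover a net over $k$-subspaces of $\R^m$ has cardinality exponential in $km$, so the union bound would drag the ambient dimension $m$ into $r$, and Hoeffding would drag in a $1/\epsilon^2$; neither appears in the stated $r = O\bigl(k(\tfrac{2k}{\delta})^p\tfrac{k}{\delta}\log\tfrac{k}{\delta}\bigr)$, which should be a warning sign that this is not the intended route.

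The actual argument in~\cite{DV07} is an inductive, one-dimension-at-a-time construction: one maintains a partial subspace inside the span of the samples drawn so far and shows that, as long as it does not yet approximate $\matH^* = \matH_{(p,k)}$ well, a fresh sample drawn with probability proportional to $\PNorm{\a_i}^p$ has probability at least roughly $(\delta/2k)^p\cdot(\delta/k)$ of having a large enough component in a still-missing direction to make progress. The $(2k/\delta)^p$ factor comes from this per-sample success probability (an averaging argument over $p$-th powers, which is where the exponential-in-$p$ loss enters), and the $\tfrac{k}{\delta}\log\tfrac{k}{\delta}$ factor is a coupon-collector count over the $k$ dimensions to be captured. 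If you want to salvage your outline, replace the Grassmannian-net step with this greedy/inductive spanning lemma; the triangle-inequality aggregation at the end of your sketch is then fine.
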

It would be interesting to understand whether the other techniques that we used in this 
thesis for subspace approximation in the $p=2$ norm (Frobenius norm) can be extended to 
the more general case of $p \ne 2$.

\subsection*{Element-wise Matrix Sparsification} 
Unlike most of the problems that we saw so far, where one
is asked to select a subset of columns (or rows) from
the input matrix, Element-wise Matrix Sparsification
studies matrix approximations by
sampling individual elements from the  matrix.
Let $\matA \in \R^{m \times n}$ be the input matrix,
$k$ be the target rank for the approximation, and $r \gg k$
be the number of sampled elements from $\matA$; we would like
to construct $\tilde{\matA} \in \R^{m \times n}$ with  $r$
non-zero entries such that $\XNorm{\matA - \tilde{\matA}_k }$ is
as close to $\XNorm{\matA - \matA_k}$ as possible. 
$\tilde{\matA}_k \in \R^{m \times n}$ is the best rank $k$ approximation
of $\tilde{\matA}$ computed with the SVD. Notice that $\tilde{\matA}$ is
a sparse matrix, so an SVD on this matrix will be faster than an SVD on $\matA$, which is dense.
We should note thought that the later claim can not be proved theoretically; in practice
though it is well known that algorithms for computing the SVD, such as the Lanczos iteration
or the power iteration, operate much faster on sparse matrices. So, matrix sparsification 
does offer yet another way of fast rank $k$ 
approximations to matrices. 
Matrix Sparsification pioneered by Achlioptas and McSherry in~\cite{AM01}.
We quote Theorem 3 from~\cite{AM01}, which gives an idea of the
approximations that can be achieved using this approach.
\begin{theorem}
Fix $\matA \in \R^{m \times n}$ with $76 \le m \le n$. Let $\beta = \max_{i,j}|\matA_{ij}|$.
For any $p > 0$, define $\tau_{ij} = p \frac{(\matA_{ij})^2}{\beta}$ and 
$p_{ij} = \max\{ \tau_{ij}, \sqrt{ \tau_{ij} \cdot 8^4 \log^4(n) /n } \}$.  
Let $\tilde{\matA}$ be a random $m \times n$ matrix
whose entries are i.i.d. as $\tilde{\matA}_{ij} = \matA_{ij}/p_{ij}$ w.p. $p_{ij}$, and 
$\tilde{\matA}_{ij} = 0$ w.p. $1 - p_{ij}$. Then, w.p. $1 - e^{- 19\log^4(n)}$:
$$ \TNorm{\matA - \tilde{\matA}_k} \le \TNorm{\matA - \matA_k} + 2 \frac{4 \beta \sqrt{n}}{\sqrt{p}};$$
$$ \FNorm{\matA - \tilde{\matA}_k} \le \FNorm{\matA - \matA_k} + \frac{4 b \sqrt{k n}}{\sqrt{p}} + 
2 \sqrt{ \frac{4 b \sqrt{k n}}{\sqrt{p}} \FNorm{\matA_k} }.$$
Moreover, if $\tilde{r}$ is the random variable counting the number
of non-zero elements in $\tilde{\matA}$, then 
$$\Expect{\tilde{r}} \le p \cdot \FNormS{\matA}\beta^{-2}  + 4096 \log^4(n) m.$$
\end{theorem}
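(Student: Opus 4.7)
The plan is to write $N = \tilde{\matA} - \matA \in \R^{m \times n}$, a matrix of independent mean-zero entries, and reduce every conclusion to the single high-probability estimate $\TNorm{N} \le 4\beta\sqrt{n}/\sqrt{p}$. Once this bound is in hand, the spectral claim on $\tilde{\matA}_k$ follows from Eckart--Young plus two triangle inequalities, the Frobenius claim follows from Weyl/Mirsky perturbation of singular values together with von Neumann's trace inequality, and the sparsity claim is a direct expectation calculation; so the technical core of the theorem is the operator-norm bound on $N$.

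For that bound, I would first record two estimates coming from the definition of $p_{ij}$: the variance $\Expect{N_{ij}^2} = \matA_{ij}^2(1-p_{ij})/p_{ij} \le \beta^2/p$ uniformly (from the clause $p_{ij} \ge \tau_{ij}$, reading $\tau_{ij}$ with $\beta^2$ in the denominator as the stated bounds require), and the entry-magnitude cap $|N_{ij}| \le \beta\sqrt{n}/(64\log^2(n)\sqrt{p})$ coming from the clause $p_{ij} \ge \sqrt{\tau_{ij}\cdot 8^4\log^4(n)/n}$. I would then apply the F\"uredi--Koml\'os trace-moment method to $\Expect{\trace((NN\transp)^q)}$ for an even integer $q$ of order $\log^2(n)$: the dominant Wick-pair contribution is at most $(n\beta^2/p)^q$, and the entry-magnitude cap together with $76 \le m \le n$ kills the lower-order closed-walk terms. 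A Markov application on $\TNorm{N}^{2q} \le \trace((NN\transp)^q)$ then delivers the stated tail with failure probability at most $e^{-19\log^4(n)}$.

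Given the spectral bound on $N$, the rest is mechanical. Since $\tilde{\matA}_k$ is the best rank-$k$ approximation of $\tilde{\matA}$, $\TNorm{\tilde{\matA}-\tilde{\matA}_k}\le \TNorm{\tilde{\matA}-\matA_k}\le \TNorm{N}+\TNorm{\matA-\matA_k}$, and a further triangle inequality yields $\TNorm{\matA-\tilde{\matA}_k}\le 2\TNorm{N}+\TNorm{\matA-\matA_k}$, matching the spectral claim. For the Frobenius bound I would expand $\FNormS{\matA-\tilde{\matA}_k}$, substitute $\matA=\tilde{\matA}-N$, and control the resulting cross-terms with Weyl (yielding $|\sigma_i^2(\matA)-\sigma_i^2(\tilde{\matA})|\le \TNorm{N}(2\sigma_i(\matA)+\TNorm{N})$, summed by Cauchy--Schwarz against $\FNorm{\matA_k}$) and von Neumann's trace inequality ($|\langle N,\tilde{\matA}_k\rangle|\le \sqrt{k}\,\TNorm{N}\,\FNorm{\tilde{\matA}_k}$). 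The resulting squared-norm inequality has the shape $\FNormS{\matA-\tilde{\matA}_k}\le \FNormS{\matA-\matA_k}+ Y^2 + 4 Y\FNorm{\matA_k}$ with $Y = 4\beta\sqrt{kn}/\sqrt{p}$, and subadditivity $\sqrt{a+b+c}\le \sqrt{a}+\sqrt{b}+\sqrt{c}$ recovers the announced form. The sparsity claim comes from $\Expect{\tilde{r}}=\sum_{ij}p_{ij}\le \sum_{ij}\tau_{ij}+\sum_{ij}\sqrt{\tau_{ij}\cdot 8^4\log^4(n)/n}$, whose first sum equals $p\FNormS{\matA}/\beta^2$ and whose second, after Cauchy--Schwarz against $\sum_{ij}\matA_{ij}^2\le mn\beta^2$ and the $m\le n$ hypothesis, is at most $4096\log^4(n)\,m$.

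The hard part will be the trace-moment estimate for $\TNorm{N}$. The closed-walk combinatorics is classical, but the accounting must be sharp enough to produce simultaneously the optimal $\sqrt{n}$ scaling and the $e^{-19\log^4(n)}$ failure probability, and this is precisely what forces the otherwise unusual $8^4\log^4(n)/n$ threshold in the second clause of $p_{ij}$: that clause is not merely convenient, it is exactly the entry-magnitude condition required for the moment method to close, and it is also what makes the lower bound $m\ge 76$ relevant when one controls the constant incurred by symmetrizing the rectangular matrix $N$.
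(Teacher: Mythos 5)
First, a point of reference: the thesis does not prove this statement --- it is quoted verbatim (with some typos) from Achlioptas and McSherry~\cite{AM01} in the ``Element-wise Matrix Sparsification'' discussion of Section~\ref{chap35}, so there is no in-paper proof to compare against. Your reconstruction follows exactly the route of the original source: the entire probabilistic content is the operator-norm bound $\TNorm{N}\le 4\beta\sqrt{n}/\sqrt{p}$ for $N=\tilde{\matA}-\matA$ via an improved F\"uredi--Koml\'os trace-moment argument (this is~\cite{AM01}'s Theorem~1, and you correctly identify the second clause of $p_{ij}$ as the entry-magnitude condition that makes the moment method close, and $m\ge 76$ as the threshold for their constants), after which the spectral claim is two triangle inequalities plus Eckart--Young and the Frobenius claim is a deterministic perturbation lemma. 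You also correctly diagnose that $\tau_{ij}$ must be read as $p\matA_{ij}^2/\beta^2$ for the variance bound and the sparsity count to come out as stated. Structurally the proposal is sound.

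Two concrete soft spots. (i) Your sparsity argument does not give the stated bound: Cauchy--Schwarz applied to $\sum_{ij}\sqrt{\tau_{ij}\cdot 8^4\log^4(n)/n}$ yields something of order $64\log^2(n)\,m\sqrt{pn}$, which retains a dependence on $p$ and is not dominated by $4096\log^4(n)\,m$ for large $p$. The correct (and simpler) step is the pointwise inequality $\max\{x,\sqrt{xc/n}\}\le x+c/n$ with $c=8^4\log^4(n)$: whenever the second branch is active one has $x\le c/n$ and hence $\sqrt{xc/n}\le c/n$, so $\sum_{ij}p_{ij}\le \sum_{ij}\tau_{ij}+mn\cdot c/n=p\FNormS{\matA}\beta^{-2}+4096\log^4(n)\,m$. (ii) In the Frobenius step, the Weyl/von~Neumann bookkeeping you describe produces an inequality of the shape $\FNormS{\matA-\tilde{\matA}_k}\le\FNormS{\matA-\matA_k}+3Y^2+4Y\FNorm{\matA_k}$ rather than $Y^2+4Y\FNorm{\matA_k}$ (the extra $kY^2/k$ contributions come from both the $\sum_{i\le k}(\sigma_i^2(\matA)-\sigma_i^2(\tilde{\matA}))$ term and from bounding $\FNorm{\tilde{\matA}_k}$ by $\FNorm{\matA_k}+\sqrt{k}\TNorm{N}$), so subadditivity of the square root gives $\sqrt{3}\,Y$ in the middle term instead of $Y$. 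This only costs a constant and the quoted theorem's constants are not load-bearing anywhere in the thesis, but if you want to match the statement exactly you need the sharper accounting of~\cite{AM01}'s Lemma~1.1 rather than the generic perturbation bounds. Finally, note that $p_{ij}$ can exceed $1$, in which case the sampling model (and your bound $|N_{ij}|\le|\matA_{ij}|/p_{ij}$) needs the standard convention of capping $p_{ij}$ at $1$ and keeping the entry deterministically.
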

We refer the reader to~\cite{DZ11} for an updated discussion on existing
literature for this topic. To our best knowledge, all existing algorithms 
for matrix sparsification are randomized. Recently, \cite{Zou11} presented the first
deterministic algorithm.

\subsection*{Graph Sparsification}                  
Given a dense graph $\cl G$, there are applications where it is required to
approximate the graph with another sparse graph $\cl H$, i.e. a graph with
a much smaller number of edges than the original graph. We refer the
reader to \cite{ST08b, SS08, BSS09, Zou11, KL11} and references therein
for a discussion of such applications. There are many notions of graph
sparsification. What it really means that a graph approximates another graph?
Motivated by applications on solving linear equations with Laplacian matrices,
\cite{ST08b} introduced the so-called spectral notion of graph sparsification.
Assume that $\matA_{\cl G}$ and $\matA_{\cl H}$ are the Laplacian matrices of
the graphs $\cl G$ and  $\cl H$, respectively. Then, the graph  $\cl H$ approximates
the graph  $\cl G$ ``spectrally'' if the eigenvalues of  $\matA_{\cl H}$ are within
relative error accuracy from the eigenvalues of  $\matA_{\cl G}$ ($0 <\epsilon<1$): 
$$ (1 - \epsilon) \lambda_i( \matA_{\cl G} ) \le \lambda_i( \matA_{\cl H} ) \le  (1 + \epsilon) \lambda_i( \matA_{\cl G} ).$$
\cite{SS08} showed that in order to construct such spectral graph sparsifiers, it suffices to preserve,
after column sub-sampling, the singular values of a special short-fat matrix with orthonormal rows. The number
of columns of this matrix is the same with the number of edges in the graph, so edge selection
corresponds to column selection to this special matrix. After this result in~\cite{SS08}, it is immediate
that one can use column sampling algorithms for graph sparsification. For example, \cite{SS08}
uses the randomized technique that we described in Section~\ref{chap314} to construct sparsifiers
with $r = O( m \log(m) / \epsilon^2 )$ edges; $m$ is the number of vertices in the graph. 
\cite{BSS09} uses the method of Section~\ref{chap316} to construct sparsifiers with $r = O(m / \epsilon^2)$
edges deterministically. \cite{Zou11} combines and improves upon the ideas  of~\cite{SS08,BSS09} to
construct sparsifiers with $r = O(m / \epsilon^2)$ edges deterministically but faster than~\cite{BSS09}.
Finally,~\cite{KL11} discusses the construction of such spectral sparsifiers in the streaming model
of computation.

\subsection*{Linear Equation Solving}               
In a series of papers~\cite{ST08c, KMST10, KM07, KMP10, KMP11}, there
were developed fast approximation algorithms for solving systems of Linear
Equations with Laplacian matrices. Let a graph has $m$ vertices and $n$ edges;
this corresponds to a Laplacian matrix $\matA \in \R^{m \times m}$ with $O(n)$ 
non-zero entries. Solving a system with this matrix takes $O(m^3)$. 
In a breakthrough paper~\cite{ST08c}, Spielman and Teng showed how to do that
approximately in $o(m^3)$ time. Subsequent research improved upon the work
of ~\cite{ST08c}. Currently, the best such method is~\cite{KMP11} which
solves this system approximately in 
$$O\left( n \cdot \log(m) \cdot \log(\frac{1}{\epsilon}) \cdot \log^2(\log (n)) \right).$$
The basic idea of all these methods is to sparsify the graph (i.e. sparsify
the laplacian matrix) and then use a standard method, such as the Conjugate Gradient
method, on the sparsified Laplacian. Clearly, the sparsity of the new Laplacian matrix
improves the computational efficiency of standard methods. The real meat in
this approach thought is that spectral sparsification of graphs implies relative
error approximation to the solution vector of the linear system. We refer
the reader to~\cite{ST08c} for the corresponding details. 

\subsection*{Coresets for $k$-means Clustering}   
One of the three problems studied in this thesis is
feature selection for $k$-means clustering. The
idea is to select a subset of the features and
by using only this subset obtain a partition of
the points that is as good as the partition that
would have been obtained by using all the features.
A complementary line of research~\cite{PK05, HM04, FS06, FMC07, ADK09}
approaches the $k$-means problem by sub-sampling the points
of the dataset. The idea here is to select a small subset of
the points and by using only this subset obtain a partition
for all the points that is as good as the partition that would
have been obtained by using all the points. 
\cite{PK05, HM04, FS06, FMC07, ADK09} offer algorithms for $(1+\epsilon)$
approximate partitions. Note that we were able to give only constant
factor approximations. For example, \cite{FS06} shows the existence
of an $(1+\epsilon)$-approximate coreset of size $r = O(k^3 / \epsilon^{n+1})$ 
($n$ is the number of features). \cite{FMC07} provides a coreset of size $r = poly(k, \epsilon^{-1})$.
The techniques used for all these coresets are different from the techniques we used
for feature selection. It would be interesting to understand whether the techniques from
\cite{PK05, HM04, FS06, FMC07, ADK09} are useful for feature selection as well. In particular,
it appears that there is potential to obtain relative error feature selection $k$-means algorithms
by using such approaches. 


\subsection*{Trace Approximation} 
Let $\matA \in \R^{n \times n}$ is a square matrix. The trace of $\matA$, 
denoted with $\trace(\matA) \in \R$, equals the sum of its diagonal elements. 
So, computing the trace of an explicit matrix is a simple operation which takes $O(n)$ time. 
There are applications though (see, for example, \cite{AT11}) where one needs to compute 
the trace of an implicit matrix, i.e. a matrix $f(\matA) \in \R^{n \times n}$; 
$f$ is some function on $\matA$, for example, $f(\matA) = \matA^3 + 2 \cdot \matA + \matI_n$. 
Computing the trace of $f(\matA)$, given $\matA$, is a rather expensive task. In such cases,
quick approximations to the exact value \trace(f(\matA)) are of interest. 

Avron and Toledo~\cite{AT11} wrote a very influential paper on
approximation algorithms for estimating the trace of implicit matrices. Such algorithms were known before
to perform well in several real applications. On the negative side, there was no theoretical analysis
for the performance of these algorithms. On an effort to close this theory-practice gap, 
\cite{AT11} provided a theoretical analysis of several existing trace approximation algorithms. 
Below, we present the basic idea of these algorithms and the type of approximations that
can be achieved. For simplicity, following the discussion in~\cite{AT11}, 
we assume that one is interested in estimating the trace of $\matA$. 
Replacing $\matA$ with some function of $\matA$ doesn't require any different analysis.  

The main idea of existing trace approximation algorithms is to return an approximation 
$\hat{\trace}(\matA)$ that is computed as follows:
$$ \hat{\trace}(\matA) = \frac{1}{p} \sum_{i=1}^{p} \z_i\transp \matA \z_i.$$
Here, $p >0$ is an integer; clearly, the largest we choose $p$, the better the approximation is.
Of course, choosing a large $p$ affects the running time of the method. 
The vectors $\z_i \in \R^n$ are random vectors chosen from a probability distribution. Different probability
distributions give different trace approximation algorithms. For example, Hutchinson's
method~\cite{Hut99} suggests that a specific $\z_i$ has entries where each one is chosen i.i.d as a Rademacher random
variable (each entry equals $\pm 1$ with the same probability).  
Hutchinson~\cite{Hut99} proved that his estimator is unbiased: 
$\Expect{ \z\transp \matA \z } = \trace(\matA)$;
$\z \in \R^n$ is a random variable chosen as described above. 
\cite{AT11} proved
the following bound for the approximation $\hat{\trace}(\matA)$. Let $\matA$ is a PSD symmetric matrix; then,
for any  $0< \delta, \epsilon < 1$, it suffices to choose  
$p \ge \frac{6}{\epsilon^2} \ln(2 n / \delta)$ random vectors $\z_i$ such
that w.p. $1 - \delta$:
$$ \abs{ \hat{\trace}(\matA) - \trace(\matA)} \le \epsilon \cdot \trace(\matA). $$
An immediate application of fast trace estimators is on counting triangles in a graph. 
Tsourakakis~\cite{Tso08} observed that in an undirected graph $\cl G$ with adjacency
matrix representation $\matA$, the number of triangles $T_3$ in the graph equals
$$T_3 = \frac{1}{6}\trace(\matA^3).$$ Avron in~\cite{Avr10} used his estimators from~\cite{AT11}
to quickly approximate the number of triangles in large real graphs.

\subsection*{Fiedler Vector Approximation}            

The discussion here assumes some familiarity with graphs, Laplacian matrices, and
spectral clustering. We refer the reader to~\cite{Spi10} for the necessary background.  
Fiedler~\cite{Fie73} made an outstanding contribution to the topic of clustering data. 
Given $m$ points $\x_1,...,\x_m$ in some Euclidean space of dimension $n$, \cite{Fie73} suggests
that one can obtain a $2$-clustering of the points by using the signs of the eigenvector
corresponding to the second smallest eigenvalue of the Laplacian Matrix of the Graph
that corresponds to these points (the idea of spectral clustering~\cite{SM00} basically
extends this result to any $k > 2$ partition by working with multiple eigenvectors). 
To recognize the outstanding contribution of~\cite{Fie73}, 
this eigenvector is known as Fiedler vector. The graph mentioned above has $m$ vertices, 
and for any two points $\x_i,\x_j$ the corresponding edge denotes the distance between the
points; for example, the weight
in that edge is $e^{\TNormS{\x_i - \x_j}}$. Computing this eigenvector takes $O(m^3)$ 
through the SVD of the Laplacian matrix $\matA \in \R^{m \times m}$. Since data are
getting larger and larger, faster (approximate) algorithms for the Fiedler vector are
of particular interest. 

A result of Mihail~\cite{Mih89} indicates that any vector that approximates the Rayleigh quotient
of the second smallest eigenvalue of the Laplacian matrix can be used to partition the points. 
Influenced by the result of Mihail, Spielman and Teng in~\cite{ST08c} give 
the following definition for an approximate Fiedler vector.
\begin{definition}[Approximate Fiedler Vector] For a Laplacian matrix $\matA$ and $0 < \epsilon < 1$, 
$\v \in \R^m$ is an $\epsilon$-approximate
Fiedler vector if $\v$ is orthogonal to the all-ones vector and
$$ \frac{\v \matA \v}{\v\transp \v} \le (1 + \epsilon) \frac{f \matA f}{f\transp f} = (1+\epsilon) \lambda_{n-1}(\matA).$$
Here $f \in \R^m$ denotes the Fiedler vector. 
\end{definition}
Theorem 6.2 in~\cite{ST08c} gives a randomized algorithm to compute such an $\epsilon$-approximate Fiedler vector w.p. $1 - \delta$, in time
$$ O\left( \nnz(\matA) \log^{27}(\nnz(\matA)) \log(\frac{1}{\delta}) \log(\frac{1}{\epsilon}) \epsilon^{-1}  \right).$$
This method is based on the fast solvers for linear systems with Laplacian matrices developed in the same paper.
Another method to compute an $\epsilon$-approximate Fiedler vector described recently by Trevisan in~\cite{Tre11}.
This method uses the power iteration and computes such an approximate vector with constant probability in time
$$ O\left( \nnz(\matA) \log(\nnz(\matA)/\epsilon)   \epsilon^{-1}  \right).$$

\subsection*{Data Mining Applications}  
Most of the matrix sampling algorithms described in this thesis
are motivated by applications involving the analysis of large
datasets. Mahoney~\cite{Mah10} gives a nice overview of how
such algorithms are useful in the analysis of social network data
and data arising fron human genetics applications. Column sampling
algorithms for analyzing genetics data are also described in~\cite{PZBCR07}.
A few more interesting case studies can be found, for example, in~\cite{MMD06, SXZF07, BSA08, Sgoura}.
For a comprehensive treatment of several other data applications we refer the reader to~\cite{MMDS06, MMDS08, MMDS10},
which describe the topics of three successful Meetings 
(Workshops on Algorithms for Modern Massive Data Sets) focusing 
exactly on applications of matrix sampling algorithms to data mining problems.         

\chapter{LOW-RANK COLUMN-BASED MATRIX APPROXIMATION}
\label{chap4}
\footnotetext[5]{Portions of this chapter previously appeared as:
C. Boutsidis, P. Drineas, and M. Magdon-Ismail, 
Near-Optimal Column-Based Matrix Reconstruction,
arXiv:1103.0995, 2011.}
We present our results on low-rank column-based matrix approximation. 
The objects of this problem are
the $m \times n$ matrix $\matA$ of rank $\rho$, the target rank $k < \rho$, and the number of sampled columns $k \le r \le n$. 
$\matC \in \R^{m \times r}$ 
contains $r$ columns from $\matA$ and 
$\Pi^{\xi}_{\matC,k}(\matA) \in \R^{m \times n}$
is the best rank $k$ approximation to $\matA$ (under the $\xi$-norm) within the column space of $\matC$.
Recall that, for $r > k$: $$\XNorm{\matA-\matC\matC^+\matA} \le \XNorm{\matA-\Pi^{\xi}_{\matC,k}(\matA)};$$
and when $r = k$: $\XNorm{\matA-\matC\matC^+\matA} = \XNorm{\matA-\Pi^{\xi}_{\matC,k}(\matA)}$. The goal
is to design algorithms that construct $\matC$ with ``small'' $\alpha$ and guarantee: 
$$\XNorm{\matA-\matC\matC^+\matA} \le \XNorm{\matA-\Pi^{\xi}_{\matC,k}(\matA)} \le \alpha \XNorm{\matA-\matA_k}.$$

The structure of the present chapter is as follows. In 
Sections \ref{chap41} and \ref{chap42}, we assume $r > k$
and present results for spectral norm and Frobenius norm,
respectively. Section \ref{chap43} presents our algorithms 
for the Column Subset Selection Problem ($r = k$) and a novel 
algorithm for an Interpolative Decomposition of a matrix. 
We give the proofs of all the results presented in this chapter 
in Section \ref{chap44}. All the proofs in this section are obtained by combining
Lemmas \ref{lem:genericNoSVD}, \ref{lem:generic}, \ref{tropp1}, 
and \ref{tropp2} of Section \ref{chap22} along with some results
of Section \ref{chap31}.

\section{Spectral Norm Approximation ($r > k, \xi = 2$)}\label{chap41}
By combining the deterministic exact SVD of Lemma \ref{lem:generic} in Section \ref{chap22}
along with the deterministic  technique of section \ref{chap317}, 
we get our first result on deterministic column-based 
matrix reconstruction with respect to the spectral norm. 
\begin{algorithm}[t]
\begin{framed}
\textbf{Input:} $\matA\in\R^{m\times n}$ of rank $\rho$, target rank $k < \rho$, and oversampling parameter $r > k$. \\
\noindent \textbf{Output:} \math{\matC \Omega \matS \in \R^{m \times r}} with $r$ (rescaled) columns from $\matA$.
\begin{algorithmic}[1]
\STATE Compute the matrices $\matV_k$ and $\matV_{\rho-k}$ with the SVD. 
\STATE $[\Omega, \matS] = BarrierSamplingII(\matV_k, \matV_{\rho-k},r)$ (Lemma \ref{lem:2setS} in Section \ref{chap317}).
\STATE Return $\matC = \matA \Omega \matS \in \R^{m \times r}$. 
\end{algorithmic}
\caption{Deterministic spectral norm reconstruction with $r > k$.}
\label{alg:chap41}
\end{framed}
\end{algorithm}
\begin{algorithm}[t]
\begin{framed}
\textbf{Input:} $\matA\in\R^{m\times n}$ of rank $\rho$, target rank $2 \le k < \rho$, and oversampling parameter $r > k$. \\
\noindent \textbf{Output:} \math{\matC \Omega \matS \in \R^{m \times r}} with $r$ (rescaled) columns from $\matA$.
\begin{algorithmic}[1]
\STATE $\matZ = FastSpectralSVD(\matA, k, 1)$ (Lemma \ref{tropp1} in section \ref{chap22}).
\STATE $[\Omega, \matS] = BarrierSamplingII( \matZ, \matI_n, r)$ (Lemma \ref{lem:2setS} in Section \ref{chap317}).
\STATE Return $\matC = \matA \Omega \matS \in \R^{m \times r}$. 
\end{algorithmic}
\caption{Fast spectral norm reconstruction with $r > k$.}
\label{alg:chap42}
\end{framed}
\end{algorithm}
\begin{theorem}[Deterministic spectral norm reconstruction]
\label{theorem:intro1}
Given \math{\matA\in\R^{m \times n}} of rank \math{\rho},
a target rank \math{k < \rho}, and an oversampling parameter $r > k$,
Algorithm~\ref{alg:chap41} (deterministically in
\math{T(\matV_k, \matV_{\rho-k}) + O\left(rn\left(k^2+\left(\rho-k\right)^2\right)\right)}) 
constructs $\matC\in\R^{m \times r}$:
$$
\TNorm{\matA - \Pi^2_{\matC,k}(\matA)} \leq
\left(1 + \frac{1 + \sqrt{ (\rho-k)/r }}{ 1 - \sqrt{ k/r }  }  \right)
\TNorm{\matA - \matA_k}=
O\left(\sqrt{{\rho}/{r}}\right)
\TNorm{\matA - \matA_k}.$$
\end{theorem}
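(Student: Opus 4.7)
The plan is to directly follow the proof template sketched in Section~\ref{chap15}: apply the generic bound of Lemma~\ref{lem:generic} with the sampling/rescaling product $\matW = \Omega \matS$ and $\xi = 2$, then control the two factors of the error term using the spectral guarantees of $BarrierSamplingII$ from Lemma~\ref{lem:2setS}. First I would verify the rank hypothesis of Lemma~\ref{lem:generic}, namely $\rank(\matV_k\transp \Omega \matS) = k$: this follows immediately from the lower bound $\sigma_k(\matV_k\transp \Omega \matS) \ge 1 - \sqrt{k/r} > 0$ (valid since $r > k$) that $BarrierSamplingII$ delivers on the first argument $\matV_k$.

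Next, Lemma~\ref{lem:generic} gives
\[
\TNormS{\matA - \Pi_{\matC,k}^{2}(\matA)} \le \TNormS{\matA - \matA_k} + \TNormS{(\matA - \matA_k)\,\Omega \matS\,(\matV_k\transp \Omega \matS)^+}.
\]
I would rewrite the residual using the SVD as $\matA - \matA_k = \matU_{\rho-k}\,\Sigma_{\rho-k}\,\matV_{\rho-k}\transp$, and then apply spectral submultiplicativity together with $\TNorm{\matU_{\rho-k}} = 1$ and $\TNorm{\Sigma_{\rho-k}} = \TNorm{\matA - \matA_k}$ to obtain
\[
\TNorm{(\matA - \matA_k)\,\Omega \matS\,(\matV_k\transp \Omega \matS)^+} \le \TNorm{\matA - \matA_k}\cdot \TNorm{\matV_{\rho-k}\transp \Omega \matS}\cdot \TNorm{(\matV_k\transp \Omega \matS)^+}.
\]
Now I would invoke Lemma~\ref{lem:2setS} (the key step): because $BarrierSamplingII$ is run on the two orthonormal matrices $\matV_k$ and $\matV_{\rho-k}$, it simultaneously guarantees $\TNorm{\matV_{\rho-k}\transp \Omega \matS} \le 1 + \sqrt{(\rho-k)/r}$ and $\sigma_k(\matV_k\transp \Omega \matS) \ge 1 - \sqrt{k/r}$; the latter yields $\TNorm{(\matV_k\transp \Omega \matS)^+} \le (1 - \sqrt{k/r})^{-1}$.

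Plugging these bounds back in, pulling $\TNormS{\matA - \matA_k}$ out as a common factor, and taking square roots using the subadditivity $\sqrt{a+b} \le \sqrt{a} + \sqrt{b}$ for nonnegative $a,b$ gives exactly
\[
\TNorm{\matA - \Pi_{\matC,k}^{2}(\matA)} \le \left(1 + \frac{1 + \sqrt{(\rho-k)/r}}{1 - \sqrt{k/r}}\right)\TNorm{\matA - \matA_k},
\]
which is the claimed bound; the asymptotic $O(\sqrt{\rho/r})$ form follows by elementary estimation. The running time is the sum of $T(\matV_k,\matV_{\rho-k})$ for the SVD in Step~1 and the $O(rn(k^2 + (\rho-k)^2))$ cost of $BarrierSamplingII$ from Lemma~\ref{lem:2setS}. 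The main obstacle is really not in this chain of inequalities but in having the right dual-set sparsification primitive: the whole argument hinges on simultaneously bounding the smallest singular value of $\matV_k\transp \Omega \matS$ from below and the spectral norm of $\matV_{\rho-k}\transp \Omega \matS$ from above with a \emph{single} pair $(\Omega,\matS)$, which is precisely what Lemma~\ref{lem:2setS} provides and why the authors developed it in Section~\ref{chap317}; using the one-set Lemma~\ref{lem:1set} alone would not suffice.
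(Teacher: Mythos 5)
Your proposal is correct and follows essentially the same route as the paper's own proof: the rank check via $\sigma_k(\matV_k\transp\Omega\matS)\ge 1-\sqrt{k/r}$, the generic bound of Lemma~\ref{lem:generic} with $\matW=\Omega\matS$, spectral submultiplicativity applied to $\matU_{\rho-k}\Sigma_{\rho-k}\matV_{\rho-k}\transp\Omega\matS(\matV_k\transp\Omega\matS)^+$, the dual-set guarantees of Lemma~\ref{lem:2setS}, and a final square root (the paper's square root of $1+c^2$ being exactly your $\sqrt{a+b}\le\sqrt{a}+\sqrt{b}$ step). Your closing remark about why the single pair $(\Omega,\matS)$ from the dual-set sparsifier is the essential ingredient matches the paper's stated motivation as well.
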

%
\noindent The asymptotic multiplicative error of the above theorem matches a lower bound that we prove in Section~\ref{sec:lower}. This is the first spectral reconstruction algorithm with asymptotically optimal guarantees for arbitrary $r > k$. Previous work presented almost near-optimal algorithms only for $r=k$~\cite{GE96}. We note that in the proof of this theorem in Section~\ref{chap44}, we will present an algorithm that achieves a slightly worse error bound (essentially replacing $\rho$ by $n$ in the accuracy guarantee), but only needs to compute the top $k$ right singular vectors of 
$\matA$ (i.e., the matrix $\matV_k$). 

Next, we describe an algorithm that gives (up to a tiny constant) the same
bound as Theorem \ref{theorem:intro1} but is considerably more efficient. 
In particular, there is no need to compute the right singular vectors of $\matA$;
all we need are approximations that we can obtain through Lemma \ref{tropp1} of section \ref{chap22}.
Randomization is the penalty to be paid for the improved computational efficiency. 
\begin{theorem}[Fast spectral norm reconstruction]
\label{thmFast1}
Given \math{\matA\in\R^{m\times n}} of rank $\rho$, a target rank $2\leq k < \rho$, 
and an oversampling parameter $r > k$, Algorithm~\ref{alg:chap42}
(randomly in $O\left(mnk\log\left( k^{-1}\min\{m,n\}\right)+nrk^2\right)$) 
constructs $\matC\in\R^{m \times r}$ such that:
$$  \Expect{ \TNorm{\matA - \Pi_{\matC,k}^2(\matA)} } \leq \left(\sqrt{2}+1\right) \left(1 + \frac{1 + \sqrt{ n/r }}{1 - \sqrt{ k/r }}\right)
\TNorm{\matA-\matA_k}=O\left(\sqrt{{n}/{r}}\right)\TNorm{\matA-\matA_k}.$$
\end{theorem}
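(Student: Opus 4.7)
The plan is to compose three ingredients already established in the excerpt: the randomized approximate factorization of Lemma~\ref{tropp1}, the dual-set deterministic sampler of Lemma~\ref{lem:2setS}, and the generic column reconstruction bound of Lemma~\ref{lem:genericNoSVD}. Morally, $FastSpectralSVD$ will play the role that the exact SVD plays in Theorem~\ref{theorem:intro1}, and choosing the auxiliary matrix inside $BarrierSamplingII$ to be $\matI_n$ lets me bound the spectral norm of the rescaled sampling matrix $\matW = \Omega\matS$ without any reference to $\matV_{\rho-k}$; this is precisely what produces the $\sqrt{n/r}$ factor (rather than $\sqrt{\rho/r}$ as in Theorem~\ref{theorem:intro1}) in the stated bound.

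First I would run $\matZ = FastSpectralSVD(\matA, k, 1)$ to obtain $\matZ \in \R^{n \times k}$ with $\matZ\transp\matZ = \matI_k$ and a factorization $\matA = (\matA\matZ)\matZ\transp + \matE$ satisfying $\matE\matZ = \bm{0}_{m \times k}$ and $\Expect{\TNorm{\matE}} \le (\sqrt{2}+1)\TNorm{\matA-\matA_k}$. Conditional on $\matZ$, I would then invoke $[\Omega,\matS] = BarrierSamplingII(\matZ, \matI_n, r)$ (Lemma~\ref{lem:2setS} with $\ell = n$), which deterministically delivers
\[
\sigma_k(\matZ\transp\Omega\matS) \ge 1 - \sqrt{k/r} > 0, \qquad \TNorm{\Omega\matS} = \TNorm{\matI_n\transp\Omega\matS} \le 1 + \sqrt{n/r}.
\]
The first inequality ensures $\rank(\matZ\transp(\Omega\matS)) = k$, so the rank hypothesis of Lemma~\ref{lem:genericNoSVD} is met with $\matW = \Omega\matS$ and $\matC = \matA\matW = \matA\Omega\matS$.

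Next I would plug into Lemma~\ref{lem:genericNoSVD} and follow with spectral submultiplicativity to get
\[
\TNormS{\matA - \Pi^2_{\matC,k}(\matA)} \le \TNormS{\matE} + \TNormS{\matE}\cdot\TNormS{\matW}\cdot\TNormS{(\matZ\transp\matW)^+}.
\]
Using $\TNorm{(\matZ\transp\matW)^+} = 1/\sigma_k(\matZ\transp\matW) \le 1/(1-\sqrt{k/r})$, the two barrier bounds above, and the elementary inequality $\sqrt{1+x^2} \le 1+x$ for $x \ge 0$, I obtain, conditionally on $\matZ$,
\[
\TNorm{\matA - \Pi^2_{\matC,k}(\matA)} \le \TNorm{\matE}\left(1 + \frac{1+\sqrt{n/r}}{1-\sqrt{k/r}}\right).
\]
Since the parenthesised factor is a deterministic constant depending only on $n, k, r$, taking expectations and substituting the Lemma~\ref{tropp1} bound $\Expect{\TNorm{\matE}} \le (\sqrt{2}+1)\TNorm{\matA-\matA_k}$ completes the accuracy claim.

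The accuracy half is therefore a direct three-lemma composition, and I expect the only real obstacle to be the runtime accounting. The factorization step costs $O(mnk\log(k^{-1}\min\{m,n\}))$ via Lemma~\ref{tropp1} with $\epsilon = 1$, matching the first term in the theorem. However, the naive cost of $BarrierSamplingII$ on the pair $(\matZ,\matI_n)$ is $O(rn(k^2+\ell^2)) = O(rn(k^2+n^2))$, which would blow the claimed budget of $O(nrk^2)$. The plan to recover the stated runtime is to observe that when the auxiliary matrix is $\matI_n$ the ``upper barrier'' state that Lemma~\ref{lem:2setS} maintains for $\matU$ decouples across coordinates — each column of $\matI_n$ is a distinct standard basis vector, so the potential function associated with $\matU$ reduces to tracking a single scalar per coordinate and can be updated in $O(1)$ per sampled column rather than $O(n^2)$. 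Pushing this observation through the appendix proof of Lemma~\ref{lem:2setS} should drop the sampler to $O(nrk^2)$ and yield the total time stated in the theorem; verifying this shortcut is the step I expect to require the most care.
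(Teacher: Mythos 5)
Your proposal is correct and follows essentially the same route as the paper: factor via Lemma~\ref{tropp1} with $\epsilon=1$, sample with $BarrierSamplingII(\matZ,\matI_n,r)$, apply Lemma~\ref{lem:genericNoSVD} with $\matW=\Omega\matS$, take square roots, and then expectations over $\matE$. You also correctly anticipated the one subtle point — that the naive $O(rn(k^2+n^2))$ cost of the dual-set sampler must be reduced by exploiting the standard-basis structure of $\matI_n$ — and your proposed fix is exactly the one the paper carries out in the appendix (the eigenvalues of $\matB_\tau$ are the weights themselves, so each evaluation of $U$ costs $O(1)$ after an $O(n)$ per-iteration update of $\phiu$, giving $O(nrk^2)$ total).
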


\section{Frobenius Norm Approximation ($r > k, \xi = F$) }\label{chap42}

By combining the deterministic exact SVD of Lemma \ref{lem:generic} of section \ref{chap22}
along with 
the deterministic technique of section \ref{chap318}, 
we get a deterministic column-based matrix reconstruction algorithm in the Frobenius norm. 
\begin{algorithm}[t]
\begin{framed}
\textbf{Input:} $\matA\in\R^{m\times n}$ of rank $\rho$, target rank $k < \rho$, and oversampling parameter $r > k$. \\
\noindent \textbf{Output:} \math{\matC \Omega \matS \in \R^{m \times r}} with $r$ (rescaled) columns from $\matA$.
\begin{algorithmic}[1]
\STATE Compute the matrix $\matV_k$  with the SVD. 
\STATE $[\Omega, \matS] = BarrierSamplingIII( \matV_k, \matA - \matA \matV_k \matV_k\transp, r)$ 
(Lemma \ref{lem:2setF} in Section \ref{chap318}).
\STATE Return $\matC = \matA \Omega \matS \in \R^{m \times r}$. 
\end{algorithmic}
\caption{Deterministic Frobenius norm reconstruction with $r > k$.}
\label{alg:chap43}
\end{framed}
\end{algorithm}
\begin{theorem}[Deterministic Frobenius norm reconstruction]
\label{theorem:intro2}
Given \math{\matA\in\R^{m\times n}} of rank $\rho$, a target rank $k <\rho$, 
and an oversampling parameter $r > k$, Algorithm~\ref{alg:chap43}
(deterministically in \math{T(\matV_k) + O\left(mn + nrk^2\right)})
constructs
$\matC\in\R^{m \times r}$ 
such that:
$$ \FNorm{\matA - \Pi_{\matC,k}^F(\matA)} \leq
\sqrt{ 1 + \frac{1}{ \left( 1 - \sqrt{ k/r } \right)^2 } } \FNorm{\matA - \matA_k} \le \sqrt{ 2 + O\left(  \frac{k}{r}  \right)  }
\FNorm{\matA - \matA_k}.$$
\end{theorem}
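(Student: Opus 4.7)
The plan is to string together the two tools already isolated in the thesis: the generic reconstruction bound of Lemma~\ref{lem:generic} and the dual-set spectral-Frobenius sparsifier of Lemma~\ref{lem:2setF}. Concretely, I would feed into $BarrierSamplingIII$ the orthonormal matrix $\matV_k\in\R^{n\times k}$ of top right singular vectors (so that $\matV_k\transp\matV_k=\matI_k$) together with the residual matrix $\matA-\matA\matV_k\matV_k\transp=\matA-\matA_k\in\R^{m\times n}$. Lemma~\ref{lem:2setF} then outputs, in $O(rnk^2+mn)$ time, a single pair $(\Omega,\matS)$ that \emph{simultaneously} guarantees
$$\sigma_k(\matV_k\transp\Omega\matS)\ge 1-\sqrt{k/r}\qquad\text{and}\qquad \FNorm{(\matA-\matA_k)\Omega\matS}\le\FNorm{\matA-\matA_k}.$$

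Next, since $r>k$ makes $1-\sqrt{k/r}>0$, the first guarantee implies $\rank(\matV_k\transp\Omega\matS)=k$, so Lemma~\ref{lem:generic} applies with $\matW=\Omega\matS$, giving
$$\FNormS{\matA-\Pi_{\matC,k}^{F}(\matA)}\;\le\;\FNormS{\matA-\matA_k}+\FNormS{(\matA-\matA_k)\,\Omega\matS\,(\matV_k\transp\Omega\matS)^+}.$$
I would then peel off the second summand via spectral submultiplicativity $\FNorm{\matX\matY}\le\FNorm{\matX}\TNorm{\matY}$ combined with the identity $\TNorm{\matM^+}=1/\sigma_k(\matM)$ for full column rank $\matM$:
$$\FNorm{(\matA-\matA_k)\,\Omega\matS\,(\matV_k\transp\Omega\matS)^+}\;\le\;\frac{\FNorm{(\matA-\matA_k)\Omega\matS}}{\sigma_k(\matV_k\transp\Omega\matS)}\;\le\;\frac{\FNorm{\matA-\matA_k}}{1-\sqrt{k/r}}.$$
Substituting, squaring is no longer needed, and taking a square root yields the advertised $\sqrt{1+(1-\sqrt{k/r})^{-2}}\FNorm{\matA-\matA_k}$; a Taylor expansion of $(1-\sqrt{k/r})^{-2}$ around $k/r=0$ produces the weaker but more readable $\sqrt{2+O(k/r)}\FNorm{\matA-\matA_k}$ stated in the theorem.

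For the running time: producing $\matV_k$ takes $T(\matV_k)$; materializing the residual $\matA-\matA_k$ from the SVD factors costs $O(mn)$ once $\matV_k$ is in hand (the $mnk$ term is absorbed into $T(\matV_k)$); and $BarrierSamplingIII$ on inputs of shapes $n\times k$ and $m\times n$ for $r$ rounds contributes $O(rnk^2+mn)$ by Lemma~\ref{lem:2setF}. These sum to $T(\matV_k)+O(mn+nrk^2)$, matching the stated complexity.

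There is really no serious obstacle here: the proof is a clean composition of Lemma~\ref{lem:generic} and Lemma~\ref{lem:2setF}. The one subtlety worth emphasizing is that we need the \emph{same} pair $(\Omega,\matS)$ to control both the small-singular-value term $\sigma_k(\matV_k\transp\Omega\matS)$ (so that the pseudo-inverse does not blow up) and the Frobenius-mass term $\FNorm{(\matA-\matA_k)\Omega\matS}$ (so that the residual is not amplified). This is exactly what the dual-set sparsification of Lemma~\ref{lem:2setF} provides—two independent sampling passes would not suffice—and it is the reason Algorithm~\ref{alg:chap43} invokes $BarrierSamplingIII$ rather than two separate one-set procedures.
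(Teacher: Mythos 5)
Your proposal is correct and follows essentially the same route as the paper: invoke $BarrierSamplingIII(\matV_k,\matA-\matA\matV_k\matV_k\transp,r)$, apply Lemma~\ref{lem:generic} with $\matW=\Omega\matS$ (the rank condition holding since $\sigma_k(\matV_k\transp\Omega\matS)\ge 1-\sqrt{k/r}>0$), and then split the error term by spectral submultiplicativity into $\FNorm{(\matA-\matA_k)\Omega\matS}\cdot\TNorm{(\matV_k\transp\Omega\matS)^+}$, each factor controlled by Lemma~\ref{lem:2setF}. Your closing remark—that a single pair $(\Omega,\matS)$ must simultaneously control both factors, which is exactly what the dual-set sparsifier provides—is precisely the point of the construction.
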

By combining the randomized approximate SVD of Lemma \ref{tropp2} of section \ref{chap22} along with 
the deterministic technique of section \ref{chap318}, 
we get a considerably faster randomized column-based matrix reconstruction algorithm in the Frobenius norm. 
\begin{algorithm}[t]
\begin{framed}
\textbf{Input:} $\matA\in\R^{m\times n}$ of rank $\rho$, target rank $2 \le k < \rho$, and oversampling parameter $r > k$. \\
\noindent \textbf{Output:} \math{\matC \Omega \matS \in \R^{m \times r}} with $r$ (rescaled) columns from $\matA$.
\begin{algorithmic}[1]
\STATE $\matZ = FastFrobeniusSVD(\matA, k, 0.1)$ (Lemma \ref{tropp2} in section \ref{chap22}).
\STATE $[\Omega, \matS] = BarrierSamplingIII( \matZ, \matA - \matA \matZ \matZ\transp, r)$ 
(Lemma \ref{lem:2setF} in Section \ref{chap318}).
\STATE Return $\matC = \matA \Omega \matS \in \R^{m \times r}$. 
\end{algorithmic}
\caption{Fast Frobenius norm reconstruction with $r > k$.}
\label{alg:chap44}
\end{framed}
\end{algorithm}
\begin{theorem}[Fast Frobenius norm reconstruction]
\label{thmFast2}
Given \math{\matA\in\R^{m\times n}} of rank $\rho$, a
target rank $2\leq k < \rho$, and an oversampling parameter $r > k$,
Algorithm~\ref{alg:chap44} (randomly in $O\left(mnk+nrk^2\right)$)
constructs $\matC\in\R^{m \times r}$ such that:
$$
 \Expect{\FNorm{\matA - \Pi_{\matC,k}^F(\matA)}} \leq
\sqrt{ 1.1 + \frac{1.1}{ \left( 1 - \sqrt{ k/r } \right)^2 } } \FNorm{\matA - \matA_k} \le \sqrt{ 3 + O\left(  \frac{k}{r}  \right)  }
\FNorm{\matA - \matA_k}.$$
\end{theorem}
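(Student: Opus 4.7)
The plan is to mimic the structure of the proof of Theorem \ref{theorem:intro2}, but replacing the exact SVD factorization with the approximate Frobenius SVD of Lemma \ref{tropp2}, and then to track the small loss introduced by this replacement. Concretely, I would first apply $\matZ = FastFrobeniusSVD(\matA, k, 0.1)$ from Lemma \ref{tropp2} with $\epsilon = 0.1$. This produces, in $O(mnk)$ time, a factorization $\matA = \matB\matZ\transp + \matE$ with $\matB = \matA\matZ$, $\matZ\transp\matZ = \matI_k$, $\matE\matZ = \bm{0}_{m\times k}$, and $\Expect{\FNormS{\matE}} \le 1.1\,\FNormS{\matA - \matA_k}$. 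This factorization is the approximate analogue of $\matA = \matA\matV_k\matV_k\transp + (\matA-\matA_k)$, but crucially $\matZ$ is not necessarily $\matV_k$, so I must use the more general Lemma \ref{lem:genericNoSVD} (not Lemma \ref{lem:generic}) in the final step.

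Next, I would feed $\matZ \in \R^{n\times k}$ (which has orthonormal columns) and the residual $\matE \in \R^{m\times n}$ into $BarrierSamplingIII$ (Lemma \ref{lem:2setF}), obtaining sampling/rescaling matrices $\Omega \in \R^{n\times r}$, $\matS \in \R^{r\times r}$ with the two simultaneous guarantees
$$\sigma_k(\matZ\transp \Omega \matS) \ge 1 - \sqrt{k/r}, \qquad \FNorm{\matE\Omega\matS} \le \FNorm{\matE}.$$
The first guarantee, for $r > k$, already ensures the rank condition $\rank(\matZ\transp\Omega\matS) = k$ required to invoke Lemma \ref{lem:genericNoSVD}. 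The cost of this step is $O(rnk^2 + mn)$, and forming the explicit residual $\matE = \matA - \matA\matZ\matZ\transp$ costs an additional $O(mnk)$; the column-selection $\matC = \matA\Omega\matS$ is then essentially free. Summing gives the claimed $O(mnk + nrk^2)$ running time.

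With the rank condition in hand, applying Lemma \ref{lem:genericNoSVD} with $\matW = \Omega\matS$ and $\xi = F$ yields
$$\FNormS{\matA - \Pi^F_{\matC,k}(\matA)} \le \FNormS{\matE} + \FNormS{\matE\Omega\matS\,(\matZ\transp\Omega\matS)^+}.$$
I would bound the second term using spectral submultiplicativity ($\FNorm{\matX\matY} \le \FNorm{\matX}\TNorm{\matY}$) together with the identity $\TNorm{(\matZ\transp\Omega\matS)^+} = 1/\sigma_k(\matZ\transp\Omega\matS)$:
$$\FNormS{\matE\Omega\matS\,(\matZ\transp\Omega\matS)^+} \le \FNormS{\matE\Omega\matS}\cdot\frac{1}{\sigma_k^2(\matZ\transp\Omega\matS)} \le \frac{\FNormS{\matE}}{(1 - \sqrt{k/r})^2}.$$
Combining, $\FNormS{\matA - \Pi^F_{\matC,k}(\matA)} \le \FNormS{\matE}\bigl(1 + (1-\sqrt{k/r})^{-2}\bigr).$

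Finally, I would take expectations, use $\Expect{\FNormS{\matE}} \le 1.1\,\FNormS{\matA-\matA_k}$, and convert to the $\FNorm{\cdot}$ quantity via Jensen's inequality ($\Expect{X} \le \sqrt{\Expect{X^2}}$ with $X = \FNorm{\matA - \Pi^F_{\matC,k}(\matA)}$) to arrive at
$$\Expect{\FNorm{\matA - \Pi^F_{\matC,k}(\matA)}} \le \sqrt{1.1 + \frac{1.1}{(1-\sqrt{k/r})^2}}\,\FNorm{\matA - \matA_k},$$
which is the desired bound. The only subtle step is the first one: unlike in Theorem \ref{theorem:intro2}, one cannot apply Lemma \ref{lem:generic} directly because $\matZ$ is not the top-$k$ right singular subspace of $\matA$; using Lemma \ref{lem:genericNoSVD} sidesteps this, and Lemma \ref{lem:2setF} is the right tool because it controls the Frobenius norm of the sampled residual (as opposed to its spectral norm, which would give a looser bound). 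The rest is bookkeeping.
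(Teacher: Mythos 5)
Your proposal is correct and follows essentially the same route as the paper's own proof: the approximate factorization of Lemma~\ref{tropp2} with $\epsilon=0.1$, the dual-set Frobenius sparsifier of Lemma~\ref{lem:2setF} applied to $\matZ$ and the residual $\matE$, the generic bound of Lemma~\ref{lem:genericNoSVD} (correctly substituted for Lemma~\ref{lem:generic}, which is exactly the subtlety the paper flags), spectral submultiplicativity, and the expectation/H\"older step at the end. The running-time accounting also matches; no gaps.
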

Previous work presented deterministic near-optimal algorithms only for $r=k$ \cite{DR10}. 
We are not aware of any deterministic algorithm for the $r>k$ case; previous work presents only
randomized algorithms that fail unless $r = \Omega(k \log(k))$~\cite{DMM06a,DV06}. 

Both Theorems \ref{theorem:intro2} and \ref{thmFast2} guarantee constant factor approximations
to the error $\FNorm{ \matA - \matA_k }$. In both cases, for arbitrary small $\epsilon > 0$,
if $r = O( k / \epsilon)$: 
$$\FNormS{\matA - \Pi_{\matC,k}^F(\matA)} \le  \left(\beta + \epsilon \right) \FNormS{\matA - \matA_k},$$
(In Theorem \ref{theorem:intro2}, $\beta=2$; in Theorem \ref{thmFast2}, $\beta=3$). We stated the
result here with the squares since it is stronger and is necessary to conclude that $r = O( k / \epsilon)$ columns give constant factor
approximations. Manipulating the bound without the squares yields $r = O( k / \epsilon^2)$ columns, which is weaker. 
The stronger bound with the squares is possible and can be found in the proofs of the corresponding
theorems.

Constant factor approximations are interesting but not optimal.
Here, we describe an algorithm that guarantees $(1+\epsilon)$-error Frobenius norm approximation. We do so
by combining the algorithm of Theorem~\ref{thmFast2} with one round of adaptive sampling~\cite{DV06},
i.e. with the randomized technique described in section \ref{chap312}.
This is the first relative-error approximation algorithm for Frobenius norm reconstruction that uses a linear  number of columns
in $k$ (the target rank).  Previous work~\cite{DMM06b,Sar06} achieves relative error with \math{O(k\log k+ k\epsilon^{-1})} columns. Our result is asymptotically optimal, matching the \math{\Omega(k/\epsilon)} lower bound in \cite{DV06}.
\begin{algorithm}[t]
\begin{framed}
\textbf{Input:} $\matA\in\R^{m\times n}$ of rank $\rho$, target rank $2 \le k < \rho$, and oversampling parameter $r > 10k$. \\
\noindent \textbf{Output:} \math{\matC \Omega \matS \in \R^{m \times r}} with $r$ (rescaled) columns from $\matA$.
\begin{algorithmic}[1]
\STATE $\matZ = FastFrobeniusSVD(\matA, k, 0.1)$ (Lemma \ref{tropp2} in section \ref{chap22}).
\STATE $[\Omega, \matS] = BarrierSamplingIII( \matZ, \matA - \matA \matZ \matZ\transp, 4k)$ 
(Lemma \ref{lem:2setF} in Section \ref{chap318}).
\STATE $\matC_2 = AdaptiveSampling(\matA, \matA \Omega, r-4k)$ (Section \ref{chap312}).
\STATE Return $\matC = [\matC_2; \matA \Omega ] \in \R^{m \times r}$. 
\end{algorithmic}
\caption{Fast relative-error Frobenius norm reconstruction.}
\label{alg:chap45}
\end{framed}
\end{algorithm}
\begin{theorem}[Fast relative-error Frobenius norm reconstruction]
\label{thmFast3}
Given \math{\matA\in\R^{m\times n}} of rank $\rho$, a target rank $2\leq k < \rho$, and an oversampling parameter $r > 10k$,
Algorithm~\ref{alg:chap45} (randomly in $O\left(mnk + nk^3 + n \log(r) \right)$)
constructs
$\matC\in\R^{m \times r}$
such that:
\mand{\Expect{ \FNorm{\matA - \Pi_{\matC,k}^F(\matA)}} \leq \sqrt{1+ \frac{6k}{r-4k}}\FNorm{\matA-\matA_k}.
}
\end{theorem}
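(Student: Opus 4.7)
The plan is to chain the three randomized primitives in Algorithm~\ref{alg:chap45} using the generic sampling bound of Lemma~\ref{lem:genericNoSVD}, followed by one round of adaptive sampling. First I would invoke Lemma~\ref{tropp2} on the output $\matZ$ of $FastFrobeniusSVD(\matA,k,0.1)$ to obtain a factorization $\matA = \matA\matZ\matZ\transp + \matE$ with $\matZ\transp\matZ=\matI_k$, $\matE\matZ = \bm{0}_{m\times k}$, and crucially $\Expect{\FNormS{\matE}} \le 1.1\,\FNormS{\matA-\matA_k}$. Next I feed $(\matZ,\matE)$ into Lemma~\ref{lem:2setF} with parameter $r_1 = 4k$, which returns (deterministically, given $\matZ$ and $\matE$) a sampling/rescaling pair $(\Omega,\matS)$ with
\[\sigma_k(\matZ\transp\Omega\matS) \;\ge\; 1 - \sqrt{k/(4k)} \;=\; 1/2, \qquad \FNorm{\matE\Omega\matS}\;\le\;\FNorm{\matE}.\]

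Let $\matC_1 = \matA\Omega\matS$, which spans the same column space as $\matA\Omega$. Since $\sigma_k(\matZ\transp\Omega\matS) \ge 1/2 > 0$, the matrix $\matZ\transp\Omega\matS$ has rank $k$ and Lemma~\ref{lem:genericNoSVD} applies with $\matW = \Omega\matS$, giving
\[\FNormS{\matA - \matC_1\matC_1^+\matA} \;\le\; \FNormS{\matE} + \FNormS{\matE\Omega\matS\,(\matZ\transp\Omega\matS)^+}.\]
Spectral submultiplicativity together with $\TNorm{(\matZ\transp\Omega\matS)^+}=1/\sigma_k(\matZ\transp\Omega\matS)\le 2$ and the Frobenius bound on $\matE\Omega\matS$ yields $\FNormS{\matE\Omega\matS(\matZ\transp\Omega\matS)^+}\le 4\FNormS{\matE}$, and therefore $\FNormS{\matA-\matC_1\matC_1^+\matA} \le 5\FNormS{\matE}$. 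Taking expectation over the randomness inside $\matZ$ (the only randomness so far, since BarrierSamplingIII is deterministic given its inputs) gives $\Expect{\FNormS{\matA-\matC_1\matC_1^+\matA}} \le 5\cdot 1.1\,\FNormS{\matA-\matA_k} = 5.5\,\FNormS{\matA-\matA_k}$.

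Finally I apply Lemma~\ref{oneround} (AdaptiveSampling) with $s = r - 4k > 6k$ to produce $\matC_2$, and take $\matC = [\matC_1\ \matC_2]$. Conditional on $\matC_1$ the Lemma gives
\[\Expect{\FNormS{\matA-\Pi^F_{\matC,k}(\matA)}\,\big|\,\matC_1} \;\le\; \FNormS{\matA-\matA_k} + \tfrac{k}{r-4k}\,\FNormS{\matA-\matC_1\matC_1^+\matA}.\]
The tower property (marginalizing over the three independent sources of randomness in order) combined with $5.5 < 6$ gives $\Expect{\FNormS{\matA-\Pi^F_{\matC,k}(\matA)}} \le (1 + \tfrac{6k}{r-4k})\FNormS{\matA-\matA_k}$, and one application of Jensen's inequality (concavity of $\sqrt{\cdot}$ on the nonnegative random variable $\FNormS{\matA-\Pi^F_{\matC,k}(\matA)}$) delivers the advertised bound on $\Expect{\FNorm{\cdot}}$. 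For the running time, step~1 is $O(mnk)$ by Lemma~\ref{tropp2}, step~2 is $O(r_1 n k^2 + mn) = O(nk^3 + mn)$ by Lemma~\ref{lem:2setF}, and step~3 is $O(mnk + s\log s) = O(mnk + n\log r)$ by Lemma~\ref{oneround}. The main subtlety I anticipate is bookkeeping the three separate sources of randomness (and avoiding materializing the residual $\matE \in \R^{m\times n}$ beyond what is strictly required for column-norm-type statistics) so that the total cost collapses to the claimed $O(mnk + nk^3 + n\log r)$ rather than a larger bound with a dominant $mn$ or $r\log r$ term.
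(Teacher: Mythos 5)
Your proposal is correct and follows essentially the same route as the paper: it unrolls the guarantee of Theorem~\ref{thmFast2} with $\hat r = 4k$ (obtaining the same constant $1.1\cdot(1+(1-\sqrt{1/4})^{-2}) = 5.5 \le 6$ for $\Expect{\FNormS{\matA-\matC_1\matC_1^+\matA}}$), then applies the adaptive-sampling bound of Lemma~\ref{oneround} conditionally on $\matC_1$, the law of iterated expectation, and H\"older/Jensen to pass from squared Frobenius norms to the stated bound. The running-time accounting also matches the paper's.
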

%
\noindent In the proof of the theorem we get: 
$\Expect{ \FNormS{\matA - \Pi_{\matC,k}^F(\matA)}} \leq (1+\frac{6k}{r-4k})\FNormS{\matA-\matA_k}.
$
For any given $\epsilon > 0$, choosing $r \geq 4k + 6k / \epsilon \ge 10k / \epsilon$ columns gives
$\Expect{ \FNormS{\matA - \Pi_{\matC,k}^F(\matA)}} \leq (1+\epsilon)\FNormS{\matA-\matA_k}.$
Taking square roots on both sides of this equation and observing that $\sqrt{1+\epsilon} \le 1+\epsilon$:
$\sqrt{\Expect{ \FNormS{\matA - \Pi_{\matC,k}^F(\matA)}}} \leq (1+\epsilon)\FNorm{\matA-\matA_k}.$
Finally, Holder's inequality implies: $\Expect{ \FNorm{\matA - \Pi_{\matC,k}^F(\matA)} }  
\le 
\sqrt{ \Expect{ \FNormS{\matA - \Pi_{\matC,k}^F(\matA)}} }$. Overall,
\mand{\Expect{ \FNorm{\matA - \Pi_{\matC,k}^F(\matA)}} \leq (1+\epsilon)\FNorm{\matA-\matA_k}.}

\subsection*{Note on Running Times.} 
All our running times are stated in terms of the number of
operations needed to compute the matrix $\matC$, and
for simplicity we assume that \math{\matA} is dense;
 if \math{\matA} is sparse, additional savings might be possible.
Our accuracy guarantees are stated in terms of the optimal matrix
$\Pi_{\matC,k}^{\xi}(\matA)$, which would require additional time to compute.
For the Frobenius norm ($\xi=F$), the
computation of $\Pi_{\matC,k}^{F}(\matA)$ is straightforward,
and only requires an
additional $O\left(mnr + \left(m+n\right)r^2\right)$ time (see the
discussion in Section~\ref{chap21}).
However, for the spectral norm ($\xi=2$),
we are not aware of any algorithm to compute
$\Pi_{\matC,k}^{2}(\matA)$ exactly.
In Section~\ref{chap21} we presented a simple approach
that computes $\hat\Pi_{\matC,k}^{2}(\matA)$,
a constant-factor approximation
to $\Pi_{\matC,k}^{2}(\matA)$, in $O\left(mnr + \left(m+n\right)r^2\right)$
time.
Our bounds in Theorems~\ref{theorem:intro1} and~\ref{thmFast1} could now be restated in terms of the error $\norm{\matA-\hat\Pi_{\matC,k}^{2}(\matA)}_2$;
the only change in the accuracy guarantees would be a multiplicative increase of $\sqrt{2}$ (from Lemma~\ref{lem:bestF}).
%


\section{The Column Subset Selection Problem ($r = k, \xi = 2, F$)} \label{chap43}

Our focus has been on selecting $r > k$ columns; however, selecting exactly 
\math{k} columns is also of considerable interest
\cite{BMD09a,DR10,Ipsen}. Selecting exactly $k$ columns from $\matA$
is known as the Column Subset Selection Problem (CSSP). We present
novel algorithms for the CSSP below. 

The basic idea of our algorithms is to use the algorithms in Theorems \ref{thmFast1} and \ref{thmFast2}
to select \math{r = dk} columns, with, for example, some small constant $d=2,3,4...$, 
and then down sample this to exactly \math{k} columns using the deterministic RRQR technique
of section \ref{chap315}. We will use exactly this approach to obtain $k$ columns for 
Frobenius reconstruction (Theorem \ref{thmCSSPf}).
For spectral reconstruction, selecting $k$ columns using the deterministic RRQR technique
of section \ref{chap315} suffices to give near-optimal results (Theorem \ref{thmCSSPs}). We also
present a two-step algorithm that gives bounds for both the spectral and the Frobenius norm
but uses the randomized technique of Section \ref{chap314} in the first step (Theorem \ref{fastcssp}). 
For the exact description of the algorithms in the three theorems below see the corresponding 
(constructive) proofs in Section \ref{chap44}.  

\begin{theorem}[Randomized Spectral CSSP]
\label{thmCSSPs}
Given \math{\matA\in\R^{m\times n}} of rank $\rho$ and a
target rank $2 \leq k < \rho$, 
there is a $O(m n k \log(\min(m,n)/k)  + nk^2\log n)$  randomized algorithm 
to construct 
\math{\matC\in\R^{m\times k}} such that
$$ \Expect{ \TNorm{\matA - \matC\matC^+\matA} } \leq 4 \sqrt{4k(n-k)+1)} \TNorm{\matA-\matA_k}.$$ 
\end{theorem}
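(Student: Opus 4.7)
The plan is a two-stage algorithm. In stage one, call $\matZ = FastSpectralSVD(\matA, k, \epsilon)$ from Lemma \ref{tropp1} with $\epsilon$ a constant to be fixed below; this returns $\matZ \in \R^{n \times k}$ with $\matZ\transp \matZ = \matI_k$ such that, setting $\matE = \matA - \matA \matZ \matZ\transp$, we have $\matE \matZ = \bm{0}_{m \times k}$ and $\Expect{\TNorm{\matE}} \le (\sqrt{2}+\epsilon)\TNorm{\matA - \matA_k}$, at cost $O(mnk \log(\min(m,n)/k))$ for $\epsilon = \Theta(1)$. In stage two, run the deterministic $\Omega = RRQRSampling(\matZ, k)$ from Lemma \ref{lem:rrqr} in $O(nk^2 \log n)$ time to obtain a sampling matrix $\Omega \in \R^{n \times k}$, and return $\matC = \matA \Omega \in \R^{m \times k}$. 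The two costs add up to the claimed running time.

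For the error analysis, Lemma \ref{lem:rrqr} applied to $\matZ$ (whose transpose has $\sigma_k(\matZ\transp) = 1$) yields $\sigma_k(\matZ\transp \Omega) \ge 1/\sqrt{4k(n-k)+1}$; in particular $\rank(\matZ\transp \Omega) = k = \rank(\matZ)$, which is exactly the hypothesis needed to apply Lemma \ref{lem:genericNoSVD} with $\matW = \Omega$ and $\xi = 2$:
$$ \TNormS{\matA - \matC \matC^+ \matA} \le \TNormS{\matE} + \TNormS{\matE \Omega (\matZ\transp \Omega)^+}. $$
Spectral submultiplicativity, together with $\TNorm{\Omega} = 1$ (a sampling matrix has orthonormal columns) and $\TNorm{(\matZ\transp \Omega)^+} = 1/\sigma_k(\matZ\transp \Omega) \le \sqrt{4k(n-k)+1}$, bounds the second summand by $(4k(n-k)+1)\TNormS{\matE}$. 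Summing and taking square roots pointwise in the randomness of $\matZ$,
$$ \TNorm{\matA - \matC \matC^+ \matA} \le \sqrt{4k(n-k)+2}\;\TNorm{\matE} \le \sqrt{2}\sqrt{4k(n-k)+1}\;\TNorm{\matE}. $$
Taking expectations and invoking Lemma \ref{tropp1} gives
$$ \Expect{\TNorm{\matA - \matC \matC^+ \matA}} \le \sqrt{2}(\sqrt{2}+\epsilon)\sqrt{4k(n-k)+1}\;\TNorm{\matA - \matA_k}. $$

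There is no substantial obstacle; the only delicate point is the choice of $\epsilon$. It must be a fixed constant so that the $1/\log(1+\epsilon)$ factor in the FastSpectralSVD cost is absorbed into the $O(\cdot)$, and small enough that the leading constant $\sqrt{2}(\sqrt{2}+\epsilon)$ is at most $4$, i.e.\ $\epsilon \le \sqrt{2}$. Taking $\epsilon = 1$ works and gives a leading constant of $2 + \sqrt{2} < 4$, matching the bound stated in the theorem.
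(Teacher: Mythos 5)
Your proof is correct and takes essentially the same route as the paper's: the identical two-stage algorithm ($\matZ = FastSpectralSVD(\matA,k,\epsilon)$ followed by $\Omega = RRQRSampling(\matZ,k)$), the same invocation of Lemma~\ref{lem:genericNoSVD} with $\matW=\Omega$ and $\xi=2$, and the same submultiplicativity bound via Lemma~\ref{lem:rrqr}; the paper merely fixes $\epsilon=0.5$ and bounds $1+\sqrt{4k(n-k)+1}\le 2\sqrt{4k(n-k)+1}$, whereas you use the slightly tighter $\sqrt{4k(n-k)+2}\le\sqrt{2}\sqrt{4k(n-k)+1}$. One nit: Lemma~\ref{tropp1} is stated for $0<\epsilon<1$, so your final choice $\epsilon=1$ sits outside its hypothesis; take $\epsilon=1/2$ as the paper does, which still gives leading constant $\sqrt{2}(\sqrt{2}+1/2)<4$.
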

\noindent To put our result into perspective, the 
best known algorithm for spectral-norm CSSP 
\cite[Algorithm 4]{GE96}, is deterministic, 
runs in $O( k m n \log(n) )$, and gives:
$$ \TNorm{\matA - \matC\matC^+\matA} \leq \sqrt{(4 k(n-k) +1 )} \TNorm{\matA - \matA_k}.$$
We are worse by a constant $4$ but faster by $O(1/\log(1/k))$.
\begin{theorem}[Randomized Frobenius CSSP]
\label{thmCSSPf}
Given \math{\matA\in\R^{m\times n}} of rank $\rho$ and a
target rank $2\leq k < \rho$, 
there is an $O(m n k + n k^3 + k^3\log(k)) )$ randomized algorithm to construct
\math{\matC\in\R^{m\times k}} such that
$$ \Expect{
\FNorm{\matA - \matC \matC^+ \matA} }\leq 
9 k 
\FNorm{\matA-\matA_k}.
$$
\end{theorem}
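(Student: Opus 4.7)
The plan is to chain Algorithm~\ref{alg:chap44} (which gives a constant-factor Frobenius guarantee with $O(k)$ columns) with a deterministic RRQR down-sampling step based on Lemma~\ref{lem:rrqr}, exactly as the paragraph before the theorem suggests. First I would invoke Algorithm~\ref{alg:chap44} with oversampling $r = 2k$; in $O(mnk + nk^3)$ time this produces the orthonormal $\matZ \in \R^{n \times k}$ from $FastFrobeniusSVD(\matA,k,0.1)$ and the sampling/rescaling matrix $\matW_1 := \Omega_1 \matS_1 \in \R^{n \times 2k}$ from $BarrierSamplingIII$. Setting $\matE := \matA - \matA\matZ\matZ\transp$, Lemmas~\ref{tropp2} and~\ref{lem:2setF} give the three facts I will need:
$$\sigma_k(\matZ\transp \matW_1) \ge 1 - 1/\sqrt{2}, \qquad \FNorm{\matE\matW_1} \le \FNorm{\matE}, \qquad \Expect{\FNormS{\matE}} \le 1.1\, \FNormS{\matA - \matA_k}.$$
Next I would apply Lemma~\ref{lem:rrqr} to the $2k \times k$ matrix $(\matZ\transp \matW_1)\transp$, producing in $O(k^3 \log k)$ time a sampling matrix $\Omega_2 \in \R^{2k \times k}$ with
$$\sigma_k(\matZ\transp \matW_1 \Omega_2) \ge \frac{\sigma_k(\matZ\transp \matW_1)}{\sqrt{4k^2 + 1}}.$$
The output is $\matC := \matA \matW_1 \Omega_2 \in \R^{m \times k}$, which consists of $k$ (rescaled) columns of $\matA$; since rescaling does not affect $\matC\matC^+$, this is a legitimate CSSP output, and the total runtime matches the claimed $O(mnk + nk^3 + k^3 \log k)$.

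For the analysis, the lower bound on $\sigma_k(\matZ\transp \matW_1 \Omega_2)$ ensures the rank condition, so I would invoke Lemma~\ref{lem:genericNoSVD} with $\matW = \matW_1 \Omega_2$ to obtain
$$\FNormS{\matA - \matC\matC^+\matA} \le \FNormS{\matE} + \FNormS{\matE \matW_1 \Omega_2 (\matZ\transp \matW_1 \Omega_2)^+}.$$
The second term is then handled by strong submultiplicativity $\FNorm{\matY\matX}\le\FNorm{\matY}\TNorm{\matX}$, together with $\TNorm{\Omega_2} = 1$, $\TNorm{(\matZ\transp \matW_1 \Omega_2)^+} = 1/\sigma_k(\matZ\transp \matW_1 \Omega_2)$, and $\FNorm{\matE\matW_1} \le \FNorm{\matE}$:
$$\FNormS{\matE \matW_1 \Omega_2 (\matZ\transp \matW_1 \Omega_2)^+} \le \frac{4k^2 + 1}{(1 - 1/\sqrt{2})^2}\, \FNormS{\matE}.$$
Adding the two terms, taking expectation through $\Expect{\FNormS{\matE}}\le 1.1\FNormS{\matA-\matA_k}$, and finally applying Jensen's inequality $\Expect{X}\le\sqrt{\Expect{X^2}}$ delivers $\Expect{\FNorm{\matA - \matC\matC^+\matA}} \le c\, k\, \FNorm{\matA - \matA_k}$ for an explicit absolute constant $c$.

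The main obstacle is not mathematical but arithmetic: squeezing the constant all the way down to $9k$. Using $\sqrt{4k^2+1}\le\sqrt{5}k$ (valid for $k\ge 1$) and $(1-1/\sqrt{2})^{-2} = 6+4\sqrt{2}$, one gets $c \le \sqrt{1.1(1 + 5(6+4\sqrt{2}))} = \sqrt{1.1(31+20\sqrt{2})} < 9$, so the $9k$ bound indeed falls out with $r=2k$. A smaller oversampling factor would force a careful re-examination of the constants to avoid missing this target; a larger one would not improve the final bound because the $O(k^2)$ comes from the RRQR step, not from the first-step $\sigma_k(\matZ\transp\matW_1)$ bound.
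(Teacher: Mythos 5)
Your proposal is correct and follows essentially the same route as the paper's own proof: a fast approximate SVD factorization, $BarrierSamplingIII$ to get $O(k)$ columns with $\sigma_k(\matZ\transp\Omega\matS)$ bounded below and $\FNorm{\matE\Omega\matS}\le\FNorm{\matE}$, RRQR down-sampling to exactly $k$ columns, and then Lemma~\ref{lem:genericNoSVD} plus spectral submultiplicativity, expectation, and H\"older's inequality. The only differences are parameter choices (you take $r=2k$ and $\epsilon=0.1$ where the paper takes $r=4k$ and $\epsilon=1/2$), and your constant-tracking ($\sqrt{1.1(31+20\sqrt{2})}<9$) is valid, so the $9k$ bound goes through.
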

\noindent To put our result into perspective, 
the best known algorithm for Frobenius-norm  CSSP \cite[Theorem 9, $\epsilon=1/4$]{DR10},
which runs in $O(mn  k^{2}\log(m)  + m k^{7}
\log^{3}(m) \log(k  \log m))$, gives:
$$ \Expect{ \FNorm{\matA - 
\matC\matC^+\matA} } \leq \sqrt{1.25 (k + 1) }  \FNorm{\matA - \matA_k},$$
We are  worse  by $O(k)$ but faster by at least $O(1 / (k \log(m)))$. 

\begin{theorem} [Randomized Spectral/Frobenius CSSP]
\label{fastcssp}
Given \math{\matA\in\R^{m\times n}} of rank $\rho$, $1 \leq k < \rho$,  and 
$0 < \delta < 1$, 
there is an algorithm to construct $\matC \in \R^{m \times k}$ in
$O\left(m n k +  k^3 \cdot \ln(k / \delta) + k \cdot \ln(k / \delta) \cdot \log\left(k \cdot \ln(k/\delta) ) \right)  \right)$ 
such that w.p. $1 - 3\delta$:
$$\XNorm{\matA-\matC\matC^+\matA} \le  
\frac{26 k \sqrt{\ln(2k / \delta)}}{\delta} \FNorm{\matA-\matA_k} \le \frac{26 k \sqrt{\ln(2k / \delta)}}{\delta} 
\sqrt{\rho-k} \TNorm{\matA -\matA_k} .$$
\end{theorem}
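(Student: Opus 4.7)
The plan is to design a two-stage column selection procedure and analyze it via Lemma~\ref{lem:genericNoSVD}. First, I would run $\matZ = FastFrobeniusSVD(\matA, k, 1)$ (Lemma~\ref{tropp2}) to obtain $\matZ \in \R^{n \times k}$ with $\matZ\transp\matZ = \matI_k$ and residual $\matE = \matA - \matA\matZ\matZ\transp$ satisfying $\EE{\FNormS{\matE}} \le 2\FNormS{\matA-\matA_k}$, in time $O(mnk)$. Next, I would oversample by calling $[\Omega_1,\matS_1] = SubspaceSampling(\matZ, 1, r)$ with $r = c\cdot k\ln(2k/\delta)$ for a constant $c$ (e.g.\ $c=16$), which by Lemma~\ref{lem:random} guarantees $\sigma_i^2(\matZ\transp\Omega_1\matS_1)\in[1/2,3/2]$ with probability at least $1-\delta$. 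Finally, I would down-sample to exactly $k$ columns by setting $\Omega_2 = RRQRSampling((\matZ\transp\Omega_1\matS_1)\transp, k)$ (Lemma~\ref{lem:rrqr}), returning $\matC = \matA\Omega_1\Omega_2 \in \R^{m \times k}$.

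For the analysis, set $\matW = \Omega_1\matS_1\Omega_2 \in \R^{n\times k}$. Lemma~\ref{lem:genericNoSVD} (which applies since $\matE\matZ = \bm{0}$) together with spectral submultiplicativity and $\TNorm{\cdot}\le\FNorm{\cdot}$ yields, for both $\xi=2$ and $\xi=F$:
$$
\XNormS{\matA - \matC\matC^+\matA} \;\le\; \FNormS{\matE} \;+\; \FNormS{\matE\matW}\cdot \TNormS{(\matZ\transp\matW)^+}.
$$
Each of the three quantities is controlled by a separate event: (i)~Markov's inequality applied to the expectation bound from Lemma~\ref{tropp2} gives $\FNormS{\matE}\le \tfrac{2}{\delta}\FNormS{\matA-\matA_k}$ with probability $\ge 1-\delta$; (ii)~Lemma~\ref{lem:fnorm} applied with $\matY = \matE$ and the SubspaceSampling of Step~2 gives $\FNormS{\matE\Omega_1\matS_1}\le \tfrac{1}{\delta}\FNormS{\matE}$ with probability $\ge 1-\delta$, and $\FNormS{\matE\matW}\le\FNormS{\matE\Omega_1\matS_1}$ since $\Omega_2$ only selects columns; (iii)~chaining Lemma~\ref{lem:random} with Lemma~\ref{lem:rrqr} yields
$$
\sigma_k(\matZ\transp\matW) \;\ge\; \frac{\sigma_k(\matZ\transp\Omega_1\matS_1)}{\sqrt{4k(r-k)+1}} \;\ge\; \frac{1}{\sqrt{2}\,\sqrt{4k(r-k)+1}},
$$
so $\TNormS{(\matZ\transp\matW)^+} \le 2(4k(r-k)+1) = O(k^2\ln(k/\delta))$.

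A union bound over the three events (total failure probability $\le 3\delta$) produces
$\XNormS{\matA-\matC\matC^+\matA} \le O\bigl(k^2\ln(k/\delta)/\delta^2\bigr)\FNormS{\matA-\matA_k}$, and taking square roots gives the claimed $\tfrac{O(k\sqrt{\ln(2k/\delta)})}{\delta}\FNorm{\matA-\matA_k}$ bound; the second inequality in the theorem is immediate from $\FNorm{\matA-\matA_k}\le\sqrt{\rho-k}\TNorm{\matA-\matA_k}$. Totaling the running times gives $O(mnk)$ for FastFrobeniusSVD, $O(nk + r\log r)$ for SubspaceSampling, and $O(rk^2)$ for the deterministic RRQR down-sampling with an appropriate choice of the $f$-parameter in Lemma~\ref{lem:rrqrGE}, which collapses to the stated complexity. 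The main obstacle is the third step of the analysis: the RRQR down-sampling shrinks $\sigma_k$ by a factor $\sqrt{4k(r-k)+1}$, so one must carefully balance $r$ (large enough for Lemma~\ref{lem:random} to give a constant lower bound on the singular values, small enough to keep the RRQR deterioration and the $\FNormS{\matE\Omega_1\matS_1}$ blowup under control); the choice $r = \Theta(k\ln(k/\delta))$ is precisely what makes the $k^2\ln(k/\delta)$ contribution dominate and produces the final $k\sqrt{\ln(2k/\delta)}/\delta$ prefactor after taking the square root.
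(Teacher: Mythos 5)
Your proposal is correct and follows essentially the same route as the paper's proof: approximate SVD via $FastFrobeniusSVD$, oversample $\Theta(k\ln(k/\delta))$ columns with $SubspaceSampling$, down-sample to $k$ with $RRQRSampling$, and analyze via Lemma~\ref{lem:genericNoSVD} with the three events (Markov on $\FNormS{\matE}$, Lemma~\ref{lem:fnorm}, and chaining Lemma~\ref{lem:random} with Lemma~\ref{lem:rrqr}). The only deviations are immaterial constants ($\epsilon=1$ vs.\ $0.5$, $r=16k\ln(2k/\delta)$ vs.\ $8k\ln(2k/\delta)$) and the omission of the rescaling matrix from the returned $\matC$, which does not affect $\matC\matC^+$.
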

For Frobenius norm, treating $\delta$ as a constant, the approximation error is $O( k \sqrt{\log(k)} )$;
for spectral norm, we get a somewhat unusual bound, i.e. the right hand side contains
the term $\FNorm{ \matA - \matA_k } \le \sqrt{\rho-k} \TNorm{ \matA - \matA_k }$.

\subsection*{Interpolative Decompositions} \label{sec:id}
We conclude this section by describing a novel algorithm for an Interpolative Decomposition of a matrix. 
We start by defining such a decomposition. 
\begin{lemma}[Interpolative Decomposition (ID) - Lemma 1 in \cite{LWMRT07}]\label{lem:id}
Let $\matA \in \R^{m \times n}$ and $k \leq \min\{m,n\}$; then, there is a matrix $\matC \in \R^{m \times k}$ containing columns
of $\matA$ and a matrix $\matX \in \R^{k \times n}$ such that 
\begin{enumerate}
\item some subset of the columns of $\matX$ makes up the $\matI_k$.
\item no entry of $\matX$ has absolute value greater than one.
\item $\TNorm{\matX} \le p_1(k,n)$, with $ p_1(k,n)= \sqrt{k(n-k)+1}$.
\item the $k$-th singular value of $\matX$ is at least one.
\item if $k=m$ or $k=n$, then $\matA = \matC \matX$.
\item if $k < m, n$, then: $ \TNorm{\matA - \matC \matX} \le p_2(k,n) \TNorm{\matA - \matA_k};$ 
$p_2(k,n) = \sqrt{k(n-k)+1}$.
\end{enumerate}
\end{lemma}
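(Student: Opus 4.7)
The plan is to derive all six properties from a single construction based on the Strong Rank-Revealing QR factorization of Gu--Eisenstat (Lemma \ref{lem:rrqrGE} and, more directly, Algorithm 4 of \cite{GE96}) applied to $\matA$ with parameter $f = 1$. This produces a permutation $\Pi \in \R^{n \times n}$ and a factorization
\begin{equation*}
\matA \Pi \;=\; \matQ \begin{pmatrix} \matR_{11} & \matR_{12} \\ \bm{0} & \matR_{22} \end{pmatrix},
\end{equation*}
where $\matR_{11} \in \R^{k \times k}$ is invertible (when $\rank(\matA) \ge k$), $\matR_{12} \in \R^{k \times (n-k)}$, and $\matR_{22} \in \R^{(m-k) \times (n-k)}$. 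I then set $\matC$ equal to the first $k$ columns of $\matA \Pi$ (so $\matC$ consists of $k$ columns of $\matA$), define $\matT := \matR_{11}^{-1} \matR_{12}$, and let $\matX := [\matI_k,\ \matT]\,\Pi\transp \in \R^{k \times n}$.

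Properties 1 and 2 are then immediate: right multiplication by $\Pi\transp$ merely reorders columns, so the $\matI_k$ block lands at the positions of the selected columns (property 1), while the defining entrywise guarantee of the strong RRQR at $f=1$ gives $|\matT_{ij}| \le 1$ (property 2). Since the spectral and Frobenius norms of $\matX$ are invariant under right multiplication by $\Pi\transp$, properties 3 and 4 follow from the identity $\matX \matX\transp = \matI_k + \matT \matT\transp$: its smallest eigenvalue is at least $1$, so $\sigma_k(\matX) \ge 1$ (property 4), and $\TNormS{\matX} \le 1 + \TNormS{\matT} \le 1 + \FNormS{\matT} \le 1 + k(n-k)$ (property 3).

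For properties 5 and 6 the key block-matrix computation is
\begin{equation*}
\matC \cdot [\matI_k,\matT] \;=\; \matQ \begin{pmatrix} \matR_{11} \\ \bm{0} \end{pmatrix} [\matI_k,\matT] \;=\; \matQ \begin{pmatrix} \matR_{11} & \matR_{12} \\ \bm{0} & \bm{0} \end{pmatrix},
\end{equation*}
which, subtracted from the RRQR expression for $\matA\Pi$ and right-multiplied by $\Pi\transp$, gives $\TNorm{\matA - \matC\matX} = \TNorm{\matR_{22}}$. When $k = m$ or $k = n$ the $\matR_{22}$ block is empty and $\matA = \matC\matX$, proving property 5. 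When $k < m,n$, the RRQR bound $\sigma_1(\matR_{22}) \le \sqrt{f^2 k(n-k)+1}\,\sigma_{k+1}(\matA)$ of Lemma \ref{lem:rrqrGE} with $f=1$, combined with $\sigma_{k+1}(\matA) = \TNorm{\matA - \matA_k}$, gives property 6.

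The main obstacle is notational rather than conceptual: Lemma \ref{lem:rrqrGE} as stated records only the singular-value interlacing bounds on $\matR_{11}$ and $\matR_{22}$, not the entrywise bound $|(\matR_{11}^{-1}\matR_{12})_{ij}| \le f$ that is the defining invariant of Algorithm 4 of \cite{GE96} and is essential for property 2. I will invoke this invariant directly from \cite{GE96} (it is the property that Gu--Eisenstat maintain by performing column swaps whenever such a swap would increase $|\det(\matR_{11})|$ by a factor more than $f$); the rest of the argument is a routine block-matrix manipulation of the RRQR factors.
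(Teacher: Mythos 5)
Your construction is correct, and it is worth noting that the paper itself does not prove Lemma~\ref{lem:id} at all: it is quoted verbatim as Lemma~1 of \cite{LWMRT07}, and what the paper actually establishes afterwards is a \emph{weakened, algorithmic} variant (entries of $\matX$ bounded by $2$, and $p_1 = p_2 = \sqrt{4k(n-k)+1}$) obtained by running $RRQRSampling$ with $f=2$ on an approximate factor $\matZ$ from Lemma~\ref{tropp1} rather than on $\matA$ directly. Your route — strong RRQR of $\matA$ itself with $f=1$, $\matC$ the selected columns, $\matX=[\matI_k,\ \matR_{11}^{-1}\matR_{12}]\Pi\transp$, and the identity $\matX\matX\transp=\matI_k+\matT\matT\transp$ for properties 3 and 4 — is the standard existence argument behind the cited lemma (essentially Theorem~3.2 of \cite{GE96}), and all six properties follow as you compute them. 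The trade-off between the two approaches is exactly the one you hint at in your last paragraph: at $f=1$ the Gu--Eisenstat swapping procedure has no polynomial termination guarantee, so your argument should be phrased as choosing the $k$-column subset maximizing $|\det(\matR_{11})|$ (at which point no swap can help, giving $|\matT_{ij}|\le 1$); this yields the sharp constants of the lemma but only as an existence statement, whereas the paper's construction sacrifices a constant factor in properties 2, 3, and 6 to get a genuinely fast randomized algorithm. Two minor points to tighten: state explicitly that you are invoking the existential ($f=1$, max-volume) form rather than Algorithm~4, since the entrywise invariant $|(\matR_{11}^{-1}\matR_{12})_{ij}|\le f$ is only maintained by the algorithm for $f>1$; and handle (or explicitly set aside) the degenerate case $\rank(\matA)<k$, where $\matR_{11}$ is singular and $\matT$ must be defined via a pseudo-inverse or a rank-deficient variant of the factorization.
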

Notice that an ID is similar with the decompositions studied in this thesis with the only difference
being the matrix $\matX$; we typically choose $\matX = \matC^+ \matA$ but this choice might not be
numerically stable when the singular values of $\matC$ are very small. An ID introduces several 
properties for the matrix $\matX$ making sure that the approximation $\matA \approx \matC \matX$ is numerically stable. 
An algorithm for computing such an ID is in observation 3.3 of \cite{MRT11} 
(All \cite{CGMR05, LWMRT07, MRT11, HMT} study interpolative
decompositions as defined in Lemma \ref{lem:id});
this algorithm runs in $O(k m n \log(n))$ time and computes a factorization that slightly sacrifices
properties $(2)$, $(3)$, and $(6)$ from the above definition. More specifically, it computes $\matC, \matX$
such that: in $(2)$, no entry of $\matX$ has an absolute value greater than $2$; 
in $(3)$, $p_1(k,n) = \sqrt{4 k(n-k)+1}$;
similarly, in $(6)$, $p_2(k,n) =  \sqrt{4 k(n-k)+1}$.  
A quick modification of the algorithm of Theorem \ref{thmCSSPs} (see the proof for the details of the algorithms) gives a novel 
$O\left(m n k \log(k^{-1} \min(m,n)) + nk^2\log(n) \right)$ time randomized algorithm:
\begin{algorithmic}[1]
\STATE Via Lemma~\ref{tropp1}   of section \ref{chap22},  let $ \matZ  = FastSpectralSVD(\matA, k, 0.5)$;
\STATE Via Lemma~\ref{lem:rrqr} of section \ref{chap315}, let $ \Omega = RRQRSampling(\matZ,k)         $; 
\STATE Return \math{\matC=\matA\Omega} and $\matX = (\matZ\transp \Omega)^+ \matZ\transp.$
\end{algorithmic}
\paragraph{Analysis.}
Without loss of generality, assume that $\Omega$ samples the first $k$ columns of $\matZ\transp$. Let
$ \matZ =  \left[
\matZ_1;
\matZ_2
\right],$
with $\matZ_1 \in \R^{k \times k}$ and $\matZ_2 \in \R^{(n-k) \times k}$. 
From Lemma \ref{lem:rrqrGE}, $\matZ\transp \Pi = \matQ [\matA_k, \matB_k]$; so
$\matZ_1\transp = \matQ \matA_k$ and $\matZ_2\transp = \matQ \matB_k$. By using
this notation, 
$$\matX = [ (\matQ \matA_k)^+ (\matQ \matA_k), (\matQ \matA_k)^+ (\matQ \matB_k) ] =  [ \matI_k, \matA_k^+ \matB_k ].$$
Now, we comment on all six properties of the above definition for the output $\matC, \matX$ from our method. 
\begin{enumerate}

\item In Lemma \ref{lem:id}, (1) is obviously satisfied by the above choice of $\matX$.

\item Theorem 3.2 of \cite{GE96}, with $f=2$, shows that no entry in $\matX$ has an absolute value greater than $2$.

\item From Lemma \ref{lem:rrqr}: 
$\TNorm{\matX} = \TNorm{ (\matZ\transp \Omega)^+ \matZ\transp } \le 
\TNorm{ (\matZ\transp \Omega)^+} \leq p_1(k,n) \TNorm{ (\matZ\transp)^+} 
= \sqrt{4k(n-k)}+1 \frac{1}{\sigma_k(\matZ\transp)} = (\sqrt{4k(n-k)}+1) 1 = \sqrt{4k(n-k)}+1 $.

\item $\sigma_{min}(\matX) > 1$ because
$$ \sigma_{min}(\matX) = \sigma_k((\matZ\transp \Omega)^+ \matZ\transp) = \sigma_k((\matZ\transp \Omega)^+) = \frac{1}{\sigma_{max}((\matZ\transp \Omega))},$$
and the fact that $\sigma_{max}(\matZ\transp \Omega) < \sigma_{max}(\matZ\transp) = 1$, by the
interlacing property of the singular values.

\item 
Via Lemma \ref{tropp1}, we computed a factorization  
$\matA = \matA \matZ \matZ\transp + \matE$. 
Lemma~\ref{prop3} argues that $\rank(\matZ) = k$ with probability one.
Also, since in this case $\matA = \matA_k$, $\matE = \bm{0}_{n \times n}$ w.p. 1. 
Plug this factorization into Lemma \ref{lem:genericNoSVD}
and construct the matrix $\matW$ as we described in Theorem \ref{thmCSSPs}.
Theorem 7.2 in \cite{GE96} gives that $\rank(\matZ\transp \matW) = k$.
So, $\FNormS{\matA - \matC \matC^+\matA } = 0$, which also means
that $\FNormS{\matA - \matC \matX } = 0$, i.e. $\matA = \matC \matX$. 

\item Property (6) is satisfied with $p_2 = 4 \sqrt{4k(n-k)+1}$; to
see this, notice that the bound for $\TNorm{\matA - \matC \matC^+ \matA}$ also holds for 
$\TNorm{\matA - \matC \matX}$ because this is the way we proved Lemma \ref{lem:genericNoSVD}.

\end{enumerate}

\clearpage

\section*{Proofs} \label{chap44}

\subsection*{Proof of Theorem~\ref{theorem:intro1}}
\begin{algorithmic}[1]
\STATE Via the SVD compute the matrices $\matV_k$, $\matV_{\rho-k}$ of the right singular vectors of $\matA$.
\STATE Via Lemma~\ref{lem:2setS} of section \ref{chap317}, let $ [\Omega, \matS] = BarrierSamplingII(\matV_k,\matV_{\rho-k},r)$. 
\STATE Return \math{\matC=\matA\Omega\matS} with rescaled columns of $\matA$.
\end{algorithmic}
Lemma \ref{lem:2setS} guarantees that $\sigma_{k}(\matV_k\transp\Omega\matS)\geq 1-\sqrt{k/r}>0$ (assuming $r>k$), and so $\rank(\matV_k\transp\Omega\matS)=k$. Also,
\math{\sigma_1(\matV_{\rho-k}\transp\Omega\matS)=
\norm{\matV_{\rho-k}\transp\Omega\matS}_2\leq 1+\sqrt{(\rho-k)/r}
}.
Applying Lemma~\ref{lem:generic}, we get
\eqan{
\TNormS{\matA - \Pi_{\matC,k}^2(\matA)}
&\leq&
\TNormS{\matA - \matA_k}+
\TNormS{(\matA-\matA_k) \Omega  \matS (\matV_k\transp \Omega\matS)^+}\\
&\le&
\TNormS{\matA - \matA_k}+
\TNormS{(\matA -\matA_k)\Omega  \matS}\TNormS{(\matV_k\transp\Omega\matS)^+}\\
&=&
\TNormS{\matA - \matA_k}+
\TNormS{\matU_{\rho-k}\Sigma_{\rho-k}\matV_{\rho-k}\transp
\matS}\TNormS{(\matV_k\transp \Omega\matS)^+}\\
&\le&
\TNormS{\matA - \matA_k}+
\TNormS{\Sigma_{\rho-k}}\TNormS{\matV_{\rho-k}\transp \Omega
\matS}\TNormS{(\matV_k\transp \Omega \matS)^+}\\
&\le&
\TNormS{\matA - \matA_k}\left(1+\frac{(1+\sqrt{(\rho-k)/r})^2}
{(1-\sqrt{k/r})^2}\right),
}
where the last inequality follows because
\math{\TNorm{\Sigma_{\rho-k}}=\TNorm{\matA - \matA_k}} and
\math{\TNorm{(\matV_k\transp \Omega \matS)^+}=
1/\sigma_{k}(\matV_k\transp \Omega \matS)\le 1/(1-\sqrt{k/r})}.
Theorem \ref{theorem:intro1} now follows by taking square roots of
both sides.
The running time is equal to the time needed to compute
$\matV_k$ and $\matV_{\rho-k}$ (denoted as $T(\matV_k, \matV_{\rho-k}$) 
plus the running time of the algorithm in Lemma~\ref{lem:2setS}. 
\qedsymb

\paragraph{A faster spectral norm deterministic algorithm.}
 Our next theorem describes a deterministic algorithm for spectral norm reconstruction that only needs to compute \math{\matV_k} and will serve as a prequel to the proof of Theorem~\ref{thmFast1}. The accuracy guarantee of this theorem is essentially identical to the one in Theorem~\ref{theorem:intro1}, with $\rho-k$ being replaced by $n$.
\begin{theorem}\label{theorem:spectralIn}
Given \math{\matA\in\R^{m\times n}} of rank $\rho$, 
a target rank $k < \rho$, and an oversampling parameter $r > k$,
there exists a $T(\matV_k)  + O(nrk^2)$ deterministic algorithm to construct
$\matC\in\R^{m \times r}$ such that
$$\TNorm{\matA - \Pi_{\matC,k}^2(\matA)} \leq
\left( 1 + \frac{1 + \sqrt{ n/r }}{1 - \sqrt{ k/r }}  \right)
\TNorm{\matA-\matA_k} = O\left( \sqrt{\frac{n}{r}} \right) \TNorm{\matA-\matA_k}.$$
\end{theorem}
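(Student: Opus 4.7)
The plan is to mirror the proof of Theorem~\ref{theorem:intro1}, but in the dual-set spectral sparsification step replace the second matrix $\matV_{\rho-k}$ by the identity $\matI_n$. This avoids ever computing an orthonormal basis for the orthogonal complement of the top-$k$ right singular subspace of $\matA$, cutting the preprocessing cost down to $T(\matV_k)$. The price paid is that the ``upper-barrier'' bound now refers to all of $\R^n$, which is why $\rho-k$ is replaced by $n$ in the accuracy guarantee.

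First, I would compute $\matV_k$ from the SVD of $\matA$ in $T(\matV_k)$ time, and then call $[\Omega,\matS] = BarrierSamplingII(\matV_k, \matI_n, r)$ via Lemma~\ref{lem:2setS}. Since $\matI_n\transp \matI_n = \matI_n$ trivially satisfies the orthonormality hypothesis (with $\ell = n$), the output satisfies
$$\sigma_k(\matV_k\transp \Omega \matS) \ge 1 - \sqrt{k/r}, \qquad \TNorm{\Omega \matS} = \TNorm{\matI_n\transp \Omega \matS} \le 1 + \sqrt{n/r}.$$
Set $\matC = \matA \Omega \matS$. Because $r>k$, the first inequality gives $\sigma_k(\matV_k\transp \Omega \matS) > 0$, hence $\rank(\matV_k\transp \Omega \matS) = k$, so the hypothesis of Lemma~\ref{lem:generic} is met.

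Applying Lemma~\ref{lem:generic} with $\matW = \Omega \matS$ and $\xi=2$, followed by spectral submultiplicativity (and noting $\TNorm{\matA - \matA_k} = \TNorm{\Sigma_{\rho-k}}$ so that $\TNorm{(\matA-\matA_k)\Omega\matS} \le \TNorm{\matA-\matA_k}\TNorm{\Omega\matS}$), we obtain
$$\TNormS{\matA - \Pi_{\matC,k}^2(\matA)} \le \TNormS{\matA - \matA_k} + \TNormS{\matA - \matA_k}\, \TNormS{\Omega\matS}\, \TNormS{(\matV_k\transp \Omega \matS)^+}.$$
Plugging in the two spectral bounds above together with $\TNorm{(\matV_k\transp \Omega \matS)^+} = 1/\sigma_k(\matV_k\transp \Omega \matS) \le (1-\sqrt{k/r})^{-1}$, then taking square roots and using $\sqrt{1+x^2} \le 1+x$ for $x\ge 0$, yields the claimed bound.

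The main subtlety is the running time. A black-box use of Lemma~\ref{lem:2setS} with $\ell = n$ would cost $O(rn(k^2 + n^2))$, which is far worse than the claimed $O(nrk^2)$. The fix is the same one already implicit in Algorithm~\ref{alg:chap42} and Theorem~\ref{thmFast1}: when the second matrix is $\matI_n$, the BSS upper-barrier update reduces to tracking a single scalar potential on the rescaled column norms of $\Omega\matS$ rather than the full $n\times n$ spectrum, so every inner iteration is dominated by the $k\times k$ work arising from $\matV_k$. This restores the $O(nrk^2)$ cost, giving the stated total of $T(\matV_k) + O(nrk^2)$.
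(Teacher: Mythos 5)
Your proposal is correct and follows essentially the same route as the paper: compute $\matV_k$, run $BarrierSamplingII(\matV_k,\matI_n,r)$, apply Lemma~\ref{lem:generic} with spectral submultiplicativity and the bounds $\TNorm{\matI_n\Omega\matS}\le 1+\sqrt{n/r}$, $\TNorm{(\matV_k\transp\Omega\matS)^+}\le(1-\sqrt{k/r})^{-1}$, then take square roots. You also correctly identified the running-time subtlety — that when the second input set is the standard basis the upper-barrier updates cost $O(1)$ per evaluation, giving $O(nrk^2)$ — which is exactly the observation the paper makes in the running-time discussion of Algorithm~\ref{alg:2set}.
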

\begin{proof}
The proof  is very similar to the proof of Theorem~\ref{theorem:intro1}, so we only highlight the differences. 
We first give the algorithm.
\begin{algorithmic}[1]
\STATE Via the SVD compute the matrix $\matV_k$ of the right singular vectors of $\matA$.
\STATE Via Lemma~\ref{lem:2setS} of section \ref{chap317}, let $ [\Omega, \matS] = BarrierSamplingII(\matV_k,\matI_n,r)$. 
\STATE Return \math{\matC=\matA\Omega\matS} with rescaled columns of $\matA$.
\end{algorithmic}
Lemma \ref{lem:2setS} guarantees that \math{\norm{\matI_n\matS}_2\le 1+\sqrt{n/r}}. We now replicate the proof of Theorem~\ref{theorem:intro1} up to the point where
\math{\norm{(\matA-\matA_k)\Omega\matS(\matV_k\transp\Omega\matS)^+}_2^2} is bounded. We continue as follows:
\mand{
\norm{(\matA-\matA_k)\Omega\matS(\matV_k\transp \Omega\matS)^+}_2^2
\le
\TNormS{(\matA-\matA_k)}\TNormS{\matI_n\Omega\matS}
\norm{(\matV_k\transp\Omega\matS)^+}_2^2.
}
The remainder of the proof now follows the same line as in Theorem~\ref{theorem:intro1}. 
To analyze the running time of the proposed algorithm, we need to look more closely at 
Lemma~\ref{lemma:intro1} and the related Algorithm~\ref{alg:2set} and note that since one set of input vectors consists of the standard basis vectors, Algorithm~\ref{alg:2set} runs in $O(nrk^2)$ time (see the discussion of the running time for 
this algorithm). The total running time is the time needed to compute $\matV_k$ plus $O(nrk^2)$.
\end{proof}

\subsection*{Proof of Theorem~\ref{thmFast1}} 
\begin{algorithmic}[1]
\STATE Via Lemma \ref{tropp1} of Section \ref{chap22}, let $\matZ = FastSpectralSVD(\matA, k, 1)$.
\STATE Via Lemma \ref{lem:2setS} of section \ref{chap317}, let $[\Omega, \matS] = BarrierSamplingII( \matZ, \matI_n, r)$ 
\STATE Return \math{\matC=\matA\Omega\matS} with rescaled columns of $\matA$.
\end{algorithmic}
In order to prove Theorem \ref{thmFast1} we will follow the proof of Theorem~\ref{theorem:intro1} using Lemma~\ref{tropp1} (a fast matrix factorization) instead of Lemma~\ref{lem:generic} (the exact SVD of $\matA$). More specifically, instead of using the top \math{k} right singular vectors of \math{\matA} (the matrix \math{\matV_k}), we use the matrix \math{\matZ \in \mathbb{R}^{n \times k}} of Lemma~\ref{tropp1}. The proof of Theorem~\ref{thmFast1} is now identical to the proof of Theorem~\ref{theorem:spectralIn}, except for using Lemma~\ref{lem:genericNoSVD} instead of Lemma~\ref{lem:generic} in the first step of the proof:
\eqan{
\TNormS{\matA - \Pi^2_{\matC,k}(\matA)}
&\leq&
\TNormS{\matE}+
\TNormS{\matE\Omega \matS (\matZ\transp\Omega\matS)^+}
=
\TNormS{\matE}+
\TNormS{\matE\matI_n\Omega\matS (\matZ\transp\Omega\matS)^+}\\
&\le&
\TNormS{\matE}\left(1+
\TNormS{\matI_n\Omega\matS}
\TNormS{(\matZ\transp\Omega\matS)^+}\right),
}
where \math{\matE} is the residual error from the matrix factorization of Lemma~\ref{tropp1}. Taking square roots of both sides, we get
\eqan{
\norm{\matA - \Pi^2_{\matC,k}(\matA)}_2
&\le&
\norm{\matE}_2\left(1+
\norm{\matI_n\Omega\matS}_2
\norm{(\matZ\transp\Omega\matS)^+}_2\right).
}
We can now use the bounds guaranteed by Lemma~\ref{lem:2setS} for \math{\norm{\matI_n\matS}_2} and \math{\norm{(\matZ\transp\matS)^+}_2}, to
obtain a bound in terms of \math{\norm{\matE}_2}. Finally, since \math{\matE} is a random variable, taking expectations and applying the bound of Lemma~\ref{tropp1} concludes the proof of the theorem. The overall running time is equal to the time needed to compute the matrix $\matZ$ from Lemma~\ref{tropp1} plus an additional $O(nrk^2)$ time as in Theorem~\ref{theorem:spectralIn}.
\qedsymb

\subsection*{Proof of Theorem \ref{theorem:intro2}}  
\begin{algorithmic}[1]
\STATE Via the SVD compute the matrix $\matV_k$ of the right singular vectors of $\matA$.
\STATE Via Lemma \ref{lem:2setF} of section \ref{chap318}, let 
$[\Omega, \matS] = BarrierSamplingIII( \matV_k, \matA - \matA \matV_k \matV_k\transp, r)$.
\STATE Return \math{\matC=\matA\Omega\matS} with rescaled columns of $\matA$.
\end{algorithmic}
We follow the proof of Theorem~\ref{theorem:intro1} 
in the previous section up to the point where we need to bound the term
\math{\norm{(\matA-\matA_k)\Omega\matS (\matV_k\transp \Omega\matS)^+}^2_F}.
By spectral submultiplicativity,
\mand{
\FNorm{(\matA-\matA_k) \Omega \matS (\matV_k\transp \Omega\matS)^+}^2
\le
\FNorm{(\matA-\matA_k) \Omega\matS}^2
\TNorm{(\matV_k\transp \Omega\matS)^+}^2.
}
To conclude, we apply Lemma~\ref{lem:2setF} of Section \ref{chap318}
to bound the two terms in the right-hand side of the above inequality
and take square roots on the resulting equation.
The running time of the proposed algorithm is equal to the time needed to compute $\matV_k$ plus the time needed to compute $\matA-\matA \matV_k \matV_k\transp$ (which is equal to \math{O(mnk)} given $\matV_k$), plus the time needed to run the algorithm of Lemma~\ref{lem:2setF}, which is equal to \math{O\left(nrk^2+mn\right)}.
\qedsymb

\subsection*{Proof of Theorem \ref{thmFast2}}
We will follow the proof of Theorem~\ref{theorem:intro2}, but, as with the proof of Theorem~\ref{thmFast1}, instead of using the top \math{k} right singular vectors of \math{\matA} (the matrix \math{\matV_k}), we will use the matrix \math{\matZ} of Lemma~\ref{tropp2} that is computed via a fast factorization. 
\begin{algorithmic}[1]
\STATE Via Lemma \ref{tropp2} of section \ref{chap22}, let $\matZ = FastFrobeniusSVD(\matA, k, 0.1)$;
\STATE Via Lemma \ref{lem:2setF} of section \ref{chap318}, let 
$[\Omega, \matS] = BarrierSamplingIII( \matZ, \matA - \matA \matZ \matZ\transp, r)$.
\STATE Return \math{\matC=\matA\Omega\matS} with rescaled columns of $\matA$.
\end{algorithmic}
The proof of Theorem \ref{thmFast2} is now identical to the proof of Theorem~\ref{theorem:intro2}, except for using Lemma~\ref{lem:genericNoSVD} instead of Lemma~\ref{lem:generic}. Ultimately, we obtain
\eqan{
\FNormS{\matA - \Pi^F_{\matC,k}(\matA)}
&\leq&
\FNormS{\matE}+
\TNormS{\matE \Omega\matS (\matZ\transp \Omega\matS)^+}\\
&\le&
\FNormS{\matE}+
\FNormS{\matE\Omega\matS}\TNormS{(\matZ\transp \Omega\matS)^+}\\
&\le&
\left(1+\left(1-\sqrt{k/r}\right)^{-2}\right)\FNormS{\matE}.
}
The last inequality follows from the bounds of Lemma~\ref{lem:2setF}.
The theorem now follows by taking the expectation of both sides, 
using Lemma~\ref{tropp2} to bound ${\bf E}[\norm{\matE}_F^2]$,
taking squares roots on both sides of the resulting equation, 
and using Holder's inequality (i.e Lemma~\ref{prop0}
with $x = \FNormS{\matA - \Pi^F_{\matC,k}(\matA)}$) to get 
$ \Expect{ \FNorm{\matA - \Pi^F_{\matC,k}(\matA)}
}  \le \sqrt{ \Expect{ \FNormS{\matA - \Pi^F_{\matC,k}(\matA)} } } $.
The overall running time is derived by replacing the time needed to compute \math{\mat\matV_k} in Theorem~\ref{theorem:intro2} with the time needed to compute the fast approximate
factorization of Lemma~\ref{tropp2}.
\qedsymb

\subsection*{Proof of Theorem~\ref{thmFast3}}
\begin{algorithmic}[1]
\STATE Via Lemma \ref{tropp2} of section \ref{chap22}, let $\matZ = FastFrobeniusSVD(\matA, k, 0.1)$.
\STATE Via Lemma \ref{lem:2setF} of section \ref{chap318}, let $[\Omega, \matS] = BarrierSamplingIII( \matZ, \matA - \matA \matZ \matZ\transp, 4k)$.
\STATE Via the technique of section \ref{chap312}, let $\matC_2 = AdaptiveSampling(\matA, \matA \Omega, r-4k )$; 
\STATE Return $\matC$ containing the columns of both $\matA \Omega \matS$ and $\matC_2$
\end{algorithmic}
First, let $\hat{r} = 4 k $ and
$$
c_0=\left(1+0.1\right)\left(1+\frac{1}{(1-\sqrt{k/\hat{r}})^2}\right) = \frac{11}{5}.
$$
Notice that the first two steps of the algorithm correspond to running
the algorithm of Theorem~\ref{thmFast2} to sample \math{\hat{r}=4k} columns of $\matA$ and form the matrix \math{\matC_1 = \matA \Omega \matS}. The third step corresponds to running
the adaptive sampling algorithm of Lemma~\ref{oneround} with \math{\matB=\matA-\matC_1\matC_1^+\matA} and
sampling a further  $s = \ceil{r - 4k}$ columns of $\matA$ to form the matrix $\matC_2$.
Let \math{\matC = [\matC_1\ \ \matC_2]\in\R^{m \times r}}
contain all the
sampled columns. We will analyze the expectation
\math{\Expect{\norm{\matA-\Pi^F_{\matC,k}(\matA)}_F^2}}.

Using the bound of Lemma~\ref{oneround}, we first compute the expectation
with respect to \math{\matC_2} conditioned on $\matC_1$:
$$
\Exp_{\matC_2}\left[\left.\norm{\matA - \Pi^F_{\matC,k}(\matA)}_F^2
\right|\matC_1\right]
\le \FNormS{\matA-\matA_k}+\frac{k}{s} \FNormS{\matB}.
$$
We now compute the expectation with respect to \math{\matC_1} (only
\math{\matB} depends on \math{\matC_1}):
\mld{\Exp_{\matC_1}\left[\Exp_{\matC_2}
\left[\left.\FNormS{\matA - \Pi^F_{\matC,k}(\matA)}
\right|\matC_1\right]
\right]
\le
\FNormS{\matA-\matA_k}+\frac{k}{s} \Exp_{\matC_1}\left[\FNormS{\matA-\matC_1\matC_1^+\matA}\right].
\label{eq:iterated}
}
By the law of iterated expectation, the left hand side is exactly equal to the term \math{\Expect{\FNormS{\matA-\Pi^F_{\matC,k}(\matA)}}}.
We now use the accuracy guarantee of Theorem~\ref{thmFast2} and
our choice of $c_0$:
$$\Exp_{\matC_1}\left[\FNormS{\matA-\matC_1\matC_1^+\matA}\right]
\le \Exp_{\matC_1}\left[\FNormS{\matA-\Pi^F_{C_1,k}(\matA)}\right] \le c_0
\FNormS{\matA-\matA_k}.$$
Using this bound in \r{eq:iterated}, we obtain:
\mand{
\Expect{\FNormS{\matA-\Pi^F_{\matC,k}(\matA)}}
\le
\left(1+\frac{c_0k}{s}\right)\FNormS{\matA-\matA_k}.
}
Using $c_0 \le 6$ and our choice of $s$ gives: 
\mand{
\Expect{\FNormS{\matA-\Pi^F_{\matC,k}(\matA)}}
\le
\left(1+\frac{6 k}{r - 4k}\right)\FNormS{\matA-\matA_k}.
}
Taking square roots on both sides of this equation:
\mand{
\sqrt{\Expect{\FNormS{\matA-\Pi^F_{\matC,k}(\matA)}}}
\le
\sqrt{1+ \frac{6 k}{r - 4k}}\FNorm{\matA-\matA_k}.
}
To wrap up, use Holder's inequality as:
$$ \Expect{ \FNorm{\matA-\Pi^F_{\matC,k}(\matA)} } \le \sqrt{\Expect{\FNormS{\matA-\Pi^F_{\matC,k}(\matA)}}} $$

The time needed to compute the matrix $\matC$ is the sum of two terms:
the running time of Theorem~\ref{thmFast2} (which is
\math{O(m n k + n \hat{r} k^2)}),
plus the time of Lemma~\ref{oneround} 
(which is \math{O( m \hat{r} \min\{m,n\} + mn\hat{r} + n + s \log(s) )}). 
The total run time $O(mnk + nk^3 + n \log(r))$ follows because
$\hat{r} = 4k$ and $s = r - 4k < r  \le n$. 
\qed


\subsubsection*{Proof of Theorem \ref{thmCSSPs}}
\begin{algorithmic}[1]
\STATE Using
Lemma~\ref{tropp1} in section \ref{chap22}, let $\matZ = FastSpectralSVD(\matA, k, 0.5)$. 
\STATE Using Lemma~\ref{lem:rrqr} in section \ref{chap315}, let \math{\Omega=RRQRSampling(\matZ,k)}.
\STATE Return \math{\matC=\matA\Omega}.
\end{algorithmic}
Lemma~\ref{lem:rrqr} implies that $\rank(\matZ\transp)
=\rank(\matZ\transp\Omega)=k$, so 
we can apply Lemma \ref{lem:genericNoSVD}:
$$
\TNormS{\matA - \matC\matC^+\matA} 
\leq \TNormS{\matE} + 
\TNormS{ \matE \Omega (\matZ\transp  \Omega)^+}.
$$
By submultiplicativity, 
$
\TNormS{\matE  \Omega (\matZ\transp \Omega)^+} 
\leq 
\TNormS{\matE}\TNormS{\Omega} 
\TNormS{(\matZ\transp \Omega)^+}$.
Since \math{\Omega} is a subset of a permutation matrix,
\math{\TNormS{\Omega}=1}.
By Lemma~\ref{lem:rrqr}, 
\math{\TNormS{(\matZ\transp\Omega)^+}\le
(4k(r-k)+1)\TNormS{(\matZ\transp)^+}}. Using 
\math{\TNormS{(\matZ\transp)^+}=1}:
$$
\TNormS{\matA - \matC\matC^+\matA} 
\leq \TNormS{\matE} + 
\TNormS{\matE}(4k(r-k)+1).
%
$$
Taking square roots on the last expression: $ \TNorm{\matA - \matC\matC^+\matA} 
\leq \TNorm{\matE} ( 1 + \sqrt{(4k(r-k)+1)} ) $. Finally, taking expectations
over the randomness of $\matE = \matA - \matA \matZ \matZ\transp$ and using
Lemma~\ref{tropp1} with $\epsilon=0.5$ gives the result:
$$ \TNorm{\matA - \matC\matC^+\matA} \leq \TNorm{\matE} ( 1 + \sqrt{(4k(r-k)+1)} ) \le 4  \sqrt{(4k(r-k)+1)}
\TNorm{\matA - \matA_k}.$$
The run time follows by the run time of Lemma~\ref{tropp1} and Corollary~\ref{lem:rrqr}.
\qedsymb

\subsubsection*{Proof of Theorem \ref{thmCSSPf}}
\begin{algorithmic}[1]
\STATE Using Lemma~\ref{tropp2} in Section \ref{chap22}, let $\matZ = FastFrobeniusSVD(\matA, k, 1/2)$. 
\STATE Using Lemma \ref{lem:2setF} of Section \ref{chap318}, let
$[\Omega_1, \matS] = BarrierSamplingIII(\matZ, \matA - \matA \matZ \matZ\transp, 4 k)$. 
Define $\matX\transp = \matZ\transp \Omega_1 \matS$; so
\math{\matX\in\R^{4k \times k}}.
\STATE Via Lemma \ref{lem:rrqr} of Section \ref{chap315}, let
\math{\Omega_2=RRQRSampling(\matX,k)}; \math{\Omega_2\in\R^{4k \times k}}.
\STATE Return \math{\matC=\matA \Omega_1 \matS \Omega_2}.
\end{algorithmic}
Lemma~\ref{lem:2setF} implies that
\math{\rank(\matZ\transp\Omega_1\matS)=\rank(\matZ\transp)=k};
Lemma~\ref{lem:rrqr} implies that $\rank(\matX\transp)
=\rank(\matX\transp\Omega_2)$.
Thus, letting \math{\matW=\Omega_1\matS\Omega_2},
\math{\rank(\matZ\transp\matW)=k}. So, 
we can apply Lemma \ref{lem:genericNoSVD}:
$$
\FNormS{\matA - \matC\matC^+\matA} 
\leq 
\FNormS{\matE} + 
\FNormS{ \matE \Omega_1 \matS \Omega_2 (\matZ\transp  \Omega_1 \matS \Omega_2)^+}.
$$
By spectral submultiplicativity, 
$$
\FNormS{\matE  \Omega_1 \matS \Omega_2 (\matZ\transp \Omega_1 \matS \Omega_2)^+} 
\leq 
\FNormS{\matE \Omega_1 \matS} \TNormS{\Omega_2} 
\TNormS{(\matZ\transp \Omega_1 \matS \Omega_2)^+}.$$
By Lemma~\ref{lem:2setF}, 
\math{ 
\FNormS{\matE \Omega_1 \matS} 
\le
\FNormS{\matE}};
since \math{\Omega_2} is a subset of a permutation matrix,
\math{\TNormS{\Omega_2}=1}.
By Lemma~\ref{lem:rrqr}, with \math{\matX\transp=\matZ\transp \Omega_1 \matS},
\math{\TNormS{(\matX\transp \Omega_2)^+}\le
(4k(4k-k)+1)\TNormS{(\matX\transp)^+}}, and again by 
Lemma~\ref{lem:2setF},
\math{\TNormS{(\matX\transp)^+}\le1/(1-\sqrt{1/4})^2}.
Combining all these results together and using the bound from Lemma \ref{tropp1} with $\epsilon=1/2$, we have:
\eqan{
\FNormS{\matA - \matC\matC^+\matA} 
&\leq& \FNormS{\matE}
\left(1+ \frac{4k(4k-k)+1}{(1-\sqrt{1/4})^2}\right),\\
&\le&
53 \FNormS{\matE}.
}
Taking expectations on the later equation using Lemma~\ref{tropp2} with $\epsilon = 0.5$:
$$  \Expect{\FNormS{\matA - \matC\matC^+\matA} } \le 53 (1 + 0.5) \Expect{ \FNormS{\matA - \matA_k} }.$$
Taking square roots on this equation and using Holder's inequality concludes the proof. 
Finally, the total run time follows
by combining the run time of 
Lemma~\ref{tropp2}, Lemma~\ref{lem:2setF}, and Lemma~\ref{lem:rrqrGE},
which are \math{O(mnk)}, \math{O(nk^3 + mn)}, and 
\math{O(k^3\log(k))}. 
\qedsymb

\subsection*{Proof of Theorem \ref{fastcssp}}
\begin{algorithmic}[1]
\STATE Via Lemma \ref{tropp2} of Section \ref{chap22}, let $ \matZ = FastFrobeniusSVD(\matA, k, 0.5)$;
\STATE Via Lemma \ref{lem:random} of Section \ref{chap314}, 
let $[\Omega_1, \matS] = RandomSampling(\matZ, 1, 8 k \ln(\frac{2k}{\delta}))$;
Define $\matX\transp = \matZ\transp \Omega_1 \matS$; so
\math{\matX\in\R^{8 k \ln(\frac{2k}{\delta}) \times k}}.
\STATE Via Lemma \ref{lem:rrqr} of Section \ref{chap314},  
let $ \Omega_2 = RRQRSampling( \matX\transp )$.
\STATE Return $\matC = \matA \Omega_1 \matS \Omega_2$. 
\end{algorithmic}
Lemma~\ref{lem:2setF} implies that w.p. at least $1 - \delta$
\math{\rank(\matZ\transp\Omega_1\matS)=\rank(\matZ\transp)=k};
Lemma~\ref{lem:rrqr} implies that $\rank(\matX\transp)
=\rank(\matX\transp\Omega_2)$, for $\matX\transp = \matZ\transp \Omega_1 \matS$.
Thus, letting \math{\matW=\Omega_1\matS\Omega_2},
\math{\rank(\matZ\transp\matW)=k}. So, 
we can apply Lemma \ref{lem:genericNoSVD}:
$$
\XNormS{\matA - \matC\matC^+\matA} 
\leq 
\XNormS{\matE} + 
\XNormS{ \matE \Omega_1 \matS \Omega_2 (\matZ\transp  \Omega_1 \matS \Omega_2)^+}.
$$
By spectral submultiplicativity and the fact that $\TNormS{\matE \Omega_1 \matS \Omega_1
} \leq \FNormS{\matE \Omega_1 \matS \Omega_1}$: 
$$
\XNormS{\matE  \Omega_1 \matS \Omega_2 (\matZ\transp \Omega_1 \matS \Omega_2)^+} 
\leq 
\FNormS{\matE \Omega_1 \matS } \TNormS{\Omega_2} 
\TNormS{(\matZ\transp \Omega_1 \matS \Omega_2)^+}.$$
By Lemma \ref{lem:fnorm}, with $\matY = \matE$ and failure probability $\delta$: 
$\FNormS{ \matE\Omega\matS} \leq \frac{1}{\delta} \FNormS{ \matE}$, and 
since \math{\Omega_2} is a subset of a permutation matrix,
\math{\TNormS{\Omega_2}=1}.
By Lemma~\ref{lem:rrqr}, with \math{\matX\transp=\matZ\transp \Omega_1 \matS},
$$\TNormS{(\matX\transp \Omega_2)^+}\le  (4k(  8k\ln(2k/\delta) - k  )+1)  \TNormS{(\matX\transp)^+}
.$$
Also, from Lemma \ref{lem:random}, $\TNormS{(\matZ\transp \Omega \matS)^{+} } \leq 1/0.3$  w.p. at least $1-\delta$.
Combine all these results together, use $\XNormS{\matE} \le \FNormS{\matE}$
 and apply a union bound to get - so far - that w.p. at least 
$1 - 3 \delta$,
$$ \XNormS{\matA - \matC\matC^+\matA} \leq \frac{428 k^2 \ln(2k/\delta)}{\delta} \FNormS{\matE}.$$
Using Lemma \ref{tropp2} with $\epsilon = 0.5$: $\Expect{\FNormS{\matE} } \le 1.5 \FNormS{\matA - \matA_k}$.
Applying Markov's inequality on the random variable $x = \FNormS{\matE}$, we conclude that 
w.p. $1 - \delta$: $\FNormS{\matE} \le 1.5 / \delta \FNormS{\matA - \matA_k}$. Replacing this to the
above equation and taking square roots on both sides
we get that w.p. $1 - 4\delta$
$$ \XNorm{\matA - \matC\matC^+\matA} \leq  \frac{26 k \sqrt{\ln(2k/\delta)}}{\delta} \FNorm{\matA - \matA_k}.$$
Finally, the run time of the algorithm is 
  $O(mnk )$ 
+ $O( n +  k \log(k / \delta) \log( k \log(k / \delta) ) )$ 
+ $O(k^3 \ln(k/\delta) )$, from the first, second, and third step, respectively.  
\qedsymb

\chapter{CORESET CONSTRUCTION IN LEAST-SQUARES REGRESSION}
\label{chap5}
\footnotetext[6]{Portions of this chapter previously appeared as:
C. Boutsidis and P. Drineas, Random Projections for the Nonnegative
Least Squares Problem, Linear Algebra and its Applications, 431(5-7):760-771, 2009,
and as: C. Boutsidis, P. Drineas, M. Magdon-Ismail, Rich Coresets for Constrained Linear Regression
Manuscript, 2011.}
\section{Coreset Construction with Deterministic Sampling}\label{chap51}

Given $\matA\in\R^{m\times n}$ ($m \gg n$) of rank $\rho$, $\b\in\R^m$, and $\mathcal{D} \subseteq \R^n$,
the regression problem asks to find \math{\x_{opt}\in\cl D} for which
$ \TNorm{\matA \x_{opt} - \b}\le \TNorm{\matA \x - \b}$, for all $\x\in\cl D$; the domain
$\mathcal{D}$ represents the constraints on 
\math{\x} and can be arbitrary.
A coreset of size $r < m$ is  
$ \matC = \matS\transp \Omega\transp \matA $,  $ \b_c = \matS\transp \Omega\transp \b $,
for some sampling and rescaling matrices $\Omega \in \R^{m \times r}$, $ \matS \in \R^{r \times r}$. 
We assume that $\epsilon > 0$, which denotes the coreset approximation, is given as input. The goal
is to construct a coreset of size $r$, with $r$ being as small as possible and,
$
\TNormS{\matA \tilde{\x}_{opt} - \b} 
\le \left( 1 + \epsilon \right) \TNormS{\matA \x_{opt} - \b}; 
\tilde\x_{opt}=\argmin_{\x\in\cl D}\norm{\matS\transp \Omega\transp(\matA\x-\b)}_2^2.
$
\begin{theorem} 
\label{lem:regression}
Given $\matA \in \R^{m \times n}$ of rank $\rho=n$, $\b \in \R^m$, 
and~\math{ 0 \le \epsilon \le \frac{1}{3}}, Algorithm~\ref{alg:SimpReg}
constructs
\math{\Omega\in\R^{m\times r}} and \math{\matS\in\R^{r\times r}} with $r = \ceil{225 (n + 1) \epsilon^{-2} }$
in time $O( m n^2 + m n^3 / \epsilon^2)$ 
such that \math{\tilde\x_{opt} \in \R^n} obtained from  \math{\matS\transp\Omega\transp\matA,\ \matS\transp\Omega\transp\b} satisfies
$$
\TNormS{\matA \tilde{\x}_{opt} - \b} 
 \le \left( 1 + \epsilon \right) \TNormS{\matA \x_{opt} - \b}.
$$
\end{theorem}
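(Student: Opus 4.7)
The plan is to augment $\matU_\matA$ with the normalized regression residual so that the entire coreset guarantee reduces to a single spectral-sparsification statement on one orthonormal matrix with $n{+}1$ columns; this bypasses the matrix-multiplication-type bounds that would otherwise be needed to control the interaction between $\matA$ and $\b$.

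First, I would compute the thin SVD $\matA=\matU_\matA\Sigma_\matA\matV_\matA\transp$ (here $\matU_\matA\in\R^{m\times n}$, since $\rank(\matA)=n$) and form the residual $\b^\perp=\b-\matU_\matA\matU_\matA\transp\b\in\R^m$. Because $\b^\perp$ is orthogonal to $\col(\matU_\matA)$, the augmented matrix
$$\hat{\matU}\;=\;\bigl[\,\matU_\matA\,,\,\b^\perp/\TNorm{\b^\perp}\,\bigr]\;\in\;\R^{m\times(n+1)}$$
has orthonormal columns (in the degenerate case $\b^\perp=\bm 0$, take $\hat{\matU}=\matU_\matA$ and replace $n{+}1$ by $n$ below; the proof then goes through unchanged). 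I would then invoke Corollary~\ref{cor:1set} on $\hat{\matU}$ with $r=\lceil 225(n+1)/\epsilon^{2}\rceil$, so that $\sqrt{(n+1)/r}\le\epsilon/15$, producing $\Omega,\matS$ such that, for every $\y\in\R^{n+1}$,
$$\bigl(1-\tfrac{\epsilon}{15}\bigr)^{2}\TNormS{\hat{\matU}\y}\;\le\;\TNormS{\matS\transp\Omega\transp\hat{\matU}\y}\;\le\;\bigl(1+\tfrac{\epsilon}{15}\bigr)^{2}\TNormS{\hat{\matU}\y}.$$

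Next, for any $\x\in\R^{n}$, set $\y_\x:=\Sigma_\matA\matV_\matA\transp\x-\matU_\matA\transp\b\in\R^{n}$; then $\matA\x-\b=\matU_\matA\y_\x-\b^\perp=\hat{\matU}\hat{\y}_\x$, where $\hat{\y}_\x\in\R^{n+1}$ has its first $n$ entries equal to $\y_\x$ and its last entry equal to $-\TNorm{\b^\perp}$. Since $\matU_\matA\y_\x\perp\b^\perp$, Pythagoras gives $\TNormS{\matA\x-\b}=\TNormS{\y_\x}+\TNormS{\b^\perp}=\TNormS{\hat{\matU}\hat{\y}_\x}$. Substituting $\y=\hat{\y}_\x$ into the display above therefore yields the uniform sandwich
$$\bigl(1-\tfrac{\epsilon}{15}\bigr)^{2}\TNormS{\matA\x-\b}\;\le\;\TNormS{\matS\transp\Omega\transp(\matA\x-\b)}\;\le\;\bigl(1+\tfrac{\epsilon}{15}\bigr)^{2}\TNormS{\matA\x-\b}\qquad(\forall\,\x\in\R^{n}),$$
and in particular for every $\x\in\cl D$.

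Finally, the coreset optimality of $\tilde{\x}_{opt}$ gives $\TNormS{\matS\transp\Omega\transp(\matA\tilde{\x}_{opt}-\b)}\le\TNormS{\matS\transp\Omega\transp(\matA\x_{opt}-\b)}$; lower-bounding the left side and upper-bounding the right side by the sandwich produces
$$\TNormS{\matA\tilde{\x}_{opt}-\b}\;\le\;\frac{(1+\epsilon/15)^{2}}{(1-\epsilon/15)^{2}}\,\TNormS{\matA\x_{opt}-\b}.$$
A short calculation gives $(1+t)^{2}/(1-t)^{2}-1=4t/(1-t)^{2}$, and for $t=\epsilon/15$ with $\epsilon\le 1/3$ one has $(1-t)^{2}\ge(44/45)^{2}>4/15$, hence the multiplier is at most $1+15t=1+\epsilon$, which is the claimed bound. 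For the running time: the SVD costs $O(mn^{2})$, forming $\b^\perp$ costs $O(mn)$, and Corollary~\ref{cor:1set} on an $m\times(n{+}1)$ matrix with $r$ output columns costs $O\bigl(rm(n+1)^{2}\bigr)=O(mn^{3}/\epsilon^{2})$, matching the statement. The only real conceptual step is the augmentation $\hat{\matU}=[\matU_\matA,\b^\perp/\TNorm{\b^\perp}]$: without it one is forced to separately control the cross term $\matU_\matA\transp\Omega\matS\matS\transp\Omega\transp\b^\perp$, for which Lemma~\ref{lem:1set} alone does not give a tight enough bound; the augmentation folds this cross term into a single Rayleigh-quotient comparison inside the $(n{+}1)$-dimensional subspace $\col(\hat{\matU})$.
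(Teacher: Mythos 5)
Your proof is correct and is essentially the paper's own argument: the paper also reduces everything to a single application of Corollary~\ref{cor:1set} on an orthonormal basis for $\col([\matA,\b])$ (it takes $\matU_{\matY}$ from the SVD of $\matY=[\matA,\b]$ rather than augmenting $\matU_\matA$ with $\b^\perp/\TNorm{\b^\perp}$, but these are just two orthonormal bases for the same $(n{+}1)$-dimensional subspace), writes $\matA\x-\b$ as that basis times a vector, and runs the identical sandwich-plus-coreset-optimality chain with the same constants. No gaps.
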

\begin{algorithm}[t]
\begin{framed}
\textbf{Input:} $\matA\in\R^{m\times n}$ of rank $\rho=n$, $\b \in \R^m$, and \math{0 < \epsilon < 1/3}. \\
\noindent \textbf{Output:} \math{\Omega \in \R^{m \times r}}, \math{\matS \in \R^{r \times r}}
with $r = \ceil{225 (n + 1) \epsilon^{-2} }$.
\begin{algorithmic}[1]
\STATE Compute the matrix \math{\matU_{\matY}\in\R^{m\times k}}
of the top $k$ left singular vectors of \math{\matY=[\matA,\b] \in \R^{m \times (n+1)}};
\math{k = \rank(\matY) \le \rank(\matA) + 1 = \rho + 1}.
\STATE Let $r = \ceil{225 (n + 1) \epsilon^{-2} }$.
\STATE 
{\bf return} \math{[\Omega,\matS]=BarrierSamplingI(\matU_{\matY},r)}.
(Corollary~\ref{cor:1set})
\end{algorithmic}
\caption{Deterministic coreset for constrained regression.}
\label{alg:SimpReg}
\end{framed}
\end{algorithm}
\begin{proof} Let 
$ \matY = [\matA,\b]\in\R^{m\times(n+1)}$, and let its SVD: 
\math{\matY=\matU_{\matY}\Sigma_{\matY}\matV_{\matY}\transp}, where
\math{\matU_{\matY}\in\R^{m\times k}}, \math{\Sigma_{\matY}\in\R^{k \times k}}
and
\math{\matV_{\matY}\in\R^{(n+1)\times k}}.
Here, $k = \rank(\matY) \le \rho +1 = \rank(\matA) + 1$. 
Let $r > \rho+1$ be a sampling parameter whose exact value will be specified later. 
Let
$[\Omega, \matS] = BarrierSamplingI(\matU_{\matY}, r)$.
Let \math{\y_1,\y_2\in\R^{k}} defined as
\mand{
\y_1=\Sigma_{\matY}\matV_{\matY}\transp
\left[
\begin{matrix}
\x_{opt}\\ 
-1
\end{matrix}
\right],
\qquad\hbox{and}\qquad
\y_2 = 
\left[
\begin{matrix}
\tilde\x_{opt}\\
 -1
\end{matrix}
\right].
}
Note that 
\math{\matU_{\matY}\y_1=\matA\x_{opt}-\b},
\math{\matU_{\matY}\y_2=\matA\tilde\x_{opt}-\b},
\math{\matS\transp\Omega\transp\matU_{\matY}\y_1=\matS\transp\Omega\transp(\matA\x_{opt}-\b)},
and
\math{\matS\transp\Omega\transp\matU_{\matY}\y_2=\matS\transp\Omega\transp(\matA\tilde\x_{opt}-\b)}.
We need to bound \math{\norm{\matU_{\matY}\y_2}} in terms of \math{\norm{\matU_{\matY}\y_1}}:
\mand{
 \left(1-\sqrt{\frac{k}{r}}\right)^2 
\TNormS{\matU_{\matY} \y_2}
\mathop{\buildrel{(a)}\over{\leq}}
\TNormS{\matS\transp\Omega\transp\matU_{\matY}\y_2}
\mathop{\buildrel{(b)}\over{\leq}}
\TNormS{\matS\transp\Omega\transp\matU_{\matY}\y_1}
\mathop{\buildrel{(c)}\over{\leq}}
 \left(1+\sqrt{\frac{k}{r}}\right)^2 
\TNormS{\matU_{\matY} \y_1}.
}
(a) and (c) use Corollary \ref{cor:1set};
(b) follows because \math{\tilde\x_{opt}} is optimal for the
coreset regression.
After reorganization, using 
\math{k \le \rho + 1}, and assuming $r > 9(\rho+1)$,
$$
\TNormS{\matA \tilde{\x}_{opt} - \b} 
\le
\frac{\left(1+\sqrt{\frac{\rho+1}{r}}\right)^2}{ \left(1-\sqrt{\frac{\rho+1}{r}}\right)^{2}}\TNormS{\matA \x_{opt} - \b}
\le
  \left(  1 + 15 \sqrt{ \frac{ \rho+1 }{ r }} \right) \TNormS{\matA \x_{opt} - \b}  .  
$$
Setting $r = \ceil{\frac{225 (\rho + 1)}{\epsilon^2}}$ gives the bound in the Theorem. 
The overall running time is the sum of two terms: the time to 
compute \math{\matU_{\matY}} via the SVD and the time to run the $BarrierSamplingI$ method
on \math{\matU_{\matY}} (Lemma~\ref{cor:1set} with 
\math{k \le \rho+1}).
\end{proof}
Theorem \ref{lem:regression} improves on [Theorem 3.1, \cite{DMM06a}]
and [Theorem 5, \cite{DMM08}] which require coresets of size
$r = 3492 n^2 \ln(3/\delta)/\epsilon^2$ and 
$r = O(n \log(n) / \epsilon^2)$ to achieve $(1+\epsilon)$-error 
w.p. $1-\delta$ and $0.5$, respectively. The coresets of 
\cite{DMM06a,DMM08} are useful only for unconstrained regression. 
Both \cite{DMM06a,DMM08} and our result use the SVD,
so they are computationally comparable. 
It is interesting to note here that Algorithm \ref{alg:SimpReg} 
at Theorem \ref{lem:regression} applies the method of Section 
\ref{chap316} to the left singular vectors of $[\matA; \b] \in 
\R^{m \times (n+1)}$ whereas \cite{DMM06a, DMM08} 
work with the left singular vectors of $\matA \in \R^{m \times n}$
(see Section \ref{chap33}). 
Next, by applying $SubspaceSampling$ on the left singular
vectors of $\matY$ we obtain a coreset with high
probability and arbitrary constraints.

\section{Coreset Construction with Randomized Sampling}
The coreset construction algorithm of this section is similar with
the one presented in Section \ref{chap51} with the only difference
being the use of the randomized technique of Section \ref{chap314}
instead of the deterministic technique of Section \ref{chap316} that
we used in Algorithm \ref{alg:SimpReg}. Algorithm \ref{alg:SimpReg1}
finds a coreset with high probability for arbitrary constrained regression.
Previous randomized coreset construction algorithms~\cite{DMM06a, DMM08} can not
handle arbitrary constraints and succeed only with constant probability~\cite{DMM08}.
 
The algorithm of this section will serve as a prequel to the 
``coreset'' construction algorithm that we will present in Section \ref{chap52}.
The algorithm of Section \ref{chap52} is very similar with Algorithm \ref{alg:SimpReg1}
presented here, with the only difference being the fact that, before applying the 
$SubspaceSampling$ method on the left singular vectors of the matrix containing
both $\matA$ and $\b$, we will pre-multiply $\matA$ and $\b$ with the 
Randomized Hadamard Transform that we presented in Section \ref{chap3110}. 
The algorithm of Section \ref{chap52} works for arbitrary constraint regression
but succeeds only with constant probability, which is due to Lemma \ref{lem:HU} of
Section \ref{chap3110}.

\begin{theorem} 
\label{lem:regression2}
Given $\matA \in \R^{m \times n}$ of rank $\rho=n$, $\b \in \R^m$, $0 < \delta < 1$,
and~\math{ 0 \le \epsilon \le \frac{1}{3}}, Algorithm~\ref{alg:SimpReg1}
constructs
\math{\Omega\in\R^{m\times r}}, \math{\matS\in\R^{r\times r}} with 
$r = \ceil{\frac{ 36 (n + 1) \ln( 2 (n+1) / \delta ) }{\epsilon^2}}$
in time 
$O\left( mn^2 +   n \ln( n / \delta ) \cdot \epsilon^{-2} \cdot \log( n \ln( n / \delta ) \epsilon^{-1} ) \right)$ 
such that \math{\tilde\x_{opt}} obtained from  \math{\matS\transp\Omega\transp\matA,\ \matS\transp\Omega\transp\b} satisfies w.p. $1 - \delta$
$$
\TNormS{\matA \tilde{\x}_{opt} - \b} 
 \le \left( 1 + \epsilon \right) \TNormS{\matA \x_{opt} - \b};
$$
\end{theorem}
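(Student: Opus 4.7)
The plan is to follow the template of Theorem~\ref{lem:regression} essentially verbatim, swapping the deterministic sampling primitive for its randomized analog. Concretely, form $\matY = [\matA,\b] \in \R^{m\times(n+1)}$, compute its SVD to get $\matU_\matY \in \R^{m\times k}$ with $k = \rank(\matY) \le n+1$, and then instead of the call $[\Omega,\matS] = BarrierSamplingI(\matU_\matY, r)$ used in Algorithm~\ref{alg:SimpReg}, invoke $[\Omega,\matS] = SubspaceSampling(\matU_\matY, 1, r)$ from Definition~\ref{def:sampling} with $\beta = 1$. The rest of the argument should track the prior proof with only the constants changing.

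The key step is to import a uniform singular-value bound in place of Corollary~\ref{cor:1set}. Lemma~\ref{lem:random} applied to $\matU_\matY$ gives, w.p.\ at least $1-\delta$ and for all $i=1,\ldots,k$,
$$1-\sqrt{\tfrac{4k\ln(2k/\delta)}{r}} \;\le\; \sigma_i^2(\matU_\matY\transp \Omega \matS) \;\le\; 1+\sqrt{\tfrac{4k\ln(2k/\delta)}{r}}.$$
Lemma~\ref{lem:eq} (equivalence of statements 3 and 5) then upgrades this spectral control into a pointwise bound: writing $\epsilon' = \sqrt{4k\ln(2k/\delta)/r}$, for every $\y \in \R^k$,
$$(1-\epsilon')\TNormS{\matU_\matY\y} \;\le\; \TNormS{\matS\transp\Omega\transp \matU_\matY \y} \;\le\; (1+\epsilon')\TNormS{\matU_\matY\y}.$$
With the same definitions of $\y_1,\y_2 \in \R^k$ used in the proof of Theorem~\ref{lem:regression} (so that $\matU_\matY\y_1 = \matA\x_{opt}-\b$ and $\matU_\matY\y_2 = \matA\tilde\x_{opt}-\b$), the chain
$$(1-\epsilon')\TNormS{\matU_\matY\y_2} \le \TNormS{\matS\transp\Omega\transp \matU_\matY\y_2} \le \TNormS{\matS\transp\Omega\transp \matU_\matY\y_1} \le (1+\epsilon')\TNormS{\matU_\matY\y_1}$$
goes through identically, with the middle inequality being the optimality of $\tilde\x_{opt}$ on the sampled problem (both $\x_{opt},\tilde\x_{opt}\in\cl D$).

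To finish, I would use $k \le n+1$ and choose $r = \lceil 36(n+1)\ln(2(n+1)/\delta)/\epsilon^2\rceil$, which forces $\epsilon' \le \epsilon/3$. The elementary estimate $(1+\epsilon')/(1-\epsilon') \le 1+\epsilon$ valid for $\epsilon \le 1/3$ then yields the claimed relative error. The run time decomposes into the SVD cost $O(mn^2)$ plus the $SubspaceSampling$ cost $O(mk + r\log r)$, which matches the statement once $r$ is substituted.

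The only subtle point I anticipate is a bookkeeping one: Lemma~\ref{lem:random} controls $\sigma_i^2(\matV\transp\Omega\matS)$ whereas our geometric reasoning needs $\TNormS{\matS\transp\Omega\transp\matU_\matY\y}$ versus $\TNormS{\matU_\matY\y}$ for \emph{arbitrary} $\y\in\R^k$ (not only unit vectors). This is exactly the content of the equivalence in Lemma~\ref{lem:eq}, so it will be invoked explicitly rather than glossed over; everything else is routine analogy with Theorem~\ref{lem:regression}.
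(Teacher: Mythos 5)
Your proposal is correct and follows essentially the same route as the paper's proof: SVD of the augmented matrix $\matY=[\matA,\b]$, a call to $SubspaceSampling(\matU_{\matY},1,r)$, the three-step sandwich inequality using the optimality of $\tilde\x_{opt}$ on the sampled problem, and the same choice of $r$ forcing $\epsilon'\le\epsilon/3$. Your explicit invocation of Lemma~\ref{lem:eq} to pass from singular-value control to the quadratic-form bound for arbitrary $\y$ is a point the paper leaves implicit, but it is the same argument.
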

\begin{algorithm}[t]
\begin{framed}
\textbf{Input:} $\matA\in\R^{m\times n}$ of rank $\rho=n$, $\b \in \R^m$, $0 < \delta < 1$, 
and \math{0 < \epsilon < 1/3}. \\
\noindent \textbf{Output:} \math{\Omega \in \R^{m \times r}}, \math{\matS \in \R^{r \times r}}
with $r = \ceil{\frac{ 36 (n+ 1) \ln( 2 (n+1) / \delta ) }{\epsilon^2}}$.
\begin{algorithmic}[1]
\STATE Compute the matrix \math{\matU_{\matY}\in\R^{m\times k}}
of the top $k$ left singular vectors of \math{\matY=[\matA,\b] \in \R^{m \times (n+1)}};
\math{k = \rank(\matY) \le \rank(\matA) + 1 = \rho + 1}.
\STATE Let $r = \ceil{\frac{ 36 (n + 1) \ln( 2 (n+1) / \delta ) }{\epsilon^2}}$.
\STATE 
{\bf return} \math{[\Omega,\matS]=SubspaceSampling(\matU_{\matY},1, r)}
(Lemma~\ref{lem:random})
\end{algorithmic}
\caption{Randomized coreset for constrained regression.}
\label{alg:SimpReg1}
\end{framed}
\end{algorithm}
\begin{proof} Let 
$ \matY = [\matA,\b]\in\R^{m\times(n+1)}$, and let its SVD: 
\math{\matY=\matU_{\matY}\Sigma_{\matY}\matV_{\matY}\transp}, where
\math{\matU_{\matY}\in\R^{m\times k}}, \math{\Sigma_{\matY}\in\R^{k \times k}}
and
\math{\matV_{\matY}\in\R^{(n+1)\times k}}.
Here $k = \rank(\matY) \le \rho + 1 = \rank(\matA) + 1$. For the failure probability $\delta$ of the Theorem, let
$r > 4 k \ln(2k / \delta)$ be a sampling parameter whose exact value we will be specified later. 
Let
$$[\Omega, \matS] = SubspaceSamplingI(\matU_{\matY}, 1, r).$$
Let \math{\y_1,\y_2\in\R^{k}} defined as
\mand{
\y_1=\Sigma_{\matY}\matV_{\matY}\transp
\left[
\begin{matrix}
\x_{opt}\\ 
-1
\end{matrix}
\right],
\qquad\hbox{and}\qquad
\y_2 = 
\left[
\begin{matrix}
\tilde\x_{opt}\\
 -1
\end{matrix}
\right].
}
Note that 
\math{\matU_{\matY}\y_1=\matA\x_{opt}-\b},
\math{\matU_{\matY}\y_2=\matA\tilde\x_{opt}-\b},
\math{\matS\transp\Omega\transp\matU_{\matY}\y_1=\matS\transp\Omega\transp(\matA\x_{opt}-\b)},
and
\math{\matS\transp\Omega\transp\matU_{\matY}\y_2=\matS\transp\Omega\transp(\matA\tilde\x_{opt}-\b)}.
We need to bound \math{\norm{\matU_{\matY}\y_2}} in terms of \math{\norm{\matU_{\matY}\y_1}}:
\mand{
 \left(1-\sqrt{\frac{4 k \ln(2k/\delta)}{r}}\right) 
\TNormS{\matU_{\matY} \y_2}
\mathop{\buildrel{(a)}\over{\leq}}
\TNormS{\matS\transp\Omega\transp\matU_{\matY}\y_2}
\mathop{\buildrel{(b)}\over{\leq}}
\TNormS{\matS\transp\Omega\transp\matU_{\matY}\y_1}
\mathop{\buildrel{(c)}\over{\leq}} 
}
$$
 \left(1+\sqrt{\frac{4 k \ln(2k/\delta)}{r}}\right)
\TNormS{\matU_{\matY} \y_1}.
$$
(a) and (c) use Lemma \ref{lem:random};
(b) follows because \math{\tilde\x_{opt}} is optimal for the
coreset regression.
After reorganization, using 
\math{k \le \rho + 1}, and assuming $r > 36 k \ln( 2(\rho+1) / \delta )$,
$$
\TNormS{\matA \tilde{\x}_{opt} - \b} 
\le
  \left(  1 + 3 \sqrt{ \frac{ 4 (\rho+1) \ln( 2(\rho+1) / \delta ) }{ r }} \right) \TNormS{\matA \x_{opt} - \b}  .  
$$
Setting 
$$r = \ceil{\frac{ 36 (\rho + 1) \ln( 2 (\rho+1) / \delta ) }{\epsilon^2}}$$ 
gives the bound in the Theorem. 
The overall running time is the sum of two terms: the time to compute \math{\matU_{\matY}} via the SVD and the time to run the $SubspaceSampling$ method on \math{\matU_{\matY}} (Lemma~\ref{lem:random} with 
\math{k \le \rho+1}).
\end{proof}


\section{``Coreset'' Construction with the Hadamard Transform}\label{chap52}
Recall the discussion in Section \ref{chap3110}. First, notice that, since $\matH \matD$
is a square orthonormal matrix, for every $\x$,
$ \TNormS{\matA \x - \b} = \TNormS{ \matH \matD \matA \x - \matH \matD \b}.$
So, it suffices to approximate a solution to the residual 
$\TNormS{ \matH \matD \matA \x - \matH \matD \b}$. In what follows, we essentially
apply the randomized method of the previous section to the regression problem involving
$\matH \matD \matA$ and $\matH \matD \b$ instead of $\matA$ and $\b$. Our algorithm
here constructs coresets for the problem with $\matH \matD \matA$ and $\matH \matD \b$; 
unfortunately, this corresponds to ``coresets'' for the original problem involving $\matA$
and $\b$. The crux in the analysis of this section is that we do not need to compute
the left singular vectors of the matrix $\matY = [\matA,\b]\in\R^{m\times(n+1)}$, since,
due to Lemma \ref{lem:HU}, after multiplying $\matA$ and $\b$ with 
$ \matH \matD $, the norms of the rows of the left singular vectors of $\matY$ 
are essentially known; so, applying Definition \ref{def:sampling} and
Lemma \ref{lem:random} with uniform probabilities $q_i = 1/m$ suffices to 
preserve the singular values of the sub-sampled matrix $\matU_{\matY}$, which
is all we need to achieve in order to prove Theorem \ref{lem:regression2}, the main
result of this section. 
\begin{algorithm}[t]
\begin{framed}
\textbf{Input:} $\matA\in\R^{m\times n}$ of rank $\rho=n$, $\b \in \R^m$, $0 < \delta < 1$, 
and \math{0 < \epsilon < 1/3}. \\
\noindent \textbf{Output:} \math{\Omega \in \R^{m \times r}}, \math{\matS \in \R^{r \times r}};
$r = \ceil{\frac{ 72 (n + 1) \ln( 2 (n+1) / \delta ) \log(40 (n+1)m) }{\epsilon^2}}$
\begin{algorithmic}[1]
\STATE Compute the matrix \math{\matU_{\matY}\in\R^{m\times k}}
of the top $k$ left singular vectors of \math{\matY=[\matA,\b] \in \R^{m \times (n+1)}};
\math{k = \rank(\matY) \le \rank(\matA) + 1 = \rho + 1}.
\STATE Let $r = \ceil{\frac{ 72 (n + 1) \ln( 2 (n+1) / \delta ) \log(40 (n+1)m) }{\epsilon^2}}$
\STATE 
{\bf return} \math{[\Omega,\matS]=SubspaceSampling(\matU_{\matY}, \frac{1}{2 \log(40km)}, r)}
(Lemma~\ref{lem:random})
\end{algorithmic}
\caption{Randomized ``coreset'' for constrained regression.}
\label{alg:SimpReg2}
\end{framed}
\end{algorithm}
\begin{theorem} 
\label{lem:regression3}
Given $\matA \in \R^{m \times n}$ of rank $\rho=n$, $\b \in \R^m$, $0 < \delta < 0.95$,
and~\math{ 0 \le \epsilon \le \frac{1}{3}}, Algorithm~\ref{alg:SimpReg2}
constructs
\math{\Omega\in\R^{m\times r}}, \math{\matS\in\R^{r\times r}} with 
$r = \ceil{\frac{ 72 (n + 1) \ln( 2 (n+1) / \delta ) \log(40 (n+1)m) }{\epsilon^2}}$
in time 
$$O\left( m n \log\left( n \cdot \ln(n/\delta) \cdot \log( n m) \cdot \epsilon^{-1}   \right)  \right)$$
such that \math{\tilde\x_{opt} \in \R^n} obtained from  \math{\matS\transp\Omega\transp\matH \matD\matA,\ \matS\transp\Omega\transp\matH \matD\b} satisfies w.p. $0.95 - \delta$
$$
\TNormS{\matA \tilde{\x}_{opt} - \b} 
 \le \left( 1 + \epsilon \right) \TNormS{\matA \x_{opt} - \b}.
$$
\end{theorem}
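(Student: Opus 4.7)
The plan is to mirror the proof of Theorem~\ref{lem:regression2}, replacing the exact subspace sampling step with the Subsampled Randomized Hadamard Transform (SRHT) analysis developed in Section~\ref{chap3110}. The crucial preliminary observation is that $\matH \matD \in \R^{m\times m}$ is a square orthonormal matrix, so for every $\x$ we have $\TNormS{\matA\x-\b} = \TNormS{\matH\matD\matA\x - \matH\matD\b}$. Hence, solving the constrained regression problem with data $(\matA,\b)$ is equivalent to solving it with data $(\matH\matD\matA,\matH\matD\b)$, and constructing a coreset for the transformed problem yields a ``coreset'' (linear combinations of rows) for the original.

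First I would let $\matY=[\matA,\b]\in\R^{m\times(n+1)}$ with SVD $\matY=\matU_{\matY}\Sigma_{\matY}\matV_{\matY}\transp$, where $k=\rank(\matY)\le n+1$. As in the proof of Theorem~\ref{lem:regression2}, define
$$
\y_1=\Sigma_{\matY}\matV_{\matY}\transp\begin{bmatrix}\x_{opt}\\-1\end{bmatrix},\qquad
\y_2=\Sigma_{\matY}\matV_{\matY}\transp\begin{bmatrix}\tilde\x_{opt}\\-1\end{bmatrix},
$$
so that $\matU_{\matY}\y_1=\matA\x_{opt}-\b$, $\matU_{\matY}\y_2=\matA\tilde\x_{opt}-\b$, and with $\Theta=\matS\transp\Omega\transp\matD\matH$ we have $\Theta\matU_{\matY}\y_j = \matS\transp\Omega\transp(\matH\matD\matA \cdot \cdot - \matH\matD\b)$ for the appropriate solution.

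Next I would invoke Lemma~\ref{preserves} (uniform sampling with $p_i=1/m$ and $\beta=1/(2\log(40km))$) applied to $\matU_{\matY}$. Combined with Lemma~\ref{lem:eq} (statement~5), this yields, with probability at least $0.95-\delta$, the two-sided bound
$$
(1-\xi_0)\,\TNormS{\matU_{\matY}\y}\ \le\ \TNormS{\Theta\matU_{\matY}\y}\ \le\ (1+\xi_0)\,\TNormS{\matU_{\matY}\y},
\qquad \xi_0=\sqrt{\tfrac{8k\ln(2k/\delta)\log(40km)}{r}},
$$
for all $\y$. Now chain: $(1-\xi_0)\TNormS{\matU_{\matY}\y_2}\le \TNormS{\Theta\matU_{\matY}\y_2}\le \TNormS{\Theta\matU_{\matY}\y_1}\le (1+\xi_0)\TNormS{\matU_{\matY}\y_1}$, where the middle inequality uses the optimality of $\tilde\x_{opt}$ for the coreset regression problem. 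Rearranging and using $k\le n+1$:
$$
\TNormS{\matA\tilde\x_{opt}-\b}\ \le\ \tfrac{1+\xi_0}{1-\xi_0}\TNormS{\matA\x_{opt}-\b}.
$$
Choosing $r=\lceil 72(n+1)\ln(2(n+1)/\delta)\log(40(n+1)m)/\epsilon^2\rceil$ (and using $\epsilon\le 1/3$) makes $\xi_0$ small enough that $(1+\xi_0)/(1-\xi_0)\le 1+\epsilon$, finishing the accuracy claim.

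For the running time, one does not explicitly form the dense $m\times n$ matrix $\matH\matD\matA$. Instead, for each of the $n$ columns of $\matA$ and for $\b$, Proposition~\ref{fast} computes the $r$ sampled entries of $\Theta\,\text{(column)}$ in $O(m\log r)$ operations, giving a total of $O(mn\log r)$. Substituting our choice of $r$ produces the stated bound. The main subtlety to track carefully will be the probability accounting: the $0.05$ comes from the event in Lemma~\ref{lem:HU} (so that the uniform probabilities $1/m$ dominate the required subspace-sampling probabilities up to the $\beta$-factor), and the $\delta$ comes from the singular-value concentration in Lemma~\ref{preserves}. Everything else is a mechanical adaptation of Theorem~\ref{lem:regression2}.
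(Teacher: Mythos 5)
Your proposal is correct and follows essentially the same route as the paper's proof: reduce to the rotated problem $(\matH\matD\matA,\matH\matD\b)$, apply the SRHT singular-value bound of Lemma~\ref{preserves} to the left singular factor of $[\matA,\b]$ with uniform sampling probabilities justified by Lemma~\ref{lem:HU}, chain the three inequalities through the optimality of $\tilde\x_{opt}$, and account for the running time via Proposition~\ref{fast}. Your writing of $\y_2$ with the $\Sigma_{\matY}\matV_{\matY}\transp$ factor is in fact the correct form (the paper omits it, apparently a typo), and your probability bookkeeping matches the paper's.
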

\begin{proof} Let $ \matY = [ \matA, \b]\in\R^{m\times(n+1)}$, and let its SVD: 
\math{\matY=\matU_{\matY}\Sigma_{\matY}\matV_{\matY}\transp}, where
\math{\matU_{\matY}\in\R^{m\times k}}, \math{\Sigma_{\matY}\in\R^{k \times k}}
and
\math{\matV_{\matY}\in\R^{(n+1)\times k}}.
Let $ \hat{\matY} = [\matH \matD \matA,\matH \matD \b] = \matH \matD \matY \in\R^{m\times(n+1)}$, and let its SVD: 
\math{\hat{\matY}=\matU_{\hat{\matY}}\Sigma_{\hat{\matY}}\matV_{\hat{\matY}}\transp}, where
\math{\matU_{\hat{\matY}}\in\R^{m\times k}}, \math{\Sigma_{\hat{\matY}}\in\R^{k \times k}}
and
\math{\matV_{\hat{\matY}}\in\R^{(n+1)\times k}}.
It is important to note that $ \matU_{\hat{\matY}} = \matH \matD \matU_{\matY}$, 
$\Sigma_{\hat{\matY}} = \Sigma_{\matY}$, and $\matV_{\hat{\matY}} = \matV_{\matY}$.
Here $k = \rank(\hat{\matY}) \rho + 1 = \rank(\matA) + 1$. For the failure probability $\delta$ of the Theorem, let
$r > 4 k \ln(2k / \delta)$ be a sampling parameter whose exact value we will be specified later. 
Let $[\Omega, \matS] = SubspaceSamplingI(\matU_{\hat{\matY}}, \frac{1}{2 \log(40km)}, r)$ and
\mand{
\y_1=\Sigma_{\hat{\matY}}\matV_{\hat{\matY}}\transp
\left[
\begin{matrix}
\x_{opt}\\ 
-1
\end{matrix}
\right] \in\R^{k},
\qquad\hbox{and}\qquad
\y_2 = 
\left[
\begin{matrix}
\tilde\x_{opt}\\
 -1
\end{matrix}
\right] \in\R^{k}.
}
Note that 
\math{\matU_{\hat{\matY}}\y_1=\matH \matD\matA\x_{opt}-\matH \matD\b},
\math{\matU_{\hat{\matY}}\y_2=\matH \matD\matA\tilde\x_{opt}-\matH \matD\b},
\math{\matS\transp\Omega\transp\matU_{\hat{\matY}}\y_1=\matS\transp\Omega\transp(\matH \matD\matA\x_{opt}-\matH \matD\b)},
and
\math{\matS\transp\Omega\transp\matU_{\hat{\matY}}\y_2=\matS\transp\Omega\transp(\matH \matD\matA\tilde\x_{opt}-\matH \matD\b)}.
Now: \mand{
 \left(1-\sqrt{\frac{8 k \ln(2k/\delta) \log(40km) }{r}}\right) 
\TNormS{\matU_{\hat{\matY}} \y_2}
\mathop{\buildrel{(a)}\over{\leq}}
\TNormS{\matS\transp\Omega\transp\matU_{\hat{\matY}}\y_2}
\mathop{\buildrel{(b)}\over{\leq}}
\TNormS{\matS\transp\Omega\transp\matU_{\hat{\matY}}\y_1}
\mathop{\buildrel{(c)}\over{\leq}} 
}
$$
 \left(1+\sqrt{\frac{8 k \ln(2k/\delta) \log(40km) }{r}}\right)
\TNormS{\matU_{\hat{\matY}} \y_1}.
$$
(a) and (c) use Lemma \ref{preserves};
(b) follows because \math{\tilde\x_{opt}} is optimal for the
coreset regression.
After reorganization, using \math{k \le \rho + 1}, and assuming $r > 72 k \ln(2k/\delta) \log(40km) $,
$$
\TNormS{\matH \matD\matA \tilde{\x}_{opt} - \matH \matD\b} 
\le
  \left(  1 + 3 \sqrt{\frac{8 k \ln(2k/\delta) \log(40km) }{r}} \right) 
  \TNormS{\matH \matD\matA \x_{opt} - \matH \matD\b}  .  
$$
Setting 
$$r = \ceil{\frac{ 72 (\rho + 1) \ln( 2 (\rho+1) / \delta ) \log(40 k m) }{\epsilon^2}}$$ 
and removing $\matH \matD$ from
both sides gives the bound in the Theorem. 
The overall running time is the sum of two terms: the time to compute $ \hat{\matY} =  \matH \matD [ \matA, \b] $
(Lemma~\ref{fast}) and the time to run the $SubspaceSampling$ method on \math{\matU_{\hat{\matY}}} (Lemma~\ref{lem:random} with \math{k \le \rho+1}): 
$O\left( m n \log\left( n \cdot \ln(n/\delta) \cdot \log( n m) \cdot \epsilon^{-1}   \right)  \right).$ 
\end{proof}         

\chapter{FEATURE SELECTION IN $K$-MEANS CLUSTERING}
\label{chap6}
\footnotetext[7]{Portions of this chapter previously appeared as:
C. Boutsidis, M.W. Mahoney and P. Drineas, Unsupervised Feature
Selection for the $k$-means Clustering Problem, Advances in Neural Information Processing Systems (NIPS), 2009,
and as: C. Boutsidis, A. Zouzias and P. Drineas, Random Projections for $k$-means Clustering, 
Advances in Neural Information Processing Systems (NIPS), 2010.
}
Consider $m$ points $\mathcal{P} = \{ p_1, p_2, ..., p_m \} \in \R^n$, and integer
$k$ denoting the number of clusters. The objective of 
$k$-means is to find a $k$-partition of 
$\mathcal{P}$ such that points that are ``close'' to each other belong to the same cluster and points 
that are ``far'' from each other belong to different clusters. 
A $k$-partition of $\mathcal{P}$ is a collection 
$\cl S=\{\mathcal{S}_1, \mathcal{S}_2, ..., \mathcal{S}_k\}$
of \math{k} non-empty pairwise disjoint
sets which covers \math{\cl P}.
Let $s_j=|\mathcal{S}_j|$ be the size of $\mathcal{S}_j$. 
For each set \math{S_j}, let \math{\bm\mu_j\in\R^n} be its centroid 
(the mean point):
\math{\bm\mu_j=(\sum_{p_i\in S_j}p_i)/s_j}.
The $k$-means objective function is 
$$
\mathcal{F}(\mathcal{P}, \cl S) = 
\sum_{i=1}^m\norm{p_i-\bm\mu(p_i)}_2^2,
$$
where \math{\bm\mu(p_i)} is the centroid of the cluster to which 
\math{p_i} belongs.
The goal of $k$-means is to find a partition
\math{\cl S} which minimizes \math{\cl F}.

For the remainder of this chapter, we will switch to a more convenient
linear algebraic formulation of the $k$-means clustering problem.
Define the data matrix
\math{\matA \in \R^{m \times n}}, which has the data points for
 its rows:
\math{\matA\transp=[p_1,\ldots,p_n]}.
We represent a clustering \math{\cl S} by its cluster indicator matrix
$\matX \in \R^{m \times k}$.
Each column \math{j=1,\ldots,k} of \math{\matX} represents a cluster. 
Each row \math{i=1,\ldots,m}
indicates the
cluster membership of the point \math{p_i}. So, 
\math{\matX_{ij}=1/\sqrt{s_j}} if and
only if data point \math{p_i} is in cluster \math{S_j}. Every row of $\matX$ has 
exactly one non-zero element, corresponding to the cluster the data 
point belongs to. There are \math{s_j} non-zero elements in column 
\math{j} which indicates the data points belonging to cluster 
\math{S_j}. 
To see that the two formulations are equivalent, 
$$ \cl F(\matA,\matX) = 
\norm{\matA - \matX \matX\transp \matA}_F^2
= \sum_{i=1}^{m} \norm{ p_i\transp - p_i\transp \matX\transp\matA }_2^2
= \sum_{i=1}^m\norm{p_i\transp-\bm\mu(p_i)\transp}_2^2
= \mathcal{F}(\mathcal{P}, \cl S).
$$ 
After some
elementary algebra, one can verify that
for $i=1,...,m,$ $p_{i}\transp \matX\transp\matA=\bm\mu(p_i)\transp$.
Using this formulation, the goal of $k$-means is to find an indicator 
matrix $\matX$ which minimizes \math{\norm{\matA - \matX \matX\transp \matA}_F^2}.

To evaluate the quality of different clusterings, 
without access to a ``ground truth'' partitioning (labels),
we will use the \math{k}-means objective function. Given some
clustering \math{\hat\matX}, we are interested
in the ratio \math{\cl F(\matA,\hat\matX)/\cl F(\matA,\matX_{opt})},
where \math{\matX_{opt}} is the optimal clustering. The choice 
of evaluating a clustering this way is not new: 
all~\cite{OR99,KSS04,HM04,PK05,FS06,ORSS06,AV07} provide results along the same lines. 

Below, we give the formal definitions of the $k$-means problem and a $k$-means 
approximation algorithm. Recall that our primarily goal in this thesis is to develop
techniques that select features  from the data; we do not design algorithms to cluster the data per se. 
A $k$-means approximation algorithm is useful in our discussion since it will be used 
to evaluate the quality of the clusterings that can be obtained after our feature selection techniques, 
so we include this definition as well. 
\begin{definition}
\textsc{[The k-means clustering problem]}
Given $\matA \in \mathbb{R}^{m \times n}$ (representing $m$
points -- rows -- described with respect to $n$ features --
columns) and a positive integer $k$ denoting the number of
clusters, find the indicator matrix $\matX_{opt} \in \R^{m \times k}$:
$$
\matX_{opt} = \arg \min_{\matX \in \cal{X}} \FNormS{\matA - \matX \matX\transp \matA}.
$$
The optimal value of the $k$-means clustering objective is
$$
\cl F(\matA,\matX_{opt}) = \min_{\matX \in \cal{X}} \FNormS{\matA - \matX \matX\transp \matA} = \FNormS{\matA - \matX_{opt} \matX_{opt}\transp\matA} = \cl F_{opt}.$$
In the above, $\cal{X}$ denotes the set of all $m \times k$ indicator matrices $\matX$.
\end{definition}
\begin{definition} \label{def:approx}
\textsc{[k-means approximation algorithm]}
An algorithm is a ``$\gamma$-approximation'' for the $k$-means clustering problem ($\gamma \geq 1$) if
it takes inputs $\matA$ and $k$, and returns an indicator matrix $\matX_{\gamma}$ such that
w.p. $1 - \delta_{\gamma}$:
\vspace{-.15in}
$$
\FNormS{\matA - \matX_{\gamma} \matX_{\gamma}\transp \matA} \leq \gamma \min_{\matX \in
\cal{X}} \FNormS{\matA - \matX \matX\transp \matA} = \gamma \cl F(\matA,\matX_{opt}) = \gamma \cl F_{opt}.
$$
\end{definition}
\vspace{-.1in}
\noindent 
An example of such an algorithm is~\cite{KSS04}
with $\gamma = 1+\epsilon$ ($0 < \epsilon < 1$), 
and $\delta_{\gamma}$ some constant in $(0,1)$. 
This method runs in $O(mn 2^{(k/\epsilon)^{O(1)}})$.


\begin{algorithm}[t]
\begin{framed}
\textbf{Input:} Dataset $\matA\in\R^{m\times n}$, number of clusters $k$, and \math{0 < \epsilon < \frac{1}{3}}. \\
\noindent \textbf{Output:} \math{\matC  \in \R^{m \times r}} with $r = O(k \log(k) / \epsilon^2)$ rescaled features.
\begin{algorithmic}[1]
\STATE Let $ \matZ = FastFrobeniusSVD(\matA, k, \epsilon)$    (Lemma~\ref{tropp2}).
\STATE Let $r = c_0 \cdot 4 k \ln(200k)/\epsilon^2$ ($c_0$ is a sufficiently large constant).
\STATE Let $[\Omega, \matS] = SubspaceSampling(\matZ,  1, r)$ (Section~\ref{chap31}).
\STATE Return $\matC = \matA \Omega \matS \in \R^{m \times r}$ with $r$ rescaled columns from $\matA$. 
\end{algorithmic}
\caption{Randomized Feature Selection for $k$-means Clustering.}
\label{alg:chap61}
\end{framed}
\end{algorithm}

\section{Feature Selection with Randomized Sampling}\label{chap61}
Given $\matA, k$, and $0 < \epsilon < 1/3$, Algorithm~\ref{alg:chap61}
is our main algorithm for feature selection in $k$-means clustering. 
In a nutshell, construct the matrix $\matZ$ with the (approximate) top-$k$ right
singular vectors of $\matA$ and select $r = O(k \log(k)/\epsilon^2)$ columns 
from $\matZ\transp$ with the randomized technique of Section~\ref{chap314}.
One can replace the first step in Algorithm~\ref{alg:chap61} with the exact SVD of
$\matA$; the result though is asymptotically the
same as the one we will present in Theorem~\ref{fastkmeans}. 
Working with the approximate singular vectors $\matZ$ gives a considerably faster algorithm. 
\begin{theorem}\label{fastkmeans}
Let $\matA \in \R^{m \times n}$ and $k$ are inputs of the $k$-means clustering problem.
Let $\epsilon \in (0,1/3)$ and, by using Algorithm~\ref{alg:chap61} 
in $O( m n k/\epsilon + k \ln(k)/\epsilon^2\log(k \ln(k)/\epsilon) )$
construct features $\matC \in \R^{m \times r}$ with
$r = O(k \log(k) / \epsilon^2)$. Run any
$\gamma$-approximation $k$-means algorithm on $\matC, k$ and construct $\matX_{\tilde{\gamma}}$.
Then w.p. $0.21 - \delta_{\gamma}$:
$$
\FNorm{\matA - \matX_{\tilde{\gamma}} \matX_{\tilde{\gamma}}\transp  \matA}^2
\leq \left(1+(2+\epsilon)\gamma\right) \FNorm{\matA - \matX_{opt}
\matX_{opt}\transp  \matA}^2.
$$
\end{theorem}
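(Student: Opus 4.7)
My plan is to combine the approximate SVD factorization of Lemma~\ref{tropp2}, the matrix Pythagoras theorem (Lemma~\ref{lem:pyth}), and the spectral guarantees of $SubspaceSampling$ (Lemmas~\ref{lem:random} and~\ref{lem:fnorm}) to transfer the $\gamma$-approximation obtained on $\matC$ back to a near-optimal clustering of $\matA$. Writing $\matA=\matA\matZ\matZ\transp+\matE$ with $\matZ\transp\matZ=\matI_k$ and $\matE\matZ=\bm{0}$, I observe that $\matA\matZ\matZ\transp\matE\transp=\bm{0}$, so for every cluster-indicator matrix $\matX$ the two matrices $(\matI_m-\matX\matX\transp)\matA\matZ\matZ\transp$ and $(\matI_m-\matX\matX\transp)\matE$ satisfy the orthogonality hypothesis of Lemma~\ref{lem:pyth}. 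Combined with $\matZ\transp\matZ=\matI_k$, this yields the key splitting
$$
\FNormS{\matA-\matX\matX\transp\matA}=\underbrace{\FNormS{(\matI_m-\matX\matX\transp)\matA\matZ}}_{f(\matX)}+\underbrace{\FNormS{(\matI_m-\matX\matX\transp)\matE}}_{g(\matX)}.
$$
The term $g$ is the easy one: since $\matI_m-\matX\matX\transp$ is a projector, $g(\matX_{\tilde{\gamma}})\le\FNormS{\matE}$, and Lemma~\ref{tropp2} together with Markov's inequality produces a constant-probability bound of the form $g(\matX_{\tilde{\gamma}})\le c_1(1+\epsilon)\cl F_{opt}$, where I have used $\FNormS{\matA-\matA_k}\le\cl F_{opt}$ because the optimal rank-$k$ approximation is at most as costly as any rank-$k$ clustering. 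This contributes the additive ``$1$'' of the final estimate.

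The harder term is $f(\matX_{\tilde{\gamma}})$, and here the sampling does the work. Let $\matW=\matZ\transp\Omega\matS\in\R^{k\times r}$. With $r=c_0\cdot 4k\ln(200k)/\epsilon^2$, Lemma~\ref{lem:random} ensures $\sigma_i(\matW)\in[1-O(\epsilon),1+O(\epsilon)]$ for all $i\le k$ with probability $1-\delta$, so $\matW\matW^+=\matI_k$ and both $\TNormS{\matW^+}$ and $\TNormS{\matW}$ are $1+O(\epsilon)$. Lemma~\ref{lem:fnorm} applied with $\matY=\matE$ gives $\FNormS{\matE\Omega\matS}\le\delta^{-1}\FNormS{\matE}$ with probability $1-\delta$. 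Writing $\matC=\matA\matZ\matW+\matE\Omega\matS$ and using $\matA\matZ=\matA\matZ\matW\matW^+=(\matC-\matE\Omega\matS)\matW^+$, spectral submultiplicativity gives
$$
f(\matX)\le\TNormS{\matW^+}\FNormS{(\matI_m-\matX\matX\transp)(\matC-\matE\Omega\matS)}.
$$
I then expand the right-hand side via $\FNormS{a-b}\le 2\FNormS{a}+2\FNormS{b}$, set $\matX=\matX_{\tilde{\gamma}}$, and apply the $\gamma$-approximation guarantee $\FNormS{\matC-\matX_{\tilde{\gamma}}\matX_{\tilde{\gamma}}\transp\matC}\le\gamma\FNormS{\matC-\matX_{opt}\matX_{opt}\transp\matC}$. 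To bound the remaining term $\FNormS{\matC-\matX_{opt}\matX_{opt}\transp\matC}$, I repeat the same triangle decomposition for $\matX_{opt}$ and use $\TNormS{\matW}\le 1+O(\epsilon)$ together with $f(\matX_{opt})\le\cl F_{opt}$. After absorbing the $O(\FNormS{\matE\Omega\matS})=O(\cl F_{opt})$ cross terms, this produces $f(\matX_{\tilde{\gamma}})\le(2+O(\epsilon))\gamma\cl F_{opt}$.

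Adding the bounds on $f$ and $g$ and rescaling $\epsilon$ by a constant yields the advertised inequality $\FNormS{\matA-\matX_{\tilde{\gamma}}\matX_{\tilde{\gamma}}\transp\matA}\le(1+(2+\epsilon)\gamma)\cl F_{opt}$. The success probability $0.21-\delta_{\gamma}$ follows by a union bound over four events: the Markov event controlling $\FNormS{\matE}$, the singular-value event of Lemma~\ref{lem:random}, the Frobenius-norm event of Lemma~\ref{lem:fnorm}, and the $(1-\delta_{\gamma})$ event of the $\gamma$-approximation solver; balancing the three sampling-failure probabilities against the Markov constant produces the quoted $0.21$. The main obstacle I anticipate is pinning the coefficient of $\gamma$ at exactly $2+\epsilon$ rather than $4$: the crude inequality $\FNormS{a-b}\le 2\FNormS{a}+2\FNormS{b}$ applied twice in succession produces a spurious $4\gamma$, so one of the two expansions must be sharpened, either by replacing it with the tighter $\FNormS{a-b}\le(1+\alpha)\FNormS{a}+(1+\alpha^{-1})\FNormS{b}$ for $\alpha=\Theta(\epsilon)$ and absorbing the $(1+\alpha^{-1})\FNormS{\matE\Omega\matS}$ contribution into the additive $\cl F_{opt}$, or by exploiting the orthogonality $\matE\matZ=\bm{0}$ to annihilate one of the cross terms prior to squaring.
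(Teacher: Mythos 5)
Your skeleton matches the paper's: the Pythagoras split into $f$ and $g$, the projector bound on $g$, the $\sqrt{\gamma}$ transfer via the approximation guarantee on $\matC$, and the spectral control of $\matW=\matZ\transp\Omega\matS$ are all the right moves. But the step you flag as ``the main obstacle'' is a genuine gap, not a loose end, and your first proposed fix does not work. The weighted inequality $\FNormS{a-b}\le(1+\alpha)\FNormS{a}+(1+\alpha^{-1})\FNormS{b}$ with $\alpha=\Theta(\epsilon)$ requires the $b$-term to be $O(\epsilon)$ relative to $F_{opt}$, but $\FNormS{\matE\Omega\matS}$ is not small: Lemma~\ref{lem:fnorm} only gives $\FNormS{\matE\Omega\matS}\le\delta^{-1}\FNormS{\matE}=\Theta(F_{opt})$, so $(1+\alpha^{-1})\FNormS{\matE\Omega\matS}=\Theta(F_{opt}/\epsilon)$ and cannot be absorbed anywhere. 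Your second suggestion (exploit $\matE\matZ=\bm{0}$) is the right idea, and it is exactly what statement 3 of Lemma~\ref{lem:rsall} delivers: because the sampling is an unbiased estimator of $\matE\matZ=\bm{0}$ (statement 1) and $\TNorm{(\matZ\transp\Omega\matS)^+-(\matZ\transp\Omega\matS)\transp}=O(\epsilon)$ (statement 2), the combination $\FNorm{\matE\Omega\matS\matW^+}$ is $O(\epsilon)\FNorm{\matE}$ and contributes only $O(\epsilon^2)F_{opt}$ after squaring. You must use that combined bound rather than estimating $\FNorm{\matE\Omega\matS}$ and $\TNorm{\matW^+}$ separately, which only gives $O(1)\FNorm{\matE}$.

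Even with that repair, your treatment of $\FNormS{(\matI_{m}-\matX_{opt}\matX_{opt}\transp)\matA\Omega\matS}$ still loses the constant. Decomposing it into $(\matI_{m}-\matX_{opt}\matX_{opt}\transp)\matA\matZ\matW$ plus $(\matI_{m}-\matX_{opt}\matX_{opt}\transp)\matE\Omega\matS$ and bounding each piece gives at best $(\sqrt{F_{opt}}+\delta^{-1/2}\sqrt{F_{opt}})^2\ge 4F_{opt}$, since both pieces are of order $\sqrt{F_{opt}}$ and there is no orthogonality between them after sampling. The paper instead applies Lemma~\ref{lem:fnorm} directly to the whole matrix $\matY=(\matI_{m}-\matX_{opt}\matX_{opt}\transp)\matA$: the sampled Frobenius norm is an unbiased estimator of $F_{opt}$, and Markov's inequality with failure probability $3/4$ yields $\FNormS{(\matI_{m}-\matX_{opt}\matX_{opt}\transp)\matA\Omega\matS}\le\frac{4}{3}F_{opt}$, which combined with $\TNormS{(\matZ\transp\Omega\matS)^+}\le(1-\epsilon)^{-1}$ gives $\frac{4}{3-3\epsilon}\le 2$. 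This is also precisely where the low success probability $0.21-\delta_{\gamma}$ in the statement comes from: the constant $2$ is bought by accepting a $3/4$ failure probability on this single Markov event, and no deterministic triangle-inequality expansion of that term will recover $2+\epsilon$.
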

\noindent To prove Theorem \ref{fastkmeans}, we first need  
Lemma~\ref{lem:rsall}, which we prove in the Appendix. 
\begin{lemma}\label{lem:rsall}
Fix $\matA$, $k$, $0 < \epsilon < 1/3$ and  $0 < \delta < 1$.
Via Lemma~\ref{tropp2}, 
for some $r$, let $[\Omega, \matS] = SubspaceSampling(\matZ,  1, r).$
From Lemma~\ref{tropp2}: $\matA = \matA\matZ\matZ\transp + \matE$. Then:
\begin{enumerate}

\item For any $r > 0$ and w.p. $1 - \delta$:
$ \FNormS{ \matE \Omega \matS \matS\transp \Omega\transp \matZ } \le \frac{k}{\delta r} \FNormS{ \matE }$.

\item Let $r =  4 k \ln(2k/\delta)/\epsilon^2$; then
w.p.  $1 - \delta$:
$\TNorm{(\matZ\transp \Omega \matS)^+ - (\matZ\transp \Omega \matS)\transp} \le \frac{\epsilon}{\sqrt{1-\epsilon}}$ 

\item For some $\tilde{\matE} \in \R^{m \times n}$, let
$ \matA \matZ \matZ\transp = \matA \Omega \matS (\matZ\transp \Omega \matS)^+ \matZ\transp + \tilde{\matE}$;
then, if $r = 4 k \ln(2k/\delta)/\epsilon^2$, w.p. $1 - 3\delta$:
$ \FNorm{ \tilde{\matE} } \le \frac{ 1.6 \epsilon}{\sqrt{\delta}}  \FNorm{\matE}$. 
\end{enumerate}
\end{lemma}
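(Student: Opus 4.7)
}

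My plan is to handle the three parts independently, then combine them by a union bound for the bookkeeping in part~3. Throughout, set $\matX := \matZ\transp\Omega\matS \in \R^{k\times r}$; recall that $\matZ\transp\matZ=\matI_k$, $\matE\matZ=\bm{0}$, and $\FNormS{\matZ}=k$, so the sampling probabilities produced by $SubspaceSampling(\matZ,1,r)$ are exactly $p_i = \TNormS{\matZ_{(i)}}/k$.

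For part~1, I will invoke the standard matrix-multiplication sampling bound (Eqn.~(4) of Lemma~4 in~\cite{DKM06a}, also used in the proof of the Hadamard theorem above) applied to $\matE$ and $\matZ$. Because $\matE\matZ=\bm{0}$, the negative $-\tfrac{1}{r}\FNormS{\matE\matZ}$ term vanishes, and the choice $p_i = \TNormS{\matZ_{(i)}}/k$ exactly cancels the $\TNormS{\matZ_{(i)}}$ in the numerator of the bound, leaving $\Expect{\FNormS{\matE\Omega\matS\matS\transp\Omega\transp\matZ}} \le (k/r)\FNormS{\matE}$. A single application of Markov's inequality on the nonnegative random variable $\FNormS{\matE\Omega\matS\matS\transp\Omega\transp\matZ}$ yields the stated w.p. $1-\delta$ bound.

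For part~2, the matrix $\matZ$ has orthonormal columns, so Lemma~\ref{lem:random} with $\beta=1$ and our choice $r = 4k\ln(2k/\delta)/\epsilon^2$ gives $1-\epsilon \le \sigma_i^2(\matX) \le 1+\epsilon$ for all $i=1,\dots,k$ w.p.\ $1-\delta$. I would then write the SVD $\matX = \matU_\matX\Sigma_\matX\matV_\matX\transp$ (with $\matU_\matX\in\R^{k\times k}$, $\Sigma_\matX\in\R^{k\times k}$, $\matV_\matX\in\R^{r\times k}$), so that $\matX^+ - \matX\transp = \matV_\matX(\Sigma_\matX^{-1}-\Sigma_\matX)\matU_\matX\transp$. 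Since $\matU_\matX$ and $\matV_\matX$ are unitary and $\Sigma_\matX^{-1}-\Sigma_\matX$ is diagonal, $\TNorm{\matX^+-\matX\transp} = \max_i |1/\sigma_i - \sigma_i| = \max_i |1-\sigma_i^2|/\sigma_i \le \epsilon/\sqrt{1-\epsilon}$, exactly as in the analogous calculation I carried out for the Hadamard result earlier.

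For part~3, the crucial observation is algebraic: substituting $\matA = \matA\matZ\matZ\transp + \matE$ into $\matA\Omega\matS(\matZ\transp\Omega\matS)^+\matZ\transp$ and using the identity $\matX\matX^+ = \matI_k$ (which holds on the event from part~2, since $\matX$ has full row rank $k$) produces the cancellation $\matA\matZ\matZ\transp\Omega\matS\matX^+\matZ\transp = \matA\matZ\matZ\transp$, so
\[
\tilde\matE \;=\; -\,\matE\Omega\matS(\matZ\transp\Omega\matS)^+\matZ\transp.
\]
I would then split $(\matZ\transp\Omega\matS)^+ = \matS\transp\Omega\transp\matZ + \bigl((\matZ\transp\Omega\matS)^+ - \matS\transp\Omega\transp\matZ\bigr)$, apply spectral submultiplicativity with $\TNorm{\matZ\transp}=1$, and bound the two resulting pieces: the first by $\FNorm{\matE\Omega\matS\matS\transp\Omega\transp\matZ}$ (controlled by part~1), and the second by $\FNorm{\matE\Omega\matS}\cdot\TNorm{(\matZ\transp\Omega\matS)^+-(\matZ\transp\Omega\matS)\transp}$ (controlled by Lemma~\ref{lem:fnorm} with $\matY=\matE$, combined with part~2). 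Plugging $r = 4k\ln(2k/\delta)/\epsilon^2$ into part~1 gives a bound of order $\epsilon/\sqrt{\delta\ln(2k/\delta)}$ times $\FNorm{\matE}$, while Lemma~\ref{lem:fnorm} contributes a factor $1/\sqrt{\delta}$; summing the two contributions and using $\epsilon<1/3$ to absorb the $\sqrt{1-\epsilon}$ and $\sqrt{\ln(2k/\delta)}\ge\sqrt{\ln 2}$ factors into the constant $1.6$ yields the claimed bound. The failure probability $3\delta$ is the union over the three events: part~1 (once), part~2, and the Lemma~\ref{lem:fnorm} event. The main obstacle I anticipate is not the bounds themselves but the algebraic cancellation that produces the clean expression for $\tilde\matE$; once that is in hand, every remaining step reuses tools already developed in Sections~\ref{chap22} and~\ref{chap31}.
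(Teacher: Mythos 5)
Your proposal is correct and follows essentially the same route as the paper's proof: the matrix-multiplication bound of \cite{DKM06a} plus Markov for part~1, the singular-value perturbation of $\Sigma_\matX^{-1}-\Sigma_\matX$ via Lemma~\ref{lem:random} for part~2, and for part~3 the cancellation $\tilde\matE = -\matE\Omega\matS(\matZ\transp\Omega\matS)^+\matZ\transp$ followed by the split $(\matZ\transp\Omega\matS)^+ = (\matZ\transp\Omega\matS)\transp + \bigl((\matZ\transp\Omega\matS)^+-(\matZ\transp\Omega\matS)\transp\bigr)$, with the same three-event union bound. The only quibble is that your constant bookkeeping via $\sqrt{\ln 2}$ is slightly too loose to land exactly on $1.6$; the paper uses $\ln(2k/\delta)\ge\ln 4$ (assuming $k>1$) to make the arithmetic close.
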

\begin{proof} (of Theorem~\ref{fastkmeans})
We start by manipulating the term 
$\FNormS{\matA -\matX_{\tilde{\gamma}} \matX_{\tilde{\gamma}}\transp \matA}$. 
Replacing $\matA = \matB \matZ\transp + \matE$ and using
Matrix Pythagoras:
\vspace{-.12in}
\begin{eqnarray}
\label{eqn:f1tt} \FNormS{\matA - \matX_{\tilde{\gamma}} \matX_{\tilde{\gamma}}\transp
\matA} = \underbrace{\FNormS{(\matI_{m } - \matX_{\tilde{\gamma}}
\matX_{\tilde{\gamma}}\transp) \matB \matZ\transp}}_{\theta_3^2} + \underbrace{\FNormS{(\matI_{m }
- \matX_{\tilde{\gamma}} \matX_{\tilde{\gamma}}\transp) \matE }}_{\theta_4^2}.
\end{eqnarray}
We first bound the second term of eqn.~(\ref{eqn:f1tt}). Since
$\matI_{m }-\matX_{\tilde{\gamma}}\matX_{\tilde{\gamma}}\transp$ is a projector matrix,
it can be dropped without increasing the Frobenius norm. From Lemma~\ref{tropp2} and
Markov's inequality w.p. $0.99$: $\FNormS{\matE} \le (1+100\epsilon) \FNormS{\matA -\matA_k}.$
Note also that $\matX_{opt}\matX_{opt}\transp\matA$ has rank at most $k$; 
so, overall, w.p. $0.99$:
$$
\theta_4^2 \le (1+100\epsilon)\FNormS{\matA - \matA_k} \le (1+100\epsilon)\FNormS{\matA - \matX_{opt}\matX_{opt}\transp\matA}=   (1+100\epsilon) F_{opt}.
$$
We now bound the first term in eqn.~(\ref{eqn:f1tt}):
\vspace{-.12in}
\begin{eqnarray}
\label{t0cc} \theta_3
&\leq&  \FNorm{(\matI_{m } -\matX_{\tilde{\gamma}}\matX_{\tilde{\gamma}}\transp)
\matA \Omega \matS (\matZ\transp \Omega \matS)^+\matZ\transp} + \FNorm{\tilde{\matE}} \\
\label{t1cc}
&\leq&  \FNorm{(\matI_{m } - \matX_{\tilde{\gamma}}\matX_{\tilde{\gamma}}\transp)\matA \Omega \matS} 
\TNorm{(\matZ\transp \Omega \matS)^+} + \FNorm{\tilde{\matE}} \\
\label{t2cc}
&\leq& \sqrt{\gamma} \FNorm{(\matI_{m } - \matX_{opt}\matX_{opt}\transp)\matA \Omega \matS} 
\TNorm{(\matZ\transp \Omega \matS)^+}  + \FNorm{\tilde{\matE}} 
\end{eqnarray}
In eqn.~(\ref{t0cc}), we used the third statement of Lemma \ref{lem:rsall}, the triangle
inequality, and the fact that $\matI_m -
\tilde{\matX}_{\gamma}\tilde{\matX}_{\gamma}\transp$ is a projector matrix and
can be dropped without increasing a unitarily invariant norm. In
eqn.~(\ref{t1cc}), we used spectral submultiplicativity and the fact that $\matZ\transp$ 
can be dropped without changing the spectral norm. In eqn.~(\ref{t2cc}), we replaced
$\matX_{\tilde{\gamma}}$ by $\matX_{opt}$ and the factor $\sqrt{\gamma}$ appeared
in the first term. To better understand this step, notice that
$\matX_{\tilde{\gamma}}$ gives a $\gamma$-approximation to the optimal
$k$-means clustering of the matrix $\matC = \matA \Omega \matS$, so any other $m \times
k$ indicator matrix (e.g. $\matX_{opt}$):
$$\FNormS{\left(\matI_{m } - \matX_{\tilde{\gamma}} \matX_{\tilde{\gamma}}\transp\right) \matA \Omega \matS}
\leq \gamma \min_{\matX \in \cal{X}} \FNormS{(\matI_{m } - \matX \matX\transp) \matA \Omega \matS} \leq
\gamma \FNormS{\left(\matI_{m} - \matX_{opt} \matX_{opt}\transp\right) \matA \Omega \matS}.$$ 
By using Lemma \ref{lem:fnorm} with $\delta = 3/4$, Lemma~\ref{lem:random},
and the union bound on these two probabilistic events, w.p. $1 - \frac{3}{4} - \delta$:
\vspace{-0,13in}
$$ \FNorm{(\matI_{m} -\matX_{opt}\matX_{opt}\transp)\matA \Omega \matS} 
\TNorm{(\matZ\transp \Omega \matS)^+} \leq  \sqrt{ \frac{4}{3-3\epsilon} F_{opt}}.$$
We are now in position to bound $\theta_3$; set $\delta=0.01$. Assuming $1 \leq \gamma$: 
$$ \theta_3 \leq \left(  \sqrt{\frac{4}{3-3\epsilon}} + \frac{1.6 \epsilon \sqrt{1+100\epsilon}}{\sqrt{0.01}}\right) \sqrt{\gamma } \sqrt{F_{opt}} 
\leq \left(  \sqrt{2} + 94 \epsilon  \right) \sqrt{\gamma } \sqrt{F_{opt}}.
$$
This bound holds w.p. $1 - \frac{3}{4} - 3 \cdot 0.01 - 0.01$ due to the third statement of Lemma~\ref{lem:rsall}
and the fact that $\FNormS{\matE} \le (1+100\epsilon) \FNormS{\matA -\matA_k}$ w.p. $0.99$.
The last inequality follows from our choice of $\epsilon < 1/3$ and elementary algebra. Taking squares on both sides:
$$ \theta_{3}^2 \le (2 + 9162 \epsilon) \gamma F_{opt}.$$  
Overall (assuming $\gamma \ge 1$):
$$ \FNormS{\matA -\matX_{\tilde{\gamma}} \matX_{\tilde{\gamma}}\transp \matA} \le \theta_{3}^2 + \theta_{4}^2 \le
(2 + 9162 \epsilon) \gamma F_{opt} + (1+100\epsilon) F_{opt} \le F_{opt} + ( 2 + 10^3 \epsilon  )\gamma F_{opt}.
$$ 
Rescaling $\epsilon$ accordingly ($c_0= 10^6$) gives the bound in the Theorem. The failure probability
follows by a union bound on all the probabilistic events involved in the proof
of this theorem. Indeed, $\frac{3}{4} + 3 \cdot 0.01 + 0.01 + \delta_{\gamma}= 0.79 + \delta_{\gamma}$ 
is the overall failure probability. 
\end{proof}

\section*{Existence of $O(k/\epsilon^2)$ ``good'' features}
Theorem~\ref{fastkmeans} proved that 
$O( k \log(k) / \epsilon^2 )$ features suffice to preserve the
clustering structure of a dataset within a factor of $3+\epsilon$.
The technique that we used to select the features breaks for $r = o(k\log(k))$
(due to Lemma~\ref{lem:random} in Section~\ref{chap314}).
Recently, in~\cite{BM11}, by using more sophisticated methods 
(extensions of the techniques of Sections \ref{chap317} and \ref{chap318}
to sample columns from multiple matrices simultaneously),
we proved the existence of $O(k/\epsilon^2)$ features with approximation $8+\epsilon$.
Unhappily, to find these features, the (deterministic) algorithm in~\cite{BM11} requires knowledge 
of the optimum partition $\matX_{opt}$, which is not realistic in any practical application. 
On the positive side, we managed to get a (randomized) algorithmic version with $O(k / \epsilon^2)$ features but this 
gives only a $O(\log(k))$ approximation. An interesting open question is whether one can get
constructively $(1+\epsilon)$ or some constant-factor 
approximation with $O(k/\epsilon^2)$ or even $O(k/\epsilon)$ features
(hopefully with a ``small'' hidden constant). 

\section{Feature Extraction with Random Projections}\label{chap62}
 
\begin{algorithm}[t]
\begin{framed}
\textbf{Input:} Dataset $\matA\in\R^{m\times n}$, number of clusters $k$, and \math{0 < \epsilon < \frac{1}{3}}. \\
\noindent \textbf{Output:} \math{\matC  \in \R^{m \times r}} with $r = O(k / \epsilon^2)$ artificial features.
\begin{algorithmic}[1]
\STATE Set  $r =  c_0 k /\epsilon^2$ for a sufficiently large constant $c_0$.
\STATE Compute a random $n \times r$ matrix $R$ as follows. For all $i=1,...,n$, $j=1,...,r$ (i.i.d)
   \[ R_{ij} = \begin{cases}
       +1/\sqrt{r}, \text{w.p. 1/2},\\
      -1/\sqrt{r}, \text{w.p. 1/2}.
\end{cases} \]
\STATE Compute $\matC = \matA R $ with the Mailman Algorithm (see text). 
\STATE Return $\matC\in \R^{m \times r}$. 
\end{algorithmic}
\caption{Randomized Feature Extraction for $k$-means Clustering.}
\label{alg:chap62}
\end{framed}
\end{algorithm} 
We prove that any set of $m$ points in $n$
dimensions (rows in a matrix $\matA \in \R^{m \times n}$) can be projected into $r =
O(k / \epsilon^2)$ dimensions, for any $\epsilon \in (0,1/3)$, in
$O(m n \lceil \epsilon^{-2} k/ \log(n) \rceil )$ time, such that, with 
constant probability, the optimal $k$-partition of the points
is preserved within a factor of $2+\epsilon$. The projection is
done by post-multiplying $\matA$ with an $n \times r$ random
matrix $R$ having entries $+1/\sqrt{r}$ or $-1/\sqrt{r}$ with equal probability.
More specifically, on input $\matA, k$, and $\epsilon$, we construct $\matC \in \R^{m \times r}$
with Algorithm~\ref{alg:chap62}.

{\bf Running time.} The algorithm needs $O(m k /\epsilon^2)$ time to generate $R$; then,
the product $\matA R$ can be naively computed in $O(mnk/\epsilon^2)$. 
One though can employ the so-called mailman algorithm for matrix
multiplication~\cite{LZ09} and compute the product $\matA R$ in $O(m n \lceil \epsilon^{-2} k/ \log(n) \rceil )$. 
Indeed, the mailman algorithm computes (after preprocessing) a matrix-vector product of any $n$-dimensional vector (row of $\matA$) with an 
$n \times \log(n)$ sign matrix in $O(n)$ time. 
Reading the input $n \times \log n$ sign matrix requires $O(n\log n)$ time. However, in our case we only consider 
multiplication with a random sign matrix, therefore we can avoid the preprocessing step by directly computing 
a random correspondence matrix as discussed in~\cite[Preprocessing Section]{LZ09}.
By partitioning the columns of our $n \times r$ matrix $R$ into 
$\lceil r/\log(n)\rceil$ blocks, the claim follows. 

{\bf Analysis.} Theorem \ref{thm:second_result} is our quality-of-approximation-result 
regarding the clustering that can be obtained with the features returned
from the above algorithm. 
Notice that if $\gamma = 1$, the distortion is at most $2+\epsilon$, as advertised.
If the $\gamma$-approximation algorithm is~\cite{KSS04} the overall approximation factor
would be $(1 + (1+\epsilon)^2)$ with running time
$O( m n \lceil \epsilon^{-2} k/ \log(n) \rceil + 2^{(k/\epsilon)^{O(1)}} m k / \epsilon^2 )$.
\begin{theorem}\label{thm:second_result}
Let $\matA \in \R^{m \times n}$ and $k$ are inputs of the $k$-means clustering problem.
Let $\epsilon \in (0,1/3)$ and construct features $\matC \in \R^{m \times r}$ with
$r = O(k / \epsilon^2)$
by using Algorithm~\ref{alg:chap62} in $O(m n \lceil \epsilon^{-2} k/ \log(n) \rceil )$. 
Run any $\gamma$-approximation $k$-means algorithm on $\matC, k$ and construct $\matX_{\tilde{\gamma}}$.
Then w.p. $0.95 - \delta_{\gamma}$:
$$
\FNorm{\matA - \matX_{\tilde{\gamma}} \matX_{\tilde{\gamma}}\transp  \matA}^2
\leq \left(1+(1+\epsilon)\gamma\right) \FNorm{\matA - \matX_{opt}
\matX_{opt}\transp  \matA}^2.
$$
\end{theorem}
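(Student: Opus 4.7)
}
The plan is to mimic the structure of the proof of Theorem~\ref{fastkmeans}, but replace the column sampling identity $\matA\matZ\matZ\transp = \matA\Omega\matS(\matZ\transp\Omega\matS)^+\matZ\transp + \tilde{\matE}$ with the random projection identity from statement 6 of Lemma~\ref{lem:rpall}, namely $\matA_k = \matA R (\matV_k\transp R)^+ \matV_k\transp + \matE$ with $\FNorm{\matE} \le 4\epsilon \FNorm{\matA-\matA_k}$ (w.p.\ $0.97$). The overall scheme is to split the error via Matrix Pythagoras along the optimal rank-$k$ split of $\matA$, bound the ``tail'' part trivially, and reduce the ``head'' part to the clustering problem on $\matC=\matA R$ using the near-isometry properties of $R$ restricted to $\col(\matV_k)$.

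First, apply Lemma~\ref{lem:pyth} using $\matA_k \matA_{\rho-k}\transp = \bm{0}$:
$$
\FNormS{\matA - \matX_{\tilde\gamma}\matX_{\tilde\gamma}\transp\matA}
= \underbrace{\FNormS{(\matI_m - \matX_{\tilde\gamma}\matX_{\tilde\gamma}\transp)\matA_k}}_{\theta_1^2}
+ \underbrace{\FNormS{(\matI_m - \matX_{\tilde\gamma}\matX_{\tilde\gamma}\transp)(\matA-\matA_k)}}_{\theta_2^2}.
$$
For $\theta_2^2$, drop the projector and use optimality of $\matA_k$ together with the fact that $\matX_{opt}\matX_{opt}\transp\matA$ has rank at most $k$, giving $\theta_2^2 \le \FNormS{\matA-\matA_k} \le F_{opt}$ deterministically. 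For $\theta_1$, substitute $\matA_k = \matA R(\matV_k\transp R)^+\matV_k\transp + \matE$, apply the triangle inequality and drop the projector $\matI_m - \matX_{\tilde\gamma}\matX_{\tilde\gamma}\transp$ inside $\FNorm{\matE}$, then use spectral submultiplicativity plus $\TNorm{\matV_k\transp}=1$ to get
$$
\theta_1 \le \FNorm{(\matI_m - \matX_{\tilde\gamma}\matX_{\tilde\gamma}\transp)\matA R}\,\TNorm{(\matV_k\transp R)^+} + \FNorm{\matE}.
$$

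Now invoke the $\gamma$-approximation property of $\matX_{\tilde\gamma}$ on $\matC=\matA R$ to replace $\matX_{\tilde\gamma}$ with $\matX_{opt}$, picking up a factor of $\sqrt\gamma$; then use statement 5 of Lemma~\ref{lem:rpall} with $\matX = (\matI_m-\matX_{opt}\matX_{opt}\transp)\matA$ to bound $\FNorm{(\matI_m-\matX_{opt}\matX_{opt}\transp)\matA R} \le \sqrt{1+\epsilon}\sqrt{F_{opt}}$; and statement 1 of Lemma~\ref{lem:rpall} to conclude $\TNorm{(\matV_k\transp R)^+} \le 1/(1-\epsilon)$. Combined with the estimate on $\FNorm{\matE}$ and $\FNorm{\matA-\matA_k}\le\sqrt{F_{opt}}$, this yields
$$
\theta_1 \le \left(\frac{\sqrt{(1+\epsilon)\gamma}}{1-\epsilon} + 4\epsilon\right)\sqrt{F_{opt}}.
$$
Squaring, using $\gamma\ge 1$ and absorbing the $O(\epsilon)$ slack by rescaling $\epsilon \mapsto \epsilon/c$ for a suitable constant $c$ (which rescales $c_0$), gives $\theta_1^2 \le (1+\epsilon)\gamma F_{opt}$. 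Summing with the $\theta_2^2 \le F_{opt}$ bound yields the claimed inequality. The success probability follows from a union bound over the three events from statements 1, 5, 6 of Lemma~\ref{lem:rpall} (which fire with constants adding to at most $0.05$) together with the $\delta_\gamma$ failure of the clustering approximation algorithm.

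The main obstacle is not any single inequality but the careful accounting needed so that the compounded $(1+\epsilon)$-type factors from statements 1 and 5 of Lemma~\ref{lem:rpall}, together with the additive $4\epsilon\FNorm{\matA-\matA_k}$ from statement 6, collapse into a single $(1+\epsilon)\gamma$ coefficient after squaring. Since $\FNorm{\matA-\matA_k}$ is controlled only by $\sqrt{F_{opt}}$ and not by $\sqrt{\gamma F_{opt}}$, this requires choosing $c_0$ large enough that the constant $c$ in the rescaling is compatible with the $\epsilon < 1/3$ regime; this is where one must verify the implicit constant in $r = c_0 k/\epsilon^2$ and the precise form of Lemma~\ref{lem:rpall}.
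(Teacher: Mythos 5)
Your proposal is correct and follows essentially the same route as the paper's proof: the same Matrix Pythagoras split into $\theta_1^2+\theta_2^2$, the same use of statement 6 of Lemma~\ref{lem:rpall} to rewrite $\matA_k$, the same $\sqrt{\gamma}$ swap of $\matX_{\tilde\gamma}$ for $\matX_{opt}$, and the same bounds from statements 1 and 5 followed by an $\epsilon$-rescaling and a union bound. The only cosmetic difference is in the bookkeeping of which lemma statements enter the union bound (the paper cites statements 1, 4, 6, though its displayed derivation actually invokes statement 5), which does not affect correctness.
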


\begin{proof}
We start by manipulating the term 
$\FNorm{\matA - \matX_{\tilde{\gamma}} \matX_{\tilde{\gamma}}\transp  \matA}^2$. 
Replacing $\matA = \matA_k + \matA_{\rho-k}$ and
using Matrix Pythagoras:
\begin{eqnarray}\label{eqn:f1}
\FNorm{\matA - \matX_{\tilde{\gamma}} \matX_{\tilde{\gamma}}\transp \matA}^2\ =\ \underbrace{\FNorm{(\matI_m -
\matX_{\tilde{\gamma}} \matX_{\tilde{\gamma}}\transp ) \matA_k}^2}_{\theta_1^2}\ +\ \underbrace{\FNorm{(\matI_m - \matX_{\tilde{\gamma}}
\matX_{\tilde{\gamma}}\transp )\matA_{\rho-k}}^2}_{\theta_2^2}.
\end{eqnarray}
We first bound the second term of Eqn.~\eqref{eqn:f1}. Since
$\matI_m-\matX_{\tilde{\gamma}}\matX_{\tilde{\gamma}}\transp $ is a projector
matrix, it can be dropped without increasing the Frobenius norm. So, by using
this and the fact that $\matX_{opt}\matX_{opt}\transp \matA$ has rank at most $k$:
\begin{eqnarray} \label{eqn:f2}
\theta_2^2\ \leq\ \FNorm{\matA_{\rho-k}}^2\  = \FNorm{\matA - \matA_k}^2\ \leq\ \FNorm{ \matA -
\matX_{opt}\matX_{opt}\transp \matA }^2.
\end{eqnarray}
We now bound the first term of Eqn.~\eqref{eqn:f1}:
\begin{eqnarray}
\label{t0} \theta_1
&\leq&  \FNorm{(\matI_m -\matX_{\tilde{\gamma}}\matX_{\tilde{\gamma}}\transp )\matA R(\matV_k R)^+\matV_k\transp }\ +\ \FNorm{\matE} \\
\label{t1}
&\leq&  \FNorm{(\matI_m -\matX_{\tilde{\gamma}}\matX_{\tilde{\gamma}}\transp )\matA R} \FNorm{(\matV_k R)^+}\ +\ \FNorm{\matE}\\
\label{t2}
&\leq& \sqrt{\gamma} \FNorm{(\matI_m - \matX_{opt}\matX_{opt}\transp )\matA R} \FNorm{(\matV_k R)^+}\  +\ \FNorm{\matE} \\
\label{t3}
&\leq& \sqrt{\gamma}  \sqrt{(1+\epsilon)} \FNorm{(\matI_m - \matX_{opt}\matX_{opt}\transp )\matA}  \frac{1}{1-\epsilon}\ +\ 4 \epsilon \FNorm{(\matI_m -\matX_{opt}\matX_{opt}\transp )\matA} \\
\label{t4}
&\leq& \sqrt{\gamma}  (1+2.5 \epsilon) \FNorm{(\matI_m -\matX_{opt}\matX_{opt}\transp )\matA}\ + \sqrt{\gamma} \ 4 \epsilon \FNorm{(\matI_m -\matX_{opt}\matX_{opt}\transp )\matA}\\
\label{t5}
&\leq& \sqrt{\gamma} ( 1 + 6.5 \epsilon) \FNorm{(\matI_m
-\matX_{opt}\matX_{opt}\transp )\matA}
\end{eqnarray}
In Eqn.~\eqref{t0}, we used the sixth statement of Lemma \ref{lem:rpall}, the
triangle inequality for matrix norms, and the fact that $\matI_m -
\tilde{\matX}_{\gamma}\tilde{\matX}_{\gamma}\transp $ is a projector matrix
and can be dropped without increasing the Frobenius norm.
In Eqn.~\eqref{t1}, we used spectral submultiplicativity 
and the fact that $\matV_k\transp $ can be dropped
without changing the spectral norm. 
In Eqn.~\eqref{t2}, we replaced
$\matX_{\tilde{\gamma}}$ by $\matX_{opt}$ and the factor $\sqrt{\gamma}$
appeared in the first term. To better understand this step, notice
that $\matX_{\tilde{\gamma}}$ gives a $\gamma$-approximation to the
optimal $k$-means clustering of the matrix $\matC$, and any other $m
\times k$ indicator matrix (for example, the matrix $\matX_{opt}$)
satisfies
\begin{equation*}
\FNorm{\left(\matI_m - \matX_{\tilde{\gamma}} \matX_{\tilde{\gamma}}\transp
\right) \matC}^2 \leq\ \gamma\ \min_{\matX \in \cal{X}} \FNorm{(\matI_m - \matX
\matX\transp ) \matC}^2\ \leq \gamma \FNorm{\left(\matI_m - \matX_{opt}
\matX_{opt}\transp \right) \matC}^2.
\end{equation*}
In Eqn.~\eqref{t3}, we used the fifth statement of Lemma~\ref{lem:rpall} with
$ \matX= (\matI - \matX_{opt} \matX_{opt}\transp )\matA$, the first statement of 
Lemma \ref{lem:rpall} and the optimality of SVD. 
In Eqn.~\eqref{t4}, we used the fact that
$\gamma \geq 1$ and that for any $\epsilon \in (0,1/3)$ it is
$(\sqrt{1+\epsilon})/(1-\epsilon) \leq 1+ 2.5\epsilon$. 
Taking squares in
Eqn.~\eqref{t5}, we get
\[ \theta_1^2\ \leq\ \gamma ( 1 + 28 \epsilon ) \FNorm{(\matI_m
-\matX_{opt}\matX_{opt}\transp )\matA}^2.\]
Rescaling $\epsilon$ accordingly gives the bound in the theorem.
The failure probability $0.05 + \delta_{\gamma}$ follows by a
union bound on all four probabilistic events involved in the proof
of the theorem which are the $\gamma$-approximation $k$-means
algorithm with failure probability $\delta_{\gamma}$ and the
first, fourth, and sixth statements of Lemma~\ref{lem:rpall}
with failure probabilities $0.01$, $0.01$, and $0.03$, respectively. 
\end{proof}

\begin{algorithm}[t]
\begin{framed}
\textbf{Input:} Dataset $\matA\in\R^{m\times n}$, number of clusters $k$, and \math{0 < \epsilon < 1}. \\
\noindent \textbf{Output:} \math{\matC \in \R^{m \times k}} with $k$ artificial features.
\begin{algorithmic}[1]
\STATE Let $ \matZ = FastFrobeniusSVD(\matA, k, \epsilon)$    (Lemma~\ref{tropp2}).
\STATE Return $\matC = \matA \matZ \in \R^{m \times k}$.
\end{algorithmic}
\caption{Randomized Feature Extraction for $k$-means Clustering.}
\label{alg:chap63}
\end{framed}
\end{algorithm}
\section{Feature Extraction with Approximate SVD}\label{chap63}
Finally, we present a feature extraction algorithm that employs the SVD
to construct $r = k$ artificial features. Our method and proof technique
are the same with those of~\cite{DFKVV99} with the only difference being the fact that we
use a fast approximate randomized SVD from Lemma~\ref{tropp2} as opposed
to the expensive exact deterministic SVD. Our choice gives a considerably
faster algorithm with approximation error similar to the approach in~\cite{DFKVV99}.
\begin{theorem}\label{thm:first_result}
Let $\matA \in \R^{m \times n}$ and $k$ are inputs of the $k$-means clustering problem.
Let $\epsilon \in (0,1)$ and construct features $\matC \in \R^{m \times k}$ 
by using Algorithm~\ref{alg:chap63} in $O(m n k / \epsilon )$ time. 
Run any $\gamma$-approximation $k$-means algorithm on $\matC, k$ and construct $\matX_{\tilde{\gamma}}$.
Then w.p. $0.99 - \delta_{\gamma}$:
$$
\FNorm{\matA - \matX_{\tilde{\gamma}} \matX_{\tilde{\gamma}}\transp  \matA}^2
\leq \left(1+(1+\epsilon)\gamma\right) \FNorm{\matA - \matX_{opt}
\matX_{opt}\transp  \matA}^2.
$$
\end{theorem}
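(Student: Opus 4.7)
}
The plan is to mimic the template used in the proofs of Theorems \ref{fastkmeans} and \ref{thm:second_result}, but with the approximate SVD factorization of Lemma~\ref{tropp2} playing the role of the exact SVD factorization used in \cite{DFKVV99}. First, I would invoke Lemma~\ref{tropp2} with accuracy $\epsilon' = \epsilon / c$ (for a constant $c$ to be fixed later) to obtain $\matZ \in \R^{n \times k}$ with $\matZ\transp \matZ = \matI_k$ and a residual $\matE \in \R^{m\times n}$ with $\matE \matZ = \bm{0}_{m \times k}$ and $\Expect{\FNormS{\matE}} \leq (1+\epsilon') \FNormS{\matA - \matA_k}$, so that $\matA = \matA \matZ \matZ\transp + \matE$. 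Setting $\matC = \matA \matZ$ and letting $P = \matX_{\tilde\gamma}\matX_{\tilde\gamma}\transp$ and $P_{opt} = \matX_{opt}\matX_{opt}\transp$, the orthogonality $\matE \matZ = \bm{0}$ kills the cross term in the expansion of $(\matI_m-P)\matA = (\matI_m-P)\matA \matZ \matZ\transp + (\matI_m-P)\matE$, yielding the matrix Pythagoras split
\[
\FNormS{\matA - P \matA}
= \underbrace{\FNormS{(\matI_m-P)\matA\matZ\matZ\transp}}_{\theta_1^2}
+ \underbrace{\FNormS{(\matI_m-P)\matE}}_{\theta_2^2}.
\]

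Next, I would bound the two pieces. For $\theta_2^2$, since $\matI_m - P$ is a projector it can be dropped, and a Markov-type argument applied to the expectation bound from Lemma~\ref{tropp2} gives $\FNormS{\matE} \leq (1+O(\epsilon)) \FNormS{\matA - \matA_k} \leq (1+O(\epsilon)) F_{opt}$ with constant probability. For $\theta_1^2$ the key trick is that $\matZ\transp\matZ = \matI_k$ implies $\FNormS{X \matZ \matZ\transp} = \FNormS{X \matZ}$ for any $X$, so $\theta_1^2 = \FNormS{(\matI_m - P)\matC}$. Because $\matX_{\tilde\gamma}$ is a $\gamma$-approximate $k$-means solution on the matrix $\matC$, its cost is within $\gamma$ of any other indicator matrix applied to $\matC$, in particular $\matX_{opt}$, so $\theta_1^2 \leq \gamma \FNormS{(\matI_m - P_{opt})\matC} = \gamma \FNormS{(\matI_m - P_{opt})\matA \matZ \matZ\transp}$. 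Applying the same Pythagorean identity in reverse (now with $P_{opt}$ in place of $P$) bounds this by $\gamma \FNormS{(\matI_m - P_{opt})\matA} = \gamma F_{opt}$.

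Combining the two bounds gives $\FNormS{\matA - P\matA} \leq \gamma F_{opt} + (1+O(\epsilon)) F_{opt}$, which, using $\gamma \geq 1$, can be rewritten as $(1 + (1+O(\epsilon))\gamma) F_{opt}$; rescaling $\epsilon$ in the call to \textit{FastFrobeniusSVD} then yields the advertised $(1+(1+\epsilon)\gamma)$ factor. The failure probability $0.01 + \delta_\gamma$ follows by a union bound over the Markov step (probability $0.01$) and the $\gamma$-approximate solver. The running time is dominated by the call to \textit{FastFrobeniusSVD}, which is $O(mnk/\epsilon)$ by Lemma~\ref{tropp2}; computing $\matA \matZ$ is only $O(mnk)$.

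The main obstacle, and the reason the proof is genuinely an approximate-SVD argument rather than a rewrite of \cite{DFKVV99}, is controlling the $\theta_1$ term: the $\gamma$-approximation guarantee is stated for the $m\times k$ matrix $\matC = \matA \matZ$, while the Pythagoras split naturally produces $\matA \matZ \matZ\transp$, so one has to pass between these two representations cleanly using $\matZ\transp \matZ = \matI_k$, and then one must be careful that the factor $\gamma$ multiplies only $F_{opt}$ and not an additional $\FNormS{\matE}$ term. The orthogonality $\matE \matZ = \bm{0}_{m \times k}$, which is built into the output of Lemma~\ref{tropp2}, is exactly what makes both the forward and the backward Pythagoras step go through without error-amplification, so the approximate SVD behaves, for this argument, as well as the exact SVD used in~\cite{DFKVV99}.
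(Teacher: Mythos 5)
Your proposal is correct and follows essentially the same route as the paper: the same Pythagoras split via $\matA = \matA\matZ\matZ\transp + \matE$, the same Markov bound on the $\matE$ term, and the same use of the $\gamma$-approximation guarantee on $\matC = \matA\matZ$ followed by passing back to $\matA$. The only cosmetic difference is that you move between $\FNormS{X\matZ\matZ\transp}$ and $\FNormS{X\matZ}$ via the exact identity (and a reverse Pythagoras step), whereas the paper uses spectral submultiplicativity with $\TNorm{\matZ} = \TNorm{\matZ\transp} = 1$; both give the same bounds.
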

\begin{proof} 
We start by manipulating the term 
$\FNormS{\matA -\matX_{\tilde{\gamma}} \matX_{\tilde{\gamma}}\transp \matA}$. 
Replacing $\matA = \matB \matZ\transp + \matE$ and using
Matrix Pythagoras:
\begin{eqnarray}
\label{eqn:f1tt1} \FNormS{\matA - \matX_{\tilde{\gamma}} \matX_{\tilde{\gamma}}\transp
\matA} = \underbrace{\FNormS{(\matI_{m } - \matX_{\tilde{\gamma}}
\matX_{\tilde{\gamma}}\transp) \matB \matZ\transp}}_{\theta_3^2} + \underbrace{\FNormS{(\matI_{m }
- \matX_{\tilde{\gamma}} \matX_{\tilde{\gamma}}\transp) \matE }}_{\theta_4^2}.
\end{eqnarray}
In the proof of Theorem~\ref{fastkmeans} we argued that w.p. $0.99$:
$$
\theta_4^2 \le (1+100\epsilon)\FNormS{\matA - \matA_k} \le (1+100\epsilon)\FNormS{\matA - \matX_{opt}\matX_{opt}\transp\matA}=   (1+100\epsilon) F_{opt}.
$$
We now bound the first term in eqn.~(\ref{eqn:f1tt1}):
\vspace{-.12in}
\begin{eqnarray}
\label{cb1} \theta_3
&\leq&  \FNorm{(\matI_{m } -\matX_{\tilde{\gamma}}\matX_{\tilde{\gamma}}\transp)
\matA \matZ \matZ\transp }  \\
\label{cb2}
&\leq&  \FNorm{(\matI_{m } - \matX_{\tilde{\gamma}}\matX_{\tilde{\gamma}}\transp)\matA \matZ}  \\
\label{cb3}
&\leq& \sqrt{\gamma} \FNorm{(\matI_{m } - \matX_{opt}\matX_{opt}\transp)\matA \matZ}\\
\label{cb4}
&\leq& \sqrt{\gamma} \FNorm{(\matI_{m } - \matX_{opt}\matX_{opt}\transp)\matA} 
\end{eqnarray}
In eqn.~(\ref{cb1}), we replaced $\matB = \matA \matZ$.
In eqn.~(\ref{cb2}), we used spectral submultiplicativity and the fact that $\TNorm{\matZ\transp}=1$.
In eqn.~(\ref{cb3}),  we replaced
$\matX_{\tilde{\gamma}}$ by $\matX_{opt}$ and the factor $\sqrt{\gamma}$ appeared
in the first term (similar argument as in the proof of Theorem~\ref{fastkmeans}).
In eqn.~(\ref{cb4}), we used spectral submultiplicativity and the fact that $\TNorm{\matZ}=1$.
 Overall (assuming $\gamma \ge 1$):
$$ \FNormS{\matA -\matX_{\tilde{\gamma}} \matX_{\tilde{\gamma}}\transp \matA} \le \theta_{3}^2 + \theta_{4}^2 \le
 \gamma F_{opt} + (1+100\epsilon) F_{opt} \le F_{opt} + (1 + 10^2 \epsilon  )\gamma F_{opt}.
$$ 
Rescale $\epsilon$ accordingly and use the union bound to wrap up.
\end{proof}

\chapter{FUTURE DIRECTIONS}\label{chap7}

This thesis presented a few aspects of the world of ``Matrix Sampling Algorithms''. 
A popular problem in this line of research is low-rank approximations of matrices
by using a small subset of their columns: given $\matA$ and $r$, among all $\binom{n}{r}$
matrices $\C$ with $r$ columns from $\matA$ find one with 
\vspace{-0.1in}
$$\matC^* = \argmin\nolimits_{\matC} \FNormS{\matA - \matC \matC^+\matA} .$$
We learned various approaches, from Additive-error algorithms as in Section \ref{chap311} 
to more sophisticated greedy techniques as in Chapter \ref{chap4}. 
We argued that the predominant goal of a ``Matrix Sampling Algorithm'' is to 
preserve the spectral structure of the top right
singular vectors of the matrix (Sections \ref{chap314} to \ref{chap318}). 

In another - parallel - line of research\footnote[8]{
``Topics in Sparse Approximations'' from Joel Tropp~\cite{Tro05} and 
``Topics in Compressed Sensing'' from Deanna Needell~\cite{Nee09}
give a nice overview of this area.}, 
``Sparse Approximation Algorithms'' are used for the following combinatorial problem:
given $\matA \in \R^{m \times n}$, $\b \in \R^m$, 
and $r < n$,
among all $\binom{n}{r}$ vectors \math{\x_r \in \R^n} with at most $r$ non-zero 
entries, find one with
\vspace{-0.1in}
$$\x_{r}^* = \argmin\nolimits_{\norm{\x}_0 \le r} \TNormS{\matA \x- \b} .$$
In words, this problem asks to find a subset of columns from $\matA$
(those columns that correspond to the non-zero elements in $\x_{r}^*$),
such that the solution vector obtained by using these columns is the best possible
among all $\binom{n}{r}$ such solutions. 

Notice that, in both problems, the goal is to find the  ``best'' columns
from the input matrix. What makes the problems different is the objective function
that these columns attempt to minimize. So, arguably, the problems from these two 
communities are not ``far'' from each other. Unfortunately, it appears that there 
is little, if no, overlap in algorithmic approaches. An interesting avenue for future 
research is to explore the applicability of 
``Matrix Sampling Algorithms'' for sparse approximation problems and vice versa. For example,
\begin{enumerate}
\item Are ``Matrix Sampling Algorithms'', such as those we presented in Section \ref{chap31},
useful in the context of sparse approximations? 
\item Are ``Sparse Approximation Algorithms'', such as, for example, the orthogonal matching pursuit
and the basis pursuit~\cite{Tro04},
useful in the context of low-rank column-based matrix approximations? 
\end{enumerate}
Towards this end, in~\cite{BDM11c} we made a little progress by leveraging the randomized
technique of Section \ref{chap314} and the deterministic technique of Section \ref{chap317}
to design novel sparse approximation algorithms. We now present
the result of~\cite{BDM11c}. 
\begin{theorem} \label{theorem1}
Fix $\matA \in \R^{m \times n}$, $\b \in \R^{m}$, and target rank $0 < k < n$.
For $r > 144 k \ln(20 k)$, let $[\Omega_1, \matS_1] = SubspaceSampling( \matV_k, 1, r)$,
and let $\x_r \in \R^n$ is the $r$-sparse vector obtained from $\matA \Omega_1$, $\b$. Then, with constant probability:
\vspace{-0.1in}
$$ 
\TNorm{\matA \x_r-\b}
\le
\TNorm{ \matA \matA_k^+\b  -\b} + \sqrt{\frac{36 k \ln(20 k)}{ r }} \frac{ \FNorm{\matA - \matA_k} }{ \sigma_k(\matA) }\norm{\b}_2. $$
For $ r > 36 k $, let $[\Omega_2, \matS_2]= BarrierSamplingIII( \matV_k, (\matA - \matA\matV_k\matV_k\transp)\transp, r)$,
and let $\x_r \in \R^n$ is the $r$-sparse vector obtained from $\matA \Omega_2$, $\b$. Then,
$$ 
\TNorm{\matA \x_r-\b}
\le
\TNorm{ \matA \matA_k^+\b  -\b} + (1+\sqrt{\frac{9 k }{ r }}) \frac{ \FNorm{\matA - \matA_k} }{ \sigma_k(\matA) }\norm{\b}_2. $$
\end{theorem}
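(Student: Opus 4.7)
\textbf{Proof proposal for Theorem~\ref{theorem1}.} My plan is to reduce both statements to the generic column-reconstruction framework of Lemma~\ref{lem:genericNoSVD}, using the factorization $\matA = \matA\matV_k\matV_k\transp + \matE$ with $\matE = \matA - \matA_k$ (so $\matE\matV_k = \bm{0}_{m\times k}$). Given either sampling-and-rescaling pair $(\Omega,\matS)$, let $\matC = \matA\Omega\matS$ and define the sparse vector $\x_r$ so that $\matA\x_r = \matC\matC^+\b$; concretely, $\x_r = \Omega\matS\matC^+\b$, which has support of size at most $r$ (and rescaling $\matS$ does not change the column space of $\matC$, so the support size is exactly the nonzero indices of $\Omega$). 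Since $\matC\matC^+$ is the orthogonal projector onto $\col(\matC)$, the key step is the optimality inequality $\TNorm{\matA\x_r-\b} = \TNorm{\matC\matC^+\b - \b} \le \TNorm{\matC\y - \b}$ for \emph{any} $\y \in \R^r$.

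I will choose the ansatz $\y = (\matV_k\transp\Omega\matS)^+\sig_k^{-1}\matU_k\transp\b$. Assuming $\rank(\matV_k\transp\Omega\matS)=k$ (which holds since Lemmas~\ref{lem:random} and~\ref{lem:2setF} both force the minimum singular value to be strictly positive in our parameter regime), $\matV_k\transp\Omega\matS(\matV_k\transp\Omega\matS)^+ = \matI_k$, and a direct computation gives
\mand{
\matC\y = \matA\matV_k\sig_k^{-1}\matU_k\transp\b + \matE\Omega\matS(\matV_k\transp\Omega\matS)^+\sig_k^{-1}\matU_k\transp\b = \matA\matA_k^+\b + \matF,
}
where $\matF = \matE\Omega\matS(\matV_k\transp\Omega\matS)^+\sig_k^{-1}\matU_k\transp\b$. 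By the triangle inequality, $\TNorm{\matA\x_r - \b} \le \TNorm{\matA\matA_k^+\b-\b} + \TNorm{\matF}$, so it suffices to bound $\TNorm{\matF}$ by the advertised error term. Spectral submultiplicativity gives $\TNorm{\matF} \le \TNorm{\matE\Omega\matS(\matV_k\transp\Omega\matS)^+}_2 \cdot \TNorm{\sig_k^{-1}}_2 \cdot \TNorm{\matU_k\transp\b}_2$, and the last two factors are bounded cleanly by $1/\sigma_k(\matA)$ and $\TNorm{\b}_2$ respectively. Everything reduces to bounding $\TNorm{\matE\Omega\matS(\matV_k\transp\Omega\matS)^+}_2$.

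For the \emph{deterministic} part, I use Lemma~\ref{lem:2setF} with inputs $(\matV_k, \matA-\matA_k)$: submultiplicativity gives
\mand{
\TNorm{\matE\Omega\matS(\matV_k\transp\Omega\matS)^+}_2 \le \FNorm{\matE\Omega\matS}\cdot\TNorm{(\matV_k\transp\Omega\matS)^+}_2 \le \FNorm{\matE}\cdot\left(1-\sqrt{k/r}\right)^{-1},
}
and for $r > 36k$ the factor $(1-\sqrt{k/r})^{-1}$ is bounded above by $1+\sqrt{9k/r}$ after elementary algebra, yielding the stated bound. For the \emph{randomized} part, a naive application of the same chain only yields an error of order $\FNorm{\matE}$ without the critical $\sqrt{k/r}$ shrinkage. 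To recover it I will exploit that $\matE\matV_k = \bm{0}$: writing $(\matV_k\transp\Omega\matS)^+ = \matS\transp\Omega\transp\matV_k(\matV_k\transp\Omega\matS\matS\transp\Omega\transp\matV_k)^{-1}$ (valid since $\matV_k\transp\Omega\matS$ has full row rank), one gets $\matE\Omega\matS(\matV_k\transp\Omega\matS)^+ = [\matE\Omega\matS\matS\transp\Omega\transp\matV_k]\cdot(\matV_k\transp\Omega\matS\matS\transp\Omega\transp\matV_k)^{-1}$. Since $\matE\matV_k=\bm{0}$, the bracketed matrix is exactly the error of approximating the zero matrix $\matE\matV_k$ by the sampled product, which admits the matrix-multiplication tail bound (Lemma~\ref{lem:genericNoSVD}, first statement of Lemma~\ref{lem:rsall}): w.p.\ $1-\delta$, $\FNorm{\matE\Omega\matS\matS\transp\Omega\transp\matV_k} \le \sqrt{k/(\delta r)}\,\FNorm{\matE}$. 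The inverse factor is controlled by Lemma~\ref{lem:random}: w.p.\ $1-\delta$, $\TNorm{(\matV_k\transp\Omega\matS\matS\transp\Omega\transp\matV_k)^{-1}}_2 \le (1-\sqrt{4k\ln(2k/\delta)/r})^{-1}$. Setting $\delta$ to a small constant and $r > 144k\ln(20k)$, a union bound yields $\TNorm{\matE\Omega\matS(\matV_k\transp\Omega\matS)^+}_2 \le \sqrt{36k\ln(20k)/r}\,\FNorm{\matE}$ with constant probability, after absorbing constants.

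The main obstacle is the randomized bound: a direct $\TNorm{\cdot}_F \cdot \TNorm{\cdot}_2$ split of $\matE\Omega\matS(\matV_k\transp\Omega\matS)^+$ loses the $\sqrt{k/r}$ savings entirely. The rescue is the pseudoinverse identity above, which exposes the term $\matE\Omega\matS\matS\transp\Omega\transp\matV_k$ as a random approximation to $\matE\matV_k = \bm{0}$; only then does the matrix-multiplication concentration bound deliver the target $\sqrt{k\ln k/r}$ factor. Everything else—choice of $\x_r$, the ansatz for $\y$, triangle inequality, and the clean $\sigma_k(\matA)$ and $\TNorm{\b}$ factors—is routine once this key identity is in place.
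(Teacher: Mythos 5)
Your proposal is correct, and it follows exactly the route the thesis indicates for this result: the thesis only \emph{states} Theorem~\ref{theorem1} (importing it from the cited manuscript) and gives no proof, but every ingredient you use is the machinery the thesis says that proof leverages. Your reduction $\TNorm{\matA\x_r-\b}=\TNorm{\matC\matC^+\b-\b}\le\TNorm{\matC\y-\b}$ with the ansatz $\y=(\matV_k\transp\Omega\matS)^+\Sigma_k^{-1}\matU_k\transp\b$ correctly produces $\matA\matA_k^+\b$ plus the perturbation $\matE\Omega\matS(\matV_k\transp\Omega\matS)^+\Sigma_k^{-1}\matU_k\transp\b$, and the two norm bounds check out: for $BarrierSamplingIII$ the split $\FNorm{\matE\Omega\matS}\TNorm{(\matV_k\transp\Omega\matS)^+}\le\FNorm{\matE}(1-\sqrt{k/r})^{-1}\le(1+\sqrt{9k/r})\FNorm{\matE}$ is valid for $r>36k$, and for $SubspaceSampling$ your identity $(\matV_k\transp\Omega\matS)^+=\matS\transp\Omega\transp\matV_k(\matV_k\transp\Omega\matS\matS\transp\Omega\transp\matV_k)^{-1}$ is precisely what exposes the matrix-multiplication term $\matE\Omega\matS\matS\transp\Omega\transp\matV_k\approx\matE\matV_k=\bm{0}$; with $\delta=0.1$ (so $\ln(2k/\delta)=\ln(20k)$) the constants $144$ and $36$ work out as you sketch. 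You are also right that this identity (or the equivalent splitting of $(\cdot)^+$ into $(\cdot)\transp$ plus a small correction, as in the appendix proof of the third statement of Lemma~\ref{lem:rsall}) is the indispensable step for the randomized bound; a naive submultiplicative split cannot recover the $\sqrt{k\ln k/r}$ decay. The only cosmetic slips are the citation of Lemma~\ref{lem:genericNoSVD} where you mean the matrix-multiplication bound underlying the first statement of Lemma~\ref{lem:rsall}, and your (correct) silent use of the untransposed residual as the second argument to $BarrierSamplingIII$, matching Algorithm~\ref{alg:chap43} rather than the apparent typo in the theorem statement.
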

Notice that the solution vectors obtained by our algorithms are evaluated with respect to the solution vector
obtained by the so-called Truncated SVD Regularized approach, $\x_k=\A_k^+\b$ ~\cite{Han87}. 
This is certainly interesting because the theorem essentially says that one can replace the $k$-SVD solution with a sparse solution with almost $k$ non-zero entries and get a comparable (up to an additive error term) performance. 
It would have been nice though to obtain bounds with respect to the optimum solution of the regression problem, 
$\x_{opt} = \matA^+ \b$. Although this is a much more harder bound to obtain, we believe that it is not quixotic
to seek algorithms with such properties. To obtain such bounds, it might require the development of new 
greedy-like methods (such as those of Sections \ref{chap316}, \ref{chap317}, and \ref{chap318}) that attempt to
optimize directly the objective of the underlying sparse approximation problem.

\paragraph{Low-rank Column-based Matrix Approximations.}
Consider approximations for the equation: 
$ \XNorm{\matA-\Pi_{\matC,k}^{\xi}(\matA)} \le \alpha \XNorm{\matA-\matA_k}.$ 
For $\xi = F$ and $r = k$, the problem is closed, modulo running time, since
\cite{DR10} provides a deterministic algorithm matching the known lower bound $\alpha = \hat\alpha = \sqrt{k+1}$.
For $\xi = 2$ and $r = k$, current algorithms are optimal up to a factor $O(k^2)$.
\cite{GE96} provides a deterministic algorithm with $\alpha = O(\sqrt{kn})$; the
lower bound though indicates that $\hat\alpha = \sqrt{n/k}$ might be possible, so there is some
hope to improve on this result. For $\xi = 2$ and $r > k$, the problem is closed, 
modulo running time and constants. We provided a deterministic algorithm with $\alpha = O(\sqrt{n/r})$
and the lower bound we proved is $\hat\alpha = \sqrt{n/r}$. For $\xi = 2$ and $r > k$, 
we provided a randomized algorithm which is optimal up to a constant $20$. It would be
nice though to design a deterministic algorithm with a better constant. 
\cite{GK11} provides a deterministic algorithm with approximation:
$$ \XNorm{\matA-\matC \matC^+\matA}  \le \sqrt{ \frac{r+1}{r+1-k}} \XNorm{\matA-\matA_k} $$
It would be nice to understand whether this result can be extended to the rank $k$
matrix $\Pi_{\matC,k}^{F}(\matA)$ as well. Moreover, a faster algorithm
than~\cite{GK11} should be possible. We believe that there exists such an algorithm running in SVD time.

\paragraph{Coreset Construction in Least-Squares Regression.}
We believe that the deterministic algorithm that we presented 
in Section~\ref{chap61} is optimal up to a factor $\frac{1}{\epsilon}$. 
It is known that relative error coresets of size $O(k)$ are not possible. 
Relative error coresets of size $O(k/\epsilon)$ should be possible.  
We hope such algorithms will become available soon; if not,
it would be nice to develop lower bounds for this problem.

\paragraph{Feature Selection in $k$-means.}
The feature selection algorithm of
Section~\ref{chap61} gives a constant factor approximation
to the optimal partition. 
We believe that relative error approximations should be
possible. Further, a deterministic approach would get
rid of the $\log(k)$ factor from the number of sampled
features, which is there due to Coupon Collector issues. 
We hope that deterministic relative-error algorithms with 
$O(k/\epsilon)$ features will become available soon;  
if not, it would be nice to understand the limitations
of this problem by developing lower bounds.           
\specialhead{BIBLIOGRAPHY}
\begin{singlespace}
\small{
\bibliographystyle{abbrv}
\bibliography{RPI_BIB}
}
\end{singlespace}

\appendix    
\addtocontents{toc}{\parindent0pt\vskip12pt APPENDICES} 

\chapter{TECHNICAL PROOFS}
\label{app:proofs}

\section{Proof of Lemma~\ref{lem:bestF}}
Our proof for the Frobenius norm case is a mild modification of the proof of Lemma 4.3 of~\cite{CW09}. First, note that $\Pi^{F}_{\matC,k}(\matA) = \Pi^{F}_{\matQ,k}(\matA)$, because $\matQ \in \R^{m \times r}$ is an orthonormal basis for the column space of~$\matC$. Thus,
$$\FNormS{\matA-\Pi^{F}_{\matC,k}(\matA)} = \FNormS{\matA-\Pi^{F}_{\matQ,k}(\matA)} = \min_{\Psi:\rank(\Psi)\leq k}\FNormS{\matA-\matQ\Psi}.$$
Now, using matrix-Pythagoras and the orthonormality of $\matQ$,
\vspace{-0.15in}
$$
\FNormS{\matA-\matQ\Psi}=\FNormS{\matA-\matQ\matQ\transp\matA+\matQ(\matQ\transp\matA-\Psi)}
=\FNormS{\matA-\matQ\matQ\transp\matA}
+\FNormS{\matQ\transp\matA-\Psi}.
$$
Setting $\Psi = (\matQ\transp\matA)_k$ minimizes the above quantity over all rank-$k$ matrices $\Psi$. Thus, combining the above results,
$\FNormS{\matA-\Pi^{F}_{\matC,k}(\matA)} = \FNormS{\matA - \matQ\left(\matQ\transp\matA\right)_k}$.

We now proceed to the spectral-norm part of the proof, which combines ideas from Theorem 9.3 of~\cite{HMT} and matrix-Pythagoras. Consider the following derivations:
\eqan{
\TNormS{\matA-\matQ\left(\matQ\transp\matA\right)_k}
&=&
\TNormS{\matA-\matQ\matQ\transp\matA+
\matQ\left(\matQ\transp\matA-(\matQ\transp\matA)_k\right)}\\
&\le&
\TNormS{\matA-\matQ\matQ\transp\matA}+
\TNormS{\matQ\matQ\transp\matA-(\matQ\matQ\transp\matA)_k}\\
&\buildrel(a)\over\le&
\TNormS{\matA-\Pi_{\matQ,k}^2(\matA)}+
\TNormS{\matA-\matA_k}\\
&\le&
2\TNormS{\matA-\Pi_{\matQ,k}^2(\matA)}.
}
The first inequality follows from the simple fact that $\left(\matQ\matQ\transp\matA\right)_k=\matQ\left(\matQ\transp\matA\right)_k$ and
matrix-Pythagoras; the first term in (a) follows because \math{\matQ\matQ\transp\matA} is the (unconstrained, not necessarily of rank at most $k$) best approximation to \math{\matA} in the column space of~\math{\matQ}; the second term in (a) follows because $\matQ\matQ\transp$ is a projector matrix, so:%
\vspace{-0.15in}
$$\TNormS{\matQ\matQ\transp\matA-(\matQ\matQ\transp\matA)_k}=\sigma_{k+1}^2(\matQ\matQ\transp\matA)\le\sigma_{k+1}^2(\matA)
=\TNormS{\matA-\matA_k}.$$
The last inequality follows because \math{\TNormS{\matA-\matA_k}\le\TNormS{\matA-\Pi_{\matQ,k}^2(\matA)}}. 

\section{Proof of Lemma~\ref{lem:genericNoSVD}}
The optimality of $\Pi^{\xi}_{\matC,k}(\matA)$ implies that 
$$\XNormS{\matA - \Pi^{\xi}_{\matC,k}(\matA)}
\leq \XNormS{\matA - \matX}$$ over all matrices $\matX \in \mathbb{R}^{m \times n}$ of rank at most $k$ in the column space of \math{\matC}.
Consider the matrix $$\matX = \matC\left(\matZ\transp \matW\right)^+ \matZ\transp.$$ 
Clearly \math{\matX} is in the column space of $\matC$ and $\rank(\matX)\le k$ because
\math{\matZ\in\R^{n\times k}}. Proceed as follows:
\eqan{
\XNormS{\matA - \matC(\matZ\transp \matW)^+ \matZ\transp}
&=&
\XNormS{\underbrace{\matB \matZ\transp + \left(\matA - \matB \matZ\transp\right)}_{\matA} - \underbrace{\left(\matB \matZ\transp + (\matA-\matB \matZ\transp)\right)\matW}_{\matC=\matA\matW} (\matZ\transp\matW)^+\matZ\transp }\\
&=&
\XNormS{\matB \matZ\transp - \matB \matZ\transp \matW (\matZ\transp\matW)^+\matZ\transp
+ \matE+\matE\matW (\matZ\transp\matW)^+\matZ\transp}\\
&\buildrel{(a)}\over{=}&
\XNormS{\matE+\matE\matW(\matZ\transp\matW)^+\matZ\transp}
\\
&\buildrel{(b)}\over{\leq}& \XNormS{\matE} + \XNormS{\matE\matW(\matZ\transp\matW)^+\matZ\transp}.
}
\math{(a)} follows because, by assumption, $$\rank(\matZ\transp \matW)=k,$$ and thus 
$$(\matZ\transp \matW) (\matZ\transp \matW)^+=\matI_{k},$$ which implies
$$ \matB \matZ\transp - \matB (\matZ\transp \matW) (\matZ\transp \matW)^+ \matZ\transp = \bm{0}_{m \times n}.$$
\math{(b)} follows by matrix-Pythagoras because
$$\matE\matW(\matZ\transp\matW)^+\matZ\transp \matE^T = \bm{0}_{m \times n}$$ 
(recall that $\matE = \matA-\matB\matZ^T$ and $\matE\matZ = \bm{0}_{m \times k}$ by assumption). 
The lemma follows by spectral submultiplicativity because $\matZ$ has orthonormal columns, hence  $\TNorm{\matZ}=1$.

\section{Proof of Lemma \ref{tropp1}}
Consider the following algorithm, described in Corollary 10.10 of~\cite{HMT}. The algorithm takes as inputs a matrix $\matA \in \R^{m \times n}$ of rank $\rho$, an integer $2 \leq k <\rho$, an integer $q \geq 1$, and an integer $p \geq 2$. Set $r = k+p$ and construct the matrix
$\matY \in \R^{m \times r}$ as follows:
\begin{enumerate}
\item Generate an $n \times r$ standard Gaussian matrix $\matR$ whose entries are i.i.d. $\mathcal{N}(0,1)$ variables.
\item Return $\matY = (\matA \matA\transp)^q \matA \matR \in \R^{m \times r}$.
\end{enumerate}
The running time of the above algorithm is $ O(mnrq )$.
Corollary 10.10 of~\cite{HMT} presents the following bound:
$$\Expect{\TNorm{\matA - \matY \matY^+ \matA }} \leq
\left(1 + \sqrt{\frac{k}{p-1}} + \frac{e \sqrt{k+p}}{p} \sqrt{ \min\{ m,n\} -k }
\right)^{\frac{1}{2q+1}}\TNorm{\matA - \matA_k},$$
where $e = 2.718\ldots$. 
This  result is not immediately applicable to the construction of a factorization of the form $\matA =
\matB \matZ\transp + \matE$ with $\matZ \in \R^{n \times k}$; 
the problem is with the dimension of $\matY \in R^{m \times r}$, since r > k\footnote{
We should note that \cite{HMT} contains a result that is directly applicable
for a factorization $\matB \matZ\transp$ with $\matZ \in \R^{n \times k}$; 
this is in eqn.~(1.11) in \cite{HMT}; one can use this result and get an approximation $2+\epsilon$
in Lemma \ref{tropp1}; the new analysis improves that to  $\sqrt{2}+\epsilon$. Also, it is possible 
to use the matrix $\matY$ to compute such a factorization since $\matZ$ can have more than $k$ columns. 
This will sacrifice the approximation bounds of our algorithms; for example, assume that we construct 
a matrix $\matZ$ with $2k$ columns; then, one needs to sample at least $r > 2k$ columns to guarantee the
rank assumption and the error bound, for example, in Theorem \ref{theorem:intro2} 
will become $1 + 1 / (1 - \sqrt{2k/r})$. Note that such a $\matZ$
can be computed by applying the algorithm on $\matA\transp$; then $\matY = \matZ$. Orthonormalizing
$\matZ$ is also not necessary (this requires a slightly different argument in Lemma \ref{lem:genericNoSVD}
that we do not discuss here). 
}. 
Lemma \ref{troppextension0} below, which  strengthens Corollary 10.10 in \cite{HMT}, argues that the matrix $\Pi_{\matY,k}^2(\matA)$ ``contains'' the desired factorization $\matB \matZ\transp$. Recall that while we cannot compute $\Pi_{\matY,k}^2(\matA)$ efficiently, we can compute a constant-factor approximation, which is sufficient for our purposes. The proof of Lemma~\ref{troppextension0} is very similar to the proof of Corollary 10.10 of~\cite{HMT}, with the only difference being our starting point: instead of using Theorem 9.1 of~\cite{HMT} we use Lemma~\ref{lem:generic} of our work. To prove Lemma~\ref{troppextension0}, we will 
need several results for standard Gaussian matrices,
projection matrices, and H\"{o}lder's inequality. 
The following seven lemmas are all borrowed from \cite{HMT}.

\begin{lemma} [Proposition 10.1 in \cite{HMT}] \label{prop1b}
Fix matrices $\matX$, $\matY$, and draw a standard Gaussian matrix $\matR$ of appropriate dimensions. Then,
\mand{
\qquad \Expect{ \TNorm {\matX \matR \matY} }  \leq \TNorm{\matX} \FNorm{\matY} + \FNorm{\matX} \TNorm{\matY}.
}
\end{lemma}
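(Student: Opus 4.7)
The plan is to prove this via the Sudakov--Fernique Gaussian comparison inequality applied to two carefully chosen Gaussian processes, a strategy often attributed to Chevet. First I would write
\[
\TNorm{\matX \matR \matY} \;=\; \sup_{\TNorm{u}=\TNorm{v}=1} u\transp \matX \matR \matY v \;=:\; \sup_{(u,v)} Z_{u,v},
\]
recognizing that $\{Z_{u,v}\}$ is a centered Gaussian process (as a linear function of the entries of $\matR$) indexed by pairs of unit vectors. I would then introduce the auxiliary Gaussian process
\[
Y_{u,v} \;=\; \TNorm{\matY v}\,(\matX\transp u)\transp g \;+\; \TNorm{\matX\transp u}\, h\transp (\matY v),
\]
where $g,h$ are independent standard Gaussian vectors of the appropriate dimensions (independent of $\matR$).

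The heart of the argument, and the step I expect to be the main obstacle, is verifying the increment inequality required by Sudakov--Fernique: for every $(u,v),(u',v')$,
\[
\Expect{(Z_{u,v}-Z_{u',v'})^2} \;\leq\; \Expect{(Y_{u,v}-Y_{u',v'})^2}.
\]
Setting $a=\matX\transp u$, $a'=\matX\transp u'$, $b=\matY v$, $b'=\matY v'$, a direct calculation gives
\[
\Expect{(Z_{u,v}-Z_{u',v'})^2} \;=\; \TNormS{a}\TNormS{b} - 2\langle a,a'\rangle\langle b,b'\rangle + \TNormS{a'}\TNormS{b'},
\]
while
\[
\Expect{(Y_{u,v}-Y_{u',v'})^2} \;=\; 2\TNormS{a}\TNormS{b} + 2\TNormS{a'}\TNormS{b'} - 2\TNorm{b}\TNorm{b'}\langle a,a'\rangle - 2\TNorm{a}\TNorm{a'}\langle b,b'\rangle.
\]
The difference factors as
\[
\bigl(\TNorm{a}\TNorm{b}-\TNorm{a'}\TNorm{b'}\bigr)^2 + 2\bigl(\TNorm{a}\TNorm{a'}-\langle a,a'\rangle\bigr)\bigl(\TNorm{b}\TNorm{b'}-\langle b,b'\rangle\bigr),
\]
which is non-negative: the first term is a square, and each factor of the second term is non-negative by Cauchy--Schwarz.

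Having verified the hypothesis, the Sudakov--Fernique inequality yields $\Expect{\sup_{u,v} Z_{u,v}} \le \Expect{\sup_{u,v} Y_{u,v}}$. To finish, I would distribute the supremum over the two summands defining $Y_{u,v}$:
\[
\sup_{u,v} Y_{u,v} \;\leq\; \sup_v \TNorm{\matY v}\,\sup_u u\transp \matX g \;+\; \sup_u \TNorm{\matX\transp u}\,\sup_v h\transp \matY v \;=\; \TNorm{\matY}\,\TNorm{\matX g} + \TNorm{\matX}\,\TNorm{\matY\transp h}.
\]
Taking expectations and using Jensen's inequality together with $\Expect{\TNormS{\matX g}} = \FNormS{\matX}$ (since $g$ has i.i.d.\ $\mathcal{N}(0,1)$ entries), and similarly for $h$, delivers
\[
\Expect{\TNorm{\matX \matR \matY}} \;\leq\; \TNorm{\matY}\FNorm{\matX} + \TNorm{\matX}\FNorm{\matY},
\]
which is the claimed bound. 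The only non-routine piece is the covariance comparison above; everything else is assembly.
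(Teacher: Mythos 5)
Your proof is correct. The covariance computations check out: with $a=\matX\transp u$, $b=\matY v$, the difference of squared increments does factor as $(\TNorm{a}\TNorm{b}-\TNorm{a'}\TNorm{b'})^2+2(\TNorm{a}\TNorm{a'}-\langle a,a'\rangle)(\TNorm{b}\TNorm{b'}-\langle b,b'\rangle)\ge 0$, the Sudakov--Fernique hypothesis is exactly what you verified, and the final assembly via $\sup_{u,v}Y_{u,v}\le\TNorm{\matY}\TNorm{\matX g}+\TNorm{\matX}\TNorm{\matY\transp h}$ plus Jensen gives the stated bound. Note that the paper does not prove this lemma at all --- it is imported verbatim as Proposition 10.1 of~\cite{HMT}, which in turn traces back to Chevet's theorem and Gordon's comparison inequalities --- and the argument you give is precisely the standard Gaussian-comparison proof of that result, so you have filled in a proof the paper delegates to its reference rather than deviated from it.
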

\begin{lemma} [Proposition 10.2 in \cite{HMT}] \label{prop2b}
For $k, p \geq 2$, draw a standard Gaussian matrix $\matR \in \R^{k \times (k+p)}$. Then,
\mand{
\qquad \Expect{ \TNorm{\matR^+} }  \leq \frac{e \sqrt{k+p}}{p},
}
where $e=2.718\ldots$.
\end{lemma}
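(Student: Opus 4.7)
The plan is to reduce the statement to a small-ball estimate for the smallest singular value of a fat Gaussian matrix, and then integrate. Since $\matR \in \R^{k \times (k+p)}$ is standard Gaussian with $p \ge 2$, it has full row rank almost surely, so $\matR^{+} = \matR\transp(\matR\matR\transp)^{-1}$ and $\TNorm{\matR^{+}} = 1/\sigma_{k}(\matR)$. Bounding $\Expect{\TNorm{\matR^{+}}}$ is therefore the same as bounding $\Expect{1/\sigma_{k}(\matR)}$.

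First I would establish a tail bound of the form
$$
\Prob\bigl[\sigma_{k}(\matR) \le s\bigr] \;\le\; \frac{1}{\Gamma(p+2)}\bigl(s\sqrt{k+p}\bigr)^{p+1}
\qquad \text{for all } s > 0.
$$
Two routes are available. One route uses the orthogonal invariance of $\matR$ to reduce to the joint density of the singular values of a rectangular Gaussian matrix, which carries a factor $\prod_i \sigma_i^{p}$ near zero; integrating out the other singular values leaves a density for $\sigma_{k}$ that is $O(s^{p})$ near the origin. The other route bidiagonalizes $\matR$ by Householder transformations into a bidiagonal matrix whose diagonal entries are independent $\chi_{k+p},\chi_{k+p-1},\ldots,\chi_{p+1}$ variables and whose subdiagonal entries are independent $\chi_{k-1},\ldots,\chi_{1}$ variables; one then compares $\sigma_{k}$ to the smallest chi factor $\chi_{p+1}$, whose density at the origin is of order $s^{p}$. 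Either way one arrives at the displayed inequality.

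Next I would convert this tail estimate into a bound on the expectation via the layer-cake formula
$$
\Expect{\frac{1}{\sigma_{k}(\matR)}} \;=\; \int_{0}^{\infty}\Prob\!\left[\frac{1}{\sigma_{k}(\matR)} > t\right]dt.
$$
Splitting the integral at a threshold $t_{0} = \sqrt{k+p}/p$, the contribution from $[0,t_{0}]$ is trivially at most $t_{0}$, while for $t > t_{0}$ the substitution $s = 1/t$ in the small-ball bound gives an integrand proportional to $t^{-(p+1)}$, which is integrable precisely because $p \ge 1$. Summing the two pieces and using Stirling's inequality $\Gamma(p+2) \ge \bigl((p+1)/e\bigr)^{p+1}$ to simplify the constants produces a bound of the shape $C\sqrt{k+p}/p$ with $C \le e$, which is exactly the claimed inequality.

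The main obstacle is pinning down the constant in the small-ball bound: the two derivations sketched above are both standard but involve careful bookkeeping of the Vandermonde factor in the joint singular-value density (respectively the bidiagonal product formula). All of the algebra can be absorbed into the single constant $e$ once Stirling is applied in the final step; nothing else in the argument is delicate.
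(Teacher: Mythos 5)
The paper itself contains no proof of this lemma: it is imported verbatim as Proposition 10.2 of \cite{HMT} (the text states that this block of lemmas is ``borrowed'' from that reference). Your proposal is therefore effectively reconstructing the proof from \cite{HMT}, and your strategy is the same one used there: write $\TNorm{\matR^+}=1/\sigma_k(\matR)$, invoke a Chen--Dongarra-type small-ball estimate for the smallest singular value of a fat Gaussian matrix (your displayed tail bound $\Probab{\sigma_k(\matR)\le s}\le (s\sqrt{k+p})^{p+1}/\Gamma(p+2)$ is, up to Stirling, exactly the bound $\tfrac{1}{\sqrt{2\pi(p+1)}}[e\sqrt{k+p}/((p+1)t)]^{p+1}$ that \cite{HMT} quotes), and then integrate the tail. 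The route is sound and standard.

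There is, however, one concrete quantitative slip in the last step. With your tail bound and your threshold $t_0=\sqrt{k+p}/p$, the tail piece of the layer-cake integral is
$$\int_{t_0}^{\infty}\frac{(\sqrt{k+p})^{p+1}}{\Gamma(p+2)}\,t^{-(p+1)}\,dt\;=\;\frac{(\sqrt{k+p})^{p+1}}{p\,\Gamma(p+2)}\,t_0^{-p}\;=\;\frac{\sqrt{k+p}\;p^{p-1}}{(p+1)!},$$
and $p^{p-1}/(p+1)!$ grows like $e^{p}/(e\,p^{2}\sqrt{2\pi p})$, i.e.\ exponentially in $p$. Already for $p=4$ the two pieces sum to about $0.78\sqrt{k+p}$, exceeding the target $e\sqrt{k+p}/4\approx 0.68\sqrt{k+p}$, and for larger $p$ the discrepancy explodes; so ``summing the two pieces'' does not yield $C\le e$ as written. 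The fix is the standard balancing of the split point: take $t_0=C^{1/(p+1)}$ with $C=(\sqrt{k+p})^{p+1}/\Gamma(p+2)$, so that the tail piece equals $t_0/p$ and the total is $t_0(p+1)/p$. Stirling, in the form $\Gamma(p+2)\ge((p+1)/e)^{p+1}$, then gives $t_0\le e\sqrt{k+p}/(p+1)$ and hence a total of at most $e\sqrt{k+p}/p$. In other words, the correct threshold is roughly a factor of $e$ larger than the one you chose, and Stirling must be applied to bound $t_0$ itself rather than merely to tidy constants afterwards. With that correction your argument coincides with the proof in \cite{HMT}.
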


\begin{lemma} [Proposition 10.1 in \cite{HMT}] \label{prop1a}
Fix matrices $\matX$, $\matY$, and a standard Gaussian matrix $\matR$ of appropriate dimensions. Then,
\mand{
\Expect{ \FNormS{\matX \matR \matY} } = \FNormS{\matX} \FNormS{Y}.
}
\end{lemma}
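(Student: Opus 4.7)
The plan is to prove this identity by a short direct computation that exploits linearity of expectation together with the second-moment structure of a standard Gaussian matrix. The essential fact is that the entries of $\matR$ are i.i.d. $\mathcal{N}(0,1)$, so $\Expect{\matR_{kl} \matR_{k'l'}} = \delta_{kk'}\delta_{ll'}$. I would present two routes, and pick whichever is cleaner.

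The first (entrywise) route: expand $(\matX\matR\matY)_{ij} = \sum_{k,l} \matX_{ik}\matR_{kl}\matY_{lj}$, square it, and sum over $i,j$ to get $\FNormS{\matX\matR\matY}$ as a quadruple sum in $k,l,k',l'$. Taking expectations and using the covariance identity above collapses the quadruple sum to a double sum $\sum_{i,j,k,l}\matX_{ik}^2\matY_{lj}^2$, which factors into $\left(\sum_{i,k}\matX_{ik}^2\right)\left(\sum_{l,j}\matY_{lj}^2\right) = \FNormS{\matX}\FNormS{\matY}$.

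The second (trace) route is slightly slicker. Using $\FNormS{\matM} = \Trace(\matM\matM\transp)$ and the cyclic property of the trace, I would write
\[
\FNormS{\matX\matR\matY} = \Trace\bigl(\matY\matY\transp\,\matR\transp\matX\transp\matX\,\matR\bigr).
\]
By linearity of expectation and trace, it then suffices to establish the auxiliary identity $\Expect{\matR\transp \matA \matR} = \Trace(\matA)\,\matI$ for any fixed matrix $\matA$ of appropriate size; this follows immediately by computing a single entry and invoking the Gaussian covariance identity. Plugging this into the trace expression with $\matA = \matX\transp\matX$ yields $\Trace(\matY\matY\transp)\Trace(\matX\transp\matX) = \FNormS{\matX}\FNormS{\matY}$.

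There is no substantive obstacle; the result is a standard second-moment calculation for Gaussian random matrices, and the only care needed is to track the dimensions of $\matX$, $\matR$, $\matY$ so that the products are well defined. I would favor the trace-based derivation in the final write-up, since it is a two-line argument once the auxiliary identity $\Expect{\matR\transp\matA\matR} = \Trace(\matA)\,\matI$ is stated.
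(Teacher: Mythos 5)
Your computation is correct: both the entrywise expansion and the trace route are valid, and the auxiliary identity $\Expect{\matR\transp\matA\matR}=\trace(\matA)\,\matI$ follows exactly as you describe from $\Expect{\matR_{kl}\matR_{k'l'}}=\delta_{kk'}\delta_{ll'}$. Note, however, that the paper itself does not prove this lemma at all — it is one of several results quoted verbatim from Proposition 10.1 of the Halko--Martinsson--Tropp survey, with no argument supplied. So there is no in-paper proof to compare against; your write-up would simply make the dissertation self-contained on this point. The trace-based version is indeed the cleaner choice, and the only thing worth stating explicitly in the final text is the dimension convention (say $\matX\in\R^{a\times b}$, $\matR\in\R^{b\times c}$, $\matY\in\R^{c\times d}$), so that $\matR\transp\matX\transp\matX\matR$ and $\matY\matY\transp$ are both $c\times c$ and the cyclic rearrangement of the trace is legitimate.
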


\begin{lemma} [Proposition 10.2 in \cite{HMT}] \label{prop2a}
For $k, p \geq 2$, draw a standard Gaussian matrix $\matR \in \R^{k \times (k+p)}$. Then,
\mand{
 \Expect{ \FNormS{\matR^+} }  =\frac{k}{p-1}.
}
\end{lemma}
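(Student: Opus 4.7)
\textbf{Proof proposal for Lemma \ref{prop2a}.}
The plan is to reduce the computation to a standard fact about the inverse of a Wishart matrix. First, since $\matR \in \R^{k \times (k+p)}$ has $k \le k+p$ rows and its entries are i.i.d.\ $\mathcal{N}(0,1)$, the matrix $\matR$ has full row rank with probability one. Hence, almost surely, $\matR\matR\transp$ is invertible and the Moore--Penrose pseudoinverse admits the closed form
\[
\matR^+ \;=\; \matR\transp (\matR\matR\transp)^{-1}.
\]
Using this representation and the identity $\FNormS{\matX} = \trace(\matX \matX\transp)$, I will rewrite
\[
\FNormS{\matR^+} \;=\; \trace\!\bigl(\matR\transp (\matR\matR\transp)^{-1} (\matR\matR\transp)^{-1} \matR\bigr)
\;=\; \trace\!\bigl((\matR\matR\transp)^{-1}\bigr),
\]
where the last step uses the cyclic property of the trace and cancels one copy of $\matR\matR\transp$ with one copy of its inverse.

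The second (and main) step is to identify $\matW := \matR\matR\transp$ as a $k\times k$ Wishart matrix $\matW \sim W_k(\matI_k,\,k+p)$ with $k+p$ degrees of freedom. This follows directly from the definition of the Wishart distribution as the distribution of $\matR\matR\transp$ where $\matR$ is $k\times (k+p)$ with i.i.d.\ standard Gaussian entries. I will then invoke the classical formula for the expectation of the inverse Wishart: if $\matW \sim W_k(\matI_k,\,n)$ with $n - k - 1 > 0$, then
\[
\Expect{\matW^{-1}} \;=\; \frac{1}{n-k-1}\, \matI_k.
\]
Taking $n = k+p$ (which satisfies $n - k - 1 = p - 1 \ge 1$ because $p\ge 2$), taking traces, and using linearity of expectation yields
\[
\Expect{\FNormS{\matR^+}} \;=\; \Expect{\trace(\matW^{-1})} \;=\; \trace\bigl(\Expect{\matW^{-1}}\bigr) \;=\; \frac{k}{p-1},
\]
which is the claimed identity.

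\textbf{Anticipated obstacle.} The reduction in the first step is routine linear algebra, and the final computation is immediate once the expectation of the inverse Wishart is cited. The only nontrivial ingredient is the inverse Wishart mean formula itself; if one wishes to avoid quoting it as a black box, the alternative is to diagonalize via the SVD $\matR = \matU \Sigma \matV\transp$, write $\FNormS{\matR^+} = \sum_{i=1}^{k} 1/\sigma_i^2(\matR) = \sum_{i=1}^{k} 1/\lambda_i(\matW)$, and compute the expected sum of reciprocal eigenvalues of $\matW$ using the joint density of the Wishart eigenvalues together with the Selberg-type integral that gives $\Expect{1/\lambda_i(\matW)}$ summed over $i$. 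This direct route is more computationally demanding, so I would prefer the Wishart citation and restrict the written proof to the short two-line derivation above; the condition $p \ge 2$ in the hypothesis is precisely what is needed to guarantee that $p-1 \ge 1$ and hence that the inverse Wishart expectation exists.
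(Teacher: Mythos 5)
Your argument is correct: the reduction $\FNormS{\matR^+}=\trace\bigl((\matR\matR\transp)^{-1}\bigr)$ via full row rank, the identification of $\matR\matR\transp$ as a Wishart matrix $W_k(\matI_k,k+p)$, and the inverse-Wishart mean with $p-1\ge 1$ are exactly right. The thesis itself gives no proof of this lemma (it is imported verbatim from \cite{HMT}), and your derivation is precisely the standard one used in that source, so there is nothing further to add.
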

\begin{lemma} [proved in \cite{HMT}] \label{prop3}
For integers $k,p \geq 1$, and a standard Gaussian $\matR \in \R^{k \times (k+p)}$ the rank of $\matR$ is equal to $k$ with probability one.
\end{lemma}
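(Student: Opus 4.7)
The plan is to reduce the statement to the classical fact that the zero set of a non-trivial polynomial in $\mathbb{R}^N$ has Lebesgue measure zero, together with the absolute continuity of the Gaussian measure with respect to Lebesgue measure. Since $\matR \in \R^{k \times (k+p)}$ with $k \le k+p$, we have $\rank(\matR) = k$ if and only if at least one of the $\binom{k+p}{k}$ submatrices of size $k \times k$ formed by selecting $k$ columns of $\matR$ is non-singular. Equivalently, $\rank(\matR) < k$ if and only if every such $k \times k$ submatrix has zero determinant.

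First, I would fix one such submatrix $\matR_S$, where $S \subseteq \{1,\dots,k+p\}$ with $|S|=k$. The map $\det : \R^{k \times k} \to \R$ is a polynomial of degree $k$ in the $k^2$ entries and is not identically zero, since $\det(\matI_k) = 1$. A standard induction on the ambient dimension (or an appeal to the well-known fact that the vanishing locus of a non-zero polynomial is a proper algebraic subvariety) shows that the set $\{\matM \in \R^{k \times k} : \det(\matM) = 0\}$ has Lebesgue measure zero in $\R^{k^2}$. Because the entries of $\matR_S$ are i.i.d.\ standard Gaussian variables, the joint law of $\matR_S$ is absolutely continuous with respect to Lebesgue measure on $\R^{k^2}$, so $\Probab{\det(\matR_S) = 0} = 0$.

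Next, I would apply the union bound over all $\binom{k+p}{k}$ choices of $S$, which is a finite number. This gives
$$\Probab{\rank(\matR) < k} = \Probab{\bigcap_S \{\det(\matR_S) = 0\}} \le \sum_{S} \Probab{\det(\matR_S) = 0} = 0,$$
so $\Probab{\rank(\matR) = k} = 1$, as claimed.

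The only conceptual step worth dwelling on is the measure-theoretic fact that the zero locus of a non-trivial polynomial in $\R^N$ has Lebesgue measure zero; this is standard and can be proved by induction on $N$ (integrating over the last coordinate, where the polynomial restricts to a non-zero one-variable polynomial on a co-null slice, hence has only finitely many roots there). Everything else is bookkeeping: ensuring the determinant polynomial is non-trivial and applying a finite union bound. No step here is expected to be an obstacle, which is why this lemma is typically stated without proof in the literature.
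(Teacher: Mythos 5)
Your proof is correct. The paper itself gives no argument for this lemma — it simply defers to the citation of Halko, Martinsson, and Tropp — and the measure-zero argument you supply (a $k\times k$ minor is a non-trivial polynomial in the entries, the zero set of a non-trivial polynomial is Lebesgue-null, and the Gaussian law is absolutely continuous with respect to Lebesgue measure) is exactly the standard proof the cited reference relies on; the only cosmetic remark is that since $\{\rank(\matR)<k\}$ is the \emph{intersection} of the events $\{\det(\matR_S)=0\}$, it is already contained in any single one of them, so no union bound is needed.
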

\begin{lemma} [Proposition 8.6 in \cite{HMT}] \label{projection}
Let $\matP$ be a projection matrix. For any matrix $\matX$ of appropriate dimensions and an integer $q\ge 0$,
$$ \TNorm{\matP \matX} \le \left( \TNorm{ \matP (\matX \matX\transp)^q \matX  }  \right)^{\frac{1}{2q+1}} $$
\end{lemma}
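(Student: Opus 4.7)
The plan is to exploit that $\matP$ is an orthogonal projection (so $\matP\transp = \matP = \matP^2$) to rewrite both sides of the claimed inequality as spectral norms of symmetric PSD matrices, and then reduce everything to a scalar moment inequality that Jensen will dispatch. Let $\matM := \matX\matX\transp$, which is symmetric PSD. Two applications of the elementary identity $\TNormS{\matY} = \TNorm{\matY\matY\transp}$ yield
\begin{align*}
\TNormS{\matP\matX} &= \TNorm{(\matP\matX)(\matP\matX)\transp} = \TNorm{\matP\matM\matP},\\
\TNormS{\matP\matM^q\matX} &= \TNorm{(\matP\matM^q\matX)(\matP\matM^q\matX)\transp} = \TNorm{\matP\matM^{2q+1}\matP}.
\end{align*}
Raising the first equation to the $(2q+1)$-st power, the whole lemma is equivalent to the matrix inequality
$$\TNorm{\matP\matM\matP}^{2q+1}\;\le\;\TNorm{\matP\matM^{2q+1}\matP},$$
valid for every PSD $\matM$ and orthogonal projection $\matP$.

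\textbf{Reduction to a scalar inequality.} Since $\matP\matM\matP$ is symmetric PSD, its spectral norm is attained by some unit vector $\u$ with $\TNorm{\matP\matM\matP} = \u\transp\matP\matM\matP\u$. Using $\matP\matM\matP = \matP(\matP\matM\matP)\matP$, the range of $\matP\matM\matP$ lies in $\col(\matP)$, so we may take $\u\in\col(\matP)$, i.e.\ $\matP\u=\u$. With this choice $\TNorm{\matP\matM\matP}=\u\transp\matM\u$, and analogously $\u\transp\matM^{2q+1}\u = \u\transp\matP\matM^{2q+1}\matP\u \le \TNorm{\matP\matM^{2q+1}\matP}$ by the Rayleigh quotient bound. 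It therefore suffices to show the scalar inequality
$$(\u\transp\matM\u)^{2q+1}\;\le\;\u\transp\matM^{2q+1}\u.$$

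\textbf{Jensen's inequality.} Writing the eigendecomposition $\matM = \sum_i \mu_i\w_i\w_i\transp$ with $\mu_i\ge 0$ and $\{\w_i\}$ orthonormal, set $p_i := (\w_i\transp\u)^2$. Since $\u$ is a unit vector, $\sum_i p_i = 1$, so the $p_i$ form a probability distribution on the spectrum of $\matM$. The displayed scalar inequality then reads
$$\Bigl(\sum_i p_i\mu_i\Bigr)^{\!2q+1} \;\le\; \sum_i p_i\mu_i^{2q+1},$$
which is exactly Jensen's inequality applied to the convex map $t\mapsto t^{2q+1}$ on $[0,\infty)$.

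\textbf{Main obstacle.} The subtle step is the reduction to an extremal vector in $\col(\matP)$: it relies on $\matP$ being an \emph{orthogonal} projection so that $\matP\matM\matP$ is symmetric PSD and its operator norm is attained by a vector in its column space; for a general oblique projection the argument breaks down. Once this reduction is secured, everything else is routine: the twofold use of $\TNormS{\matY}=\TNorm{\matY\matY\transp}$ converts the original bound into a spectral-norm comparison between $\matP\matM\matP$ and $\matP\matM^{2q+1}\matP$, and Jensen's inequality on the spectral measure of $\matM$ finishes the job.
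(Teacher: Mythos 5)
Your proof is correct. The paper does not actually prove this lemma — it imports it verbatim as Proposition 8.6 of \cite{HMT} — and your argument (squaring both sides to reduce the claim to $\TNorm{\matP M\matP}^{2q+1}\le\TNorm{\matP M^{2q+1}\matP}$ with $M=\matX\matX\transp$, locating the extremal unit vector in $\col(\matP)$, and applying Jensen's inequality to the spectral measure of $M$) is essentially the standard proof given there, so there is nothing to add beyond noting the trivial case $\matP M\matP=\bm{0}$, which you implicitly cover.
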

\begin{lemma} [H\"{o}lder's inequality] \label{prop0}
Let $x$ be a positive random variable. Then, for any $h \ge 1$,
$$\Expect{ x } \leq \left( \Expect{ x^h } \right)^{\frac{1}{h}}.$$
\end{lemma}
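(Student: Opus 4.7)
The plan is to derive the stated inequality as a direct consequence of Jensen's inequality applied to a suitably chosen convex function. Specifically, consider $\phi:[0,\infty)\to[0,\infty)$ defined by $\phi(t)=t^h$. Since $h\ge 1$, its second derivative $h(h-1)t^{h-2}$ is non-negative on $[0,\infty)$, so $\phi$ is convex on the support of the random variable $x$ (which is non-negative by hypothesis).

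Given convexity, Jensen's inequality yields $\phi(\Expect{x})\le \Expect{\phi(x)}$, i.e., $(\Expect{x})^{h}\le \Expect{x^{h}}$. Both sides of this inequality are non-negative, and the function $t\mapsto t^{1/h}$ is monotone increasing on $[0,\infty)$ for $h\ge 1$. Applying this monotone map to both sides gives the desired conclusion $\Expect{x}\le (\Expect{x^{h}})^{1/h}$.

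Alternatively, the inequality can be obtained as an immediate specialization of the classical Hölder inequality $\Expect{XY}\le (\Expect{X^{p}})^{1/p}(\Expect{Y^{q}})^{1/q}$ for conjugate exponents $p,q\ge 1$ with $1/p+1/q=1$. Taking $X=x$, $Y\equiv 1$, $p=h$, and $q=h/(h-1)$ (interpreting $q=\infty$ when $h=1$) produces $\Expect{x}\le (\Expect{x^{h}})^{1/h}\cdot(\Expect{1})^{1/q}=(\Expect{x^{h}})^{1/h}$, which is exactly the claim. Either route is essentially a one-line argument.

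The main (and really only) obstacle is a notational rather than a mathematical one: the statement assumes $x$ takes positive (hence non-negative) values, which is what guarantees convexity of $t^{h}$ on the support of $x$ when $h$ is not an integer. Once this is noted, there is no further difficulty, so the write-up will consist of stating the convexity of $\phi(t)=t^{h}$, invoking Jensen, and taking $h$-th roots.
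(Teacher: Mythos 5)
Your proof is correct; both the Jensen route (convexity of $t\mapsto t^{h}$ for $h\ge 1$, then take $h$-th roots) and the specialization of classical H\"{o}lder with $Y\equiv 1$ are standard one-line derivations of this inequality (often called Lyapunov's inequality). The paper itself offers no proof --- it lists this lemma among results borrowed from the reference on randomized matrix decompositions --- so there is nothing to compare against; your argument is exactly the textbook one and fills that gap cleanly.
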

\noindent The following lemma provides an alternative definition for $\Pi_{\matC,k}^{\xi}(\matA)$ which will be useful in subsequent proofs.
Recall from Section~\ref{chap21} that we can write
$\Pi_{\matC,k}^\xi(\matA)= \matC\matX^\xi$, where
$$
\matX^\xi = 
\argmin_{\Psi \in \mathbb{R}^{r \times n}:\rank(\Psi)\leq k}\XNormS{\matA-\matC\Psi}.
$$
The next lemma basically says that $\Pi_{\matC,k}^\xi(\matA)$ is the projection
of \math{\matA} onto the rank-\math{k} subspace spanned by 
\math{\matC\matX^\xi}, and that no other subspace in the column space of
\math{\matC}   is better.
\begin{lemma}\label{lem:PPD1}
For $\matA \in {\R}^{m \times n}$ and $\matC \in {\R}^{m \times r}$,  
integer $r> k$, let
$\Pi_{\matC,k}^\xi(\matA)= \matC\matX^\xi$, and 
$\matY\in \mathbb{R}^{r \times n}$ be any matrix of rank at most $k$.
Then, 
$$\XNormS{\matA-\matC\matX^\xi} = \XNormS{\matA-(\matC\matX^\xi)
(\matC\matX^\xi)^+\matA} \leq \XNormS{\matA-(\matC\matY)(\matC\matY)^+\matA},$$
where $\matY \in \mathbb{R}^{r \times n}$ is any matrix of rank at most $k$.
\end{lemma}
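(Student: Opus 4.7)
The strategy is to prove both parts using the optimality of $\matX^\xi$, together with the fact that the column space of $\matC \matY$ (for any $\matY$ of rank at most $k$) is contained in $\col(\matC)$. Let me write $\matD = \matC \matX^\xi$ for brevity throughout.

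For the equality in part (1), the $\le$ direction is immediate: $\matD \matD^+ \matA$ is the orthogonal projection of $\matA$ onto $\col(\matD)$, so it minimizes $\XNormS{\matA - \matZ}$ over all $\matZ$ with $\col(\matZ) \subseteq \col(\matD)$. (For $\xi = F$ this follows from standard least-squares; for $\xi = 2$ it follows from Lemma~\ref{lem:pyth} applied to the decomposition $\matA - \matZ = (\matA - \matD\matD^+\matA) + \matD\matD^+(\matA - \matZ)$, since the first summand is orthogonal to $\col(\matD)$.) Since $\matD$ itself lies in $\col(\matD)$, we get $\XNormS{\matA - \matD\matD^+\matA} \le \XNormS{\matA - \matD}$. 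For the reverse direction, I will exhibit a competitor $\matY^\star \in \R^{r \times n}$ of rank at most $k$ such that $\matC \matY^\star = \matD \matD^+ \matA$; then the optimality of $\matX^\xi$ yields $\XNormS{\matA - \matD} \le \XNormS{\matA - \matC \matY^\star} = \XNormS{\matA - \matD\matD^+\matA}$.

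The natural choice is $\matY^\star = \matC^+ \matD \matD^+ \matA$, which has rank at most $\rank(\matD\matD^+\matA) \le \rank(\matD) \le k$. The key step is verifying $\matC \matY^\star = \matD\matD^+\matA$, i.e., that $\matC\matC^+$ fixes $\matD\matD^+\matA$. This holds because
$$\col(\matD\matD^+\matA) \subseteq \col(\matD) = \col(\matC \matX^\xi) \subseteq \col(\matC),$$
so the orthogonal projector $\matC\matC^+$ onto $\col(\matC)$ acts as the identity on $\matD\matD^+\matA$. This is the one technical step where a bit of care is needed; everything else is manipulation.

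Part (2) is an essentially identical application of the same idea. For any $\matY \in \R^{r \times n}$ of rank at most $k$, set $\matE = \matC\matY$ and consider $\matE\matE^+\matA$. As above, $\col(\matE\matE^+\matA) \subseteq \col(\matE) \subseteq \col(\matC)$, so $\matZ^\star := \matC^+ \matE \matE^+ \matA$ satisfies $\matC \matZ^\star = \matE\matE^+\matA$ and has rank at most $k$. Optimality of $\matX^\xi$ against the competitor $\matZ^\star$ gives
$$\XNormS{\matA - \matC \matX^\xi} \le \XNormS{\matA - \matC \matZ^\star} = \XNormS{\matA - (\matC\matY)(\matC\matY)^+\matA},$$
which combined with part (1) finishes the proof. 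The main (mild) obstacle is just the pseudoinverse bookkeeping in showing $\matC\matC^+ \matM \matM^+ \matA = \matM \matM^+ \matA$ whenever $\col(\matM) \subseteq \col(\matC)$; once this is clear, both parts follow by reducing to the optimality property of $\matX^\xi$.
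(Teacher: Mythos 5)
Your proof is correct and follows essentially the same route as the paper: the $\le$ direction of the equality and the final inequality both come from the optimality of $\matX^\xi$ against a rank-$\le k$ competitor, and the $\ge$ direction comes from matrix-Pythagoras. The only difference is cosmetic: where you take the competitors $\matC^+\matD\matD^+\matA$ and $\matC^+(\matC\matY)(\matC\matY)^+\matA$ and must then verify that $\matC\matC^+$ fixes these projections (because their column spaces lie in $\col(\matC)$), the paper takes $\matX^\xi(\matC\matX^\xi)^+\matA$ and $\matY(\matC\matY)^+\matA$ directly, so that multiplying by $\matC$ yields the desired projections by inspection and that verification is unnecessary.
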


\begin{proof}
The second inequality will follow from the optimality of 
\math{\matX^\xi} because 
\math{\matY(\matC\matY)^+\matA} has rank at most \math{k}. So we only need to
prove the first equality. Again, by the optimality of 
\math{\matX^\xi} and  because 
\math{\matX^\xi(\matC\matX^\xi)^+\matA} has rank at most \math{k},
$\XNormS{\matA-\matC\matX^\xi} \le \XNormS{\matA-(\matC\matX^\xi)
(\matC\matX^\xi)^+\matA}$. To get the reverse inequality, we will use
matrix-Pythagoras as follows:
\begin{eqnarray*}
\XNormS{\matA - \matC\matX^\xi} 
&=& 
\XNormS{\left(\matI_m-(\matC\matX^\xi)(\matC\matX^\xi)^+\right)\matA-
\matC\matX^\xi(\matI_n-(\matC\matX^\xi)^+\matA)}\\
&\ge& 
\XNormS{\left(\matI_m-(\matC\matX^\xi)(\matC\matX^\xi)^+\right)\matA}.
\end{eqnarray*}
\end{proof}
\begin{lemma} [Extension of Corollary 10.10 of \cite{HMT}] \label{troppextension0}
Let $\matA$ be a matrix in $\R^{m \times n}$ of rank $\rho$,  let $k$ be an integer satisfying $2 \leq k < \rho$, and let $r = k+p$ for some integer $p \geq 2$.
Let $\matR \in \R^{n \times r}$ be a standard Gaussian matrix (i.e., a matrix whose entries are drawn in i.i.d. trials from $\mathcal{N}(0,1)$).
Define $\matB = (\matA \matA\transp)^q \matA$ and compute $\matY = \matB \matR$. Then, for any $q \geq 0$,
$$\Expect{\TNorm{\matA - \Pi_{\matY,k}^2(\matA) } } \leq
\left(1 + \sqrt{\frac{k}{p-1}} + \frac{e \sqrt{k+p}}{p} \sqrt{ \min\{ m,n\} -k } \right)^{\frac{1}{2q+1}}
\TNorm{\matA - \matA_k}$$
\end{lemma}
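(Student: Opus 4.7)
The plan is to exhibit a specific rank-$k$ matrix $\matX$ whose columns lie in the column space of $\matY$ and then reduce the problem to bounding $\TNorm{(\matI_m - \matX\matX^+)\matA}$. Set $\Omega_1 = \matV_k\transp \matR \in \R^{k\times r}$ and $\Omega_2 = \matV_{\rho-k}\transp \matR \in \R^{(\rho-k)\times r}$; by rotational invariance of the Gaussian, $\Omega_1$ and $\Omega_2$ are independent standard Gaussian matrices, and by Lemma~\ref{prop3} $\Omega_1$ has full row rank $k$ almost surely, so $\Omega_1 \Omega_1^+ = \matI_k$. The good choice is $\matX = \matB \matR\, \Omega_1^+ \in \R^{m \times k}$, which has rank at most $k$ and whose columns sit inside the column space of $\matY = \matB\matR$. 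Applying Lemma~\ref{lem:PPD1} with $\matZ = \Omega_1^+$ then yields
\[
\TNormS{\matA - \Pi^{2}_{\matY,k}(\matA)} \le \TNormS{(\matI_m - \matX \matX^+) \matA}.
\]

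The second step is to swap $\matA$ for $\matB$ using the power-iteration identity $(\matA\matA\transp)^q \matA = \matB$. Since $\matI_m - \matX \matX^+$ is an orthogonal projection, Lemma~\ref{projection} gives
\[
\TNorm{(\matI_m - \matX \matX^+)\matA} \le \bigl(\TNorm{(\matI_m - \matX \matX^+) \matB}\bigr)^{1/(2q+1)}.
\]
Because $\matB = \matU_\matA \Sigma_\matA^{2q+1} \matV_\matA\transp$ shares right singular vectors with $\matA$, I can now apply Lemma~\ref{lem:generic} to $\matB$ with sampling matrix $\matW = \matR\, \Omega_1^+$; the rank hypothesis is automatic since $\matV_k\transp \matW = \Omega_1 \Omega_1^+ = \matI_k$. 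Direct substitution gives $(\matB - \matB_k) \matR\, \Omega_1^+ = \matU_{\rho-k} \Sigma_{\rho-k}^{2q+1} \Omega_2 \Omega_1^+$ and $\TNorm{\matB - \matB_k} = \sigma_{k+1}^{2q+1}$, so the generic bound collapses to
\[
\TNormS{(\matI_m - \matX\matX^+) \matB} \le \sigma_{k+1}^{2(2q+1)}\bigl(1 + \TNormS{\Omega_2 \Omega_1^+}\bigr).
\]
Taking a square root, using $\sqrt{1+t^2} \le 1+t$ for $t \ge 0$, and chaining with the previous display yields the deterministic bound $\TNorm{\matA - \Pi^2_{\matY,k}(\matA)} \le \sigma_{k+1}\bigl(1 + \TNorm{\Omega_2 \Omega_1^+}\bigr)^{1/(2q+1)}$.

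The final step is to take expectations and control the Gaussian block $\Omega_2 \Omega_1^+$. H\"older's inequality (Lemma~\ref{prop0}) with exponent $h = 2q+1$ moves the power outside the expectation:
\[
\Expect{\TNorm{\matA - \Pi^2_{\matY,k}(\matA)}} \le \sigma_{k+1}\bigl(1 + \Expect{\TNorm{\Omega_2 \Omega_1^+}}\bigr)^{1/(2q+1)}.
\]
Conditioning on $\Omega_1$ (so that $\Omega_2$ remains standard Gaussian and independent), Lemma~\ref{prop1b} with $\matX = \matI_{\rho-k}$ and $\matY = \Omega_1^+$ gives $\Expect{\TNorm{\Omega_2 \Omega_1^+}\mid \Omega_1} \le \FNorm{\Omega_1^+} + \sqrt{\rho-k}\, \TNorm{\Omega_1^+}$; taking outer expectations and combining Jensen's inequality with Lemma~\ref{prop2a} ($\Expect{\FNorm{\Omega_1^+}} \le \sqrt{k/(p-1)}$) and Lemma~\ref{prop2b} ($\Expect{\TNorm{\Omega_1^+}} \le e\sqrt{k+p}/p$), together with $\sqrt{\rho-k} \le \sqrt{\min\{m,n\}-k}$, delivers the stated bound. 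The main obstacle is the inspired choice $\matX = \matB \matR \Omega_1^+$: it must simultaneously lie in the column space of $\matY$, have rank $k$, force $(\matV_k\transp \matW)^+$ in Lemma~\ref{lem:generic} to reduce to the identity, and leave behind a clean product of two independent Gaussian blocks whose spectral norm can be estimated by the standard Gaussian concentration lemmas already in the toolbox.
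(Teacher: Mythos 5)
Your proof is correct and follows essentially the same route as the paper's: reduce to a rank-$k$ candidate in the column space of $\matY$, transfer from $\matA$ to $\matB$ via Lemma~\ref{projection}, invoke Lemma~\ref{lem:generic} to isolate the Gaussian block $\Omega_2\Omega_1^+$, and finish with H\"older's inequality and the Gaussian moment bounds of Lemmas~\ref{prop1b}, \ref{prop2a}, and~\ref{prop2b}. The only cosmetic differences are that you use the explicit candidate $\matY\Omega_1^+$ where the paper routes through the optimal $\Pi^2_{\matY,k}(\matB)$, and you pull $\Sigma_{\matB,\tau}$ out by submultiplicativity one step earlier (using $\FNorm{\matI_{\rho-k}}=\sqrt{\rho-k}\le\sqrt{\min\{m,n\}-k}$ in place of $\FNorm{\Sigma_{\matB,\tau}}\le\sqrt{\min\{m,n\}-k}\,\TNorm{\Sigma_{\matB,\tau}}$), which yields the same final constant.
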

\begin{proof}
Let $\Pi_{\matY,k}^2(\matA) =\mat Y\matX_1$ and $\Pi_{\matY,k}^2(\matB) 
= \matY\matX_2$, where
\math{\matX_1} is optimal for \math{\matA} and \math{\matX_2} for 
\math{\matB}.
From Lemma~\ref{lem:PPD1},
$$\TNorm{\matA-\Pi_{\matY,k}^2(\matA)} = 
\TNorm{(\matI_m-(\matY\matX_1)(\matY\matX_1)^+)\matA} \leq
\TNorm{(\matI_m-(\matY\matX_2)(\matY\matX_2)^+)\matA}.$$
From Lemma~\ref{projection} and using the fact that $\matI_m-(\matY\matX_2)(\matY\matX_2)^+$ is a projection,
\begin{eqnarray*}
\TNorm{(\matI_m-(\matY\matX_2)(\matY\matX_2)^+)\matA}&\leq&
\TNorm{\left(\matI_m-\left(\matY\matX_2\right)\left(\matY\matX_2\right)^+\right)\left(\matA\matA\transp\right)^q\matA}^{\frac{1}{2q+1}}\\
&=&
\TNorm{\matB-\left(\matY\matX_2\right)\left(\matY\matX_2\right)^+\matB}^{\frac{1}{2q+1}}\\
&=&
\TNorm{\matB-\Pi_{\matY,k}^2(\matB)}^{\frac{1}{2q+1}},
\end{eqnarray*}
where the last step follows from Lemma~\ref{lem:PPD1}.
We conclude that
\begin{equation*}
\TNorm{\matA-\Pi_{\matY,k}^2(\matA)} \leq \TNorm{\matB-\Pi_{\matY,k}^2(\matB)}^{\frac{1}{2q+1}}.
\end{equation*}
\math{\matY} is generated using a random \math{\matR}, so
taking expectations and applying H\"{o}lder's inequality, we get
\begin{equation}\label{eqn:PPD22}
\Expect{\TNorm{\matA-\Pi_{\matY,k}^2(\matA)}} \leq \left(\Expect{\TNorm{\matB-\Pi_{\matY,k}^2(\matB)}}\right)^{\frac{1}{2q+1}}.
\end{equation}
We now focus on bounding the term on 
the right-hand side of the above equation. 
Let the SVD of $\matB$ be $\matB = \matU_{\matB} \Sigma_{\matB} \matV_{\matB}\transp$, with 
the top rank $k$ factors from the SVD of $\matB$ being
 $\matU_{\matB,k}$, $\Sigma_{\matB,k}$, and $\matV_{\matB,k}$ and
 the  corresponding trailing factors being
$\matU_{\matB,\tau}$, $\Sigma_{\matB,\tau}$ and $\matV_{\matB,\tau}$. 
Let $\rho_\matB$ be the rank of $\matB$.
Let
\mand{
       \Omega_1 = \matV_{\matB,k}\transp \matR \in \R^{k \times r} \qquad \mbox{and} \qquad
\qquad \Omega_2 = \matV_{\matB,\tau}\transp \matR \in \R^{(\rho_{\matB}-k) \times r}.
}
The Gaussian distribution is rotationally invariant, so $\Omega_1$, $\Omega_2$ are also standard Gaussian matrices which are
stochastically independent because $\matV_{\matB}\transp$ can be extended to a
full rotation. Thus, $\matV_{\matB,k}\transp \matR$ and $\matV_{\matB,\tau}\transp\matR$ also have entries that are i.i.d. $\mathcal{N}(0,1)$ variables. We now apply Lemma~\ref{lem:generic} to reconstructing
\math{\matB},  with $\xi = 2$ and $\matW = \matR$.
The rank requirement in
 Lemma~\ref{lem:generic} is satisfied because,
from Lemma~\ref{prop3}, the rank of $\Omega_1$ is equal to $k$ (as it is a 
standard normal matrix), and thus the matrix $\matR$ 
satisfies the rank assumptions of Lemma \ref{lem:generic}. We get that,
with probability 1,
$$\TNormS{ \matB - \Pi^{2}_{\matY,k}(\matB) }
\leq  \TNormS{\matB-\matB_k} + \TNormS{(\matB-\matB_k)\matR(\matV_{\matB,k}\transp\matR)^+ }
\le
\TNormS{\Sigma_{\matB,\tau}} +  
\TNormS{\Sigma_{\matB,\tau} \Omega_2 \Omega_1^+ }.
$$
Using \math{\sqrt{x^2+y^2}\le x+y}, we conclude that
\math{
\TNorm{ \matB - \Pi^{2}_{\matY,k}(\matB) }\le
\TNorm{\Sigma_{\matB,\tau}} +  
\TNorm{\Sigma_{\matB,\tau} \Omega_2 \Omega_1^+ }
}.
We now need to take the expectation with respect to 
\math{\Omega_1,\Omega_2}. We first take the expectation with respect to
\math{\Omega_2}, conditioning on \math{\Omega_1}. We then take the expectation
w.r.t. \math{\Omega_1}. Since only the second term is stochastic, using
Lemma~\ref{prop1b}, we have:
\begin{eqnarray}
{\bf E}_{\Omega_2}\left[ \TNorm{ \Sigma_{\matB,\tau} \Omega_2 \Omega_1^+ }|\Omega_1\right]
\nonumber &\leq&   \TNorm{ \Sigma_{\matB,\tau}} \FNorm{\Omega_1^+ }
+ \FNorm{ \Sigma_{\matB,\tau}} \TNorm{\Omega_1^+ }.
\end{eqnarray}
We now take the expectation with respect to \math{\Omega_1}.
To bound the term \math{\Expect{\TNorm{\Omega_1^+ }}}, we use Lemma~\ref{prop2b}.
To bound the term \math{\Expect{\FNorm{\Omega_1^+ }}}, we first use H\"{o}lder's 
inequality to bound \math{\Expect{\FNorm{\Omega_1^+ }}\le
\Expect{\FNormS{\Omega_1^+ }}^{1/2}}, and then we use 
Lemma~\ref{prop2a}.
Since \math{ \FNorm{ \Sigma_{\matB,\tau}}\le\sqrt{\min(m,n)-k}
\TNorm{ \Sigma_{\matB,\tau}}},  
collecting our results together, we obtain:
\eqan{
\Expect{\TNorm{\matB-\Pi_{Y,k}^2(\matB)}}
\le
\left(1+\sqrt{\frac{k}{p-1}}+\frac{e \sqrt{k+p}}{p}\sqrt{\min(m,n)-k}
\right)
\TNorm{\matB-\matB_k}
.}
To conclude, combine with eqn.(~\ref{eqn:PPD22}) and note that
\math{\TNorm{\matB-\matB_k}=\TNorm{\matA-\matA_k}^{2q+1}}.
\end{proof}
We now have all the necessary ingredients to prove Lemma \ref{tropp1}. Let $\matY$ be the matrix of Lemma~\ref{troppextension0}. Set
$p = k$ and $$q = \ceil{ \frac{ \log\left( 1+
\sqrt{\frac{k}{k-1}} + \frac{e \sqrt{2k}}{k} \sqrt{ \min\{ m,n\} -k } \right)
}{2 \log\left(1+\epsilon/\sqrt{2}\right)  - 1/2 }},$$
so that
$$ \left(1 + \sqrt{\frac{k}{p-1}} + \frac{e \sqrt{k+p}}{p} \sqrt{ \min\{ m,n\} -k } \right)^{\frac{1}{2q+1}} \leq 1 + \frac{\epsilon}{\sqrt{2}}.$$
Then, 
\mld{\Expect{\TNorm{\matA - \Pi_{\matY,k}^2(\matA) }} \leq \left(1 + \frac{\epsilon}{\sqrt{2}}\right) \TNorm{\matA - \matA_k}.
\label{eq:expS}
}
Given \math{\matY}, let \math{\matQ} be an orthonormal basis for its column 
space. Then,
using the algorithm of Section~\ref{chap21} 
and applying Lemma~\ref{lem:bestF} we can construct 
the matrix $\matQ\left(\matQ\transp\matA\right)_k$ such that
\mand{
\TNorm{\matA -  \matQ\left(\matQ\transp\matA\right)_k} \leq \sqrt{2} \TNorm{\matA - \Pi_{\matY,k}^2(\matA) }.
}
Clearly, $\matQ\left(\matQ^T\matA\right)_k$ is a rank $k$ matrix; 
let $\matZ \in {\R}^{n \times k}$ denote the 
matrix containing the right singular vectors of 
$\matQ\left(\matQ^T\matA\right)_k$, so 
\math{\matQ\left(\matQ^T\matA\right)_k=\matX\matZ\transp}. 
Note that $\matZ$ is equal to the right singular vectors of the 
matrix $\left(\matQ\transp\matA\right)_k$ 
(because $\matQ$ has orthonormal columns), and so $\matZ$ 
has already been computed at the second step of the 
algorithm of Section~\ref{chap21}. Since
$\matE=\matA - \matA \matZ \matZ\transp$ and  
$\TNorm{\matA-\matA\matZ\matZ\transp}\le
\TNorm{\matA-\matX\matZ\transp}$
for any \math{\matX}, we have
$$\TNorm{\matE} \leq \TNorm{\matA - \matQ\left(\matQ\transp\matA\right)_k}
\le\sqrt{2} \TNorm{\matA - \Pi_{\matY,k}^2(\matA) }.$$
Note that by construction, $\matE\matZ=\bm0_{m\times k}$.
The running time follows by adding the running time of 
the algorithm at the beginning of this section and the 
running time of the algorithm of Lemma~\ref{chap21}.

\section{Proof of Lemma \ref{tropp2}}
Consider the following algorithm, described in Theorem 10.5 of~\cite{HMT}. The algorithm takes as inputs a matrix $\matA \in \R^{m \times n}$ of rank $\rho$, an integer $2 \leq k < \rho$, and an integer $p \geq 2$. Set $r = k+p$ and construct the matrix
$\matY \in \R^{m \times r}$ as follows:
\begin{enumerate}
\item Generate an $n \times r$ standard Gaussian matrix $\matR$ whose entries are i.i.d. $\mathcal{N}(0,1)$.
\item Return $\matY = \matA \matR \in \R^{m \times r}$.
\end{enumerate}
The running time of the above algorithm is $ O(mnr)$.
Theorem 10.5 in \cite{HMT} presents the following bound:
$$\Expect{\FNorm{\matA - \matY\matY^+\matA }} \leq \left( 1 + \frac{k}{p-1} \right)^{\frac{1}{2}} \FNorm{\matA - \matA_k}.$$
The above result is not immediately applicable to the construction of a factorization of the form $\matA =
\matB \matZ\transp + \matE$ (as in Lemma~\ref{tropp2}) because $\matY$ contains $r > k$ columns.
\begin{lemma} [Extension of Theorem 10.5 of~\cite{HMT}] \label{troppextension}
Let $\matA$ be a matrix in $\R^{m \times n}$ of rank $\rho$, let $k$ be an integer satisfying $2 \leq k < \rho$, and let $r = k+p$ for some integer $p \geq 2$.
Let $\matR \in \R^{n \times r}$ be a standard Gaussian matrix (i.e., a matrix whose entries are drawn in i.i.d. trials from $\mathcal{N}(0,1)$)
and compute $\matY = \matA \matR$. Then,
$$\Expect{\FNormS{\matA - \Pi_{\matY,k}^F(\matA) }} \leq
\left( 1 + \frac{k}{p-1} \right) \FNormS{\matA - \matA_k}.$$
\end{lemma}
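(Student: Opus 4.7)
The plan is to mimic the structure of the proof of Lemma~\ref{troppextension0}, but now working with the Frobenius norm throughout so that the inequality $\sqrt{x^2+y^2}\le x+y$ and the subsequent raise-to-a-power trick are not needed at all. Because of this simplification, there is no need to introduce the boosted matrix $\matB=(\matA\matA\transp)^q\matA$; I can work directly with $\matY=\matA\matR$ and get the bound in one shot.

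The first step is to reduce $\Pi^F_{\matY,k}(\matA)$ to an explicit rank-$k$ matrix in the column span of $\matY$. By Lemma~\ref{lem:PPD1}, the Frobenius error of $\Pi^F_{\matY,k}(\matA)$ is bounded above by the error of \emph{any} specific rank-$k$ matrix supported in the column space of $\matY$. I will combine this with Lemma~\ref{lem:generic} applied with $\matW=\matR$, $\xi=F$: since $\matV_k\transp\matR$ is standard Gaussian of size $k\times(k+p)$, Lemma~\ref{prop3} gives $\rank(\matV_k\transp\matR)=k$ with probability one, so the rank hypothesis of Lemma~\ref{lem:generic} is satisfied almost surely. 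This yields, with probability one,
\[
\FNormS{\matA-\Pi^F_{\matY,k}(\matA)}\le \FNormS{\matA-\matA_k}+\FNormS{(\matA-\matA_k)\matR(\matV_k\transp\matR)^+}.
\]

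The second step is to control the stochastic term. Writing the SVD $\matA-\matA_k=\matU_{\rho-k}\Sigma_{\rho-k}\matV_{\rho-k}\transp$, set $\Omega_1=\matV_k\transp\matR\in\R^{k\times r}$ and $\Omega_2=\matV_{\rho-k}\transp\matR\in\R^{(\rho-k)\times r}$. By rotational invariance of the Gaussian distribution (extending $[\matV_k,\matV_{\rho-k}]$ to a full orthogonal matrix), $\Omega_1$ and $\Omega_2$ are independent standard Gaussian matrices. Unitary invariance of the Frobenius norm then gives
\[
\FNormS{(\matA-\matA_k)\matR(\matV_k\transp\matR)^+} = \FNormS{\Sigma_{\rho-k}\,\Omega_2\,\Omega_1^+}.
\]

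The third step is to take expectations by conditioning, exactly mirroring the conditioning argument in the proof of Lemma~\ref{troppextension0}, but with the Frobenius-norm tools of Lemmas~\ref{prop1a} and~\ref{prop2a} in place of Lemmas~\ref{prop1b} and~\ref{prop2b}. Conditioning on $\Omega_1$ and applying Lemma~\ref{prop1a} with $\matX=\Sigma_{\rho-k}$, $\matY=\Omega_1^+$ yields
\[
\Exp_{\Omega_2}\!\left[\FNormS{\Sigma_{\rho-k}\,\Omega_2\,\Omega_1^+}\,\Big|\,\Omega_1\right]=\FNormS{\Sigma_{\rho-k}}\cdot\FNormS{\Omega_1^+}.
\]
Taking expectation over $\Omega_1$ and invoking Lemma~\ref{prop2a} (which applies because $\Omega_1$ is $k\times(k+p)$ standard Gaussian with $p\ge 2$) gives $\Expect{\FNormS{\Omega_1^+}}=k/(p-1)$. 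Combining everything and using $\FNormS{\Sigma_{\rho-k}}=\FNormS{\matA-\matA_k}$ produces the advertised bound $(1+k/(p-1))\FNormS{\matA-\matA_k}$.

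I do not anticipate any serious obstacle. The only mildly delicate point is justifying the rotational-invariance step that makes $\Omega_1$ and $\Omega_2$ independent standard Gaussians, but this is identical to the argument already used in the proof of Lemma~\ref{troppextension0}, and the clean equality $\Expect{\FNormS{\matX\matR\matY}}=\FNormS{\matX}\FNormS{\matY}$ from Lemma~\ref{prop1a} removes the need for the triangle-inequality splitting $\TNorm{\matX}\FNorm{\matY}+\FNorm{\matX}\TNorm{\matY}$ that complicated the spectral-norm analysis. The resulting proof should therefore be noticeably shorter and tighter than that of Lemma~\ref{troppextension0}, and it immediately yields Lemma~\ref{tropp2} by the same post-processing step (orthonormalize the columns of $\matY$, invoke Lemma~\ref{lem:bestF}, and read off $\matZ$ from the right singular vectors of $\matQ(\matQ\transp\matA)_k$) used at the end of the proof of Lemma~\ref{tropp1}, with $p=2$ and an appropriate choice to absorb constants into $1+\epsilon$.
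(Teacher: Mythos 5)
Your proposal is correct and follows essentially the same route as the paper's proof: apply Lemma~\ref{lem:generic} with $\matW=\matR$ and $\xi=F$ (the rank condition holding almost surely by Lemma~\ref{prop3}), reduce the stochastic term to $\FNormS{\Sigma_{\rho-k}\Omega_2\Omega_1^+}$ by rotational invariance, and evaluate its expectation by conditioning on $\Omega_1$ via Lemmas~\ref{prop1a} and~\ref{prop2a}. The only cosmetic difference is your extra invocation of Lemma~\ref{lem:PPD1}, which is unnecessary since Lemma~\ref{lem:generic} already bounds $\FNormS{\matA-\Pi^F_{\matY,k}(\matA)}$ directly.
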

\begin{proof}
We construct the matrix $\matY$ as described in the beginning of this section. Let the rank of $\matA$ be $\rho$ and let $\matA = \matU_{\matA} \Sigma_{\matA} \matV_{\matA}\transp$ be the SVD of $\matA$. Define
\mand{\Omega_1 = \matV_{k}\transp \matR \in \R^{k \times r} \qquad \mbox{and} \qquad
\qquad \Omega_2 = \matV_{\rho-k}\transp \matR \in \R^{(\rho-k) \times r}.
}
The Gaussian distribution is rotationally invariant, so $\Omega_1$, $\Omega_2$ are also standard Gaussian matrices which are
stochastically independent because $\matV\transp$ can be 
extended to a full rotation.
Thus, $\matV_{k}\transp \matR$ and $\matV_{\rho-k}\transp\matR$ 
also have entries that are i.i.d. $\mathcal{N}(0,1)$ variables.
We now apply Lemma~\ref{lem:generic} to
reconstructing \math{\matA}, with $\xi = F$ and $\matW = \matR$. 
Recall that from Lemma~\ref{prop3}, the rank of $\Omega_1$ is equal to $k$ with
probability 1, and thus the matrix $\matR$ satisfies the rank assumptions 
of Lemma~\ref{lem:generic}. We have that, with probability~1,
$$\FNormS{ \matA - \Pi_{\matY,k}^F(\matA) } \le 
\FNormS{\matA - \matA_k} + \FNormS{ \Sigma_{\rho-k} \Omega_2 \Omega_1^+} ,$$
where 
$\matA - \matA_k = \matU_{\rho-k}  \Sigma_{\rho-k} \matV_{\rho-k}\transp$. To conclude, we take the expectation on both sides, and since only
the second term on the right hand side is stochastic, we bound as follows:
\eqan{
\Expect{\FNormS{\Sigma_{\rho-k} \Omega_2 \Omega_1^+}}
&\buildrel{(a)}\over{=}&
{\bf E}_{\Omega_1}\left[{\bf E}_{\Omega_2}\left[
\FNormS{ \Sigma_{\rho-k} \Omega_2 \Omega_1^+  }|
\Omega_1   \right]\right]\\
&\buildrel{(b)}\over{=}&
{\bf E}_{\Omega_1}\left[\FNormS{ \Sigma_{\rho-k}} \FNormS{ \Omega_1^+} \right] \\
&\buildrel{(c)}\over{=}&
\FNormS{ \Sigma_{\rho-k}} \Expect{ \FNormS{ \Omega_1^+} } \\
&\buildrel{(d)}\over{=}&
\frac{k}{p-1} \FNormS{\Sigma_{\rho-k}}.
}
\math{(a)} follows from the law of iterated expectation; \math{(b)} follows from Lemma \ref{prop1a}; \math{(c)} follows because $\FNormS{\Sigma_{\rho-k}}$ is a constant; \math{(d)} follows from Lemma \ref{prop2a}. We conclude the proof by noting that $\FNorm{\Sigma_{\rho-k}} = \FNorm{\matA - \matA_k}$.
\end{proof}
\noindent We now have all the necessary ingredients to conclude the proof of Lemma \ref{tropp2}. Let \math{\matY} be the matrix of 
Lemma~\ref{troppextension},
and let \math{\matQ} be an orthonormal basis for its column
space. Then,
using the algorithm of Section~\ref{chap21} 
and applying Lemma~\ref{lem:bestF} we can construct the matrix 
$\matQ\left(\matQ\transp\matA\right)_k$ such that
$$\FNormS{\matA -  \matQ\left(\matQ\transp\matA\right)_k} = 
\FNormS{\matA - \Pi_{\matY,k}^F(\matA) }.$$
Clearly, $ \matQ\left(\matQ^T\matA\right)_k$ is a rank $k$ matrix; let 
$\matZ \in {\R}^{n \times k}$ be 
the matrix containing the right singular vectors of 
$\matQ\left(\matQ^T\matA\right)_k$, so 
\math{\matQ\left(\matQ^T\matA\right)_k=\matX\matZ\transp}.
Note that  $\matZ$ is equal to the right singular vectors of the 
matrix $\left(\matQ\transp\matA\right)_k$ (because $\matQ$ has orthonormal columns), and thus $\matZ$ has already been computed at the second step 
of the algorithm of Section~\ref{chap21}.  Since
\math{\matE=\matA - \matA \matZ 
\matZ\transp} and  
$$\FNorm{\matA-\matA\matZ\matZ\transp}\le
\FNorm{\matA-\matX\matZ\transp}$$ for any \math{\matX}, we have
$$\FNormS{\matE} \leq \FNormS{\matA - \matQ\left(\matQ\transp\matA\right)_k}
=\FNormS{\matA - \Pi_{\matY,k}^2(\matA) }.$$
To conclude, take expectations on both sides,
use Lemma~\ref{troppextension} to bound the term \math{\Expect{\FNormS{\matA - \Pi_{\matY,k}^2(\matA) }}},
and set $$p = \ceil{ \frac{k}{\epsilon } + 1}$$ to obtain:
$$\Expect{\FNormS{\matE}} \leq \left( 1 + \frac{k}{p-1} \right) \FNormS{\matA - \matA_k}
\leq \left(1+\epsilon\right)\FNormS{\matA - \matA_k},$$
By construction,  $\matE\matZ=\bm0_{m\times k}$. 
The running time follows by adding the running time of the algorithm at the beginning of this section and the running time of the algorithm of Section~\ref{chap21}.

\section{Proof of Lemma~\ref{lem:fnorm}}
\noindent Let 
$$x = \FNormS{ \matY \Omega \matS }$$ be a random variable
with nonnegative values. Assume that the following equation is true:
$$ \Expect{ \FNormS{ \matY \Omega \matS } } = \FNormS{ \matY }.$$
Applying Markov's inequality to this equation gives the bound in the lemma.
All that it remains to prove now is the above assumption.
Let $$\matX = \matY \Omega \matS \in \R^{m \times r},$$ and for $t=1,...,r,$ let $\matX^{(t)}$
denotes the $t$-th column of $\matX = \matY \Omega \matS$. We manipulate
the term $\Expect{ \FNormS{ \matY \Omega \matS } }$ as follows:
\eqan{
\Expect{ \FNormS{ \matY \Omega \matS } } &\buildrel{(a)}\over{=}&  \Expect{ \sum_{t=1}^{r} \TNormS{\matX^{(t)}} }\\ &\buildrel{(b)}\over{=}&   
\sum_{t=1}^{r} \Expect{ \TNormS{\matX^{(t)}} } \\ &\buildrel{(c)}\over{=}&  
\sum_{t=1}^{r} \sum_{j=1}^n p_j \frac{ \TNormS{\matY^{(j)}} }{r p_j} \\ &\buildrel{(d)}\over{=}& 
\frac{1}{r} \sum_{t=1}^{r} \FNormS{\matY} \\ &=& \FNormS{\matY}
}
\math{(a)} follows by the definition of the Frobenius norm of $\matX$.
\math{(b)} follows by the linearity of expectation.
\math{(c)} follows by our construction of $\Omega, \matS$.
\math{(d)} follows by the definition of the Frobenius norm of $\matY$.

\clearpage

\section{Proof of Lemma~\ref{lem:rrqrGE}}

\begin{proof}
Let \math{\matY\transp=[\matX,\bm0_{n-m\times n}]}, so that
\math{\matY\in\R^{n\times n}} is a square matrix whose first \math{m}
rows are \math{\matX\transp} and whose last \math{n-m} rows are zero.
Run Algorithm 4 of
\cite{GE96} with inputs $\matY$ and $k$.
The bounds for the singular values are 
exactly those of Theorem 3.2 of \cite{GE96}, which hold because the
singular values of \math{\matY} and \math{\matX} are identical.
The only concern is the running time.  
The run time directly from \cite{GE96} is \math{O(n^2k\log_{f} n)}, which 
treats \math{\matY} as a full \math{n\times n} matrix; but
we argue below that the algorithm can be implemented more efficiently in  
\math{O(m n k \log_{f} n)} time because we can effectively ignore the
padding with zeros in all computations. 
First, we observe that the \math{\matQ\matR} factorization of 
\math{\matY\Pi} for any permutation \math{\Pi} can always be written
\mand{
\matY\Pi=\matQ\matR=
\left[
\begin{matrix}
\matQ_\matX&\bm0_{m\times n-m}\\
\bm0_{n-m\times m}&\matI_{n-m\times n-m}
\end{matrix}
\right]
\left[
\begin{matrix}
\matA_k&\matB_k\\
\bm0_{m-k\times k}&\bar\matC_k\\
\bm0_{n-m\times k}&\bm0_{n-m\times n-k}
\end{matrix}
\right]
}
where \math{\matQ_\matX\in\R^{m\times m}} and 
\math{\matC_k=
\left[\begin{matrix}\bar\matC_k\\ \bm0_{n-m\times n-k}\end{matrix}\right]};
effectively we only need to perform a \math{\matQ\matR} factorization
on the upper unpadded part, so:
\mand{
\matX\transp\Pi=
\matQ_\matX
\left[
\begin{matrix}
\matA_k&\matB_k\\
\bm0_{m-k\times k}&\bar\matC_k\\
\end{matrix}
\right]
}
This has an important consequence to Algorithm 4 in \cite{GE96}. 
In what follows we assume the reader is familiar with the notation in \cite{GE96}.
There are two basic steps in the algorithm. The first is to 
compute a function such as \math{\rho(\matR,k)} to determine which two 
columns to permute. This function depends only on
\math{\gamma_*(C_k)=\gamma_*(\bar\matC_k)} by construction, because the 
padded zeros contribute nothing to these norms.
The second step is to refactorize and obtain the new \math{\matQ} and
\math{\matR}, or more specifically to update 
\math{\omega_*(\matA_k)}, \math{\gamma_*(\matC_k)} and
\math{\matA_k^{-1}\matB_k}. As observed above, we just need to refactorize 
\math{\matX\transp\Pi}, which means we simply need to run the the efficient 
update steps in 
\cite{GE96} for \math{\matA_k}, \math{\matB_k} and \math{\bar\matC_k}. We are
effectively running the algorithm on \math{\matX}, ignoring the padding 
completely. So, from 
\cite{GE96} the run time is \math{O(mnk\log_{f} n)}, as claimed.
\end{proof}



\section{Proof of Lemma \ref{lem:2setS}}

We first restate the Lemma in a slightly different notation 
(actually this is the notation used in \cite{BSS09} so we chose 
to be consistent with existing literature).

\begin{lemma}[Dual Set Spectral Sparsification.] \label{lemma:intro1}
Let \math{\cl V=\{\v_1,\ldots,\v_n\}} and \math{\cl U=\{\u_1,\ldots,\u_n\}}
be two equal cardinality decompositions of the identity, where
\math{\v_i\in\R^{k}} ($k < n$), \math{\u_i\in\R^\ell} ($\ell \leq n$), and:
$\sum_{i=1}^n\v_i\v_i\transp=\matI_{k}$ and $\sum_{i=1}^n\u_i\u_i\transp=\matI_{\ell}$.
Given an integer \math{r} with \math{k < r \le n}, there exists a set of weights
\math{s_i\ge 0} ($i=1,\ldots,n$) {at most \math{r} of which are non-zero}, such that
\mand{
\lambda_{k}\left(\sum_{i=1}^ns_i\v_i\v_i\transp\right)
\ge
\left(1 - \sqrt{\frac{k}{r}}\right)^2
\qquad\mbox{and}
\qquad \lambda_{1}\left(\sum_{i=1}^ns_i\u_i\u_i\transp\right)
\le \left(1 + \sqrt{ \frac{\ell}{r} }\right)^2.}
The weights $s_i$ can be computed deterministically in $O\left(r n \left(k^2+\ell^2\right) \right)$ time.
\end{lemma}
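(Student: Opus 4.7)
The plan is to adapt the barrier-method argument of Batson--Srivastava--Spielman \cite{BSS09} so that it controls two quadratic forms simultaneously. Build up the weights $s_i$ incrementally: start with the zero assignment $s_i=0$ for all $i$, run for exactly $r$ iterations, and at each iteration pick one index $j$ and a scalar $t>0$ to update $s_j \leftarrow s_j + t$. Let
\[
\matA^{(\tau)}=\sum_{i=1}^n s_i^{(\tau)}\v_i\v_i\transp\in\R^{k\times k},\qquad
\matB^{(\tau)}=\sum_{i=1}^n s_i^{(\tau)}\u_i\u_i\transp\in\R^{\ell\times \ell},
\]
after $\tau$ steps. At step $\tau$, I would maintain two ``barriers'': a lower shift $L_\tau\in\R$ for $\matA^{(\tau)}$ and an upper shift $U_\tau\in\R$ for $\matB^{(\tau)}$, together with the BSS potential functions
\[
\Phi_\ell(L,\matA)=\trace\bigl((\matA-L\matI_k)^{-1}\bigr),\qquad
\Phi_u(U,\matB)=\trace\bigl((U\matI_\ell-\matB)^{-1}\bigr),
\]
defined so long as $\matA\succ L\matI_k$ and $\matB\prec U\matI_\ell$. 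Choose initial shifts $L_0=-k/\delta_L$ and $U_0=\ell/\delta_U$ for carefully selected constants $\delta_L,\delta_U>0$; the initial values of $\Phi_\ell$ and $\Phi_u$ are then $\delta_L$ and $\delta_U$ respectively.

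The core of the proof is the following one-step lemma: if the two barriers advance by fixed increments $\delta_L$ (lower) and $\delta_U$ (upper), then there exists an index $j\in[n]$ and a weight $t>0$ such that the updates
\[
L_{\tau+1}=L_\tau+\delta_L,\qquad U_{\tau+1}=U_\tau+\delta_U,\qquad
\matA^{(\tau+1)}=\matA^{(\tau)}+t\,\v_j\v_j\transp,\qquad \matB^{(\tau+1)}=\matB^{(\tau)}+t\,\u_j\u_j\transp,
\]
preserve both $\matA^{(\tau+1)}\succ L_{\tau+1}\matI_k$ and $\matB^{(\tau+1)}\prec U_{\tau+1}\matI_\ell$ and do not increase $\Phi_\ell$ or $\Phi_u$. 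The argument is the standard BSS Sherman--Morrison expansion of $\Phi_\ell$ and $\Phi_u$ after a rank-one update, which turns each barrier condition into a scalar inequality of the form ``$t$ lies in an interval $[1/U_j,1/L_j]$'' where $L_j,U_j$ are functionals of $\v_j,\u_j$ and the current matrices. One then shows by averaging over $j$ (using $\sum_i \v_i\v_i\transp=\matI_k$ and $\sum_i \u_i\u_i\transp=\matI_\ell$, plus the constancy of the $\Phi$ values along the iteration) that $\sum_j U_j > \sum_j L_j$, hence some $j$ has a valid $t$.

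After $r$ iterations, at most $r$ weights are nonzero, and the final barriers are $L_r=L_0+r\delta_L$ and $U_r=U_0+r\delta_U$. Choosing
\[
\delta_L=1-\sqrt{k/r},\qquad \delta_U=1+\sqrt{\ell/r}
\]
(with the matching initial shifts $L_0=-k/\delta_L$, $U_0=\ell/\delta_U$) is designed so that the one-step averaging inequality $\sum_j U_j>\sum_j L_j$ holds for every $\tau<r$; this forces $L_r=(1-\sqrt{k/r})\sqrt{r}\cdot(\text{const})$ after rescaling and, after the clean scaling by $1/\sqrt{r}$ at the end (i.e., replacing $s_i$ by $s_i/r$), yields $\lambda_k(\matA)\ge(1-\sqrt{k/r})^2$ and $\lambda_1(\matB)\le(1+\sqrt{\ell/r})^2$.

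The hardest part will be calibrating the two barrier increments jointly: in the single-set case \cite{BSS09} one has the freedom to tune a single parameter, but here the same $t$ must satisfy the lower-side inequality for $\matA$ and the upper-side inequality for $\matB$. The proof proposal is thus to derive the upper and lower ``$j$-functionals'' explicitly via Sherman--Morrison for both $\Phi_\ell$ and $\Phi_u$, sum over $j$ using the two identity decompositions, and verify that the choice $\delta_L=1-\sqrt{k/r}$, $\delta_U=1+\sqrt{\ell/r}$ produces a strictly positive gap $\sum_j U_j-\sum_j L_j$ at every step. Running time $O(rn(k^2+\ell^2))$ follows from the cost of maintaining $(\matA-L\matI_k)^{-1}$ and $(U\matI_\ell-\matB)^{-1}$ after each rank-one update (via Sherman--Morrison in $O(k^2+\ell^2)$) and evaluating the $n$ candidate $j$-functionals per iteration.
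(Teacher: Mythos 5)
Your plan has exactly the architecture of the paper's proof: two barrier potentials $\trace\bigl((\matA-L\matI_{k})^{-1}\bigr)$ and $\trace\bigl((U\matI_{\ell}-\matB)^{-1}\bigr)$, a Sherman--Morrison one-step lemma turning each barrier condition into an interval $[U_j,L_j]$ of admissible values of $1/t$, an averaging argument over $j$ that uses $\sum_i\v_i\v_i\transp=\matI_{k}$ and $\sum_i\u_i\u_i\transp=\matI_{\ell}$, and a final common rescaling. The gap is in the calibration, which is precisely the nontrivial part of going from one set to two. With your initial shifts $L_0=-k/\delta_L$ and $U_0=\ell/\delta_U$ the initial potentials are $\delta_L$ and $\delta_U$, so the per-step averaging inequality you need is $\frac{1}{\delta_U}+\delta_U\le\frac{1}{\delta_L}-\delta_L$. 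Your choice $\delta_U=1+\sqrt{\ell/r}\ge 1$ makes the left side at least $2$, while $\delta_L=1-\sqrt{k/r}<1$ makes the right side less than $2$ unless $\delta_L<\sqrt{2}-1$, i.e.\ unless $r$ is within roughly a factor $3$ of $k$. So for essentially all admissible $r$ there is no guarantee of an index $j$ with $U_j\le L_j$, and the greedy step can stall. A second, independent problem is the endgame: the \emph{same} weights feed both quadratic forms, so one scalar rescaling must simultaneously carry $L_r$ to $(1-\sqrt{k/r})^2$ and $U_r$ to $(1+\sqrt{\ell/r})^2$; with your symmetric increments, $L_r=r\delta_L-k/\delta_L$ and $U_r=r\delta_U+\ell/\delta_U$ are not in the required ratio.

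The fix --- and what the paper actually does --- is to decouple the increments from the initial potentials and make the upper increment asymmetric: take $\delta_L=1$ with $L_0=-\sqrt{rk}$ (so $\Phi_\ell^{(0)}=\sqrt{k/r}$), and $\delta_U=\frac{1+\sqrt{\ell/r}}{1-\sqrt{k/r}}$ with $U_0=\delta_U\sqrt{\ell r}$ (so $\Phi_u^{(0)}=\sqrt{\ell/r}\,/\,\delta_U$). Then both sides of the averaging inequality equal $1-\sqrt{k/r}$, so a feasible index exists at every step, and the final barriers $L_r=r\bigl(1-\sqrt{k/r}\bigr)$ and $U_r=\delta_U\,r\bigl(1+\sqrt{\ell/r}\bigr)$ stand exactly in the ratio $(1-\sqrt{k/r})^2:(1+\sqrt{\ell/r})^2$, so the single rescaling $s_i\mapsto r^{-1}\bigl(1-\sqrt{k/r}\bigr)s_i$ delivers both claimed bounds at once. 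The rest of your outline (rank-one update lemmas, cost $O(rn(k^2+\ell^2))$) is fine once the parameters are corrected.
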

\noindent In matrix notation, let $\matU$ and $\matV$
be the matrices whose \textit{rows} are the vectors $\u_i$ and $\v_i$ respectively. We can now construct the sampling 
and rescaling matrices
$\Omega \in \mathbb{R}^{n \times r}, \matS \in \mathbb{R}^{r \times r}$ as follows: for $i=1,\ldots,n$, if $s_i$ is non-zero then include $\e_i$ as a column of $\Omega$ and $\sqrt{s_i}$ as the $i$-th diagonal element of $\matS$; here $\e_i$ is the $i$-th standard basis vector\footnote[9]{Note that we slightly abused notation: indeed, the number of columns of $\Omega$ is less than or equal to $r$, since at most $r$ of the weights are non-zero. Here, we use $r$ to also denote the actual number of non-zero weights, which is equal to the number of columns of the matrix $\Omega$.}. Using this matrix notation, the above lemma guarantees that
$
\sigma_{k}\left(\matV\transp\Omega\matS\right)
\ge
1 - \sqrt{\frac{k}{r}}$
and
$\sigma_{1}\left(\matU\transp\Omega\matS\right)
\le 1 + \sqrt{ \frac{\ell}{r} }$.
Clearly, $\Omega$ and $\matS$ may be viewed matrices that sample and rescale 
$r$ \textit{rows} of $\matU$ and $\matV$ (columns of $\matU\transp$ and $\matV\transp$),
namely the rows that correspond to non-zero weights $s_i$. 
Next, we prove Lemma~\ref{lemma:intro1}.

\subsection*{Proof of  Lemma~\ref{lemma:intro1}} \label{sec:proofpure1}

Lemma~\ref{lemma:intro1} generalizes Theorem 3.1 in~\cite{BSS09}. Indeed, setting
\math{\cl V=\cl U} reproduces the spectral sparsification result of
 Theorem 3.1 in~\cite{BSS09}. We will provide a constructive proof of the lemma and we start by describing the algorithm that computes the weights $s_i$, $i=1,\ldots,n$.
%
\begin{algorithm}[t]
\begin{framed}
\textbf{Input:}
     \begin{itemize}
          \item $ \cl V=\{\v_1,\ldots,\v_n\}$, with $\sum_{i=1}^{n}\v_i\v_i\transp=\matI_{k}$ ($k \leq n$)
          \item $ \cl U=\{\u_1,\ldots,\u_n\}$, with $\sum_{i=1}^{n}\u_i\u_i\transp=\matI_{\ell}$ ($\ell \leq n$)
          \item integer $r$, with $k < r < n$
     \end{itemize}

     \textbf{Output:} A vector of weights $\s=[s_1,\ldots,s_n]$, with $s_i \geq 0$ and at most $r$ non-zero $s_i$'s.

\begin{enumerate}

  \item Initialize $\s_0 = \textbf{0}_{n \times 1}$, $\matA_0 = \textbf{0}_{k \times k}$, $\matB_0 = \textbf{0}_{\ell \times \ell}$.

  \item For $\tau = 0,...,r-1$

        \begin{itemize}

           \item Compute $\scl_\tau$ and $\scu_\tau$ from eqn.~(\ref{eqn:PD1}).

            \item Find an index $j$ in $\left\{1,\ldots,n\right\}$ such that
			 \eqar{
			 \label{eqn:algo}
			 U(\u_j,\delta_\scu,\matB_\tau,\scu_\tau) \le L(\v_j,\delta_\scl, \matA_\tau,\scl_\tau).
				}		
    \item Let
    \begin{equation}\label{eqn:weight}
    t^{-1} = \frac{U(\u_j,\delta_\scu,\matB_\tau,\scu_\tau) + L(\v_j,\delta_\scl, \matA_\tau,\scl_\tau)}{2}.
    \end{equation}

    \item Update the \math{j}th 
component of \math{\s}, \math{\matA_\tau} and \math{\matB_\tau}:
\mld{
\s_{\tau+1}[j] =\s_{\tau}[j] + t,\ \matA_{\tau + 1} = \matA_{\tau} + t \v_j \v_j\transp,\ \hbox{and}\ \matB_{\tau+1} = \matB_{\tau} + t \u_j \u_j\transp.
\label{eq:update}
}

         \end{itemize}

  \item Return $\s = r^{-1}\left(1-\sqrt{k/r}\right)\cdot \s_r$.

\end{enumerate}
\caption{Deterministic Dual Set Spectral Sparsification.}
\label{alg:2set}
\end{framed}
\end{algorithm}

\paragraph{The Algorithm.}
The fundamental idea underlying Algorithm~\ref{alg:2set} is the greedy selection of vectors that satisfy a number of desired properties in each step. These properties will eventually imply the eigenvalue bounds of Lemma~\ref{lemma:intro1}. We start by defining several quantities that will be used in the description of the algorithm and its proof. First, fix two constants:
\vspace{-0.1in}
$$
\delta_\scl = 1;
\qquad
\delta_\scu = \frac{1+\sqrt{\frac{\ell}{r}}}{1-\sqrt{\frac{k}{r}}}.
$$
Given $k$, $\ell$, and $r$ (all inputs of Algorithm~\ref{alg:2set}), and a parameter $\tau = 0,\ldots,r-1$, define two
parameters $\scl_\tau$ and $\scu_\tau$ as follows:
\vspace{-0.1in}
\begin{equation}\label{eqn:PD1}
\scl_\tau = r\left(\frac{\tau}{r}-\sqrt{\frac{k}{r}}\right) = \tau - \sqrt{rk};
\scu_\tau = \frac{(\tau-r)\left(1+\sqrt{\frac{\ell}{r}}\right)
               +r\left(1+\sqrt{\frac{\ell}{r}}\right)^2}
               {1-\sqrt{\frac{k}{r}}}=\delta_{\scu}\left(\tau + \sqrt{\ell r}\right).
\end{equation}
We next define the lower and upper 
functions \math{\phil(\scl,\matA)} ($\scl\in\R$ and $\matA\in\R^{k \times k}$) and
$\phiu(\scu,\matB)$ ($\scu\in\R$ and $\matB\in\R^{\ell \times \ell}$) as follows:
\vspace{-0.1in}
\begin{equation}\label{eqn:PD2}
\phil(\scl, \matA) =  \sum_{i=1}^k\frac{1}{\lambda_i(\matA)-\scl};
\qquad
\phiu(\scu, \matB) =  \sum_{i=1}^\ell\frac{1}{\scu-\lambda_i(\matB)}.
\end{equation}
Let $L(\v, \delta_\scl, \matA, \scl)$  be a function with four inputs (a vector $\v \in \mathbb{R}^{k\times 1}$, $\delta_\scl \in \mathbb{R}$, a matrix $\matA \in \mathbb{R}^{k \times k}$, and $\scl \in \mathbb{R}$):
\vspace{-0.1in}
\begin{equation}\label{eqn:PD3}
L(\v, \delta_\scl, \matA, \scl)  =  \frac{\v\transp(\matA-(\scl + \delta_\scl)\matI_k)^{-2}\v}
{\phil(\scl + \delta_\scl, \matA)-\phil(\scl,\matA)} -\v\transp(\matA-(\scl + \delta_\scl)\matI_k)^{-1}\v.
\end{equation}
Similarly, let $U(\v, \delta_\scu, \matB, \scu)$  be a function with four inputs (a vector $\u \in \mathbb{R}^{\ell\times 1}$, $\delta_\scu \in \mathbb{R}$, a matrix $\matB \in \mathbb{R}^{\ell \times \ell}$, and $\scu \in \mathbb{R}$):
\vspace{-0.1in}
\begin{equation}\label{eqn:PD4}
U(\u, \delta_\scu, \matB, \scu )  = \frac{\u\transp((\scu + \delta_\scu)\matI_\ell-\matB)^{-2}\u}
{\phiu(\scu, \matB)-\phiu(\scu + \delta_\scu,\matB)}
+\u\transp((\scu + \delta_\scu)\matI_\ell-\matB)^{-1}\u.
\end{equation}
Algorithm \ref{alg:2set} runs in $r$ steps. The vector of weights $\s_0$ is initialized to the all-zero vector. At each step $\tau=0,\ldots,r-1$, the
algorithm selects a pair of vectors $(\u_j, \v_j)$ that satisfy eqn.~(\ref{eqn:algo}), computes the associated weight $t$ from eqn.~(\ref{eqn:weight}), and updates two matrices and the vector of weights as specified
in eqn. \r{eq:update}.

\paragraph{Running time.}

The algorithm runs in $r$ iterations. In each iteration, we evaluate the functions $U(\u,\delta_\scu,\matB,\scu)$ and $L(\v,\delta_\scl, \matA,\scl)$ at most $n$ times. Note that all $n$ evaluations for both functions need at most $O(k^3 + nk^2 + \ell^3 + n\ell^2)$ time, because the matrix inversions can be performed once for all $n$ evaluations. Finally, the updating step needs an additional $O(k^2 + \ell^2)$ time. Overall, the complexity of the algorithm is of the order $O(r(k^3 + nk^2 + \ell^3 + n\ell^2 + k^2 + \ell^2) ) = O\left(rn \left(k^2+\ell^2\right)\right)$.

Note that when \math{\cl U} is the standard basis 
(\math{\cl U=\{\e_1,\ldots,\e_n\}} and \math{\ell=n}),
the computations can be done much more efficiently:
the
eigenvalues of \math{\matB_\s} need not be computed 
explicitly (the expensive step), 
since they are available by inspection, being equal to the
weights \math{\s_\tau}. 
In the function \math{U(\u,\delta_{\scu},\matB,\scu)},
the functions \math{\phiu} (given the eigenvalues) need only be computed
once per iteration, in \math{O(n)} time; the remaining terms can be computed
in \math{O(1)} because (for example)
\math{\e_i\transp((\scu+\delta_\scu)\matI-\matB)^{-2}\e_i
=(\scu+\delta_\scu-\s[i])^{-2}}.
The run time, in this case, drops to \math{O(rnk^2)}, since all the operations
on \math{\cl U} only contribute \math{O(rn)}.

\paragraph{Proof of Correctness.}
We prove that the output of Algorithm \ref{alg:2set} satisfies Lemma \ref{lemma:intro1}. Our proof is similar to the proof of Theorem 3.1~\cite{BSS09}. The main difference is that we need to accommodate two different sets of vectors. Let $\matW \in \R^{m \times m}$ be a positive semi-definite matrix with eigendecomposition
\mand{
\matW=\sum_{i=1}^{m}\lambda_i(\matW) \u_i\u_i\transp
}
and recall the functions $\phil(\scl,\matW)$, $\phiu(\scu,\matW)$, $L(\v,\delta_\scl,\matW,\scl)$, and $U(\v,\delta_\scu,\matW,\scu)$ defined in eqns.~(\ref{eqn:PD2}), (\ref{eqn:PD3}), and~(\ref{eqn:PD4}). We now quote two lemmas
proven  in~\cite{BSS09} using the Sherman-Morrison-Woodbury identity;
these lemmas allow one 
to control the smallest and largest eigenvalues of
 $\matW$ under a rank-one perturbation.
\begin{lemma}
\label{lemma:sp1}
Fix \math{\delta_\scl>0}, \math{\matW \in \R^{m \times m}}, \math{\v \in \R^m}, and \math{\scl < \lambda_{m}(\matW)}. If \math{t > 0} satisfies
\mand{
t^{-1}\le L(\v,\delta_\scl,\matW,\scl)}
then \math{\lambda_{m}(\matW+t\v\v\transp) \ge \scl+\delta_\scl}.
\end{lemma}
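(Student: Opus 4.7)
The plan is to use the barrier-method argument of Batson--Srivastava--Spielman~\cite{BSS09}. Define the lower barrier potential
$$\phil(\scl, \matW) \;=\; \trace\bigl((\matW - \scl\matI)^{-1}\bigr) \;=\; \sum_{i=1}^m \frac{1}{\lambda_i(\matW) - \scl},$$
which is finite and positive precisely when $\scl < \lambda_m(\matW)$, and diverges to $+\infty$ as $\scl$ approaches $\lambda_m(\matW)$ from below. The goal will be to show that the hypothesis on $t$ implies
$$\phil(\scl + \delta_\scl,\, \matW + t\v\v\transp) \;\le\; \phil(\scl, \matW),$$
and then to deduce from this the advertised eigenvalue bound $\lambda_m(\matW + t\v\v\transp) \ge \scl + \delta_\scl$.

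The first step is to compute the left-hand side of this inequality exactly using the Sherman--Morrison formula. Setting $\matM = \matW - (\scl + \delta_\scl)\matI$ (which we treat as a formal invertible matrix for now), Sherman--Morrison gives $(\matM + t\v\v\transp)^{-1} = \matM^{-1} - \tfrac{t\,\matM^{-1}\v\v\transp\matM^{-1}}{1 + t\v\transp\matM^{-1}\v}$. Taking traces and using cyclicity,
$$\phil(\scl + \delta_\scl,\, \matW + t\v\v\transp) \;=\; \phil(\scl + \delta_\scl, \matW) \;-\; \frac{t\,\v\transp\matM^{-2}\v}{1 + t\,\v\transp\matM^{-1}\v}.$$
Demanding that this quantity be at most $\phil(\scl,\matW)$ and solving for $1/t$ rearranges precisely to
$$\frac{1}{t} \;\le\; \frac{\v\transp \matM^{-2}\v}{\phil(\scl + \delta_\scl, \matW) - \phil(\scl, \matW)} \;-\; \v\transp \matM^{-1}\v \;=\; L(\v, \delta_\scl, \matW, \scl),$$
which is exactly the hypothesis of the lemma. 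So the barrier inequality above follows algebraically from the condition on $t$.

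To conclude the eigenvalue bound I would use a continuity argument in $t$. At $t = 0$ the hypothesis $\scl < \lambda_m(\matW)$ ensures $\lambda_m(\matW) > \scl + \delta_\scl$ or $\lambda_m(\matW) \in (\scl,\, \scl+\delta_\scl]$; in either case we track $\lambda_m(\matW + s\v\v\transp)$ as $s$ grows from $0$ to $t$. Weyl's inequality guarantees each eigenvalue is nondecreasing in $s$, so $\lambda_i(\matW + s\v\v\transp) > \scl$ throughout and all inverses involving the shift by $\scl$ remain valid. If, at some $s^\star \in (0,t]$, the smallest eigenvalue of $\matW + s^\star\v\v\transp$ equalled $\scl + \delta_\scl$, then $\phil(\scl + \delta_\scl,\, \matW + s^\star\v\v\transp)$ would diverge to $+\infty$, contradicting its value being bounded by the finite quantity $\phil(\scl,\matW)$ (the same Sherman--Morrison calculation and the same hypothesis apply at $s^\star$). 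Hence the smallest eigenvalue stays strictly above $\scl + \delta_\scl$ for all $s \in [0,t]$, giving the claim.

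The main subtlety will be justifying the Sherman--Morrison step when $\scl + \delta_\scl$ exceeds some eigenvalues of $\matW$, in which case $\matM$ is indefinite; here one either checks that $\matM$ is invertible (which it is as long as no eigenvalue of $\matW$ equals $\scl + \delta_\scl$, a measure-zero condition avoidable by a perturbation argument) and applies Sherman--Morrison directly, or one argues by continuity in $s$ starting from $s=0$ where $\matM$ can be taken to be negative definite after relabelling. Either way, the combinatorial content of the proof is the algebraic identity above, and the ``barrier does not blow up $\Rightarrow$ no eigenvalue crosses'' implication is the standard ingredient.
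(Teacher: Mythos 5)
Your Sherman--Morrison computation is exactly the argument the paper is pointing to: the paper does not prove Lemma~\ref{lemma:sp1} itself but quotes it from~\cite{BSS09}, noting only that the proof uses the Sherman--Morrison--Woodbury identity, and your rearrangement showing that $t^{-1}\le L(\v,\delta_\scl,\matW,\scl)$ is precisely the condition $\phil(\scl+\delta_\scl,\matW+t\v\v\transp)\le\phil(\scl,\matW)$ is correct whenever $\matW-(\scl+\delta_\scl)\matI_m\succ 0$. The genuine gap is in the last step, where you deduce the eigenvalue bound from the potential bound. The ``barrier stays finite, hence no eigenvalue crosses'' argument is the right one for the \emph{upper} barrier (Lemma~\ref{lemma:sp2}), where eigenvalues increase toward $\scu+\delta_\scu$ from below and each term $1/(\scu+\delta_\scu-\lambda_i)$ tends to $+\infty$ at a crossing. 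For the \emph{lower} barrier it points the wrong way: since eigenvalues are nondecreasing in $s$, $\lambda_m(\matW+s\v\v\transp)$ can only approach $\scl+\delta_\scl$ from \emph{below}, so the offending term $1/(\lambda_m-\scl-\delta_\scl)$ tends to $-\infty$, and an \emph{upper} bound on $\phil(\scl+\delta_\scl,\cdot)$ is perfectly consistent with $\lambda_m$ sitting strictly below $\scl+\delta_\scl$. (Your parenthetical ``the same hypothesis applies at $s^\star$'' is also backwards: $s^\star\le t$ gives $1/s^\star\ge 1/t$, the wrong direction for the hypothesis $1/t\le L$.) Moreover the gap cannot be patched within the stated hypotheses: take $\matW=\mathrm{diag}(\lambda,\lambda)$ with $\scl<\lambda<\scl+\delta_\scl$ and $\v=(c,c)\transp$; a direct computation gives $L(\v,\delta_\scl,\matW,\scl)=\frac{c^2}{\scl+\delta_\scl-\lambda}\left(2-\frac{\lambda-\scl}{\delta_\scl}\right)>0$, so the hypothesis is satisfiable, yet $\matW+t\v\v\transp$ has $\lambda$ as an eigenvalue (eigenvector $(1,-1)\transp$) for every $t$, so $\lambda_m(\matW+t\v\v\transp)=\lambda<\scl+\delta_\scl$.

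The resolution is that the lemma is only ever invoked in the regime $\lambda_m(\matW)>\scl+\delta_\scl$; this is exactly what the proof of Lemma~\ref{lemma:feasiblebound} guarantees ($\lambda_{\min}(\matA_\tau)>\scl_\tau+1/\phil_\tau>\scl_{\tau+1}$) before Lemmas~\ref{lemma:sp1} and~\ref{lemma:sp2} are used in Lemma~\ref{lemma:bounds}. Under that extra condition the eigenvalue conclusion is immediate from monotonicity of eigenvalues under adding $t\v\v\transp\succeq 0$, the matrix $\matW-(\scl+\delta_\scl)\matI_m$ is positive definite so the quantities $\v\transp(\matW-(\scl+\delta_\scl)\matI_m)^{-1}\v$, $\v\transp(\matW-(\scl+\delta_\scl)\matI_m)^{-2}\v$ and $\phil(\scl+\delta_\scl,\matW)-\phil(\scl,\matW)$ appearing in your rearrangement are all positive, and your Sherman--Morrison identity then yields the potential bound with no caveats. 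So the fix is to add (or observe) the hypothesis $\lambda_m(\matW)>\scl+\delta_\scl$, keep your algebra for the potential claim, and drop the continuity/blow-up step entirely.
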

\begin{lemma}
\label{lemma:sp2}
Fix \math{\delta_\scu>0}, \math{\matW \in R^{m \times m}}, \math{\v \in R^m}, and
\math{\scu > \lambda_{1}(\matW)}. If \math{t} satisfies
\mand{
t^{-1}\ge U(\v,\delta_\scu,\matW,\scu),}
then \math{\lambda_{1}(\matW+t\v\v\transp) \le\scu+\delta_\scu}.
\end{lemma}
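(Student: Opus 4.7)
\textbf{Proof plan for Lemma \ref{lemma:sp2}.} The strategy is to track the upper barrier potential $\phiu(\cdot,\cdot)$ as we perform the rank-one update $\matW\mapsto \matW+t\v\v\transp$, using the Sherman--Morrison identity to get an \emph{exact} expression, and then to translate the hypothesis $t^{-1}\ge U(\v,\delta_\scu,\matW,\scu)$ into the inequality $\phiu(\scu+\delta_\scu,\matW+t\v\v\transp)\le \phiu(\scu,\matW)$. Since the right-hand side is finite (because $\scu>\lambda_1(\matW)$ by hypothesis), the left-hand side will also be finite, which forces $\scu+\delta_\scu>\lambda_1(\matW+t\v\v\transp)$, as desired.

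First, I would set $\matM=(\scu+\delta_\scu)\matI_m-\matW$, which is positive definite because $\scu+\delta_\scu>\scu>\lambda_1(\matW)$. Applying Sherman--Morrison to $\matM - t\v\v\transp$ and taking the trace gives the one-line identity
\[
\phiu(\scu+\delta_\scu,\matW+t\v\v\transp)
\;=\;
\phiu(\scu+\delta_\scu,\matW) \;+\; \frac{t\,\v\transp\matM^{-2}\v}{1-t\,\v\transp\matM^{-1}\v},
\]
valid whenever the denominator is positive. The required bound $\phiu(\scu+\delta_\scu,\matW+t\v\v\transp)\le \phiu(\scu,\matW)$ is therefore equivalent to
\[
\frac{t\,\v\transp\matM^{-2}\v}{1-t\,\v\transp\matM^{-1}\v}
\;\le\;
\phiu(\scu,\matW)-\phiu(\scu+\delta_\scu,\matW).
\]
Because $\phiu$ is strictly decreasing in its first argument, the right-hand side is strictly positive, so I can divide through and rearrange algebraically into the clean form $t^{-1}\ge U(\v,\delta_\scu,\matW,\scu)$, which is exactly the hypothesis.

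The one technical point to verify, and the only place that needs a small argument, is that the denominator $1-t\v\transp\matM^{-1}\v$ is strictly positive, so that the division above is valid and preserves inequalities. This follows directly from the hypothesis itself: since $U(\v,\delta_\scu,\matW,\scu)\ge \v\transp\matM^{-1}\v$ (the first summand in $U$ being non-negative), we have $t^{-1}\ge \v\transp\matM^{-1}\v$, i.e.\ $t\v\transp\matM^{-1}\v\le 1$. A mild strictness check at the boundary $t^{-1}=U$ completes this sanity step.

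Having established $\phiu(\scu+\delta_\scu,\matW+t\v\v\transp)\le \phiu(\scu,\matW)<\infty$, I conclude: if $\lambda_1(\matW+t\v\v\transp)$ were $\ge \scu+\delta_\scu$, then at least one term in the sum defining $\phiu(\scu+\delta_\scu,\matW+t\v\v\transp)$ would be non-positive or undefined, contradicting finiteness (and positivity) of the potential. Hence $\lambda_1(\matW+t\v\v\transp)\le \scu+\delta_\scu$. The main obstacle here is purely bookkeeping: getting the Sherman--Morrison manipulation to land exactly on the definition of $U$ and keeping track of signs when dividing by $1-t\v\transp\matM^{-1}\v$; no deeper idea is needed beyond the barrier viewpoint already used in \cite{BSS09}.
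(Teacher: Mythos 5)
The paper does not actually prove this lemma: it quotes it from \cite{BSS09} ``proven \ldots using the Sherman--Morrison--Woodbury identity,'' and your argument is precisely that standard proof. Your reduction is right: with $\matX=(\scu+\delta_\scu)\matI_m-\matW\succ 0$, Sherman--Morrison gives
$\phiu(\scu+\delta_\scu,\matW+t\v\v\transp)=\phiu(\scu+\delta_\scu,\matW)+\frac{t\,\v\transp\matX^{-2}\v}{1-t\,\v\transp\matX^{-1}\v}$,
and since $\phiu(\scu,\matW)-\phiu(\scu+\delta_\scu,\matW)>0$, rearranging the target inequality lands exactly on $t^{-1}\ge U(\v,\delta_\scu,\matW,\scu)$; the positivity of the denominator follows, as you say, because the first summand of $U$ is nonnegative (and strictly positive unless $\v=\bm{0}_{m\times 1}$, in which case the lemma is trivial).

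The one soft spot is your last paragraph. Finiteness and positivity of $\phiu(\scu+\delta_\scu,\matW+t\v\v\transp)$ do \emph{not} by themselves force $\lambda_1(\matW+t\v\v\transp)\le\scu+\delta_\scu$: if the top eigenvalue sat above the barrier, its term $1/(\scu+\delta_\scu-\lambda_1)$ would merely be negative, and the remaining positive terms could still make the sum finite and positive. Two standard repairs, either of which you already have the ingredients for: (i) a continuity argument --- as $s$ increases from $0$ to $t$ the hypothesis $s^{-1}\ge U$ holds a fortiori, the potential stays bounded, and the top eigenvalue would have to \emph{cross} $\scu+\delta_\scu$ to exceed it, at which point the potential blows up; or (ii) more directly, the strict inequality $t\,\v\transp\matX^{-1}\v<1$ that you establish is equivalent, via the factorization $\matX-t\v\v\transp=\matX^{1/2}\left(\matI_m-t\,\matX^{-1/2}\v\v\transp\matX^{-1/2}\right)\matX^{1/2}$, to $(\scu+\delta_\scu)\matI_m-\matW-t\v\v\transp\succ 0$, which is exactly the claimed eigenvalue bound. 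Replace the ``finiteness'' sentence with either of these and the proof is complete.
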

Now recall that Algorithm~\ref{alg:2set} runs in \math{r} steps. Initially, all $n$ weights are set to zero. Assume that at the $\tau$-th step ($\tau = 0,\ldots,r-1$) the vector of weights \math{\s_{\tau} = 
\left[\s_{\tau}[1],\ldots, \s_{\tau}[n]\right]} has been constructed and let
$$\matA_\tau=\sum_{i=1}^n \s_{\tau}[i]\v_i\v_i\transp 
\qquad \mbox{and} \qquad \matB_\tau=\sum_{i=1}^n\s_{\tau}[i]\u_i\u_i\transp.$$
Note that both matrices $\matA_\tau$ and $\matB_\tau$ are positive 
semi-definite. We claim the following lemma
which guarantees that the algorithm is well-defined.
%
\begin{lemma}
\label{lemma:feasible}
At the $\tau$-th step, for all $\tau=0,\ldots,r-1$, there exists an index $j$ in $\left\{1,\ldots,n\right\}$ such that setting the weight $t > 0$ as in eqn.~(\ref{eqn:weight}) satisfies
\begin{equation}\label{eqn:PD5}
U(\u_j,\delta_\scu,\matB_\tau,\scu_\tau) \le t^{-1} \le L(\v_j,\delta_\scl,\matA_\tau,\scl_\tau).
\end{equation}
\end{lemma}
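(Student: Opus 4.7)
}
My plan is to run a standard barrier-method induction on $\tau$, maintaining the following four invariants at the start of step $\tau$: (I1) $\scl_\tau < \lambda_k(\matA_\tau)$; (I2) $\scu_\tau > \lambda_1(\matB_\tau)$; (I3) $\phil(\scl_\tau,\matA_\tau)\le \delta_\scl = 1$; and (I4) $\phiu(\scu_\tau,\matB_\tau)\le \delta_\scu^{-1}$. At $\tau=0$ we have $\matA_0=\bm0_{k\times k}$, $\matB_0=\bm0_{\ell\times\ell}$, $\scl_0=-\sqrt{rk}<0$ and $\scu_0=\delta_\scu\sqrt{\ell r}>0$, so (I1)--(I4) reduce to the elementary inequalities $k/\sqrt{rk}\le 1$ and $\ell/(\delta_\scu\sqrt{\ell r})\le\delta_\scu^{-1}$, both of which hold by construction.

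The heart of the argument is to show that under (I1)--(I4) there exists an index $j$ with $U(\u_j,\delta_\scu,\matB_\tau,\scu_\tau)\le L(\v_j,\delta_\scl,\matA_\tau,\scl_\tau)$; once this is granted, any $t^{-1}$ in the nonempty interval (in particular the midpoint \r{eqn:weight}) satisfies the lemma. I will establish this by an averaging argument. For the lower side I compute
\begin{equation*}
\sum_{i=1}^{n} L(\v_i,\delta_\scl,\matA_\tau,\scl_\tau)
=\frac{\trace\bigl((\matA_\tau-(\scl_\tau+\delta_\scl)\matI_k)^{-2}\bigr)}{\phil(\scl_\tau+\delta_\scl,\matA_\tau)-\phil(\scl_\tau,\matA_\tau)}-\trace\bigl((\matA_\tau-(\scl_\tau+\delta_\scl)\matI_k)^{-1}\bigr),
\end{equation*}
using $\sum_i \v_i\v_i\transp=\matI_k$. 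Writing the trace as a sum over eigenvalues and using the mean-value-like identity $\tfrac{1}{a-b}-\tfrac{1}{a-b'}=\tfrac{b-b'}{(a-b)(a-b')}$ together with monotonicity, this simplifies to $1+\phil(\scl_\tau+\delta_\scl,\matA_\tau)\ge 1+\phil(\scl_\tau,\matA_\tau)\ge 1+\phil(\scl_0,\matA_0)$ by (I3) and monotonicity in $\scl$. An identical computation, using $\sum_i \u_i\u_i\transp=\matI_\ell$, yields
\begin{equation*}
\sum_{i=1}^{n} U(\u_i,\delta_\scu,\matB_\tau,\scu_\tau)\le 1-\phiu(\scu_\tau+\delta_\scu,\matB_\tau),
\end{equation*}
which, combined with (I4) and the choice of $\delta_\scu$, will be strictly less than $\sum_i L(\v_i,\cdots)$. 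Pigeonhole then yields the required $j$.

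To close the induction I must show that the update in \r{eq:update} with $t$ as in \r{eqn:weight} preserves (I1)--(I4) when $\scl$, $\scu$ are shifted to $\scl_{\tau+1}=\scl_\tau+\delta_\scl$ and $\scu_{\tau+1}=\scu_\tau+\delta_\scu$ (which matches the definitions in \r{eqn:PD1}). Invariants (I1) and (I2) follow from Lemmas~\ref{lemma:sp1} and~\ref{lemma:sp2}, applied with $\matW=\matA_\tau,\matB_\tau$ respectively. For (I3) and (I4) I will verify, by the Sherman--Morrison identity used in the proofs of those lemmas, that the definitions of $L$ and $U$ are precisely the critical thresholds for which $\phil$ and $\phiu$ do not increase under the rank-one update; this is exactly the content of Claims~3.3 and~3.4 of~\cite{BSS09}, adapted here to the two decoupled potentials.

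\textbf{Main obstacle.} The only delicate point is the strict inequality $\sum_i U(\u_i,\cdots)<\sum_i L(\v_i,\cdots)$, which requires a careful choice of $\delta_\scl,\delta_\scu$ relative to $r,k,\ell$. My plan is to plug in $\delta_\scl=1$, $\delta_\scu=(1+\sqrt{\ell/r})/(1-\sqrt{k/r})$ and verify by direct algebra that the averaged upper bound is bounded by $1-(1-\sqrt{k/r})(\sqrt{\ell/r}+\ell/r)^{-1}$ while the averaged lower bound is at least $1+(1-\sqrt{k/r})^{-1}$; after rescaling by $r^{-1}(1-\sqrt{k/r})$ at the end of the algorithm this gives the eigenvalue bounds claimed in Lemma~\ref{lemma:intro1}. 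Once this arithmetic check is in place, the rest of the proof is just a careful bookkeeping of the induction.
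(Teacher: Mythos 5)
Your overall architecture --- the two potential functions $\phil,\phiu$, the averaging-plus-pigeonhole argument, and closing the induction with Lemmas~\ref{lemma:sp1} and~\ref{lemma:sp2} --- is exactly the paper's route (the paper factors the averaging step out as Lemma~\ref{lemma:feasiblebound}). The gap is in your computation of $\sum_{i=1}^n L(\v_i,\delta_\scl,\matA_\tau,\scl_\tau)$, which is the one genuinely delicate step. Summing the definition \r{eqn:PD3} over $i$ and writing $\lambda_i$ for the eigenvalues of $\matA_\tau$ gives
$$\sum_{i=1}^n L(\v_i,\delta_\scl,\matA_\tau,\scl_\tau)=\frac{\sum_{i=1}^k(\lambda_i-\scl_{\tau+1})^{-2}}{\delta_\scl\sum_{i=1}^k(\lambda_i-\scl_{\tau+1})^{-1}(\lambda_i-\scl_{\tau})^{-1}}-\phil(\scl_{\tau+1},\matA_\tau),$$
and this does \emph{not} collapse to $1+\phil(\scl_{\tau+1},\matA_\tau)$: the ratio of the two eigenvalue sums has no clean closed form, and the second term enters with a minus sign, not a plus. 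The paper rewrites the sum as $\delta_\scl^{-1}-\phil_\tau+\cl E$ and must then prove $\cl E\ge 0$ via a Cauchy--Schwarz inequality of the form $\left(\sum_i a_ib_i\right)^2\le\left(\sum_ia_i^2b_i\right)\left(\sum_ib_i\right)$ applied to those eigenvalue sums; your ``mean-value-like identity together with monotonicity'' skips precisely this step, and without it the lower bound does not follow. The sign error also reverses your monotonicity chain: the invariant the algorithm maintains is $\phil_\tau\le\phil_0$ (the potential is non-increasing across steps), which helps exactly because the sum contains $-\phil_\tau$; your chain $1+\phil(\scl_\tau,\matA_\tau)\ge1+\phil(\scl_0,\matA_0)$ runs the wrong way. (The upper bound on $\sum_iU$ has the analogous sign issue, though there the discarded terms are negative for free and no Cauchy--Schwarz is needed.)

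The final arithmetic targets are also off. With $\delta_\scl=1$, $\delta_\scu=(1+\sqrt{\ell/r})/(1-\sqrt{k/r})$, $\phil_0=\sqrt{k/r}$ and $\phiu_0=\ell/(\delta_\scu\sqrt{\ell r})$, both averages meet at exactly the same threshold: $\sum_iU\le\delta_\scu^{-1}+\phiu_0=1-\sqrt{k/r}$ and $\sum_iL\ge\delta_\scl^{-1}-\phil_0=1-\sqrt{k/r}$. Your claimed lower bound of $1+(1-\sqrt{k/r})^{-1}$ exceeds $2$ and is false (at $\tau=0$ one computes $\sum_iL=(\sqrt{rk}-k)/(\sqrt{rk}-1)\le1$), and no strict separation is needed: $\sum_iU\le\sum_iL$ already yields an index $j$ with $U(\u_j,\delta_\scu,\matB_\tau,\scu_\tau)\le L(\v_j,\delta_\scl,\matA_\tau,\scl_\tau)$, and then the midpoint choice \r{eqn:weight} sits in the nonempty interval. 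Fix the sign of the second term in each sum, supply the Cauchy--Schwarz argument for $\cl E\ge0$, and the rest of your outline goes through as written.
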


\paragraph{Proof of Lemma~\ref{lemma:feasible}.}
In order to prove Lemma~\ref{lemma:feasible} we will use the following averaging argument.
\begin{lemma}\label{lemma:feasiblebound}
At any step $\tau=0,\ldots,r-1$,
\mand{
\sum_{i=1}^n U(\u_i,\delta_\scu,\matB_\tau,\scu_\tau)
\le
1-\sqrt{\frac{k}{r}}
\le
\sum_{i=1}^n L(\v_i,\delta_\scl,\matB_\tau,\scl_\tau).
}
\end{lemma}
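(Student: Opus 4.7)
The proof decouples into two symmetric halves — an upper bound on $\sum_i U(\u_i,\delta_\scu,\matB_\tau,\scu_\tau)$ and a lower bound on $\sum_i L(\v_i,\delta_\scl,\matA_\tau,\scl_\tau)$ (reading the $\matB_\tau$ appearing in the right-hand sum of the statement as a typo for $\matA_\tau$, matching the signature of $L$). Both halves rest on two ingredients: the decomposition-of-identity hypotheses $\sum_i\u_i\u_i\transp=\matI_\ell$ and $\sum_i\v_i\v_i\transp=\matI_k$ (which convert each quadratic form $\u\transp\matM\u$ into $\trace(\matM)$), together with two inductive invariants on the barrier potentials that are maintained throughout Algorithm~\ref{alg:2set}:
\begin{equation*}
\text{(i)}\ \phiu(\scu_\tau,\matB_\tau)\ \le\ \phiu(\scu_0,\matB_0)\ =\ \tfrac{\sqrt{\ell/r}}{\delta_\scu},\qquad\text{(ii)}\ \phil(\scl_\tau,\matA_\tau)\ \le\ \phil(\scl_0,\matA_0)\ =\ \sqrt{k/r},
\end{equation*}
both verified at $\tau=0$ by direct computation from $\matA_0=\bm0$, $\matB_0=\bm0$ and the choice of $\scl_0,\scu_0$.

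For the upper bound, averaging $U$ against $\sum_i\u_i\u_i\transp=\matI_\ell$ and recognizing $\trace(((\scu_\tau+\delta_\scu)\matI-\matB_\tau)^{-1})=\phiu(\scu_\tau+\delta_\scu,\matB_\tau)$ yields
\begin{equation*}
\sum_{i=1}^n U(\u_i,\delta_\scu,\matB_\tau,\scu_\tau)\ =\ \frac{\trace(((\scu_\tau+\delta_\scu)\matI-\matB_\tau)^{-2})}{\phiu(\scu_\tau,\matB_\tau)-\phiu(\scu_\tau+\delta_\scu,\matB_\tau)}+\phiu(\scu_\tau+\delta_\scu,\matB_\tau).
\end{equation*}
Setting $c_j=\scu_\tau-\lambda_j(\matB_\tau)>0$, the term-wise inequality $(c_j+\delta_\scu)^{-2}\le\delta_\scu^{-1}\cdot\delta_\scu/[c_j(c_j+\delta_\scu)]$ bounds the quotient above by $1/\delta_\scu$. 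Combined with monotonicity $\phiu(\scu_\tau+\delta_\scu,\matB_\tau)\le\phiu(\scu_\tau,\matB_\tau)$, invariant (i), and the definition $\delta_\scu=(1+\sqrt{\ell/r})/(1-\sqrt{k/r})$, one gets $\sum_i U\le(1+\sqrt{\ell/r})/\delta_\scu=1-\sqrt{k/r}$.

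For the lower bound, the analogous averaging (using $\trace((\matA_\tau-(\scl_\tau+\delta_\scl)\matI)^{-1})=\phil(\scl_\tau+\delta_\scl,\matA_\tau)$) gives
\begin{equation*}
\sum_{i=1}^n L(\v_i,\delta_\scl,\matA_\tau,\scl_\tau)\ =\ \frac{\trace((\matA_\tau-(\scl_\tau+\delta_\scl)\matI)^{-2})}{\phil(\scl_\tau+\delta_\scl,\matA_\tau)-\phil(\scl_\tau,\matA_\tau)}-\phil(\scl_\tau+\delta_\scl,\matA_\tau).
\end{equation*}
With $\delta_\scl=1$, $a_j=\lambda_j(\matA_\tau)-\scl_\tau>1$, $b_j=a_j-1$, $p=\phil(\scl_\tau,\matA_\tau)$, and $q=\phil(\scl_\tau+\delta_\scl,\matA_\tau)$, the algebraic identity $1/b_j^2=1/(a_jb_j)+1/(a_jb_j^2)$ summed over $j$ yields $\trace=(q-p)+\sum_j 1/(a_jb_j^2)$. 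A Cauchy-Schwarz bound on the pairing $1/(a_jb_j)=(1/\sqrt{a_j})(1/\sqrt{a_jb_j^2})$ gives $\sum_j 1/(a_jb_j^2)\ge(q-p)^2/p$. Substituting and using $q\ge p>0$,
\begin{equation*}
\sum_{i=1}^n L\ \ge\ 1+\frac{q-p}{p}-q\ =\ \frac{q(1-p)}{p}\ \ge\ 1-p\ \ge\ 1-\sqrt{k/r},
\end{equation*}
where the final step invokes invariant (ii).

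The local algebra — the $1/b_j^2=1/(a_jb_j)+1/(a_jb_j^2)$ decomposition paired with the Cauchy-Schwarz step — is routine once spotted. The real obstacle is the inductive maintenance of invariants (i) and (ii) across iterations of Algorithm~\ref{alg:2set}; that induction lives in the enveloping proof of Lemma~\ref{lemma:feasible}, where Lemmas~\ref{lemma:sp1} and \ref{lemma:sp2} are applied to the weight $t$ selected by rule~(\ref{eqn:weight}) to show the potentials do not grow under the update. This closes the mutual recursion between Lemmas~\ref{lemma:feasible} and \ref{lemma:feasiblebound}.
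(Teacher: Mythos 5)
Your proof is correct and follows essentially the same route as the paper's: average $U$ and $L$ against the decompositions of the identity to obtain the trace/potential expressions, bound the resulting quotients (termwise for $U$, via the same Cauchy--Schwarz pairing for $L$), and close with the monotonicity invariants $\phiu_\tau\le\phiu_0$ and $\phil_\tau\le\phil_0$ whose maintenance is deferred to the surrounding induction. The only differences are cosmetic bookkeeping in the lower bound (the paper isolates an error term $\cl E\ge 0$ where you substitute directly and simplify to $q(1-p)/p$), and you correctly identified the $\matB_\tau$ in the statement's right-hand sum as a typo for $\matA_\tau$.
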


\begin{proof}
For notational convenience, let \math{\phiu_\tau=\phiu(\scu_\tau,\matB_\tau)} and let \math{\phil_\tau=\phil(\scl_\tau,\matA_\tau)}. At \math{\tau=0}, \math{\matB_0=\bm0} and \math{\matA_0=\bm0} and thus \math{\phiu_0=\ell/\scu_0} and \math{\phil_0=-k/\scl_0}. Focus on the $\tau$-th step and assume that the algorithm has run correctly up to that point. Then,
\math{\phiu_\tau\le\phiu_0} and \math{\phil_\tau\le\phil_0}. Both are true at $\tau=0$ and, assuming that the algorithm has run correctly until the $\tau$-th step, Lemmas~\ref{lemma:sp1} and~\ref{lemma:sp2} guarantee that \math{\phiu_\tau} and \math{\phil_\tau} are non-increasing.

First, consider the upper bound on \math{U}. In the following derivation, $\lambda_i$ denotes the $i$-th eigenvalue of $\matB_{\tau}$. Using
\math{\trace(\u\transp\matX\u)=\trace(\matX\u\u\transp)} and
\math{\sum_{i}\u_i\u_i\transp=\matI_\ell}, we get
\eqan{
\sum_{i=1}^n U(\u_i,\delta_\scu,\matB_\tau,\scu_\tau)&=&
\frac
{\trace\left[(\scu_{\tau+1}\matI_{\ell}-\matB_\tau)^{-2}\right]}
{\phiu_\tau-\phiu(\scu_{\tau+1},\matB_\tau)}
+\phiu(\scu_{\tau+1},\matB_\tau)\\
&=&
\frac
{\sum_{i=1}^\ell\frac{1}{(\scu_{\tau+1}-\lambda_i)^2}}
{\delta_{\scu}\sum_{i=1}^\ell\frac{1}
{(\scu_{\tau+1}-\lambda_i)(\scu_{\tau}-\lambda_i)}}
+
\sum_{i=1}^\ell\frac{1}{(\scu_{\tau+1}-\lambda_i)}\\
&=&
\frac{1}{\delta_\scu}+\phiu_\tau
-\frac{1}{\delta_\scu}\left(1-
\frac
{\sum_{i=1}^\ell\frac{1}{(\scu_{\tau+1}-\lambda_i)^2}}
{\sum_{i=1}^\ell\frac{1}
{(\scu_{\tau+1}-\lambda_i)(\scu_{\tau}-\lambda_i)}}
\right) \\ &-& \delta_\scu
\sum_{i=1}^\ell\frac{1}{(\scu_{\tau}-\lambda_i)(\scu_{\tau+1}-\lambda_i)}\\
&\le&
\frac{1}{\delta_\scu}+\phiu_0.
}
The last line follows because the last two
terms are negative (using the fact that \math{\scu_{\tau+1}>\scu_\tau>\lambda_i}) and \math{\phiu_\tau\le\phiu_0}. Now,
using \math{\phiu_0=\delta_\scu\sqrt{r\ell}} and the definition of
\math{\delta_\scu} the upper bound follows:
\mand{
\frac{1}{\delta_\scu}+\phiu_0=\frac{1}{\delta_\scu}+
\frac{\ell}{\delta_\scu\sqrt{r\ell}}=\frac{1}{\delta_\scu}
\left(1+\sqrt{\frac{\ell}{r}}\right)=1-\sqrt{\frac{k}{r}}.}
In order to prove the lower bound on \math{L} we use a similar argument. Let $\lambda_i$ denote the $i$-th eigenvalue of $\matA_{\tau}$. Then,
\eqan{
\sum_{i=1}^n L(\v_i,\delta_\scl,\matA_\tau,\scl_\tau)&=&
\frac
{\trace\left[(\matA_\tau-\scl_{\tau+1}\matI_k)^{-2}\right]}
{\phil(\scl_{\tau+1},\matA_\tau)-\phil_\tau}
-\phil(\scl_{\tau+1},\matA_\tau)\\
&=&
\frac
{\sum_{i=1}^k\frac{1}{(\lambda_i-\scl_{\tau+1})^2}}
{\delta_{\scl}\sum_{i=1}^k\frac{1}
{(\lambda_i-\scl_{\tau+1})(\lambda_i-\scl_{\tau})}}
-
\sum_{i=1}^k\frac{1}{(\lambda_i-\scl_{\tau+1})}\\
&=&
\frac{1}{\delta_\scl}-\phil_\tau
+
\underbrace{
\frac{1}{\delta_\scl}\left(
\frac
{\sum_{i=1}^k\frac{1}{(\lambda_i-\scl_{\tau+1})^2}}
{\sum_{i=1}^k\frac{1}
{(\lambda_i-\scl_{\tau+1})(\lambda_i-\scl_{\tau})}}
-1
\right)-\delta_\scl
\sum_{i=1}^k\frac{1}{(\lambda_i-\scl_{\tau})(\lambda_i-\scl_{\tau+1})}
}_{\cl E}\\
&\ge&
\frac{1}{\delta_\scl}-\phil_0 + \cl E.
}
Assuming \math{\cl E\ge 0} the claim follows immediately because \math{\delta_\scl=1} and \math{\phil_0=-k/\scl_0=k/\sqrt{rk}=\sqrt{k/r}}. Thus, we only need to show that \math{\cl E\ge 0}. From the Cauchy-Schwarz inequality, for \math{a_i,b_i\ge 0},
\math{\left(\sum_i a_i b_i\right)^2\le\left(\sum_ia_i^2b_i\right)\left(\sum_i b_i\right)} and thus
\begin{eqnarray}
\nonumber \cl E\sum_{i=1}^k\frac{1}
{(\lambda_i-\scl_{\tau+1})(\lambda_i-\scl_{\tau})}
&=&\frac{1}{\delta_{\scl}}
\sum_{i=1}^k\frac{1}{(\lambda_i-\scl_{\tau+1})^2(\lambda_i-\scl_{\tau})}
-
\delta_\scl
\left(\sum_{i=1}^k\frac{1}{(\lambda_i-\scl_{\tau})
(\lambda_i-\scl_{\tau+1})}\right)^2\\
\nonumber &\ge&
\frac{1}{\delta_{\scl}}\sum_{i=1}^k\frac{1}{(\lambda_i-\scl_{\tau+1})^2(\lambda_i-\scl_{\tau})}
-
\end{eqnarray}
\vspace{-.33in}
\begin{eqnarray}
\delta_\scl
\sum_{i=1}^k\frac{1}{(\lambda_i-\scl_{\tau+1})^2(\lambda_i-\scl_{\tau})}
\sum_{i=1}^k\frac{1}{\lambda_i-\scl_\tau} = 
\left(\frac{1}{\delta_{\scl}}-\delta_\scl\phil_\tau\right)
\sum_{i=1}^k\frac{1}{(\lambda_i-\scl_{\tau+1})^2(\lambda_i-\scl_{\tau})}.
\end{eqnarray}
To conclude our proof, first note that \math{\delta_\scl^{-1}-\delta_\scl\phil_\tau\ge \delta_\scl^{-1}-\delta_\scl\phil_0=1-\sqrt{k/r}>0}
(recall that \math{r>k}). Second, \math{\lambda_i>\scl_{\tau+1}} because
\mand{
\lambda_{\min}(\matA_\tau)>\scl_\tau+\frac{1}{\phil_\tau}
\ge \scl_\tau+\frac{1}{\phil_0}=\scl_\tau+\sqrt{\frac{r}{k}}
>\scl_\tau+1=\scl_{\tau+1}.
}
Combining these two observations with the later equation, we conclude that $\cl E \geq 0$.
\end{proof}
\noindent Lemma~\ref{lemma:feasible} now follows from Lemma~\ref{lemma:feasiblebound} because the two inequalities must hold simultaneously for at least one index \math{j}.
\noindent Once an index $j$ and a weight $t>0$ have been computed, Algorithm~\ref{alg:2set} updates the $j$-th weight in the vector of weights $\s_{\tau}$ to create the vector of weights $\s_{\tau+1}$. Clearly, at each of the $r$ steps, only one element of the vector of weights is updated. Since $\s_0$ is initialized to the all-zeros vector, after all $r$ steps are completed, at most $r$ weights are non-zero. The following lemma argues that $\lambda_{\min}(\matA_\tau)$ and $\lambda_{\max}(\matB_\tau)$ are bounded.
\begin{lemma}
\label{lemma:bounds}
At the $\tau$-th step, for all $\tau=0,\ldots,r-1$, $\lambda_{\min}(\matA_\tau)\ge\scl_\tau$ and
$\lambda_{\max}(\matB_\tau)\le \scu_\tau$.
\end{lemma}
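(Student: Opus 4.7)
The plan is a straightforward induction on $\tau$, using Lemmas~\ref{lemma:sp1} and~\ref{lemma:sp2} as the key tools that translate the defining inequality~\eqref{eqn:PD5} of the chosen weight $t$ into eigenvalue guarantees after the rank-one update in~\eqref{eq:update}. The two claims are essentially symmetric, so I will treat the lower bound on $\lambda_{\min}(\matA_\tau)$ in detail and sketch the upper bound on $\lambda_{\max}(\matB_\tau)$ in parallel.

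\textbf{Base case} ($\tau=0$). Since $\matA_0=\bm{0}_{k\times k}$ and $\matB_0=\bm{0}_{\ell\times\ell}$, we have $\lambda_{\min}(\matA_0)=0$ and $\lambda_{\max}(\matB_0)=0$. From~\eqref{eqn:PD1}, $\scl_0=-\sqrt{rk}<0$ and $\scu_0=\delta_\scu\sqrt{\ell r}>0$, so both inequalities hold (strictly).

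\textbf{Inductive step.} Assume that at step $\tau$ the bounds hold strictly, i.e.\ $\scl_\tau<\lambda_{\min}(\matA_\tau)$ and $\scu_\tau>\lambda_{\max}(\matB_\tau)$; this is precisely the hypothesis required to invoke Lemmas~\ref{lemma:sp1} and~\ref{lemma:sp2}. By Lemma~\ref{lemma:feasible}, the algorithm finds an index $j$ and a weight $t>0$ satisfying
\[
U(\u_j,\delta_\scu,\matB_\tau,\scu_\tau)\ \le\ t^{-1}\ \le\ L(\v_j,\delta_\scl,\matA_\tau,\scl_\tau).
\]
Applying Lemma~\ref{lemma:sp1} to $\matA_{\tau+1}=\matA_\tau+t\v_j\v_j\transp$ yields
\[
\lambda_{\min}(\matA_{\tau+1})\ \ge\ \scl_\tau+\delta_\scl\ =\ \scl_{\tau+1},
\]
since $\delta_\scl=1$ and $\scl_{\tau+1}=(\tau+1)-\sqrt{rk}=\scl_\tau+1$. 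Analogously, Lemma~\ref{lemma:sp2} applied to $\matB_{\tau+1}=\matB_\tau+t\u_j\u_j\transp$ gives
\[
\lambda_{\max}(\matB_{\tau+1})\ \le\ \scu_\tau+\delta_\scu\ =\ \scu_{\tau+1},
\]
since $\scu_{\tau+1}=\delta_\scu((\tau+1)+\sqrt{\ell r})=\scu_\tau+\delta_\scu$.

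\textbf{Main obstacle: maintaining strict inequalities.} The one subtle point is that Lemmas~\ref{lemma:sp1}~and~\ref{lemma:sp2} require $\scl_\tau<\lambda_{\min}(\matA_\tau)$ and $\scu_\tau>\lambda_{\max}(\matB_\tau)$ strictly, so the induction hypothesis must also be strict. To propagate strictness, I will use the same $\phil,\phiu$ monotonicity observation already used inside the proof of Lemma~\ref{lemma:feasiblebound}: Lemma~\ref{lemma:sp1} actually gives $\phil(\scl_{\tau+1},\matA_{\tau+1})\le\phil(\scl_\tau,\matA_\tau)\le\phil_0=\sqrt{k/r}<\infty$, so all eigenvalues of $\matA_{\tau+1}$ are strictly greater than $\scl_{\tau+1}$; indeed
\[
\lambda_{\min}(\matA_{\tau+1})\ \ge\ \scl_{\tau+1}+\frac{1}{\phil(\scl_{\tau+1},\matA_{\tau+1})}\ \ge\ \scl_{\tau+1}+\sqrt{\tfrac{r}{k}}\ >\ \scl_{\tau+1}.
\]
The symmetric argument for $\phiu$ yields $\lambda_{\max}(\matB_{\tau+1})<\scu_{\tau+1}$. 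Thus the strict form of the induction hypothesis is preserved at step $\tau+1$, the inductive step is complete, and the lemma follows.
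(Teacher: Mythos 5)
Your proof is correct and follows essentially the same route as the paper: induction on $\tau$, the recurrences $\scl_{\tau+1}=\scl_\tau+\delta_\scl$ and $\scu_{\tau+1}=\scu_\tau+\delta_\scu$, and Lemmas~\ref{lemma:sp1} and~\ref{lemma:sp2} applied to the update in~\eqref{eq:update}. The only difference is that you explicitly propagate the strict inequalities needed as hypotheses of those lemmas (via the monotonicity of $\phil,\phiu$), a point the paper leaves implicit under "simple induction" but uses in the same way inside the proof of Lemma~\ref{lemma:feasiblebound}.
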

\begin{proof}
Recall eqn.~(\ref{eqn:PD1}) and observe that \math{\scl_0=-\sqrt{rk}<0} and \math{\scu_0=\delta_\scu\sqrt{r\ell}>0}. Thus, the lemma holds at \math{\tau=0}. It is also easy to verify that \math{\scl_{\tau+1}=\scl_{\tau}+\delta_\scl}, and, similarly, \math{\scu_{\tau+1}=\scu_{\tau}+\delta_\scu}. Now, at the $\tau$-step, given an index $j$ and a corresponding weight $t > 0$ satisfying
eqn.~(\ref{eqn:PD5}), Lemmas \ref{lemma:sp1} and \ref{lemma:sp2} imply that
\eqan{\lambda_{\min}(\matA_{\tau+1})=
\lambda_{\min}(\matA_{\tau}+t\v_j\v_j\transp)\ge \scl_{\tau}+
\delta_{\scl}=\scl_{\tau+1};\\
\lambda_{\min}(\matB_{\tau+1}) = \lambda_{\max}(\matB_{\tau}+t\u_j\u_j\transp)\le \scu_{\tau}+
\delta_{\scu}=\scu_{\tau+1}.
}
The lemma now follows by simple induction on $\tau$.
\end{proof}
\noindent We are now ready to conclude the proof of Lemma~\ref{lemma:intro1}. By Lemma~\ref{lemma:bounds}, at the $r$-th step,
\mand{
\lambda_{\max}(\matB_r)\le
\scu_r \qquad\text{ and }\qquad
\lambda_{\min}(\matA_r)\ge\scl_r.
}
Recall the definitions of $\scu_r$ and $\scl_r$ from eqn.~(\ref{eqn:PD1}) and note that they are both positive and well-defined because \math{r>k}. Lemma \ref{lemma:intro1} now follows after
rescaling the vector of weights $\s$ by \math{r^{-1}\left(1-\sqrt{k/r}\right)}. Note that the rescaling does not change the number of non-zero elements of \math{\s}, but does rescale all the eigenvalues of \math{\matA_r} and \math{\matB_r}.

\section{Proof of Lemma~\ref{lem:2setF}}
Again, we first state Lemma~\ref{lem:2setF} in a slightly different notation. 
\begin{lemma}[Dual Set Spectral-Frobenius Sparsification.]
\label{lemma:intro2}
Let \math{\cl V=\{\v_1,\ldots,\v_n\}} be a decomposition of the identity, where \math{\v_i\in\R^{k}} ($k < n$) and
$\sum_{i=1}^n\v_i\v_i\transp=\matI_{k}$; let \math{\cl A=\{\a_1,\ldots,\a_n\}} be an arbitrary set
of vectors, where \math{\a_i\in\R^{\ell}}. Then,
given an integer \math{r} such that \math{k < r \le n}, there exists a set of weights \math{s_i\ge 0} ($i=1\ldots n$), {at most \math{r} of which are non-zero}, such that
\mand{
\lambda_{k}\left(\sum_{i=1}^ns_i\v_i\v_i\transp\right)
\ge
\left(1 - \sqrt{\frac{k}{r}}\right)^2
\qquad\text{and}
\qquad \trace\left(\sum_{i=1}^n s_i\a_i\a_i\transp\right)
\le \trace\left(\sum_{i=1}^n \a_i\a_i\transp\right)
}
The weights $s_i$ can be computed deterministically in $O\left(rnk^2+n\ell\right)$ time.
\end{lemma}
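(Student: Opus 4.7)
}
The plan is to adapt the barrier-method algorithm of Lemma~\ref{lemma:intro1} (Algorithm~\ref{alg:2set}), replacing the spectral upper barrier $\phiu$ (which controls the \emph{largest} eigenvalue of $\sum s_i \u_i\u_i\transp$) by a much simpler trace-based upper barrier that controls $\sum s_i \a_i\a_i\transp$. The crucial observation is that trace is linear in the selected vectors, so a sophisticated potential function is not required on the upper side: we just need to ensure that each step's trace increment $t\,\TNormS{\a_j}$ stays within a fixed budget $\delta_\scu$.

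Concretely, I will keep the lower barrier $L(\v,\delta_\scl,\matA,\scl)$, $\phil$, $\scl_\tau$, and $\delta_\scl=1$ exactly as in the proof of Lemma~\ref{lemma:intro1}, and replace the upper side by
\mand{
U(\a,\delta_\scu) \;=\; \frac{\TNormS{\a}}{\delta_\scu},
\qquad
\delta_\scu \;=\; \frac{\trace\!\left(\sum_{i=1}^{n}\a_i\a_i\transp\right)}{1-\sqrt{k/r}}.
}
The modified algorithm initializes $\s_0=\bm{0}$, $\matA_0=\bm{0}_{k\times k}$, and at each step $\tau=0,\ldots,r-1$ finds an index $j$ satisfying $U(\a_j,\delta_\scu)\le L(\v_j,\delta_\scl,\matA_\tau,\scl_\tau)$, sets $t^{-1}$ as the average of the two sides, and updates $\s_{\tau+1}[j]=\s_\tau[j]+t$, $\matA_{\tau+1}=\matA_\tau+t\v_j\v_j\transp$. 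The final output is $\s = r^{-1}(1-\sqrt{k/r})\,\s_r$.

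The key steps, in order, are: (i) prove feasibility at every step via an averaging argument, i.e. the analog of Lemma~\ref{lemma:feasiblebound}; the lower inequality $\sum_i L(\v_i,\delta_\scl,\matA_\tau,\scl_\tau)\ge 1-\sqrt{k/r}$ is recycled verbatim from the proof of Lemma~\ref{lemma:intro1}, while on the upper side one simply computes
\mand{
\sum_{i=1}^{n} U(\a_i,\delta_\scu) \;=\; \frac{1}{\delta_\scu}\sum_{i=1}^{n}\TNormS{\a_i}
\;=\; 1-\sqrt{k/r},
}
so the two sums coincide and a valid $j$ must exist; (ii) invoke Lemma~\ref{lemma:sp1} (unchanged) to conclude $\lambda_{\min}(\matA_{\tau+1})\ge \scl_{\tau}+\delta_\scl=\scl_{\tau+1}$ at every step, so after $r$ steps $\lambda_k(\matA_r)\ge \scl_r = r-\sqrt{rk}$; (iii) bound the trace by noting that the chosen weight satisfies $t\,\TNormS{\a_j}\le \delta_\scu$ by construction, so $\trace\!\bigl(\sum_i \s_r[i]\a_i\a_i\transp\bigr) = \sum_{\tau=0}^{r-1} t_\tau\TNormS{\a_{j_\tau}} \le r\delta_\scu$; (iv) apply the final rescaling by $r^{-1}(1-\sqrt{k/r})$ to obtain $\lambda_k(\sum s_i\v_i\v_i\transp)\ge (1-\sqrt{k/r})^2$ and $\trace(\sum s_i\a_i\a_i\transp)\le (1-\sqrt{k/r})\,\delta_\scu = \trace(\sum \a_i\a_i\transp)$, which is exactly the claim.

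The main ``obstacle'' is really just bookkeeping: I expect no new ideas beyond those in Lemma~\ref{lemma:intro1}, since the trace-budget constraint is linear and therefore much easier to control than a spectral-norm constraint. The running time bound $O(rnk^2+n\ell)$ follows because the $n$ squared norms $\TNormS{\a_i}$ are precomputed once in $O(n\ell)$ time and never updated, while each of the $r$ iterations evaluates the lower barrier $L$ at $n$ candidate vectors in $\R^k$ using a single eigendecomposition of $\matA_\tau$, contributing the $O(rnk^2)$ term exactly as in the analysis of Algorithm~\ref{alg:2set} (note that the $\ell$-dimensional side contributes nothing per iteration, since we only need scalar norms there). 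This decoupling between the ``hard'' $k$-side and the ``cheap'' $\ell$-side is precisely what makes this Frobenius-style sparsifier strictly faster than its spectral counterpart when $\ell$ is large.
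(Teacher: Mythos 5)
Your proposal is correct and follows essentially the same route as the paper: the paper likewise keeps the lower barrier $L$, $\phil$, $\scl_\tau$, $\delta_\scl=1$ unchanged, replaces the spectral upper barrier by the linear function $U_F(\a,\delta_\scu)=\delta_\scu^{-1}\a\transp\a$ with $\delta_\scu=\bigl(\sum_i\TNormS{\a_i}\bigr)/(1-\sqrt{k/r})$, proves feasibility by the identical averaging computation $\sum_i U_F(\a_i,\delta_\scu)=1-\sqrt{k/r}$, and controls the trace step-by-step exactly as in your item (iii) (the paper packages this as Lemma~\ref{lemma:sp2F}, but it is the same one-line calculation). Your running-time accounting, including the $O(n\ell)$ one-time precomputation of the norms, also matches the paper's.
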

\noindent In matrix notation 
(here $\matA$ denotes the matrix whose \textit{columns} are the vectors $\a_i$), the above lemma argues that
$
\sigma_{k}\left(\matV\transp\Omega\matS\right)
\ge
1 - \sqrt{\frac{k}{r}}$ and
$\FNorm{\matA\Omega\matS}
\le \FNorm{\matA}$.
$\Omega$, $\matS$ may be viewed as matrices that sample and rescale $r$ \textit{columns} of $\matA$ and 
$\matV\transp$, namely the columns that correspond to non-zero weights $s_i$. Next we prove 
Lemma~\ref{lemma:intro2}. 

\subsection*{Proof of Lemma~\ref{lemma:intro2}}
In this section we will provide a constructive proof of Lemma~\ref{lemma:intro2}. Our proof closely follows the proof of Lemma \ref{lemma:intro1}, so we will only highlight the differences. We first discuss modifications to Algorithm~\ref{alg:2set}. First of all, the new inputs are \math{\cl V=\{\v_1,\ldots,\v_n\}} and \math{\cl A=\{\a_1,\ldots,\a_n\}}. The output is a set of $n$ non-negative weights $s_i$, at most $r$ of which are non-zero. We define the parameters
\mand{
\delta_\scl=1;
\qquad \delta_\scu=\frac{\sum_{i=1}^n \TNormS{\a_i}}{1-\sqrt{\frac{k}{r}}};
\qquad \scl_\tau=\tau-\sqrt{rk};
\qquad \scu_\tau=\tau \delta_\scu,
}
for all \math{\tau= 0,\ldots,r-1}. Let $\s_{\tau}$ denote the vector of weights at the $\tau$-th step of Algorithm~\ref{alg:2set} and initialize $\s_{0}$ and $\matA_0$ as in Algorithm~\ref{alg:2set} ($\matB_0$ will not be necessary). We now define the function $U_F\left(\a,\delta_\scu\right)$, where $\a \in \mathbb{R}^\ell$ and $\delta_\scu \in \mathbb{R}$:
\begin{equation}\label{eqn:UF}
U_F\left(\a,\delta_\scu\right) = \delta_\scu^{-1} \a^T \a.
\end{equation}
Then, at the $\tau$-th step, the algorithm will pick an index $j$ and compute a weight $t>0$ such that
\mld{
U_F(\a_j,\delta_\scu)\le t^{-1}\le
L(\v_j,\delta_\scl,\matA_\tau,\scl_\tau).
\label{eq:updateF}
}
The algorithm updates the vector of weights $\s_{\tau}$ and the matrix
$$\matA_\tau=\sum_{i=1}^n s_{\tau,i}\v_i\v_i\transp.$$
It is worth noting that the algorithm does not need to update the matrix
$$\matB_\tau=\sum_{i=1}^ns_{\tau,i}\a_i\a_i\transp,$$
because the function $U_F$ does not need $\matB_{\tau}$ as input. To prove the correctness of the algorithm we need the following two intermediate lemmas.
\begin{lemma}\label{lemma:feasibleF}
At every step \math{\tau=0,\ldots,r-1} there exists an index $j$ in $\left\{1,\ldots,n\right\}$ that satisfies eqn.~(\ref{eq:updateF}).
\end{lemma}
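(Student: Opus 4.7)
The plan is to mimic the proof of Lemma~\ref{lemma:feasible} via an averaging (pigeonhole) argument: show that
$$\sum_{i=1}^n U_F(\a_i,\delta_\scu)\ \le\ 1-\sqrt{k/r}\ \le\ \sum_{i=1}^n L(\v_i,\delta_\scl,\matA_\tau,\scl_\tau),$$
so at least one index $j$ satisfies $U_F(\a_j,\delta_\scu)\le L(\v_j,\delta_\scl,\matA_\tau,\scl_\tau)$, after which any $t$ with reciprocal strictly between these values fulfills~\eqref{eq:updateF}.

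For the upper bound, the computation is immediate from the definition~\eqref{eqn:UF}:
$$\sum_{i=1}^n U_F(\a_i,\delta_\scu)=\delta_\scu^{-1}\sum_{i=1}^n\TNormS{\a_i}=1-\sqrt{k/r},$$
by our choice $\delta_\scu=\bigl(\sum_i\TNormS{\a_i}\bigr)/(1-\sqrt{k/r})$. This is the only place the $\a_i$'s enter, and there is no upper barrier on eigenvalues to maintain (unlike the setting of Lemma~\ref{lemma:feasible}): the Frobenius guarantee will instead follow at the end simply because $\sum_\tau t_\tau\TNormS{\a_{j_\tau}}$ telescopes against $r\delta_\scu$.

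For the lower bound, the derivation is literally the same as the ``$L$-half'' of the proof of Lemma~\ref{lemma:feasiblebound}: the argument only involves $\matA_\tau$, $\scl_\tau$, $\delta_\scl$, and the $\v_i$'s, none of which have changed. One reuses the Cauchy--Schwarz step to show that the error term $\cl E$ is non-negative (using $\lambda_i(\matA_\tau)>\scl_{\tau+1}$, which follows from the induction hypothesis $\phil_\tau\le\phil_0$ combined with $\scl_\tau+1/\phil_0=\scl_\tau+\sqrt{r/k}>\scl_{\tau+1}$), and then concludes
$$\sum_{i=1}^n L(\v_i,\delta_\scl,\matA_\tau,\scl_\tau)\ \ge\ \frac{1}{\delta_\scl}-\phil_\tau\ \ge\ \frac{1}{\delta_\scl}-\phil_0\ =\ 1-\sqrt{k/r},$$
using $\delta_\scl=1$ and $\phil_0=\sqrt{k/r}$.

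The induction underlying $\phil_\tau\le\phil_0$ is itself maintained by Lemma~\ref{lemma:sp1}: once we pick $j$ and $t$ satisfying $t^{-1}\le L(\v_j,\delta_\scl,\matA_\tau,\scl_\tau)$, that lemma guarantees $\lambda_{\min}(\matA_{\tau+1})\ge\scl_\tau+\delta_\scl=\scl_{\tau+1}$, hence $\phil_{\tau+1}\le\phil_0$. No obstacle is anticipated here; the main (mild) subtlety to watch is that, unlike the dual spectral case, there is no analogue of Lemma~\ref{lemma:sp2} to invoke because we do not track eigenvalues of $\matB_\tau$---control of $\FNorm{\matA\Omega\matS}$ is deferred to the end of the algorithm's analysis, where one simply sums the contributions $t_\tau\TNormS{\a_{j_\tau}}\le t_\tau\delta_\scu\cdot U_F(\a_{j_\tau},\delta_\scu)\le\delta_\scu$ across the $r$ iterations and rescales by $r^{-1}(1-\sqrt{k/r})$ to obtain $\FNorm{\matA\Omega\matS}\le\FNorm{\matA}$.
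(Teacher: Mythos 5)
Your proposal is correct and follows essentially the same route as the paper: an averaging argument in which the sum $\sum_i U_F(\a_i,\delta_\scu)$ is computed exactly as $1-\sqrt{k/r}$ from the definition of $\delta_\scu$, while the lower bound $\sum_i L(\v_i,\delta_\scl,\matA_\tau,\scl_\tau)\ge 1-\sqrt{k/r}$ is inherited unchanged from Lemma~\ref{lemma:feasiblebound} because the dynamics of $L$ are untouched. Your additional remarks on maintaining the induction via Lemma~\ref{lemma:sp1} and deferring the Frobenius control to the trace argument are consistent with how the paper organizes the surrounding analysis.
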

\begin{proof}
The proof is very similar to the proof of Lemma~\ref{lemma:feasible} (via Lemma~\ref{lemma:feasiblebound}) so we only sketch the differences. First,
note that the dynamics of \math{L} have not been changed and thus the lower bound for the average of $L(\v_j,\delta_\scl,\matA_\tau,\scl_\tau)$ still holds.
We only need to upper bound the average of \math{U_F(\a_i,\delta_\scu)} as in Lemma~\ref{lemma:feasiblebound}. Indeed,
\mand{
\sum_{i=1}^n U_F(\a_i,\delta_\scu)
=\delta_\scu^{-1}\sum_{i=1}^n\a_i\transp\a_i
=
\delta_\scu^{-1}\sum_{i=1}^n\TNormS{\a_i}
= 1-\sqrt{\frac{k}{r}},
}
where the last equality follows from the definition of \math{\delta_\scu}.
\end{proof}
\begin{lemma}\label{lemma:sp2F}
Let $W \in \mathbb{R}^{\ell \times \ell}$ be a symmetric positive semi-definite matrix, let $\a \in \mathbb{R}^{\ell}$ be a vector, and let
\math{\scu \in \mathbb{R}} satisfy $\scu > \trace(\matW)$. If \math{t > 0} satisfies
\mand{U_F\left(\a,\delta_\scu\right) \le
t^{-1},
}
then \math{\trace\left(\matW+t\v\v\transp\right)\le\scu+\delta_\scu}.
\end{lemma}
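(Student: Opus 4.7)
My plan is to observe that this statement, unlike the companion rank-one update Lemmas~\ref{lemma:sp1} and~\ref{lemma:sp2} which relied on the Sherman--Morrison--Woodbury identity to control the extreme eigenvalues under a rank-one perturbation, is really just a statement about the trace, which is linear. Consequently the proof should be immediate and require no eigenvalue machinery; the only thing to do is to unpack the definition of $U_F$.

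Concretely, I would first use linearity of the trace plus the rank-one identity $\trace(\a\a\transp)=\a\transp\a$ to write
\[
\trace(\matW+t\a\a\transp) \;=\; \trace(\matW) \;+\; t\,\a\transp\a.
\]
Next I would invoke the hypothesis $U_F(\a,\delta_\scu)\le t^{-1}$, which by the definition $U_F(\a,\delta_\scu)=\delta_\scu^{-1}\a\transp\a$ in eqn.~(\ref{eqn:UF}) rearranges (since $t,\delta_\scu>0$) to
\[
t\,\a\transp\a \;\le\; \delta_\scu .
\]
Finally, combining this with the assumption $\trace(\matW)\le \scu$ (which is implied by $\scu>\trace(\matW)$) yields
\[
\trace(\matW+t\a\a\transp) \;\le\; \trace(\matW) + \delta_\scu \;\le\; \scu + \delta_\scu,
\]
which is the desired conclusion.

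The only ``obstacle'' worth flagging is conceptual rather than technical: the earlier lemmas in this section controlled eigenvalues, whereas here only the trace (a linear functional) is being tracked, which is precisely why the analogous function $U_F$ was designed as a single linear term $\delta_\scu^{-1}\a\transp\a$ rather than the more elaborate rational expression defining $U$ in eqn.~(\ref{eqn:PD4}). No Sherman--Morrison step, no positivity-of-denominator argument, and no use of the PSD hypothesis on $\matW$ beyond the trivial fact that trace is additive is required; the PSD assumption on $\matW$ is therefore not even needed for the inequality itself (it is natural only because $\matW$ will arise as a sum of outer products in the enclosing algorithm). Hence the proof is essentially a two-line calculation, and the plan above is effectively the proof in outline form.
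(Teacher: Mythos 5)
Your proof is correct and is essentially identical to the paper's own argument: both unpack $U_F(\a,\delta_\scu)=\delta_\scu^{-1}\a\transp\a\le t^{-1}$ to get $t\,\a\transp\a\le\delta_\scu$, then use linearity of the trace together with $\trace(\matW)<\scu$. Your remark that the PSD hypothesis is not actually needed here is accurate but does not change the substance of the proof.
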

\begin{proof}
Using the conditions of the lemma and the definition of $U_F$ from eqn.~(\ref{eqn:UF}),
\eqan{
\trace(\matW+t\a\a\transp)-\scu-\delta_\scu,
&=&
\trace(\matW)-\scu+t\a\transp\a-\delta_\scu,\\
&\le&
\trace(\matW)-\scu < 0,
}
which concludes the proof of the lemma.
\end{proof}
\noindent We can now combine Lemmas~\ref{lemma:sp1} and \ref{lemma:sp2F} to prove that at all steps $\tau=0,\ldots,r-1$,
$$\lambda_{\min}(\matA_\tau)\ge\scl_\tau \qquad \mbox{and}\qquad
\trace(\matB_\tau)\le \scu_\tau.
$$
Note that after all $r$ steps of the algorithm are completed, \math{\scl_r=r\left(1-\sqrt{k/r}\right)} and
\math{\scu_r=r\left(1-\sqrt{k/r}\right)^{-1}\sum_{i=1}^n\TNormS{\a_i}}. A simple rescaling now concludes the proof.
The running time of the (modified) Algorithm~\ref{alg:2set} is $O\left(nrk^2+n\ell \right)$, where the latter term emerges from the need to compute the function $U_F(\a_j,\delta_\scu)$ for all $j=1,\ldots,n$ \textit{once} throughout the algorithm.

\section{Generalizations of Lemmas \ref{lem:2setS} and \ref{lem:2setF}}

\begin{lemma}
\label{theorem:2setGeneral}
Let \math{\matX\in\R^{n\times k}} and \math{\matY\in\R^{n\times \ell}} with
respective ranks \math{\rho_\matX,\rho_\matY};
let \math{\rho_{\matX} < r \le n}.
One can deterministically 
construct in $O(r n (\rho_{\matX}^2+\rho_{\matY}^2))$ time 
a sampling matrix \math{\Omega\in\R^{n\times r}} and a 
positive diagonal rescaling matrix \math{\matS\in\R^{r\times r}} such 
that
\mand{
\TNorm{(\matX\transp \Omega \matS )^+} \le \left(1 - \sqrt{\frac{\rho_{\matX}}{r}} \right) \TNorm{(\matX\transp)^+}   
\qquad \text{and} \qquad 
\TNorm{\matY\transp \Omega \matS}      \le \left( 1 + \sqrt{ \frac{\rho_{\matY}}{r} } \right) \TNorm{\matY\transp}.
}
\end{lemma}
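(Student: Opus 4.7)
The plan is to reduce directly to Lemma~\ref{lem:2setS} by passing through the SVDs of $\matX$ and $\matY$ and sparsifying on the orthonormal-column factors. First I would compute thin SVDs $\matX = \matU_\matX \Sigma_\matX \matV_\matX\transp$ and $\matY = \matU_\matY \Sigma_\matY \matV_\matY\transp$, where $\matU_\matX \in \R^{n \times \rho_\matX}$ and $\matU_\matY \in \R^{n \times \rho_\matY}$ have orthonormal columns, $\Sigma_\matX$, $\Sigma_\matY$ are square diagonal matrices holding the strictly positive singular values, and $\matV_\matX \in \R^{k \times \rho_\matX}$, $\matV_\matY \in \R^{\ell \times \rho_\matY}$ also have orthonormal columns. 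The point of this step is that $\matU_\matX$ and $\matU_\matY$ satisfy the orthonormality hypotheses of Lemma~\ref{lem:2setS}, and the hypothesis $\rho_\matX < r \le n$ is exactly what that lemma needs.

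Next I would invoke Lemma~\ref{lem:2setS} with $\matV \leftarrow \matU_\matX$ and $\matU \leftarrow \matU_\matY$ (so $k \leftarrow \rho_\matX$ and $\ell \leftarrow \rho_\matY$). In $O\!\left(rn(\rho_\matX^2+\rho_\matY^2)\right)$ deterministic time this produces $\Omega \in \R^{n\times r}$ and diagonal $\matS \in \R^{r\times r}$ with $\sigma_{\rho_\matX}(\matU_\matX\transp \Omega \matS) \ge 1-\sqrt{\rho_\matX/r}$ and $\TNorm{\matU_\matY\transp \Omega \matS} \le 1+\sqrt{\rho_\matY/r}$. Transferring the spectral-norm bound to $\matY$ is immediate: writing $\matY\transp \Omega \matS = \matV_\matY \Sigma_\matY (\matU_\matY\transp \Omega \matS)$ and using spectral submultiplicativity with $\TNorm{\matV_\matY}=1$ and $\TNorm{\Sigma_\matY}=\TNorm{\matY\transp}$, we obtain $\TNorm{\matY\transp \Omega \matS} \le (1+\sqrt{\rho_\matY/r})\,\TNorm{\matY\transp}$, which is the second inequality in the statement.

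For the pseudo-inverse bound I would argue as follows. Factor $\matX\transp \Omega \matS = \matV_\matX \Sigma_\matX (\matU_\matX\transp \Omega \matS)$. Since $\matV_\matX$ has orthonormal columns it acts as a linear isometry on its range and preserves singular values of what it multiplies; since $\Sigma_\matX$ is a square invertible $\rho_\matX \times \rho_\matX$ diagonal whose smallest entry is $\sigma_{\rho_\matX}(\matX) = 1/\TNorm{(\matX\transp)^+}$, and since Lemma~\ref{lem:2setS} guarantees $\matU_\matX\transp \Omega \matS$ has full row rank $\rho_\matX$, the standard singular-value-of-product inequality (valid because the middle factor is square and non-singular) yields
\[
\sigma_{\rho_\matX}(\matX\transp \Omega \matS) \;\ge\; \sigma_{\rho_\matX}(\matX)\cdot \sigma_{\rho_\matX}(\matU_\matX\transp \Omega \matS) \;\ge\; \frac{1 - \sqrt{\rho_\matX/r}}{\TNorm{(\matX\transp)^+}}.
\]
Taking reciprocals gives $\TNorm{(\matX\transp \Omega \matS)^+} \le (1-\sqrt{\rho_\matX/r})^{-1}\,\TNorm{(\matX\transp)^+}$, which is the content of the first inequality of Lemma~\ref{theorem:2setGeneral} in its only provable form; the factor $(1-\sqrt{\rho_\matX/r})$ as printed on the right-hand side of the statement appears to be inverted, since column sub-sampling can never \emph{increase} the smallest nonzero singular value.

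The main obstacle is really just bookkeeping: I must ensure the intermediate rank stays at $\rho_\matX$ so that the product-of-singular-values inequality in the last step is tight, which is handed to us for free by the positivity guarantee in Lemma~\ref{lem:2setS}. The quoted running time is dominated by that single call to Lemma~\ref{lem:2setS}; the thin-SVD preprocessing of $\matX$ and $\matY$ contributes $O(nk^2 + n\ell^2)$, which is absorbed into $O(rn(\rho_\matX^2+\rho_\matY^2))$ in the regime where $k$ is comparable to $\sqrt{r}\,\rho_\matX$ and $\ell$ to $\sqrt{r}\,\rho_\matY$; otherwise it is a benign additive overhead.
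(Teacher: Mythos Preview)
Your proposal is correct and follows essentially the same route as the paper: compute thin SVDs of $\matX$ and $\matY$, apply Lemma~\ref{lem:2setS} to the orthonormal factors $\matU_\matX,\matU_\matY$, and then transfer the bounds back through the singular-value factors. The only cosmetic difference is that the paper splits the pseudo-inverse via Lemma~\ref{lemma:pseudo} (the reverse-order law $(\matA\matB)^+=\matB^+\matA^+$, valid here because both factors have full rank $\rho_\matX$) and then applies submultiplicativity, whereas you phrase the same step as a product-of-singular-values inequality; the two arguments are equivalent once the rank condition from Lemma~\ref{lem:2setS} is in hand. Your observation that the factor $(1-\sqrt{\rho_\matX/r})$ in the statement must be inverted is also correct---the paper's own proof in fact derives the bound $\TNorm{(\matU_\matX\transp\Omega\matS)^+}\le(1-\sqrt{\rho_\matX/r})^{-1}$ but mis-states the conclusion.
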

\begin{proof}
Let the SVD of $\matX$ is $\matX = \matU_{\matX} \Sigma_{\matX} \matV_{\matX}\transp$,
with $\matU_{\matX} \in \R^{n \times \rho_\matX}$, $\Sigma_{\matX} \in \R^{\rho_\matX \times \rho_\matX}$,
and $\matV_{\matX} \in \R^{k \times \rho_\matX}$. 
Let the SVD of $\matY$ is $\matY = \matU_{\matY} \Sigma_{\matY} \matV_{\matY}\transp$,
with $\matU_{\matY} \in \R^{n \times \rho_\matY}$, $\Sigma_{\matY} \in \R^{\rho_\matY \times \rho_\matY}$,
and $\matV_{\matY} \in \R^{\ell \times \rho_\matY }$. 
Let $[\Omega,\matS]=BarrierSamplingII(\matU_\matX,\matU_\matY,r).$
By Lemma  \ref{lem:2setS},
\math{
\sigma_{\min}(\matU_\matX\transp\Omega\matS )\ge(1-\sqrt{\rho_{\matX}/{r}})
}, which implies 
\math{\TNorm{(\matU_\matX\transp \Omega \matS )^+}\le
(1-\sqrt{\rho_{\matX}}/{r})} and
\math{\rank(\matU_\matX\transp\Omega\matS)=\rho_\matX}. Also
\math{\TNorm{\matU_\matY\transp \Omega\matS } \le (1+\sqrt{\rho_{\matY}/{r}})}
because
\math{
\sigma_{\max}
(\matU_\matY\transp\Omega\matS )\le(1+\sqrt{\rho_{\matY}/{r}})
}. Thus,
\eqan{
\TNorm{(\matX\transp \Omega \matS )^+}
&=&
\TNorm{(\matV_{\matX} \Sigma_{\matX} \matU_{\matX}\transp \Omega \matS)^+ } 
{\buildrel (a)\over =}
\TNorm{(\matU_{\matX}\transp \Omega \matS)^+ (\matV_{\matX} \Sigma_{\matX})^+}\\
&\le&
\TNorm{(\matU_{\matX}\transp \Omega \matS)^+}
\TNorm{ (\matV_{\matX} \Sigma_{\matX})^+};}
$$
\TNorm{ \matY \transp \Omega\matS }  = 
 \TNorm{ \matV_\matY \Sigma_\matY \matU_\matY\transp \Omega\matS } \leq 
 \TNorm{ \matV_\matY \Sigma_\matY }  \TNorm{ \matU_\matY\transp \Omega\matS }. 
$$
(The inequalities follow from submultiplicativity, and (a) uses
Lemma ~\ref{lemma:pseudo}.)
 To conclude, observe that 
\math{\TNorm{(\matV_\matX\Sigma_\matX)^+}=\TNorm{(\matX\transp)^+}},
and 
\math{\TNorm{\matV_\matY\Sigma_\matY}=\TNorm{\matY\transp}}.
\end{proof}

\begin{lemma}
\label{theorem:2setGeneralF}
Let \math{\matX\in\R^{n\times k}} and \math{\matY\in\R^{\ell \times n}} with
respective ranks \math{\rho_\matX,\rho_\matY};
let \math{\rho_{\matX} < r \le n}.
One can deterministically 
construct in $O(r n \rho_{\matX}^2 + n \ell )$ time 
a sampling matrix \math{\Omega\in\R^{n\times r}} and a 
positive diagonal rescaling matrix \math{\matS\in\R^{r\times r}} such 
that
\mand{
\TNorm{(\matX\transp \Omega \matS )^+} \le \left(1 - \sqrt{\frac{\rho_{\matX}}{r}} \right) \TNorm{(\matX\transp)^+}   \qquad \text{and}\qquad 
\FNorm{\matY \Omega \matS}      \le  \TNorm{\matY}.
}
\end{lemma}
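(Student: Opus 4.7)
The plan is to mirror the proof of Lemma~\ref{theorem:2setGeneral} that immediately precedes this lemma, except that I will invoke the spectral/Frobenius sparsifier of Lemma~\ref{lem:2setF} in place of the dual-spectral sparsifier of Lemma~\ref{lem:2setS}. Concretely, I will first peel off the non-orthonormal part of $\matX$ via a thin SVD and then apply $BarrierSamplingIII$ to the resulting orthonormal factor paired with $\matY$.

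Step one: compute the thin SVD $\matX = \matU_\matX \Sigma_\matX \matV_\matX\transp$, where $\matU_\matX \in \R^{n\times \rho_\matX}$ has orthonormal columns, $\Sigma_\matX \in \R^{\rho_\matX \times \rho_\matX}$ is diagonal and invertible, and $\matV_\matX \in \R^{k \times \rho_\matX}$ has orthonormal columns. This takes $O(n k^2)$ time. Step two: run $[\Omega,\matS] = BarrierSamplingIII(\matU_\matX, \matY, r)$. By Lemma~\ref{lem:2setF}, applied to the orthonormal $n\times \rho_\matX$ matrix $\matU_\matX$ and to $\matY$ as the unconstrained matrix, this runs in $O(rn\rho_\matX^2 + n\ell)$ time and delivers
\begin{equation*}
\sigma_{\rho_\matX}(\matU_\matX\transp \Omega \matS) \ge 1 - \sqrt{\rho_\matX/r}, \qquad \FNorm{\matY \Omega \matS} \le \FNorm{\matY}.
\end{equation*}
The second inequality already implies the Frobenius-norm conclusion of the lemma, so no further work is needed for the $\matY$-bound.

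Step three: transfer the singular-value lower bound from $\matU_\matX\transp\Omega\matS$ to $(\matX\transp\Omega\matS)^+$. Because $r > \rho_\matX$ and $\sigma_{\rho_\matX}(\matU_\matX\transp\Omega\matS) > 0$, the matrix $\matU_\matX\transp\Omega\matS \in \R^{\rho_\matX \times r}$ has full row rank $\rho_\matX = \rank(\matU_\matX)$. Writing $\matX\transp\Omega\matS = (\matV_\matX \Sigma_\matX)(\matU_\matX\transp\Omega\matS)$ and invoking Lemma~\ref{lemma:pseudo}(iii) (the rank condition is exactly met), we may split the pseudoinverse as
\begin{equation*}
(\matX\transp\Omega\matS)^+ = (\matU_\matX\transp\Omega\matS)^+ (\matV_\matX\Sigma_\matX)^+.
\end{equation*}
Spectral submultiplicativity then gives $\TNorm{(\matX\transp\Omega\matS)^+} \le \TNorm{(\matU_\matX\transp\Omega\matS)^+} \cdot \TNorm{(\matV_\matX\Sigma_\matX)^+}$. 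The first factor is $1/\sigma_{\rho_\matX}(\matU_\matX\transp\Omega\matS)$, which is controlled by the Lemma~\ref{lem:2setF} lower bound; the second factor equals $\TNorm{(\matX\transp)^+}$ because $\matV_\matX\Sigma_\matX$ inherits the nonzero singular values of $\matX\transp$ (and $\matV_\matX$ being orthonormal does not affect the spectral norm of the pseudoinverse). Combining these gives the first stated guarantee.

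The main subtlety will be in step three: one must be careful that Lemma~\ref{lemma:pseudo}(iii) requires the rank-preservation condition $\rank(\matU_\matX\transp\Omega\matS) = \rank(\matU_\matX) = \rho_\matX$, which is why the spectral sparsifier's lower bound on $\sigma_{\rho_\matX}$ is doing double duty (both giving the quantitative bound and supplying rank preservation). The total runtime $O(n k^2 + rn\rho_\matX^2 + n\ell) = O(rn\rho_\matX^2 + n\ell)$ (absorbing the SVD into the sparsifier cost since $k^2 \le r\rho_\matX^2$ as $\rho_\matX \le k \le r$) matches the claim.
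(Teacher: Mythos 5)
Your proposal is correct and takes essentially the same route as the paper, whose entire proof of this lemma reads ``Set \math{[\Omega,\matS]=BarrierSamplingIII(\matU_{\matX},\matY,r)} and follow the argument in Lemma~\ref{theorem:2setGeneral}'' --- i.e., exactly the SVD-peeling and pseudoinverse-splitting via Lemma~\ref{lemma:pseudo} that you spell out. Note only that your (correct) derivation yields the factor \math{\left(1-\sqrt{\rho_\matX/r}\right)^{-1}} and the bound \math{\FNorm{\matY\Omega\matS}\le\FNorm{\matY}}, whereas the lemma as printed states the un-inverted factor and \math{\TNorm{\matY}} on the right-hand side; both appear to be typos in the statement rather than gaps in your argument.
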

\begin{proof}
Set \math{[\Omega,\matS]=BarrierSamplingIII(\matU_{\matX},\matY,r)} and follow the 
argument in Lemma \ref{theorem:2setGeneral}.
\end{proof}


\section{Proof of Lemma \ref{lem:rpall}}

We start with the following definition, which is Definition 1 in~\cite{Sar06}
(properly stated to fit our notation).
\begin{definition}[Johnson-Lindenstrauss Transform]
\label{lem:jlt}
Let $\matA \in \R^{m \times n}$, parameters $0 < \delta, \epsilon < 1$ and function $f$.
A random matrix $R \in \R^{n \times r}$ with $r = \Omega( \frac{\log m}{\epsilon^2} f(\delta) )$
forms a Johnson-Lindenstrauss transform with parameters $\epsilon, \delta, m$ or 
$JLT(\epsilon, \delta, m)$ for short, if with probability at least $1 - \delta$
for all rows $\matA_{(i)} \in \R^{1 \times n}$ of $\matA$:
$$ (1 - \epsilon) \TNormS{ \matA_{(i)} } \le \TNormS{ \matA_{(i)} R } \le (1 + \epsilon) \TNormS{ \matA_{(i)} }.$$
\end{definition}

We continue with Theorem $1.1$ of \cite{Ach03} (properly stated to fit our notation
and after minor algebraic manipulations), which indicates that the (rescaled) 
sign matrix $\R$ of the lemma corresponds to a $JLT(\epsilon, \delta, m)$ transform
as defined in Definition~\ref{lem:jlt} with $f(\delta) = \log(\frac{1}{\delta})$.
\begin{theorem} [Achlioptas~\cite{Ach03}]
Let $\matA \in \R^{m \times n}$ and $0 < \epsilon < 1$. Let $R \in \R^{n \times r}$ be a rescaled random sign matrix 
with  $r = \frac{36}{\epsilon^2} \log(\frac{1}{\delta}) \log m$.
Then with probability $1 - \delta$:
$$ (1 - \epsilon) \TNormS{ \matA_{(i)} } \le \TNormS{ \matA_{(i)} R } \le (1 + \epsilon) \TNormS{ \matA_{(i)} }.$$
\end{theorem}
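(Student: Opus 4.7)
The plan is to reduce the multi-row statement to a single-row concentration inequality via a union bound, then prove that inequality using the Laplace-transform method. Fix a row and write $x = \matA_{(i)}^{\transp} \in \R^n$; after absorbing $\TNorm{x}$ into both sides we may assume $\TNorm{x}=1$. Let $\tilde R_{kj} = \sqrt{r}\, R_{kj} \in \{+1,-1\}$ be the unscaled Rademacher entries and define $Z_j = \sum_{k=1}^n \tilde R_{kj}\, x_k$ for $j=1,\ldots,r$. Then $Z_1,\ldots,Z_r$ are i.i.d., $\Expect{Z_j^2} = \sum_k x_k^2 = 1$, and
$$\TNormS{x^{\transp}R} = \frac{1}{r}\sum_{j=1}^r Z_j^2.$$
So the theorem is reduced to a two-sided concentration bound for an i.i.d.\ sum of squared Rademacher sums.

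The core step I would carry out is a Chernoff/Laplace bound on $\sum_j Z_j^2$. Each $Z_j$ is a Rademacher sum with $\sum_k x_k^2 = 1$, so by Hoeffding's lemma it is $1$-subgaussian: $\Expect{e^{s Z_j}} \le e^{s^2/2}$ for all $s \in \R$. A standard Gaussian-comparison argument (equivalently, the identity $\Expect{e^{tZ^2}} = \Expect{e^{\sqrt{2t}\, Z\cdot g}}$ where $g \sim \mathcal{N}(0,1)$ is independent, combined with Fubini) then yields
$$\Expect{e^{t Z_j^2}} \le (1-2t)^{-1/2} \qquad \text{for } 0 < t < 1/2,$$
and an analogous inequality $\Expect{e^{-t Z_j^2}} \le (1+2t)^{-1/2}$ for the negative direction. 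Plugging these into the Markov/Chernoff inequality and optimizing over $t$ gives
$$\Pr\!\left[\left|\tfrac{1}{r}\textstyle\sum_{j=1}^r Z_j^2 - 1\right| > \epsilon\right] \;\le\; 2\exp\!\left(-\tfrac{r\epsilon^2}{4} + O(r\epsilon^3)\right)\;\le\;2 e^{-r\epsilon^2/6},$$
valid for all $\epsilon \in (0,1)$.

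To finish, I would union-bound the failure event over the $m$ rows: the probability that the approximation fails for some row is at most $2m e^{-r\epsilon^2/6}$. Substituting $r = 36\epsilon^{-2} \log(1/\delta) \log m$ makes this quantity bounded by $\delta$ (with abundant slack, since $r\epsilon^2/6 = 6\log(1/\delta)\log m$ dominates $\log(2m)$), which is exactly the failure probability claimed in the theorem.

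The main obstacle is the Gaussian-comparison step that bounds the moment generating function of the squared Rademacher sum $Z_j^2$ by that of a matched-variance Gaussian; this is precisely where one must be careful, because squared subgaussians are only subexponential, so the naive Hoeffding bound does not apply and one needs a Bernstein-type estimate. The rest is routine: chasing the $\pm\epsilon$ constants, matching the exponent $-r\epsilon^2/6$ against $\log(m/\delta)$, and absorbing the excess $\log(1/\delta)$ factor to reach the (slightly loose) $r = 36\epsilon^{-2}\log(1/\delta)\log m$ in the theorem's statement.
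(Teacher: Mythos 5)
The paper does not actually prove this statement: it is imported verbatim (up to notation) from Theorem~1.1 of Achlioptas~\cite{Ach03}, so there is no internal proof to compare yours against. Judged on its own merits, your argument is the standard one, and it is essentially the route Achlioptas himself takes: reduce to a single unit-norm row, write $\TNormS{\matA_{(i)}R}$ as $\frac{1}{r}\sum_{j}Z_j^2$ for i.i.d.\ unit-variance Rademacher sums $Z_j$, dominate the moment generating function of $Z_j^2$ by that of a standard Gaussian, apply Chernoff, and union-bound over the $m$ rows. Your upper-tail step is correct as written: subgaussianity of $Z_j$ plus the Fubini identity $\Expect{e^{tZ^2}}=\Expect{e^{\sqrt{2t}Zg}}$ gives $\Expect{e^{tZ_j^2}}\le(1-2t)^{-1/2}$, which is Achlioptas's moment-domination lemma in MGF form.

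One step needs repair. The ``analogous inequality'' $\Expect{e^{-tZ_j^2}}\le(1+2t)^{-1/2}$ does not follow by the same trick (the substitution would require $\sqrt{-2t}$), and it is in fact false for large $t$: with $x=(1/\sqrt{2},1/\sqrt{2})$ one has $Z^2\in\{0,2\}$ with equal probability, and $\frac{1}{2}(1+e^{-4})>5^{-1/2}$ at $t=2$. Since the Chernoff optimization for the lower tail takes $t=\epsilon/(2(1-\epsilon))$, which is unbounded as $\epsilon\to 1$, you cannot simply assert the Gaussian bound there. The fix is routine: either cap $t\le 1/2$, or use the Bernstein-type estimate $\Expect{e^{-tZ^2}}\le 1-t+\frac{t^2}{2}\Expect{Z^4}\le e^{-t+3t^2/2}$, using $\Expect{Z^4}\le 3\bigl(\sum_k x_k^2\bigr)^2=3$ for Rademacher sums; this still yields a lower tail of the form $e^{-cr\epsilon^2}$. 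With that patch, the union bound and the constant check against $r=36\epsilon^{-2}\log(1/\delta)\log m$ go through as you describe.
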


\paragraph{Statement 1.}
The first statement in our lemma proved in Corollary 11 of~\cite{Sar06}. More specifically,
this corollary indicates that if $R$ is a $JLT(\epsilon, \delta, m)$ with 
$r = O( k/\epsilon^2 \log(\frac{1}{\delta}))$, then, the singular values of the subsampled $\matV_k\transp R$
are within relative error accuracy from the singular values of $\matV_k$ with probability at least $1 - \delta$.
The failure probability $0.01$ in our lemma follows by assuming $c_0$ sufficiently large. 

\paragraph{Statement 2.}
This matrix multiplication bound follows from Lemma~$6$ of~\cite{Sar06}. The second claim
of this lemma says that for $\matX \in \R^{m \times n}$ and $\matY \in \R^{n \times k}$,
if $R \in \R^{n \times r}$ is a matrix with i.i.d rows, each one containing 
four-wise independent zero-mean $\{1/\sqrt{r}, -1/\sqrt{r}\}$ entries, then
$$ \Expect{ \FNormS{ \matX \matY - \matX R R\transp \matY } } \le \frac{2}{r} \FNormS{ \matX } \FNormS{ \matY }.$$
The statement in our lemma follows because our rescaled sign matrix satisfies the four-wise independence assumption.
Actually, our rescaled sign matrix has i.i.d rows with each one having $r$-wise 
independent zero-mean $\{1/\sqrt{r}, -1/\sqrt{r}\}$ entries. Assuming $r \ge 4$
the claim follows. 

\paragraph{Statement 3.}
These bounds appeared in Lemma~8 in~\cite{Sar06}. 
Again, \cite{Sar06} assumes that the matrix has i.i.d rows, each one containing 
four-wise independent zero-mean $\{1/\sqrt{r}, -1/\sqrt{r}\}$ entries.
The statement in our lemma follows because our rescaled sign matrix satisfies 
the four-wise independence assumption.

\paragraph{Statement 4.}
Let $\matX = \matV_k\transp R \in \R^{k \times r}$
with SVD $\matX = \matU_{\matX} \Sigma_{\matX} \matV_\matX\transp $.
Here,  $\matU_{\matX} \in \R^{k \times k}$, 
and $\Sigma_{\matX} \in \R^{k \times k}$, and $\matV_\matX \in \R^{r \times k}$.
Consider taking the SVD of $(\matV_k\transp R)^+$ and $(\matV_k\transp R)\transp $:
$$ \TNorm{(\matV_k\transp R)^+ - (\matV_k\transp R)\transp} =
    \TNorm{\matV_{\matX} \Sigma_{\matX}^{-1} \matU_{\matX}\transp - \matV_{\matX} \Sigma_{\matX} \matU_{\matX}\transp  } =
    \TNorm{\matV_{\matX}(\Sigma_{\matX}^{-1} - \Sigma_{\matX}) \matU_{\matX}\transp} = $$
$$ =    \TNorm{\Sigma_{\matX}^{-1} - \Sigma_{\matX}},
  $$
since $\matV_{\matX}$ and
$\matU_{\matX}\transp $ can be dropped without changing the spectral norm. 

Let $\matY = \Sigma_{\matX}^{-1} - \Sigma_{\matX} \in \R^{k \times k}$ be 
a diagonal matrix. Then, for all
$i=1,...,k$:
$$\matY_{ii}  = \frac{ 1 - \sigma_i^2(\matX)  }{ \sigma_{i}(\matX) }.$$
Since $\matY$ is a diagonal matrix:
$$
\TNorm{ \matY } = 
\max_{1 \leq i \leq k} \abs{\matY_{ii} } =
\max_{1 \leq i \leq k} \abs{\frac{ 1 - \sigma_i^2(\matX)}{ \sigma_{i}(\matX) } } \le
\max_{1 \leq i \leq k} \frac{ 1 - \sigma_i^2(\matX)}{ \sigma_{i}(\matX) } \le
\frac{ 2 \epsilon - \epsilon^2}{1-\epsilon} \le \frac{ 2 \epsilon }{1-\epsilon} \le 3 \epsilon.
$$
In the first inequality, $ \abs{ \sigma_{i}(\matX) } = \sigma_{i}(\matX)$, because
the singular values are nonnegative numbers. Also, 
$\abs{ 1 - \sigma_i^2(\matX) } \le 1 - \sigma_i^2(\matX)$, because
$1 - \sigma_i^2(\matX) \ge 0$, since $\sigma_i^2(\matX) < 1$
(from our choice of $\epsilon$ and the left hand side of the
bound for the singular values from the first statement of the Lemma). 
The second inequality follows by the bound for the singular values of $\matX$
from the first statement of the Lemma. The last inequality follows by the
assumption that $0 < \epsilon < \frac{1}{3}$. 

\paragraph{Statement 5.}
Notice that there exists a sufficiently large constant $c_0$ such that $r \geq c_0 k / \epsilon^2$.
Setting $x = \FNorm{ \matX R }^2$, using the third statement of the
Lemma, the fact that $k \geq 1$, and Chebyshev's
inequality we get
\begin{eqnarray*}
\textbf{Pr}[ |x - \EE{x }| \geq \epsilon \FNorm{\matX}^2 ] 
\le \frac{\var{x}}{\epsilon^2 \FNorm{\matX}^4}
\le \frac{2\FNorm{\matX}^4}{r\epsilon^2 \FNorm{\matX}^4} 
\le \frac{2}{c_0k} 
\le 0.01.
\end{eqnarray*}
The last inequality follows by assuming $c_0$ sufficiently large.
Finally, taking square root on both sides concludes the proof.

\paragraph{Statement 6.} Let
$\matE = \matA_k  - (\matA R) (\matV_k\transp  R)^+ \matV_k\transp  \in \R^{m \times n}$. 
By setting $\matA = \matA_k + \matA_{\rho-k}$ and using the triangle
inequality for the Frobenius matrix norm:
\[ \FNorm{\matE}\ \leq\ \FNorm{\matA_k - \matA_k R (\matV_k\transp R)^+ \matV_k\transp }\ +\ \FNorm{ \matA_{\rho-k}R (\matV_k\transp R)^+ \matV_k\transp  }.\]
The first statement of the Lemma  implies that
$\text{rank}(\matV_k\transp R) = k$ thus $(\matV_k\transp R) (\matV_k\transp R)^+ =
\matI_k$. Replacing
$\matA_k = \matU_k \Sigma_k \matV_k\transp $ and setting $(\matV_k\transp R)
(\matV_k\transp R)^+ = \matI_k$ we get that
$$ 
\FNorm{\matA_k - \matA_k R (\matV_k\transp R)^+ \matV_k\transp } 
= 
\FNorm{\matA_k - \matU_k\Sigma_k \matV_k\transp R (\matV_k\transp R)^+ \matV_k\transp } = 
\FNorm{\matA_k - \matU_k \Sigma_k \matV_k\transp } = 0. $$
To bound the second term above, we drop $\matV_k\transp $, add and
subtract $\matA_{\rho-k}R (\matV_k\transp R)\transp   $,
and use the triangle inequality and spectral submultiplicativity:
\begin{eqnarray*}
\FNorm{ \matA_{\rho-k} R (\matV_k\transp R)^+ \matV_k\transp  } & \leq & \FNorm{ \matA_{\rho-k}R (\matV_k\transp R)\transp  }\ +\ \FNorm{ \matA_{\rho-k}R ( (\matV_k\transp R)^+ - (\matV_k\transp R)\transp )} \\
                                               & \leq & \FNorm{ \matA_{\rho-k}R R\transp  \matV_k  }\ +\ \FNorm{ \matA_{\rho-k}R} \TNorm{ (\matV_k\transp R)^+ - (\matV_k\transp R)\transp                                           }.
\end{eqnarray*}
Now, we will bound each term individually. A crucial observation
for bounding the first term is that $\matA_{\rho-k}\matV_k=
\matU_{\rho-k}\Sigma_{\rho-k}\matV_{\rho-k}^{\top}\matV_k=\mathbf{0}_{m \times k}$ by
orthogonality of the columns of $\matV_k$ and $\matV_{\rho-k}$. This term
now can be bounded using the second statement of the Lemma
with $\matX = \matA_{\rho - k}$ and $\matY = \matV_k$. This statement, 
assuming $c_0$ sufficiently large, an application of Markov's inequality on the random variable
$x = \FNormS{\matA_{\rho-k}R R\transp  \matV_k  - \matA_{\rho-k}\matV_k}$, and taking square root
on both sides of the resulting equation, give that
w.p. $0.99$,
$$
    \FNorm{\matA_{\rho-k}R R\transp  \matV_k }\ \leq\ 0.5 \epsilon \FNorm{\matA_{\rho-k}}.
$$    
The second two terms can be bounded using the first statement of the Lemma
and the third statement of the Lemma with $\matX = \matA_{\rho-k}$. 
Hence by applying a union bound, we get that w.p. at least $0.97$:
\begin{eqnarray*}
\FNorm{ \matE }  & \leq & \FNorm{ \matA_{\rho-k}R R\transp  \matV_k  } + \FNorm{ \matA_{\rho-k}R} \norm{ (\matV_k\transp R)^+ - (\matV_k\transp R)\transp } \\
                           & \leq &  0.5 \epsilon \FNorm{ \matA_{\rho-k}} + \sqrt{(1+\epsilon)} \FNorm{\matA_{\rho-k}} \cdot 3 \epsilon \\
                           & \leq &  0.5 \epsilon \FNorm{ \matA_{\rho-k}} +  3.5 \epsilon \FNorm{\matA_{\rho-k}} \\
                           &  = &  4 \epsilon\cdot \FNorm{\matA_{\rho-k}}.
\end{eqnarray*}
The last inequality holds thanks to our choice of $\epsilon \in
(0,1/3)$.

\section{Proof of Lemma \ref{lem:rsall}}

\paragraph{Statement 1.}
Notice that this is a matrix-multiplication-type term involving the multiplication
of the matrices $\matE$ and $\matZ$. The sampling and rescaling matrices $\Omega, \matS$
indicate the subsampling of the columns and rows of $\matE$ and $\matZ$, respectively.
Eqn.~(4) of Lemma 4 of \cite{DKM06a} gives a bound for such matrices $\Omega, \matS$
constructed with randomized sampling with replacement and any set of probabilities
$p_1, p_2,...,p_n$ (over the columns of $\matE$):
$$ \Expect{ \FNormS{ \matE \matZ - \matE \Omega \matS \matS\transp \Omega\transp \matZ } } \le
\sum_{i=1}^{n} \frac{ \TNormS{\matE^{(i)}} \TNormS{ \matZ_{(i)} } }{r p_i} - \frac{1}{r} \FNormS{\matE \matZ}. $$
Notice that $\matE \matZ = \bm{0}_{m \times k}$ by construction. Now, 
replace the values of $p_i = \frac{ \TNormS{ \matZ_{(i)}  }}{k}$ (in Definition~\ref{def:sampling}) and rearrange:
$$ \Expect{ \FNormS{\matE \Omega \matS \matS\transp \Omega\transp \matZ } } \le
\frac{k}{r} \FNormS{\matE}.$$
Finally, apply Markov's inequality 
to the random variable $x = \FNormS{\matE \Omega \matS \matS\transp \Omega\transp \matZ }$ to wrap up. 
\paragraph{Statement 2.}
First, by Lemma \ref{lem:random} and our choice of $r$, w.p. $1 - \delta$:
$$1 -  \epsilon  \leq  \sigma_i^2(\matV\transp \Omega \matS)   \leq 1 + \epsilon.$$
Let $\matX = \matZ\transp \Omega \matS \in \R^{k \times r}$ with SVD:
$\matX = \matU_{\matX} \Sigma_{\matX} \matV_\matX\transp$.
Here,  $\matU_{\matX} \in \R^{k \times k}$, 
and $\Sigma_{\matX} \in \R^{k \times k}$, and $\matV_\matX \in \R^{r \times k}$.
By taking the
SVD of $\matX^+$ and $\matX\transp$:
$$
\TNorm{ (\matZ\transp \Omega \matS)^+ - (\matZ\transp \Omega \matS)\transp } = 
\TNorm{ \matV_{\matX} \Sigma_{\matX}^{-1} \matU_{\matX}\transp - \matV_{\matX}\Sigma_{\matX} \matU_{\matX}\transp } = 
\TNorm{ \matV_{\matX}(\Sigma_{\matX}^{-1} - \Sigma_{\matX}) \matU_{\matX}\transp } =$$ 
$$ = \TNorm{ \Sigma_{\matX}^{-1} - \Sigma_{\matX} },$$ 
since $\matV_{\matX}$ and
$\matU_{\matX}\transp $ can be dropped without changing the spectral norm. 
Let $\matY = \Sigma_{\matX}^{-1} - \Sigma_{\matX} \in \R^{k \times k}$ be diagonal;
Then, for all $i=1,...,k$:
$$ \matY_{ii}  = \frac{ 1 - \sigma_i^2(\matX)  }{ \sigma_{i}(\matX) }.$$
Since $\matY$ is a diagonal matrix:
$$ \TNorm{ \matY } = 
\max_{1 \leq i \leq k} \abs{ \matY_{ii} }= 
\max_{1 \leq i \leq k} \abs{\frac{ 1 - \sigma_i^2(\matX)}{ \sigma_{i}(\matX) } } =
\max_{1 \leq i \leq k} \frac{ \abs{1 - \sigma_i^2(\matX)}}{ \sigma_{i}(\matX) } \le
\frac{\epsilon}{\sqrt{1-\epsilon}} .
$$
The inequality follows by using the bounds for $\sigma_{i}^2(\matX)$ from above. 
The failure probability is $\delta$ because the bounds for $\sigma_{i}^2(\matX)$
fail with this probability. 

\paragraph{Statement 3.}

First, notice that from the first statement of the lemma $\sigma_k(\matZ\transp \Omega \matS) > 0$ w.p. $1-\delta$.
This implies that $\matZ\transp \Omega \matS$ is a rank $k$ matrix w.p $1 - \delta$, so, with the same probability:
$$(\matZ\transp \Omega \matS)(\matZ\transp \Omega \matS)^+ = \matI_{k}.$$
Now notice that:  
$$ \matB \matZ\transp - \matB \matZ\transp \Omega \matS (\matZ\transp \Omega \matS)^+\matZ\transp =
\matB \matZ\transp - \matB \matI_{k} \matZ\transp = \bm{0}_{m \times n}.$$
Next, we manipulate the term 
$\FNorm{ \matB \matZ\transp - \matA \Omega \matS (\matZ\transp \Omega \matS)^+\matZ\transp}$ as follows ($\matA = \matB\matZ\transp + \matE$):
\begin{eqnarray*}
\FNorm{ \matB \matZ\transp - \matA \Omega \matS (\matZ\transp \Omega \matS)^+\matZ\transp} 
&=& 
\FNorm{ 
\underbrace{ \matB \matZ\transp - \matB \matZ\transp \Omega \matS (\matZ\transp \Omega \matS)^+\matZ\transp}_{\bm{0}_{m \times n}} 
- \matE \Omega \matS (\matZ\transp \Omega \matS)^+\matZ\transp } \\
         &=&  \FNorm{\matE \Omega \matS (\matZ\transp \Omega \matS)^+\matZ\transp }.
\end{eqnarray*}
For notational convenience, let $\matX = (\matZ\transp \Omega \matS)^+ - (\matZ\transp \Omega \matS)\transp$.
Finally, we manipulate the term $\FNorm{\matE \Omega \matS (\matZ\transp \Omega \matS)^+\matZ\transp }$
as follows:
\begin{eqnarray*} 
\FNorm{\matE \Omega \matS (\matZ\transp \Omega \matS)^+\matZ\transp } 
&\le& \FNorm{\matE \Omega \matS (\matZ\transp \Omega \matS)^+ } \\ 
&\le& \FNorm{\matE \Omega \matS (\matZ\transp \Omega \matS)\transp} + \FNorm{\matE \Omega \matS} \TNorm{\matX} \\
&\le& \sqrt{\frac{k}{\delta r}} \FNorm{\matE} + \frac{1}{\sqrt{\delta}} \FNorm{\matE} \frac{\epsilon}{\sqrt{1-\epsilon}} \\
&\le& \left( \sqrt{ \frac{k}{\delta r} } + \frac{\epsilon}{ \sqrt{\delta} \sqrt{1-\epsilon}} \right) \FNorm{\matE}  \\   
&\le& \left( \frac{\epsilon}{2 \sqrt{\delta}} \frac{1}{\sqrt{\ln(2k/\delta)}} + \frac{\epsilon}{\sqrt{\delta}\sqrt{1-\epsilon}} \right)\FNorm{\matE} \\
&\le& \left( \frac{\epsilon}{2 \ln(4) \sqrt{\delta}}  + \frac{\epsilon}{\sqrt{\delta}\sqrt{1-\epsilon}} \right)\FNorm{\matE} \\
&\le& \frac{ 1.6 \epsilon}{\sqrt{\delta}}  \FNorm{\matE}.
\end{eqnarray*}
The first inequality follows by spectral submultiplicativity and the fact that $\TNorm{\matZ\transp}=1$.
The second inequality follows by the triangle inequality for matrix norms. 
In the third inequality, 
the bound for the term $\FNorm{\matE \Omega \matS (\matZ\transp \Omega \matS)\transp}$ follows by the first
statement of the lemma w.p. $1-\delta$, $\FNorm{\matE \Omega \matS}$ is bounded using Lemma~\ref{lem:fnorm} w.p. $1-\delta$, while we bound $\TNorm{\matX}$ by the second statement of the lemma w.p. $1-\delta$. So, by the union bound, the failure probability so far is $3\delta$.
The rest of the argument follows by our choice of $r$, assuming $k > 1$, $\epsilon < \frac{1}{3}$, 
and simple algebraic manipulations.

\section{A Lower Bound for Spectral Column-based Matrix Approximation}
\label{sec:lower}
\begin{theorem}
\label{theorem:lower1}
For any \math{\alpha>0}, any $k \geq 1$, and any $r \geq k$, there exists a matrix \math{\matA \in \mathbb{R}^{m \times n}} for which
$$\frac{\TNorm{\matA-\matC\matC^+\matA}^2}{\TNorm{\matA-\matA_k}^2} \ge\frac{n+\alpha^2}{r+\alpha^2}.$$
Here $\matC$ is any matrix that consists of $r$ columns of $\matA$. As $\alpha \rightarrow 0$, this implies a lower bound of $n/r$ for the approximation ratio of the spectral norm column-based matrix reconstruction problem.
\end{theorem}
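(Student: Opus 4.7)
The plan is to exhibit a single hard instance achieving the claimed ratio. Consider the matrix
\[
\matA=\begin{pmatrix}\alpha\,\matI_n\\ \bm{1}_n\transp\end{pmatrix}\in\R^{(n+1)\times n}.
\]
First I would read off the singular values of $\matA$ from the Gram matrix $\matA\transp\matA=\alpha^2\matI_n+\bm{1}_n\bm{1}_n\transp$, whose eigenvalues are $n+\alpha^2$ (multiplicity one, eigenvector $\bm{1}_n/\sqrt{n}$) and $\alpha^2$ (multiplicity $n-1$). Thus $\sigma_1(\matA)=\sqrt{n+\alpha^2}$ and $\sigma_2(\matA)=\cdots=\sigma_n(\matA)=\alpha$, so $\TNormS{\matA-\matA_k}=\alpha^2$ for every $k\ge 1$. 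This handles the denominator.

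Next, since $\matA$ is invariant (up to column permutation) under any reordering of its columns, it suffices to lower bound $\TNormS{\matA-\matC\matC^+\matA}$ for $\matC$ equal to the first $r$ columns of $\matA$. The column space of $\matC$ is $r$-dimensional and spanned by the vectors $\alpha \e_i+\e_{n+1}$, $i=1,\ldots,r$. I would then describe the orthogonal complement of $\col(\matC)$ inside $\R^{n+1}$: it has dimension $n+1-r$ and an explicit orthonormal basis consisting of $\e_{r+1},\ldots,\e_n$ together with the single unit vector
\[
\u=\frac{1}{\sqrt{r+\alpha^2}}\bigl(\bm{1}_r,\bm{0}_{n-r},-\alpha\bigr)\transp.
\]
Letting $P=\matI_{n+1}-\matC\matC^+$ be the orthogonal projector onto this complement, a short computation gives $P\e_{n+1}=-\tfrac{\alpha}{\sqrt{r+\alpha^2}}\,\u$, so $\TNormS{P\e_{n+1}}=\alpha^2/(r+\alpha^2)$.

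With $P$ in hand, the $j$-th column of $P\matA$ vanishes for $j\le r$ and equals $\alpha\e_j+P\e_{n+1}$ for $j>r$. Since $\e_j\perp P\e_{n+1}$ for $j\in\{r+1,\ldots,n\}$, for any $x\in\R^n$
\[
\TNormS{P\matA x}=\alpha^2\!\!\sum_{j>r}x_j^2+\Bigl(\sum_{j>r}x_j\Bigr)^{\!2}\TNormS{P\e_{n+1}}.
\]
Maximizing over unit $x$ via Cauchy--Schwarz ($(\bm{1}_{n-r}\transp y)^2\le (n-r)\TNormS{y}$ with equality at $y=\bm{1}_{n-r}/\sqrt{n-r}$) yields
\[
\TNormS{\matA-\matC\matC^+\matA}=\alpha^2+(n-r)\cdot\frac{\alpha^2}{r+\alpha^2}=\alpha^2\cdot\frac{n+\alpha^2}{r+\alpha^2},
\]
which, divided by $\TNormS{\matA-\matA_k}=\alpha^2$, gives exactly the claimed ratio. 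The only step that requires any care is the explicit description of the orthogonal complement of $\col(\matC)$ and the verification that the optimizing direction in the final maximization indeed lies in the span of $\{\e_{r+1},\ldots,\e_n\}$; the remainder is bookkeeping.
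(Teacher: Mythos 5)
Your proposal is correct and follows essentially the same route as the paper: the same hard instance (your matrix is the paper's up to a row permutation), the same symmetry reduction to the first $r$ columns, and the same computation of $\TNormS{\matA-\matC\matC^+\matA}=\alpha^2(n+\alpha^2)/(r+\alpha^2)$ — you obtain it by writing down the orthogonal projector explicitly and maximizing the quadratic form, while the paper minimizes $\TNormS{\a_j-\matC\x}$ column-by-column and reads off the top eigenvalue of $\hat\matA\transp\hat\matA$, but these are the same calculation. No gaps.
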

\begin{proof}
We extend the lower bound in \cite{DR10} to arbitrary $r > k$. Consider the matrix
$$\matA = [\e_1+\alpha\e_2, \e_1+\alpha\e_3,\ldots, \e_1+\alpha\e_{n+1}] \in\R^{(n+1)\times n},$$
where \math{\e_i\in\R^{n+1}} are the standard basis vectors. Then,
$$\matA\transp\matA=\bm{1}_n\bm{1}_n\transp+\alpha^2\matI_{n}, \qquad
\sigma_1^2(\matA)=n+\alpha^2, \qquad \mbox{and} \qquad \sigma_i^2(\matA)=\alpha^2 \mbox{\ \ for\ \ } i>1.$$
Thus, for all $k\ge1$, $\TNorm{\matA-\matA_k}^2=\alpha^2$. Intuitively, as \math{\alpha\rightarrow0}, \math{\matA} is a rank-one matrix. Consider any \math{r} columns of $A$ and note that, up to row permutations, all sets of $r$ columns of $A$ are equivalent. So, without loss of generality, let \math{\matC} consist of the first \math{r} columns of \math{\matA}. We now compute the optimal reconstruction of $\matA$ from $\matC$ as follows: let $\a_j$ be the $j$-th column of $A$. In order to reconstruct \math{\a_j}, we minimize \math{\TNormS{\a_j-\matC\x}} over all vectors \math{\x\in\R^{r}}. Note that if \math{j\le r} then the reconstruction error is zero. For $j > r$, \math{\a_j=\e_1+\alpha\e_{j+1}} and \math{\matC\x=\e_1\sum_{i=1}^r x_i+\alpha\sum_{i=1}^rx_i\e_{i+1}}. Then,
$$\TNorm{\a_j-\matC\x}^2=
\TNorm{\e_1\left(\sum_{i=1}^rx_i-1\right)+
\alpha\sum_{i=1}^rx_i\e_{i+1}-e_{j+1}}^2
=
\left(\sum_{i=1}^rx_i-1\right)^2+\alpha^2\sum_{i=1}^rx_i^2+1.
$$
The above quadratic form in \math{\x} is minimized when \math{x_i=\left(r+\alpha^2\right)^{-1}} for all $i=1,\ldots,r$. Let $\hat\matA = A - CC^+A$ and let the $j$-th column of $\hat\matA$ be $\hat\a_j$. Then, for \math{j\le r}, \math{\hat\a_j} is an all-zeros vector; for \math{j>r}, \math{\hat\a_j=\alpha\e_{j+1}-\frac{\alpha}{r+\alpha^2}\sum_{i=1}^r\e_{i+1}}.
Thus,
$$\hat\matA\transp\hat\matA=
\left[
\begin{matrix}
\bm0_{r\times r}&\bm0_{r\times (n-r)}\\
\bm0_{(n-r)\times r}&\matZ
\end{matrix}
\right],$$
where \math{\matZ=\frac{\alpha^2}{r+\alpha^2}\bm1_{n-r}\bm1_{n-r}\transp +\alpha^2\matI_{n-r}}. This immediately implies that
$$\TNorm{\matA-\matC\matC^+\matA}^2 = \TNorm{\hat\matA}^2 = \TNorm{\hat\matA^T \hat\matA} = \TNorm{\matZ}^2 =  \frac{(n-r)\alpha^2}{r+\alpha^2}+\alpha^2 = \frac{n+\alpha^2}{r+\alpha^2}\alpha^2=$$
$$ = \frac{n+\alpha^2}{r+\alpha^2}\TNormS{A-A_k}.$$
This concludes our proof.
\end{proof}

\chapter{VERY RECENT WORK ON MATRIX SAMPLING ALGORITHMS}
While writing the results of this thesis, we learned the
existence of two independent concurrent works that study low-rank
column-based matrix approximation in the Frobenius norm
and coreset construction for constrained least-squares regression, respectively. 
The recent work on Column-based Matrix Reconstruction appeared
as a technical report in~\cite{GK11}; while, the recent
work on coreset construction for Linear Regression will appear
soon in the Proceedings of the 43rd ACM Symposium on Theory of 
Computing (STOC). We comment on these results below. 

\section{Column-based Reconstruction in the Frobenius Norm}
Two exciting results for column-based matrix reconstruction appeared recently in~\cite{GK11}. More specifically, 
Theorem 1 in~\cite{GK11} describes an algorithm that
on input a matrix $\matA \in \R^{m \times n}$, a target rank $k$, and an
oversampling parameter $r \ge k$, constructs $\matC \in \R^{m \times r}$ deterministically such that:
$$ \FNormS{ \matA - \matC \matC^+\matA } \le \frac{r+1}{r+1-k} \FNormS{\matA - \matA_k}. $$
Moreover, this algorithm runs in $O(r n m^3 \log(m))$. 
Notice that, for any $\epsilon > 0$, it suffices to select only $r = k + \frac{k}{\epsilon}-1$
columns to get a relative error approximation. Theorem 2 in~\cite{GK11} presents a faster
algorithm that runs in $O(rnm^2)$ (assuming $m \le n$) and constructs $\matC \in \R^{m \times r}$ randomly such that:
$$ \Expect{\FNormS{ \matA - \matC \matC^+\matA }} \le \frac{r+1}{r+1-k} \FNormS{\matA - \matA_k}. $$
Both these results are based on volume sampling approaches (see Section~\ref{chap313}). The
work of~\cite{GK11} was motivated by applications of column selection to 
approximation algorithms for Quadratic Integer Programming~\cite{GK11b}.
Moreover,~\cite{GK11} proved a new lower bound: 
$$ \frac{ \FNormS{ \matA - \matC \matC^+\matA } } { \FNormS{\matA - \matA_k} } \ge  1 + \frac{k}{r} - o(1),$$ 
which implies that the bounds from~\cite{GK11} are asymptotically optimal. Finally, we should note that
the bounds from~\cite{GK11} hold for the residual error $\FNorm{ \matA - \matC \matC^+\matA }$ and
it is not obvious if they can be extended to the rank $k$ matrix $\Pi_{\matC,k}^{F}(\matA)$. 
Recall that for arbitrary $r > k$:
$$ \FNorm{ \matA - \matC \matC^+\matA } \le \FNorm{ \matA - \Pi_{\matC,k}^{F}(\matA) }.$$
So, an upper bound for the term $\FNorm{ \matA - \matC \matC^+\matA }$ does
not imply an upper bound for the term $\FNorm{ \matA - \Pi_{\matC,k}^{F}(\matA) }$.

\section{Coreset Construction for Linear Regression}

Feldman and Langberg in~\cite{FL11} proved that a ($1+\epsilon$)-coreset with
$r = O(n / \epsilon^2)$ rows can be found w.p. $0.5$ in 
$O( mn^2 + n^2 \log(m)/\epsilon^2)$ time. Although not explicitly stated, it is easy to see
that the coreset of~\cite{FL11} applies to arbitrary constrained regression. 
The upcoming full version of~\cite{FL11} will give all the details
of this result. The novel technical ingredient in~\cite{FL11} is
the randomized fast analog of 
Corollary~\ref{cor:1set} of Section~\ref{chap316}. \cite{FL11}
gives a coreset which is as good (in terms of coreset size) as 
ours in Theorem~\ref{lem:regression}. The running time of the
method of~\cite{FL11} is better than ours; this trades the failure
probability introduced in~\cite{FL11}. Recall that our theorem
constructs the coreset deterministically.

\end{document}